\newcommand{\name}{GHAST}
\newcommand{\todo}[1]{}
\newcommand{\phvname}[1]{{\fontfamily{phv}\selectfont #1}}
\newcommand{\statecondm}{For any adversary state $\advs$ appearing in the execution of ghast protocol, let $\advs\l$ be the last adversary state and $\event$ be the event updates $\advs\l$ to $\advs$}
\newcommand{\stateconds}{For any adversary state $\advs$ appearing in the execution of ghast protocol}
\DeclareMathOperator*{\argmax}{arg\,max}
\newcommand{\true}{\mathsf{True}}
\newcommand{\false}{\mathsf{False}}
\renewcommand{\l}{_{-1}}
\newcommand{\eqdef}{:=}
\newcommand{\proto}{\Pi}
\newcommand{\order}{{\cal C}}
\newcommand{\state}{{\cal B}}
\newcommand{\env}{{\cal Z}}
\newcommand{\adv}{{\cal A}}
\newcommand{\secp}{\kappa}
\newcommand{\digest}{{\sf digest}}
\newcommand{\prf}{{\cal H}}
\newcommand{\mine}{{\sf H}}
\newcommand{\verify}{{\sf H.ver}}
\newcommand{\block}{\mathbf{b}}
\newcommand{\graph}{\mathbf{B}}
\newcommand{\genesisblock}{\mathbf{g}}
\newcommand{\uniongraphs}{{\cal U}}
\newcommand{\rmax}{r_{\sf max}}
\newcommand{\negl}{{\sf negl}}
\newcommand{\view}{{\sf View}}
\newcommand{\rcon}{{r_{\sf con}}}
\newcommand{\delay}{d}
\newcommand{\difficulty}{{\eta_{\sf d}}}
\newcommand{\heavyw}{\eta_{\sf w}}
\newcommand{\advan}{\eta_{\sf a}}
\newcommand{\timerw}{\eta_{\sf t}}
\newcommand{\timerdiff}{\eta_{\sf b}}
\newcommand{\chain}[1]{{\mathrm{Chain}\!\left(#1\right)}}
\newcommand{\tree}[1]{{\mathrm{SubT}\!\left(#1\right)}}
\newcommand{\treew}[1]{{\mathrm{SubTW}\!\left(#1\right)}}
\newcommand{\sibtreew}[1]{{\mathrm{SibSubTW}\!\left(#1\right)}}
\newcommand{\child}{\mathrm{Chldn}}
\newcommand{\parent}{\mathsf{parent}}
\newcommand{\weight}{\mathsf{weight}}
\newcommand{\past}{\mathsf{past}}
\newcommand{\adp}{\mathrm{Adapt}}
\newcommand{\mainchild}{\mathbf{c}}
\newcommand{\kah}{{s_{\sf h}}}
\newcommand{\kam}{{s_{\sf m}}}
\newcommand{\event}{e}
\newcommand{\hgen}{\mathsf{hGenRls}}
\newcommand{\mgen}{\mathsf{mGen}}
\newcommand{\mrls}{\mathsf{mRls}}
\newcommand{\allrec}{\mathsf{Arvl}}
\newcommand{\advs}{\mathcal{S}}
\newcommand{\flagb}{\mathbf{f}}
\newcommand{\speset}{\mathbf{S}}
\newcommand{\spevalue}{v}
\newcommand{\vcmp}{\mathbf{C}} 
\newcommand{\tmpset}{\mathbf{T}}
\newcommand{\gmax}{\graph^{\sf max}}
\newcommand{\gmin}{\graph^{\sf min}}
\newcommand{\ggen}{\graph^{\sf gen}}
\newcommand{\gdta}{\graph^{\Delta}}
\newcommand{\eventweight}{\Delta}
\newcommand{\pot}{P}
\newcommand{\cpot}{{\tilde P}}
\newcommand{\potwith}{P_\mathsf{with}}
\newcommand{\potadv}{P_\mathsf{adv}}
\newcommand{\potspe}{P_\mathsf{sp}}
\newcommand{\tip}{\mathrm{Tip}}
\newcommand{\nxt}{\mathrm{Next}}
\newcommand{\eblock}{\block_{\sf e}}
\newcommand{\type}{\mathsf{type}}
\newcommand{\tw}{\mathrm{TotalW}}
\newcommand{\ti}{\tilde}
\newcommand{\varm}{M}
\newcommand{\varh}{H}
\newcommand{\varf}{F}
\newcommand{\vars}{S}
\newcommand{\varsm}{S^{\sf M}}
\newcommand{\varsh}{S^{\sf H1}}
\newcommand{\varsf}{S^{\sf H2}}
\newcommand{\rd}{r_{\Delta}}
\newcommand{\probd}{\phi}
\newcommand{\probf}{\ti\phi}
\newcommand{\tmpi}{\gamma}
\newtheorem{lemma}{Lemma}[section]
\newtheorem{theorem}[lemma]{Theorem}
\newtheorem{claim}[lemma]{Claim}
\newtheorem{definition}[lemma]{Definition}
\newenvironment{proof}{\vspace{-0.05in}\noindent{\bf Proof.}}%
{\hspace*{\fill}$\Box$\par}
{\hspace*{\fill}$\Box$\par\vspace{4mm}}
\newenvironment{proofof}[1]{\smallskip\noindent{\bf Proof of #1.}}%
{\hspace*{\fill}$\Box$\par}
\begin{document}
\title{\bf {\name}: Breaking Confirmation Delay Barrier in Nakamoto Consensus via Adaptive Weighted Blocks}
\author{
{\rm Chenxing Li}\\
Tsinghua University
\and
{\rm Fan Long}\\
University of Toronto
\and
{\rm Guang Yang}\\
Conflux Foundation
} 
\date{}
\maketitle

\begin{abstract}
	Initiated from Nakamoto's Bitcoin system, blockchain technology has demonstrated great capability of building secure consensus among decentralized parties at Internet-scale, i.e., without relying on any centralized trusted party. Nowadays, blockchain systems find applications in various fields. But the performance is increasingly becoming a bottleneck, especially when permissionless participation is retained for full decentralization.

	In this work, we present a new consensus protocol named GHAST (Greedy Heaviest Adaptive Sub-Tree) which organizes blocks in a Tree-Graph structure (i.e., a directed acyclic graph (DAG) with a tree embedded) that allows fast and concurrent block generation. GHAST protocol simultaneously achieves a logarithmically bounded liveness guarantee and low confirmation latency. More specifically, for maximum latency $d$ and adversarial computing power bounded away from 50\%, GHAST guarantees confirmation with confidence $\ge 1-\varepsilon$ after a time period of $O(d\cdot \log(1/\varepsilon))$. When there is no observable attack, GHAST only needs $3d$ time to achieve confirmation at the same confidence level as six-block-confirmation in Bitcoin, while it takes roughly $360d$ in Bitcoin.
\end{abstract}


\section{Introduction}

Blockchain systems like Bitcoin provide secure, decentralized, and consistent ledgers at Internet-scale. 
Such ledgers are initially designed for
cryptocurrencies, but now have evolved to become a powerful abstraction that
fuels innovations on many real-world applications across financial systems~\cite{DeloitteFinancial}, supply
chains~\cite{IBMSupplyChain}, and health cares~\cite{DeloitteHealthCare}.

A novel aspect of blockchain systems is permissionless. It allows anyone to join or leave the system freely without getting approval from some centralized or distributed community. During the blockchain protocol execution, it is not necessary for any participant to be aware of others, once the protocol message from the other participants can be relayed on time. In order to prevent malicious behavior in a permissionless setting, the blockchain systems limit the rate for constructing new blocks using the idea of computational puzzles, which is called \emph{proof-of-work}. To construct a valid block accepted by the blockchain protocol, the participants need to set a proper nonce in block to make the hash value of such block fall in a prescribed bit-string set. (e.g., the set contains bit-string with 70 leading zeros.) So all the participants need to try a large number of different nonces before finding a valid block. 

The robustness of a blockchain system requires a majority of computing power is held by honest participants. So in the long term, the honest participants will generate more blocks than the attacker. Based on this fact, the blockchain protocol directs the participants to organize blocks and select a sequence of blocks as history. For example, Bitcoin adopts \emph{Nakamoto
consensus} protocol~\cite{bitcoin} which operates on a tree of blocks and selects the longest branch as its correct history. All the honest participants are required to append new blocks to the longest branch. Ideally, all the blocks generated by the honest participants will extend the longest chain in Nakamoto consensus. 
If we assume the attacker controls at most $\beta$ computing power ($\beta<1/2$) in total, for each block, we can compute the risk that such block is kicked out of history under the optimal attack strategy in the future. If the risk does not exceed a given threshold, we say a block is \emph{confirmed}. Once a block is confirmed, we have high confidence for the transactions (e.g., payment message) carried in such block is recorded on the blockchain irreversible. 
A protocol that lacks consideration of all possible attack strategies may provide an incorrect way to estimate confirmation risk. And further, a confirmed block may be kicked out of the history frequently. It is regarded as a security flaw. So a rigorous security analysis handling all the possible attack cases is necessary for a blockchain protocol.


The network propagation delay brings issues in reaching consensus. Since a block can not be relayed to other participants instantly, the participants sometimes have an inconsistent view of the current block sets. The influence of network delay depends on the protocol parameters. A blockchain protocol can adjust the block generation interval by adjusting the difficulty of finding a valid block. If the block generation interval is much higher than the time propagating a message in network, with a high probability, no one will generate a new block when the participants have inconsistent local blocks. The blockchain protocol works in a synchronized network. For the opposite situation, there will be a considerable amount of blocks which generated when the participants have an inconsistent view and may cause the participants to diverge. For example, the participants may regard different branches as the longest branch. For a consensus protocol running in a low block generation interval, the protocol design must deal with the inconsistency view carefully. 

A high block generation interval results in a bad performance. A higher block generation interval means fewer blocks are generated in a given time interval. So the consensus protocol has a low throughput. A high block generation interval also results in a high block confirmation delay. 
For example, Nakamoto consensus requires the block propagation delay $\delay$ (i.e., the delay of
one block propagating to all participants in the P2P network) must be
significantly smaller than the block generation interval of $1/\lambda$. 
%
Otherwise, a large number of blocks will be generated in the scenario that another block is in propagation. These blocks
will not contribute to the growth of the longest chain. Once the chain growth of the longest chain slowing down, it requires less cost for the attacker to construct a side chain competing with the confirmed history. 
Furthermore, the confirmation of a block has to wait for several subsequent blocks, 
since in a permissionless consensus system the agreement is only observable through mined blocks.
Therefore Nakamoto consensus has to 
operate with very low block generation rate (e.g., 1 MB blocks per 10 minutes in
Bitcoin) and suffer from unsatisfactory throughput and confirmation latency (e.g., 6 blocks or equivalently 60 minutes in Bitcoin).

Performance becomes one of the major obstacles that impede
the adoption of blockchain techniques. To resolve the performance issues in Nakamoto consensus, several new protocols are proposed in the last five years. Some of them have a rigorous security analysis. Garay et al~~\cite{garay2015bitcoin} first provide a rigorous security analysis for Nakamoto consensus in a synchronized network model. They prove two properties of Nakamoto consensus, the common prefix and the chain quality. Pass et al~~\cite{pass2017analysis} consider the effect of network delay and provide an analysis in an asynchronized network model with a prior maximum network delay $\delay$. They prove an additional property, the chain-growth. Several subsequent works \cite{BitcoinNG,Fruitchain,yu2018ohie} built their security analysis on the top of the basic properties in Nakamoto consensus. These works achieve a good performance in throughput. However, since they build security on the top of Nakamoto consensus, they can not achieve a better confirmation latency than Nakamoto consensus. 

In the same period, Sompolinsky et al~\cite{GHOST} introduce GHOST (Greedy Heaviest-Observed Sub-Tree), which uses another way to select the branch. 
However, they only analysis the behavior of GHOST under some attack cases. Later, Kiayias et al~\cite{kiayias2017trees} provide a security analysis for GHOST in a low block generation rate under a synchronized network model. 

Unfortunately, Natoli et al~\cite{natoli2016balance} point out GHOST is vulnerable in a liveness attack when the block generation rate is high. The liveness attack is not aimed to re-ordering the confirmed blocks but tries to prevent from confirming the new blocks. They provide an attack strategy called the balance attack for a high block generation rate. We will introduce the details of this attack in Section 2.
%


We notice that a high block generation rate helps reduce confirmation delay. Given a time interval of $T$, let random variable $X$ denote the ratio of newly generate blocks between malicious blocks and honest blocks. Since the proof-of-work protocols always assume the attacker has less computing power compared to the honest participants, the expectation of $X$ is less than 1. The higher block generation rate, the lower variance $X$ will have. So it is less likely for an attacker to generate more blocks than honest participants in a given time interval. The blocks can gain an advantage in subtree weight compared to the attacker's side chain quickly and achieve a lower confirmation delay. However, the existing protocols built on the top Nakamoto consensus can not break the barrier of confirmation delay in Nakamoto consensus and GHOST suffers a liveness issue in a high block generation rate. 

\subsection{Our contributions}

This paper presents the {\name} (Greedy Heaviest Adaptive Sub-Tree) consensus protocol which achieves a nearly optimal confirmation delay in a normal case with rigorous security analysis. This protocol is designed based on the GHOST protocol with a high block generation rate. In order to resolve the liveness issue in GHOST protocol, GHAST slows down the block generation rate to defense the liveness attack. More precisely, when detecting a divergence of computing power, the block weight distribution is adaptively changed. Only a small fraction of blocks selected randomly are marked as ``heavy blocks'' and other blocks generated under this circumstance are valid but have zero weight. The block generation rate remains to keep a high throughput.  In other words, only the heavy blocks are taken into consideration in the branch selection. 

This work is the first design of a high-throughput BlockDAG consensus protocol (among all DAG-based consensus proposals including a bunch of GHOST-like protocols) that has a rigorous security analysis and liveness 
against an attacker with the ability to manipulate network delay. 
Furthermore, our protocol is also the first consensus protocol that provides both efficiency and robustness: 1) fast confirmation when there is no observable attack, i.e., the agreed history is immutable against covert attacks; 
and 2) polynomially bounded worst-case liveness when there is an active attack with  
$49\%$ block generation power as well as the ability to arbitrarily manipulate the delay of every block within the maximum propagation delay bound $\delay$ (recall that blocks exceeding this bound are counted as malicious).

\paragraph{Liveness guarantee}

We prove that {\name} guarantees security and liveness in the presence of an active attacker who has the power of manipulating communication delays of every block to every participant. Similar to the framework in the analysis of Nakamoto consensus, we have the following assumptions. 

\begin{itemize}
    \item The block generation rate of all the participants is $\lambda$. 
    \item The adversary controls $\beta$ computing powers among all the participants. ($\beta<1/2$) In other words, the block generation rate of the adversary is $\beta\lambda$. 
    \item There is a maximum latency $\delay$ within which a block will be propagated to all honest nodes. 
\end{itemize}

Once a block is received by all the honest participants, its order in history will be consistent among all the honest participants and become unchangeable after time $O(\log(\varepsilon))$, with probability $1-\varepsilon$. More precisely, we have the following theorem. 

\begin{theorem}[Informal]
    For every risk tolerance $\varepsilon>0$ and fixed system parameters $\lambda,\delay,\beta$, let $\delta\eqdef 1-\beta/(1-\beta)$, if $\lambda\delay\ge 5+0.8\log(1/\delta)$, there exist appropriate parameters such that {\name} guarantees that every block broadcast before time $t$ is confirmed with confidence $\ge (1-\varepsilon)$ by time $t+\delay\cdot O\left(\frac{\log(1/\delta\varepsilon)}{\delta^3}\right)$. 
\end{theorem}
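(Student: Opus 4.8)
The plan is to track the pivot chain (the tree embedded in the Tree-Graph) and to show that, once a block $\block$ has been broadcast, the heavy subtree-weight lead of the honest pivot chain over every competing branch grows until the chance that $\block$'s position in the total order can still be altered drops below $\varepsilon$. Concretely I would introduce a potential function $\pot$ — decomposed, in the spirit of the quantities $\potwith,\potadv,\potspe$, into the weight honest nodes currently agree on, the weight of privately withheld adversarial blocks, and a slack term charged against blocks produced during transient honest disagreement — prove that $\pot$ has a strictly positive drift \emph{against every admissible adversary}, and then convert this drift into the stated time bound via concentration. Since honest nodes reference every block they have seen, once all honest nodes hold $\block$ (by time $t+\delay$) it lies in the past of every later honest block, so freezing the pivot chain freezes $\block$'s order.

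First I would set up the discretization: cut time into epochs of length $\delay$, record epoch by epoch the numbers of honest and adversarial blocks, treat arrivals as Poisson with rates $(1-\beta)\lambda$ and $\beta\lambda$, and invoke Chernoff bounds — the hypothesis $\lambda\delay\ge 5+0.8\log(1/\delta)$ is exactly what makes these bounds sharp enough. The structural fact to establish is that the adaptive weighting cleaves the execution into a \emph{fast regime}, active while the locally-computed divergence timer stays below its threshold $\timerw$ (every block then has weight $1$), and a \emph{slow regime}, entered once the timer exceeds $\timerw$, in which a fresh block receives heavy weight only with the small hash-determined probability $\probd$, so that adversarial blocks are thinned by the same factor. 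In the fast regime an honest epoch adds, up to a collision loss of $O(\lambda\delay)$ per epoch, weight $(1-\beta)\lambda\delay$ to the present pivot subtree while the adversary adds at most $\beta\lambda\delay$, giving per-epoch drift $\Omega(\delta\lambda\delay)$. In the slow regime, choosing $\probd$ with $\probd\lambda\delay$ a small multiple of $\delta$ so that a fresh heavy block is typically separated from the previous one by more than $\delay$, honest nodes reconverge on one pivot tip between consecutive heavy blocks; the balance attack — which in GHOST exploits precisely the repeated $\delay$-window disagreement — then cannot keep two subtrees tied, and the honest heavy pivot chain advances at rate $\Omega(\probd(1-\beta)\lambda)$ against at most $\probd\beta\lambda$ for the adversary, i.e.\ drift $\Omega(\probd\delta\lambda)=\Omega(\delta^{2}/\delay)$ heavy blocks per unit time.

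Next I would bound how long the execution can sit in the slow regime and control the transitions between regimes. The divergence timer resets whenever all honest nodes share a pivot tip that everyone has received; using the fast-regime drift one shows such a consensus point recurs within $O(\delay/\delta)$ time, and that re-triggering disagreement forces the adversary to spend heavy blocks it does not possess, its heavy rate being only $\probd\beta\lambda$. Hence, uniformly over adversarial strategies — including those that exploit the adversary's control of every block's delivery delay up to $\delay$ and of the timing of its releases to decide which regime is active — the drift of $\pot$ is bounded below by $\Omega(\delta^{2}/\delay)$ even in the worst case that the slow regime is always in force. Standard ``biased random walk over all futures'' reasoning shows a heavy-block lead of $g$ is reversible with probability at most $(\beta/(1-\beta))^{g}\le e^{-\delta g}$, so $g=\Theta(\log(1/\delta\varepsilon)/\delta)$ suffices for confidence $1-\varepsilon$; dividing this lead by the drift and adding an Azuma-type concentration correction for $\pot$ yields that the lead is attained — and every pre-$t$ block frozen — by time $t+\delay\cdot O\!\left(\frac{\log(1/\delta\varepsilon)}{\delta^{3}}\right)$.

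The principal obstacle is the quantification over \emph{all} adversarial strategies together with the regime transitions. Unlike in plain Nakamoto analysis, the adversary here simultaneously places its blocks, chooses each block's delivery delay up to $\delay$, and — through the timing of its releases — influences whether honest nodes observe a divergence and hence which weighting rule is live; the potential-drift inequality must therefore hold worst-case over this coupled choice. Designing $\pot$ so that it is robust across regime switches — by charging each forced re-divergence to adversarial heavy blocks and absorbing collision losses into the slack term — is the technical heart of the proof; once the drift is shown to be uniformly $\Omega(\delta^{2}/\delay)$ over all strategies and both regimes, the concentration step and the exponential risk bound are routine.
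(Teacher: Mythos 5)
Your high-level strategy coincides with the paper's: a potential function split into withheld weight, agreement volatility, and a special-status slack, a worst-case negative drift, and a Chernoff/Azuma conversion to the stated latency. But the plan leaves unproven exactly the steps you yourself flag as the heart of the argument, and two of these are genuine gaps rather than routine work. First, you assert that the drift inequality ``must hold worst-case over the coupled choice'' of placement, per-node delays, and release timing, but give no mechanism for achieving this. The paper's device is to assign nonzero \emph{event values} only to block-\emph{generation} events --- whose timing is a Markov process the adversary cannot influence --- and then to prove, by an extensive case analysis over an explicit adversary state (the chain $\vcmp$, the flag block $\flagb$, the special-status set $\speset$), that every event changes $\pot+\spevalue$ by at most its event value (Theorems~\ref{thm:potcase:final} and \ref{thm:cpotdiff}). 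Your claim that in the slow regime ``honest nodes reconverge on one pivot tip between consecutive heavy blocks'' so that ties cannot be maintained is precisely what needs proof, and it is nontrivial: the paper proves it via the flag-block lemma (Lemma~\ref{lma:flagadv}), showing a fresh honest heavy block confers an advantage exceeding $\heavyw-\kah-\kam$ on its entire chain. Without some such quantitative tie-breaking statement the claimed slow-regime drift $\Omega(\probd\delta\lambda)$ is unsupported.

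Second, your plan omits one of the two sufficient conditions for finalization. Driving the potential below a threshold is not enough; the paper's Theorem~\ref{thm:hisunchange} also requires that every block whose past does not contain $\gmin_{N(r_0)}$ has become ``old enough,'' and the drift bound of Theorem~\ref{thm:cpotdiff} has an exceptional case --- a block newly becoming old enough can cause the global potential to jump upward, resetting the walk. Handling this requires (i) a bound on the potential of a block at the moment it becomes old enough (Theorem~\ref{thm:cpotwhenold}) and (ii) a proof that the embedded timer chain grows steadily so that all relevant blocks become old enough within $O(\timerdiff/\delta^2)$ epochs (Lemma~\ref{thm:chaindiffgrowth}, Theorem~\ref{thm:mustoldenough}); the latter is itself a Nakamoto-style chain-growth/chain-quality argument. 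Your proposal mentions a ``divergence timer'' only as the trigger for the weight switch and never analyzes it as an adversarially attackable sub-chain, so the $\timerdiff$-dependent term in the final latency, and the reset issue, are unaccounted for.
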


A formal version is given in theorem~\ref{thm:final}.

\paragraph{Low confirmation delay}

{\name} achieves fast confirmation time in the absence of an observable attack. Here an “unobservable attack” includes both cases of attacks that happened in the future and covertly withholding blocks without attempting to influence the current state (but withheld blocks may be released in the future). That is, as long as the attacker is not actively inﬂuencing the Block-TG consensus system, transactions can be confirmed quickly and become immutable once confirmed even if under the fast conﬁrmation rule.
We provide a concrete method in estimating the block confirmation risk and runs an experiment. The system parameters in the experiment can tolerate liveness attacks from a powerful attacker that controls 40\% of the network computation power. Conflux blockchain system running the GHAST protocol result in~\cite{conflux-sys}shows that GHAST can obtain the same confidence as waiting for six blocks in Bitcoin in $3\delay$. While Bitcoin requires $360\delay$ and Prism requires $23\delay$.

\subsection{Main techniques}

\paragraph{GHOST protocol and Tree-Graph structure.} 
The GHAST consensus protocol adopts the GHOST protocol proposed in \cite{GHOST} as the backbone of our protocol. We call the branch selected by GHOST protocol \emph{pivot chain}. 

Borrowing ideas from previous works \cite{inclusive,PHANTOM,SPECTRE}, GHAST organizes the block in the Tree-Graph structure. Blocks in are Tree-Graph structure linked with two types of directed edges. Each block has one outgoing \emph{parent edge} to indicate its parent block under GHOST protocol. And it may have multiple outgoing \emph{reference edges} to show generation-before relationship between blocks. The parent edge and reference edges of a block are immutable. The reference edges also reflect the local Tree-Graph of the block's miner when generating such block.

\paragraph{Structured GHOST protocol.}
In structured GHOST protocol, only $1/\heavyw$ of blocks are weighted blocks that would count in the chain selection process, where $\heavyw$ is a protocol parameter. These blocks are selected randomly based on their hash value. During the chain selection process, all the un-weighted blocks are skimmed. It is equivalent to slows down block generation rate $\heavyw$ times. Kiayias et al~\cite{kiayias2017trees} prove that GHOST protocol has no liveness issue when the block generation rate is low enough. Their proof is based on a synchronized network model, which is different from our model. Our further analysis implies that this claim also holds in a partially synchronized network model. 

\paragraph{Consensus with two strategies.}
The GHAST consensus protocol operates with two strategies, an optimistic strategy following the GHOST protocol and a conservative strategy following the structured GHOST protocol. We adopt an adaptive weight mechanism to incorporate two strategies into one framework. The blocks under the GHOST protocol have block weight one and the weighted blocks under structured GHOST protocol have block weight $\heavyw$. So the expected block weight does not vary while switching the strategies. In normal scenarios, the GHAST consensus protocol adopts the optimistic strategy. When a serious liveness attack happens, the GHAST consensus protocol switches to the conservative strategy. All the block headers include an immutable strategy bit to indicate the strategy it adopts to make the miners reach a consensus for its block weight.

\paragraph{Enforced strategy choices.}
We found that if an attacker can fill the strategy bit arbitrarily, the confirmation delay will be much worse than our expectations. So the GHAST consensus protocol determines the strategy bit of each block based on its ``past graph''. The \emph{past graph} of a block refers to the set of all its reachable blocks following the parent edges and reference edges recursively. When an honest miner generates a new block, its current local Tree-Graph is the same as the past graph of this new block. If the past graph reflects an liveness attack is happening, the block should follow the conservative strategy. Otherwise, it should follow the optimistic strategy.

We define a concrete rule to decide the consensus strategy from the past graph. A block whose strategy bit is inconsistent with its past graph will be regarded as an invalid block and dropped by all the honest participants. So the consensus strategy choices are enforced by the consensus protocol. Since the strategy bit can be inferred from its past set, it can be omitted from the block header. 

Notice that an attacker can still manipulate the strategy bit by ignoring some blocks in its past set. But its ability to delay block confirmation can be significantly reduced. 

\paragraph{Detecting liveness attack.}

The GHAST consensus protocol provides a deterministic algorithm to detect if there is an active liveness attack given a past set and decide the consensus strategy.
The GHAST consensus protocol detects liveness attack following one idea: 
\begin{align*}
    &\textit{Whether there exists an old enough block in the branch selected by GHOST rule,} \\ 
    &\textit{its best child doesn't have a dominant advantage in subtree weight compared to its sibling blocks.}
\end{align*}


Recalling that GHOST protocol selects the pivot chain by picking the child with maximum subtree weight recursively. In the Tree-Graph in an honest participant's view, if a block in the pivot chain doesn't have a child with a dominant advantage in subtree weight, other participants may have a different opinion in picking the next block in pivot chain. If such a block has been generated for a long time, we suspected that a liveness attack is happening.

\paragraph{Partially-synchronized clock}

Pass et al~\cite{pass2017analysis} mentions that the Bitcoin protocol can be used as a partially-synchronized clock. And their subsequent works Fruitchain \cite{Fruitchain} and Hybrid Consensus \cite{Hybrid} show two examples that use Bitcoin protocol as a fundamental service. The GHAST protocol also runs a stand-alone blockchain following Bitcoin protocol, which we called \emph{the timer chain}. Each block in Tree-Graph structure includes the hash value of the longest branch leaf block in the timer chain. The height of the included timer block represents an imprecise timestamp. Given a local Tree-Graph and a block in this Tree-Graph, if the timestamp difference between the given block and maximum timestamp in the local Tree-Graph exceeds a threshold $\timerdiff$, we regard such block as an old enough block in the Tree-Graph. 

Notice that the GHAST consensus protocol only uses the timer chain to decide whether a block is old enough. The order of Tree-Graph blocks does not rely on their timestamp. The blocks confirmation in Tree-Graph does not need to wait for the confirmation of their timer block. In a normal scenario, a block in Tree-Graph is usually confirmed earlier than its timer block.

\paragraph{Embedding timer chain into Tree-Graph.} 

Some consensus protocols have multiple proof-of-work tasks. For example, the GHAST consensus protocol has two proof-of-work tasks: mining a Tree-Graph block and mining a timer block. In a parallel chain protocol like OHIE~\cite{yu2018ohie} and Prism~\cite{Prism19}, each individual chain has a proof-of-work task. Usually, these protocols require that an attacker can not obtain majority computing power in each proof-of-work task. In order to prevent the attacker from concentrating its computing power on one task, a widely used trick makes the participants work on all the proof-of-work tasks simultaneously. It constructs a block that includes the components (or the digests) of all the tasks. When a block is successfully mined, its hash value decides the block type. 

Following this trick, the timer block and Tree-Graph block have the uniform block format, each of which includes a parent edge, several reference edges, the hash value of the last timer block and other metadata such as transactions digest in the application. The GHAST consensus protocol also regards the timer block as a valid Tree-Graph block. 
\subsection{Related work}



\paragraph{Nakamoto consensus.}
Nakamoto consensus~\cite{bitcoin} is the first blockchain protocol. In Nakamoto Consensus, each block has one predecessor block and all blocks form a tree rooted at the genesis block. 
Pass et al~\cite{pass2017analysis} build a round based analysis framework for Nakamoto consensus in an asynchronized network model with a prior known maximum network delay $\delay$. Given the adversary computing power threshold $\beta$, network delay $\delay$ and block generation rate $\lambda$, they show several properties of Nakamoto consensus when $1-\lambda\delay>\beta/(1-\beta)$.

Some other blockchain systems like LiteCoin, Bitcoin Cash, Bitcoin Gold and Bitcoin SV tried to increase the throughput by tuning system parameters in Nakamoto consensus. However, Sompolinsky et al~\cite{GHOST} give the tradeoff between increasing throughput of Nakamoto consensus and security threshold $\beta$. The Nakamoto consensus has two parameters related to throughput: the block generation rate $\lambda$ and block size $s$. The throughput of Nakamoto consensus is bounded by $\lambda \cdot s$. In a network with limited bandwidth $b$, the propagation delay $\delay$ is lower bounded by $s/b$. Since the previous analysis requires $1-\lambda\delay>(1+\delta)\beta/(1-\beta)$, the throughput is upper bounded by $\frac{(2-\beta)\cdot s}{1-\beta}$. 


\paragraph{FruitChain.} 
%
%
FruitChain~\cite{Fruitchain} organizes blocks hierarchically to decouple these functionalities. It packs transactions first into
fruits (i.e., micro blocks) and then packs fruits into blocks (i.e., macro blocks). Both types of blocks are required solutions for proof-of-work puzzle. But only the macro blocks are maintained following Nakamoto consensus. 


The mining rewards in FruitChain are mainly distributed via micro blocks. It mitigates some problems like selfish mining~\cite{selfishmining}. 
In a selfish mining attack, the attacker manipulates the longest chain by withholding its newly generated block accordingly to increase the ratio of its block in the blockchain. This makes the attacker receive more mining reward. In FruitChain, since the mining rewards are distributed according to micro blocks, it is no need to manipulate the macro blocks and the attacker can not apply this apply strategy over micro blocks because they are not maintained by Nakamoto consensus. 

The drawback of FruitChain is that the block confirmation is built on the top of macro blocks, which follows Nakamoto consensus. The micro blocks become irreversible only if the macro block packing it is confirmed. So the confirmation delay of FruitChain is as worse as Nakamoto consensus. 

\paragraph{Bitcoin-NG}
Bitcoin-NG~\cite{Fruitchain} also organizes blocks hierarchically. In Bitcoin-NG, the key blocks (i.e., macro blocks) are organized following Nakamoto consensus. Once a miner generates a macro block, it is allowed to generate a sequence of light blocks (i.e., micro blocks) until the next miner generates a macro block. Each time a miner trying to generate a macro block, it should try to include all the micro blocks generated by the owner of the previous macro block. Similar with FruitChain, the micro blocks are not taken into consideration in branch selection of macro blocks. Bitcoin-NG has the same drawback as FruitChain since its security is also built on the Nakamoto Consensus. 

\paragraph{Hybrid consensus}
Hybrid consensus~\cite{Hybrid} extends the idea in Bitcoin-NG. In Bitcoin-NG, a leader is chosen periodically based on the mining of macro blocks. Hybrid consensus picks a small quorum from the miner of macro blocks. Unlike Bitcoin-NG, a miner is not included in the quorum at the time of its generation. Recalling that in Nakamoto consensus, the longest branch truncating the last $k$ blocks has consistent property. So Hybrid consensus picks quorum from the truncated branch. As the blocks in the truncated branch become irreversible, the chosen quorum will not change. This is different with Bitcoin-NG. The quorum runs a PBFT protocol to commit transactions. 
%

In Hybrid consensus, the security threshold $\beta$ drops to $1/3$ to guarantee the attacker can not control more than 1/3 nodes in a selected quorum, which is the requirement of PBFT protocol. Hybrid also requires honestness has some stickiness, i.e., it takes a short while for an adversary to corrupt a node. So the adversarial can not corrupt the whole quorum instantly once a quorum is selected. However, such an assumption shows that the selected quorum is the single point of failure for the whole consensus protocol. If an attacker continues to DDoS attack the newly selected quorum, Hybrid consensus protocol will crash. 

\paragraph{OHIE}
Another approach in increasing the throughput is running several parallel chains. In OHIE~\cite{yu2018ohie}, the participants mines on the hundreds of parallel chains simultaneously. When mining a block in OHIE, the miner needs to include the parent block hash of the current block in each chain. Once a valid proof-of-work puzzle is solved, the block hash determines which chain the new block belongs to. The parent blocks for each chain are selected following Nakamoto consensus. Each individual chain has a low block generation rate to match the security requirement in Nakamoto consensus. 
The parallel chain remains a low block generation rate for security and all the chains achieve a high throughput collaboratively. 

However, such design increases the cost in metadata extremely. In order to achieve desirable performance, OHIE runs 640 parallel chains and generates 64 blocks per second. In security analysis of OHIE, a block will be confirmed in OHIE only if it is confirmed in the individual chain it belongs to. So its confirmation time is worse than the Nakamoto consensus. 
\paragraph{Prism}
Prism~\cite{Prism19} also runs parallel chains. Unlike OHIE, the parallel chains in Prism do not carry transactions. So Prism does not need to order the blocks in parallel chains. Prism has three types of blocks: transaction blocks that only pack transactions (like fruit in FruitChain), proposer blocks that pack transaction blocks and voter blocks that run in parallel chains. The voter blocks will vote for the proposer blocks and pick a leader block for each height. Prism orders the leader blocks according to their height. The leader block not necessarily appears in the longest branch. So the block confirmation in Prism does not depend on the confirmation of proposer blockchain. 

The most clever point in Prism is that the confirmation of a leader block does not need to wait for its voters become irreversible. Though the delay for one voter block becomes irreversible is as worse as Nakamoto consensus, Prism claims that reverting a majority of voter chains at the same time is much more difficult than reverting one voter chain. So even if a few voter chains are reverted, as long as the leader block receives a majority votes, it is not reverted by the attacker. Prism is the first proof-of-work consensus protocol that breaks confirmation barrier in Nakamoto consensus. Our work has a better performance than Prism. 

Prism still has some drawbacks. Similar to OHIE, parallel chains increase the amount of metadata in the protocol significantly. Prism only provides a security analysis in a synchronized network model, which is doubted unrealistic by~\cite{pass2017analysis}. 


\paragraph{GHOST}
Since the throughput in Nakamoto consensus is upper bounded by security threshold $\beta$, Sompolinsky et al~\cite{GHOST} introduce GHOST, which uses another way to select the branch. \footnote{Though the structured variants of Nakamoto consensus resolves the throughput issue, GHOST is proposed earlier than their work. }. Instead of measuring the length of branches, GHOST defines the subtree weight to measure the number of blocks in the subtree rooted at each block. For each block, GHOST regards its child block with maximum subtree weight as the best child and breaks ties by block hash. Started with the genesis block, GHOST visits the best child recursively to select the branch. GHOST claims that once all the honest nodes mining under the subtree of one block, the growth of its subtree weight will not be undermined on the decreasing of block generation interval. Thus GHOST claims it resolves the security issue of Nakamoto consensus in a high block generation rate (a low block generation interval). 
However, Sompolinsky et al only analysis the behavior of GHOST under some attack cases, lack of a rigorous security analysis with a practical result. 


Kiayias et al~\cite{kiayias2017trees} provides a security analysis for GHOST in a low block generation rate under a synchronized network model. When the block generation rate in GHOST is low enough to matches the requirement in Nakamoto consensus. GHOST has the same security properties as Nakamoto consensus and can have the same confirmation. Kiffer et al~\cite{kiffer2018better} try to provide a similar analysis under an asynchronized network model. 



\paragraph{Liveness attack for GHOST}

Natoli et al~\cite{natoli2016balance} first point out GHOST is vulnerable facing a liveness attack when the block generation rate is high. The liveness attack is not aimed to re-ordering the confirmed blocks, but tries to prevent from confirming the new blocks. 

They provide a liveness attack strategy called the balance attack. Suppose the total block generation rate is $\lambda$ and the attacker is able to delay the message communication with time $\delay$. The attacker splits the honest miner into two groups with similar mining power. 
Figure~\ref{fig:examples:balance-attack} presents one example of such attacks.
The example has the following settings: 1) the total block generation rate of
honest participants is $\lambda$; 2) honest participants are divided into two
groups with equal computation power (group \phvname{X} and group \phvname{Y} in
Figure~\ref{fig:examples:balance-attack}); 3) blocks will transmit instantly
inside each group, but the propagation between these two groups has a
delay of $d$. In Figure~\ref{fig:examples:balance-attack}, each of the two
groups extend their own subtree following the GHOST rule. Note that recent
generated blocks within the time period of $d$ are in-transit blocks (gray 
blocks in Figure~\ref{fig:examples:balance-attack}), which are only
visible by the group that generates it. Therefore each group will believe its
own subtree is larger until one group generates sufficiently more blocks than
the other to overcome the margin caused by the in-transit blocks.
In normal scenarios, one of the two groups will get lucky to enable the
blockchain to converge. However, an attacker can mine under two subtrees
simultaneously to delay the convergence. The attacker can strategically
withhold or release the mined blocks to maintain the balance of the two
subtrees as shown in Figure~\ref{fig:examples:balance-attack}. Previous work
has shown that, if the margin caused by in-transit blocks is significant, i.e.,
$\lambda d>1$, an attacker with little computation power can stall the
consensus progress~\cite{yu2018ohie}.

\begin{figure}
    \centering
    \includegraphics[width=0.5\linewidth]{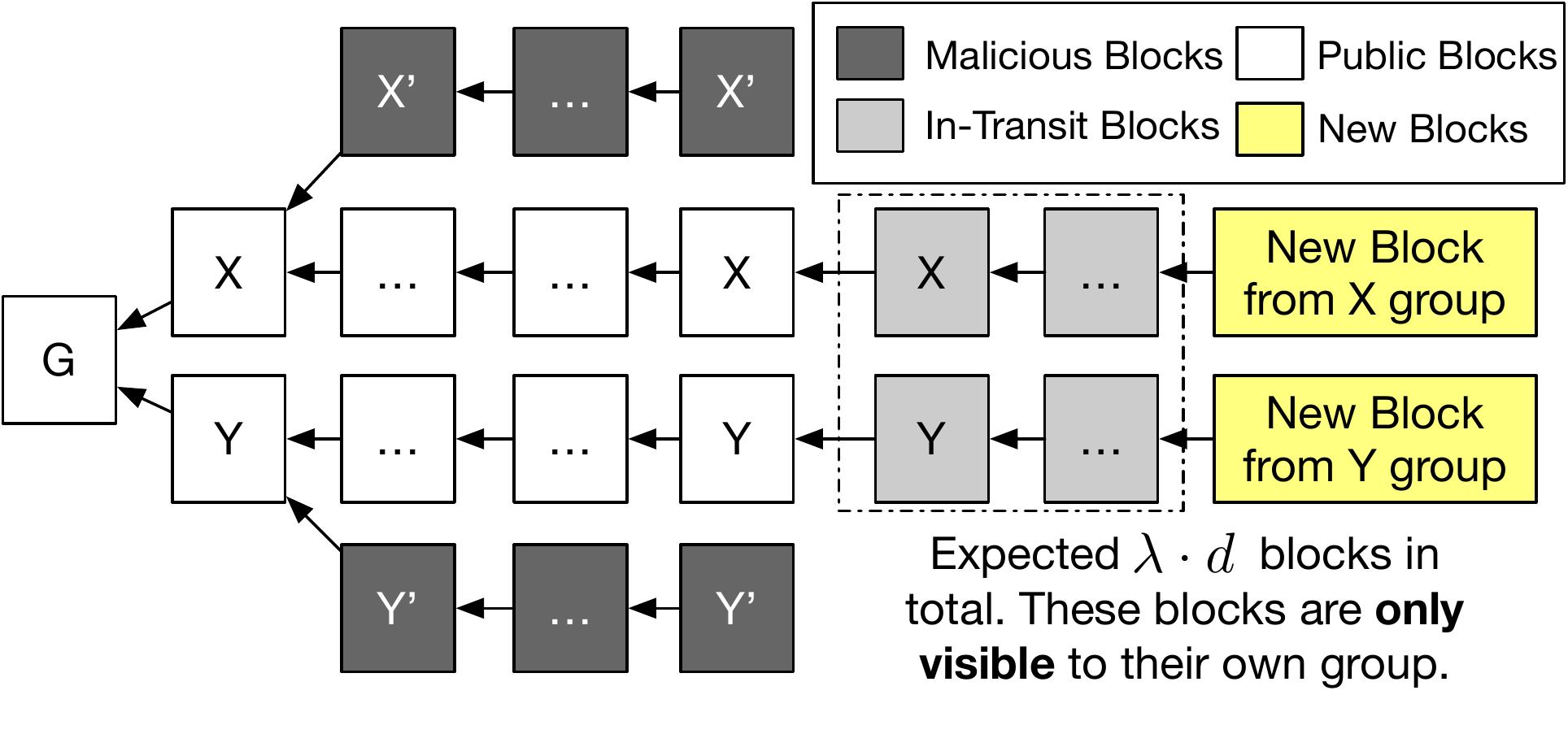}
    \caption{An example for balance attack.}
    \label{fig:examples:balance-attack}
    \vspace{-8mm}
\end{figure}


\paragraph{DAG-based structures} 
To improve the throughput and the confirmation speed, researchers have
explored several alternative structures to organize blocks. Inclusive
blockchain~\cite{inclusive} extends the Nakamoto consensus and GHOST to DAG and
specifies a framework to include off-chain transactions. In
PHANTOM~\cite{PHANTOM}, participating nodes first find an approximate
$k$-cluster solution for its local block DAG to prune potentially malicious
blocks. They then obtain a total order via a topological sort of the remaining
blocks. Unfortunately, when the block generation rate is high, inclusive
blockchain and PHANTOM are all vulnerable to liveness attacks. Unlike {\name} they
cannot achieve both the security and the high performance.


Some protocols attempt to obtain partial orders instead of total orders for
payment transactions.  SPECTRE~\cite{SPECTRE} produces a non-transitive
partial order for all pairs of blocks in the DAG.  Avalanche~\cite{Avalanche}
connects raw transactions into a DAG and uses an iterative random sampling
algorithm to determine the acceptance of each transaction. Unlike {\name}, it
is very difficult to support smart contracts on these protocols without
total orders.

\paragraph{Byzantine fault tolerance.}
ByzCoin~\cite{ByzCoin} and Thunderella~\cite{Hybrid} propose to achieve
consensus by combining the Nakamoto consensus with Byzantine fault tolerance
(BFT) protocols. Algorand~\cite{Algorand}, HoneyBadger~\cite{HoneyBadgerBFT},
and Stellar~\cite{stellar} replace the Nakamoto consensus entirely with BFT
protocols. In practice, all these proposals run BFT
protocols within a confined group of nodes, since BFT protocols only scale up
to dozens of nodes. The confined group is often chosen based on their recent
PoW computation power~\cite{ByzCoin, Hybrid}, their stakes of the
system~\cite{Algorand}, or external hierarchy of trusts~\cite{HoneyBadgerBFT,
stellar}. 

However, these approaches may create undesirable hierarchies among participants
and compromise the decentralization of blockchain systems. Moreover, all of
these approaches except Algorand are also vulnerable to DDoS attacks adaptively
targeting those leader or committee nodes. Algorand is vulnerable to long range
attacks --- an attacker could use a set of old private keys that once hold the
majority of coins to create an alternative transaction history that is
indistinguishable from the real history for new nodes.

\section{Model}\label{sec:model}

We adapt round-based partial synchronous network model similar to \cite{pass2017analysis}. A blockchain protocol is defined as a pair of algorithms $(\proto^{\vec{\eta}}(1^\secp),\order)$ with security parameter $\secp$. $\proto^{\vec{\eta}}$ is parameterized by a list of protocol parameters $\vec{\eta}(\kappa)$ and we use $\proto$ when the context is clear. It maintains the local state $\state$ consists of a set of blocks and prepares new blocks to be resolved proof-of-work puzzle. 
$\order$ orders the blocks in $\state$ deterministically.

The model is directed by an environment $\env(1^\secp)$ with security parameter $\secp$ which interacts with an adversary $\adv$ and a set of participant nodes $\cal N$ activated by $\env$. $\cal N$ contains two types of nodes: the \emph{honest nodes} which follow the blockchain protocol $(\proto,\order)$ and the \emph{corrupted nodes} which are controlled by adversary $\adv$. There is a random function $\prf:\{0,1\}^*\rightarrow\{0,1\}^\secp$ which can be accessed by participant nodes via four oracles $\mine(x):=\prf(x)$, $\verify(x,y):=[\prf(x)=y]$, $\prf^{\sf weight}(x)$ and $\prf^{\sf timer}(x)$. ($\prf^{\sf weight}$ and $\prf^{\sf timer}$ are used to assign each block two random tags. ) The output of $\prf$ is interpreted as an integer in $[0,2^\secp-1]$. Each node (honest or corrupted) is allowed to query oracle $\mine$ once each round.  

\paragraph{Round-based execution}
This protocol proceeds in round to model an atomic time steps (e.g. $10^{-12}$ seconds). For convenience in the security analysis, we re-order the actions of participants and divide them into four phases:
\begin{itemize}[nosep]
	\item \emph{Phase 1}: $\adv$ \emph{corrupts} and \emph{uncorrupts} arbitrary nodes in $\cal N$. It means $\adv$ can switch the corrupted nodes set adaptively between rounds. 
	\item \emph{Phase 2}: $\adv$ delivers blocks to each node. $\env$ delivers other messages (e.g. the contents to be recorded on blockchain) to each node. 
	\item \emph{Phase 3(a)}: For honest nodes, they maintain the local state with input delivered by $\adv$ and $\env$ in phase 2, organize a block to be solved proof-of-work puzzle following $\proto$ and try to solve the puzzle by querying oracle $\mine$. If an honest node constructs a valid block, it delivers to $\adv$ and incorporates the new block into its local state. 
	\item \emph{Phase 3(b)}: For corrupted nodes, $\adv$ gets access to their local state and takes control of access for oracle $\mine$, meaning that adversary is allowed to query oracle $\mine$ with quota the number of corrupted nodes. 
\end{itemize}
Let random variable $\view^{(\proto,\order)}(\env,\adv,\secp)$ denote the joint view of all the participant nodes and the adversary in all rounds.

\paragraph{Block and graph}
A \emph{block} $\block$ is a tuple of $(h_{-1},\vec{h},{\sf m},s,h)$, where $h_{-1}$ is a hash value ($\secp$-bits string) of a previous block (a pointer to this block), $\vec{h}$ is a list of hash values to some other blocks, ${\sf m}$ represents the contents and metadata carried by such block and $s$ is a nonce. $h$ is the hash of block $\block$ satisfying $\prf(h_{-1},\vec{h},{\sf m},s)$. We use $\block.\digest$ to denote the hash $h$. The blocks corresponding to $h_{-1}$ or the hash values in $\vec{h}$ are called {direct dependency blocks} of block $\block$. Following the dependency relation recursively, the other reached blocks are called {indirect dependency blocks}. 

In order to limit the generation rate of blocks, $\proto^{\vec{\eta}}$ only accepts the block whose hash is smaller than $2^\secp/\difficulty$, where $\difficulty$ is the \emph{puzzle difficulty} parameter in $\vec{\eta}$.
In phase 3(a), an honest node following protocol $\proto^{\vec{\eta}}$ checks the validity of incoming blocks and incorporate the valid blocks into local state $\state$. After that, it prepares a new block in format of $(h_{-1},\vec{h},{\sf m},\bot,\bot)$ directed by protocol $\proto^{\vec{\eta}}$. The proof-of-work puzzle refers finding an appropriate nonce $s$ with $\prf(h_{-1},\vec{h},{\sf m},s)<2^\secp/\difficulty$ to make this block valid. The only way to solve puzzle is querying $\prf$ with random $s$ via oracle $\mine$. When an appropriate $s$ is found, we say a block is \emph{generated}. 

Protocol $\proto$ regards a block as valid if its first component $h\l$ is not $\bot$, its hash value is consistent with other components and it solves the proof-of-work puzzle. The \emph{genesis block} $\genesisblock\eqdef(\bot,\bot,\bot,\bot,\prf(\bot,\bot,\bot,\bot))$ is a special valid block which does not satisfy these properties.

A set of valid blocks $\graph$ will be regarded as a \emph{valid graph} if for each block $\block\in\graph$, its direct and indirect dependency blocks are in $\graph$. Since the hash value is unpredictable before querying $\prf$, we can simply claim $\prf$ must output the hash of block $\block$ later than all its direct and indirect dependency blocks. So there should be no cycle in $\graph$. 
The local state $\state$ is a valid graph $\graph$ and we use $\state$ and $\graph$ interchangeably. The following part only focuses on the valid blocks and valid graphs. So we omit term ``valid'' for succinctness.

The blocks constructed by honest nodes are called \emph{honest blocks} and the blocks constructed by adversary are called \emph{malicious blocks}. In our execution model, honest nodes are not informed whether a block is honest or malicious. 

\paragraph{Adversary restriction}
We discuss the model with a restricted environment and adversary.

\begin{definition}[Admissible environment]\label{def:admissible}
	We say that the tuple $(m(\cdot), \beta, d(\cdot), \adv, \env)$ is admissible w.r.t. $(\proto,\order)$ if $\beta<1/2$, $\adv$ and $\env$ are non-uniform probabilistic polynomial-time algorithms, $m(\cdot)$ and $d(\cdot)$ are polynomial functions, and for every $\secp\in \mathbb{N}$ 
	\begin{itemize}[nosep]
		\item $\env$ activates $m=m(\secp)$ participant nodes;
		\item $\adv$ does not modify the contents of delivered message;
		\item $\adv$ always corrupts $\beta\cdot m(\secp)$ corrupted nodes at the same time. \footnote{We assume $\beta\cdot m(\secp)$ always be an integer here. This setting also handle the case that adversary $\adv$ corrupts nodes less than $\lfloor\beta\cdot m(\secp)\rfloor$, because adversary $\adv$ can make some corrupted nodes act like an honest node.}
		\item For any block $\block$, if it appears in local state of one honest node in round $r$, $\adv$ is responsible to make sure all the honest nodes incorporate $\block$ to the local states at and after phase 2 of round $r+d(\secp)$. 
	\end{itemize}
\end{definition}



\paragraph{Metrics} 
Here we discuss the aim of our protocol $(\proto,\order)$. Recalling that random variable $\view^{(\proto,\order)}(\env,\adv,\secp)$ denote the joint view of all the participant nodes in all rounds. The randomness is from the oracle $\prf(\cdot)$ and random coins in $\env,\adv$ and participant nodes. Let $\uniongraphs_r$ collect all the local states of honest nodes in round $r$ (in each phase). Similar with \cite{Prism19}, the protocol executes for a finite round $\rmax$ polynomial in $\secp$. 

Our study focus on the finality of block history. The \emph{history of block} $\block$ in local state $\state$ refers the prefix \footnote{In this paper, a \emph{prefix} of a list could equal to the list itself.} of $\order(\state)$ ended at block $\block$, which is denoted by $\mathrm{Prefix}(\order(\state),\block)$. If $\block\notin\order(\state)$, $\mathrm{Prefix}(\order(\state),\block)=\bot$. Block $\block$ is finalized (or confirmed) if all the honest nodes have consistent history of block $\block$ remain unchanged. Formally, we define $(\varepsilon,\adv,\env,r_0,\secp)$-finalized as follows.

\begin{definition}[Finalization]\label{def:finalization}
	Let $\graph_r$ denote the joint local state of all the honest nodes at round $r$ (after phase 2) 
	in $\view^{(\proto,\order)}(\env,\adv,\secp)$. Round $\rcon$ is $(\varepsilon,\adv,\env,r_0,\secp)$-finalized w.r.t. protocol $(\proto,\order)$ iff 
	$$ \Pr_{\view^{(\proto,\order)}(\env,\adv,\secp)}\left[\forall \block\in \graph_\rcon,
		\left|\bigcup\nolimits_{\substack{r\in\{r_0,\cdots,\rmax\}\\ \state\in \uniongraphs_r}} \mathrm{Prefix}(\order(\state),\block)\right|=1
	\right]\ge 1-\varepsilon-\negl(\secp)$$
\end{definition}
Since the local state of each honest node is a random variable, the finality is defined over a round $\rcon$ other than a block $\block$.  

\begin{definition}[Latency]
	If there exists $r_{\varepsilon}$ such that for any $\rcon\le \rmax-r_{\varepsilon}$, round $\rcon$ is $(\varepsilon,\adv,\env,\rcon+r_{\varepsilon},\secp)$-finalized, then we say $\view^{(\proto,\order)}(\env,\adv,\secp)$ has the $\varepsilon$-latency $r_{\varepsilon}$.
\end{definition}


\section{Protocol}\label{sec:proto}



\subsection{Rephrase GHOST protocol in our framework}\label{sec:ghost}

GHOST proposed in \cite{GHOST} takes a set of blocks in a tree structure as input and outputs a sequence of blocks. Each block in this tree has a non-negative \emph{subtree weight}. For every block $\block$, the subtree weight of $\block$ refers to the total weights of all blocks in the subtree rooted at $\block$. The GHOST starts from the root of the tree and repeatedly proceeds to the child block with maximum subtree weight until reaching a leaf node block. Then the path of blocks will be the chain output by GHOST.

Now, we formalize the GHOST \cite{GHOST} with our notations. Each block under GHOST doesn't have the component of block hash values list. They can be represent by $\block=(h_{-1},\bot,{\sf m},s,h)$. The genesis block $\genesisblock$ is the root of tree in GHOST. For any other valid block $\block$, $h_{-1}$ should be the digest of a valid predecessor block $\block_{-1}$, which is called \emph{parent block} of $\block$. Since two different blocks never have the same digest with negligible exception, we denote the parent block by $\block.\parent$. (For genesis block, $\genesisblock.\parent=\bot$.) Started at any block, following the parent block recursively gives a chain of blocks ended as the genesis block. Every two consecutive blocks in this chain have a parent/child relation. It is called \emph{chain of block} $\block$ and defined as 
\begin{equation}\label{eq:def:chain}
	\mathrm{Chain}(\block):=\left\{\begin{array}{ll}
		\block & \block=\genesisblock  \\
		\mathrm{Chain}(\block.\parent)\circ \block  & \text{otherwise}
	\end{array}\right.
\end{equation}

In a graph $\graph$, each block has exactly one outgoing edge except the genesis block with no outgoing edge and there is no cycle because of unpredictable of digest computation. So all the blocks organize in a tree rooted at $\genesisblock$. We use $\tree{\graph,\block}$ to denote the subtree rooted at block $\block$ in $\graph$. 
\begin{equation}\label{eq:def:subtree}
	\tree{\graph,\block}:=\{\block'\in \graph:\block'\in \chain{\block}\}.
\end{equation}

The subtree weight for each block $\block\in\graph$ refers the total block weight of blocks $\tree{\graph,\block}$. In GHOST protocol, all the blocks have the same weight 1 ($\forall \block\in \graph, \block.\weight=1$) \footnote{A blockchain system uses puzzle difficulty as block weight. Since our model has a static puzzle difficulty, we simply set block weight be 1. } and the subtree weight is formulated as
\begin{equation}\label{eq:def:subtreew}
	\treew{\graph,\block}=\sum_{\block'\in \tree{\graph,\block}} \block'.\weight.
\end{equation}

The \emph{children of block} $\block$ refers all the blocks $\block'$ which regard $\block$ as its parent block. In addition, we filters out the blocks with weight 0. This rule is not activated in GHOST protocol since all the block have weight 1. But the following design will introduce zero-weight block.
\begin{equation}\label{eq:def:children}
	\child(\graph,\block):=\{\block'\in \graph:\block'.\parent=\block\wedge\block'.\weight>0\}.
\end{equation}

Among all the children block, GHOST chooses the one with the largest subtree weight and break tie by choosing the block with minimum block hash. Formally, it can be described by 
\begin{equation}\label{eq:def:bestchild}
		\mathrm{BestChild}(\graph,\block):=\argmax_{\block'\in \child(\graph,\block)} \treew{\graph,\block'}. 
\end{equation}
(Note: 1. When $\child(\graph,\block)=\emptyset$, let $\mathrm{BestChild}(\graph,\block)=\bot$; 2. When there are multiple children having maximum subtree weight, this function breaks tie by returning the block with minimum block digest. \footnote{The original work \cite{GHOST} didn't mention how to break ties when two subtree have the same weight. However, breaking tie with block digest is a common setting in previous work. \cite{pass2017analysis}}) 

Started with the genesis block $\genesisblock$, GHOST recursively choose the best child until reaching a block without children. All reached blocks organize a chain called \emph{pivot chain} of graph $\graph$. $\mathrm{Pivot}(\graph)$ is defined formally in figure~\ref{eq:def:chain}.

\begin{algorithm}[H]
	\small
	\SetNlSty{}{}{}
	\DontPrintSemicolon
	\SetKw{To}{ to }
	\SetKw{In}{ in }
	\SetKw{Or}{ or }
	\SetKw{And}{ and }
	\SetKwInOut{Input}{Input}\SetKwInOut{Output}{Output}
	\SetKw{Define}{define}
	\Input{A graph $\graph$}
	\Output{A sequence of blocks $\mathbf{L}$}
	\let\oldnl\nl
	\newcommand{\nonl}{\renewcommand{\nl}{\let\nl\oldnl}}
	Initialize $\mathbf{L}$ with empty list\;
	Initialize $\block$ with genesis block $\genesisblock$\;
	$\mathbf{L} \longleftarrow \mathbf{L}\circ \block$\;
	\While {$\mathrm{Child}(\graph,\block) \neq \emptyset$} {
		$\block \longleftarrow \mathrm{BestChild}(\graph,\block)$\;
		$\mathbf{L} \longleftarrow \mathbf{L}\circ \block$\;
	} 
	\Return $\mathbf{L}$
	\vspace{-2.5mm}
	\caption{\small The definition of $\mathrm{Pivot}(\graph)$.}
	\label{fig:pivot}
	\vspace{-4mm}
\end{algorithm}

Now we describe the protocol $(\proto^{\vec{\eta}}_{\sf GHOST},\order_{\sf GHOST})$. $\proto^{\vec{\eta}}$ first initiates the local state of all participant nodes with genesis block $\genesisblock$. In phase 3(a) of each round, upon receiving the blocks delivered by adversary $\adv$, $\proto^{\vec{\eta}}_{\sf GHOST}$ directs the honest nodes check their validity and append them into $\graph$ while making sure $\graph$ be valid. Then each honest node computes the pivot chain by $\mathrm{Pivot}(\graph)$ over their local state $\graph$, sets $h_{-1}$ be the digest of last block in $\mathrm{Pivot}(\graph)$, prepares block $\block_{\sf new}=(h_{-1},\bot,{\sf m},\bot,\bot)$ and tries to solve proof-of-work puzzle by querying $\prf(h_{-1},\bot,{\sf m},s)$ with random $s$. After that, the model enters phase 4. In ordering the graph $\graph$, order algorithm $\order_{\sf GHOST}(\state)$ simply returns the pivot chain of local state $\mathrm{Pivot}(\graph)$.

\subsection{Tree-graph structure}

We adopt the ideas from previous works \cite{SPECTRE,PHANTOM} which allow each block refers multiple predecessor blocks and organize blocks in the structure of directed acyclic graph instead of tree. A valid blocks can be represented by $\block = (h_{-1},{\vec h},{\sf m},s,h)$, in which ${\vec h}$ contains a list of digests (pointers) of other valid blocks. The blocks pointed by ${\vec h}$ are called \emph{reference blocks} of $\block$. (Genesis block $\genesisblock$ should have empty $\vec{h}$). Started with block $\block$, by following the reference blocks repeatedly, we can reach all the direct and indirect dependency blocks of $\block$. These blocks (not including block $\block$) organize a valid graph, which is called the \emph{past graph} of block $\block$ and denoted by $\block.\past$. 

The parent block digest $h_{-1}$ of all blocks organizes blocks in a tree structure and the reference block digests ${\vec h}$ organize blocks in a directed acyclic graph structure. So we call it Tree-graph structure and denote the protocol by $(\proto^{\vec{\eta}}_{\sf TG},\order_{\sf TG})$. Compared with $\proto^{\vec{\eta}}_{\sf GHOST}$, $\proto^{\vec{\eta}}_{\sf TG}$ has an additional requirement for block validity. It requires a valid block $\block$ chooses the pivot chain tip of $\block.\past$ as its parent block. So the chain of block $\block$ is consistent with the pivot chain in graph $\block.\past$.  (a.k.a. $\mathrm{Pivot}(\block.\past)\circ\block = \mathrm{Chain}(\block)$.) In organizing new blocks $
\block_{\sf new}=(h_{-1},{\vec h},\sf m,\bot,\bot)$ to be solved proof-of-work puzzle, $\proto^{\vec{\eta}}_{\sf TG}$ prepares $h_{-1}$ and $\sf m$ in the same way as $\proto^{\vec{\eta}}_{\sf GHOST}$, and includes the digests of tip blocks in graph $\state$ into ${\vec h}$ to make sure $\block_{\sf new}.\past=\state$. 

The ordering algorithm $\order_{\sf TG}$ is defined formally in figure~\ref{fig:order}. It initializes a list $\textbf{L}$ with only genesis block $\genesisblock$ and visits the blocks in the pivot chain sequentially. In each round, when the algorithm $\order_{\sf TG}$ reaches block $\block_{\sf next}$ with parent block $\block$, it collects all the blocks in $\block_{\sf \next}.\past$ but not appended to $\textbf{L}$ in the last round (not in $\block.\past\cup\{\block\}$), topological sorts them with a deterministic function $\mathrm{TopoSort}(\cdot)$ and appends the result to $\textbf{L}$. Next the algorithm appends $\block_{\sf next}$ to $\textbf{L}$ and continues to visit the next block. The detailed implementation of function $\mathrm{TopoSort}(\cdot)$ is not necessary in subsequent analysis.


\begin{algorithm}[!htp]
	\small
	\SetNlSty{}{}{}
	\DontPrintSemicolon
	\SetKw{To}{ to }
	\SetKw{In}{ in }
	\SetKw{Or}{ or }
	\SetKw{And}{ and }
	\SetKwInOut{Input}{Input}\SetKwInOut{Output}{Output}
	\SetKw{Define}{define}
	\Input{A valid graph $\graph$ (An alias of local state $\state$)}
	\Output{A sequence of blocks $\mathbf{L}$}
	\let\oldnl\nl
	\newcommand{\nonl}{\renewcommand{\nl}{\let\nl\oldnl}}
	$\mathbf{P} \longleftarrow \mathrm{Pivot}(\graph)$\;
	$\block \longleftarrow \text{Pull the first block from }\mathbf{P}$\;
	Assert $\block$ is genesis block $\genesisblock$\;
	$\mathbf{L}\longleftarrow \block$\;
	\While {$\mathbf{P}$ is not empty list} {
		$\block_{\sf next} \longleftarrow \text{Pull the first block from }\mathbf{P}$\;
		$\mathbf{L} \longleftarrow \mathbf{L}\circ \mathrm{TopoSort}(\block_{\sf next}.\past\backslash\left(\{\block\}\cup\block.\past\right))$\;
		$\mathbf{L} \longleftarrow \mathbf{L}\circ \block_{\sf next}$\;
		$\block\longleftarrow\block_{\sf next}$\;
	} 
	\Return $\mathbf{L}$
	\vspace{-2.5mm}
	\caption{\small The definition of $\mathrm{\order_{\sf TG}}(\state)$.}
	\label{fig:order}
	\vspace{-4mm}
\end{algorithm}

\subsection{GHAST}


%

Now we introduce the Ghast protocol $(\proto^{\vec{\eta}}_{\sf GHAST},\order_{\sf GHAST})$. It is the same as $(\proto^{\vec{\eta}}_{\sf TG},\order_{\sf TG})$ except the definition of block weight. In $(\proto^{\vec{\eta}}_{\sf GHAST},\order_{\sf GHAST})$, the protocol may assign different block weight to each block in order to solve the liveness issues in Ghost. \footnote{This design is totally different from the mechanism called \emph{difficulty adjustment} in Bitcoin. Difficulty adjustment mechanism changes the block weights and puzzle difficulty in react to the change of computing power (the frequency in querying oracle $\prf$). In the safety analysis model with constant computing power, Bitcoin and some other protocols like Ghost and Prism \cite{Prism19} doesn't adjust the puzzle difficulty and block weight. But Ghast protocol may assign different block weight to each block. } 
The following design requires four more parameters in $\vec{\eta}$: $\heavyw$, $\advan$, $\timerw$ and $\timerdiff$. 

\subsubsection{Assign block weight accordingly}

Previous work \cite{yu2018ohie} figures out an adversary with a low computing power in ratio (e.g. $\beta=0.2$) has the ability to defer the finality of a block when the block generation interval ($m/\difficulty$ rounds in average) is much smaller than the maximum message delay ($\delay$ rounds). In order to solve this problem, we propose a structured GHOST protocol which slows down the weighted block generation rate. We assign each block $\block=(h_{-1},\vec{h},{\sf m},s,h)$ a random tag computed by oracle $\prf^{\sf weight}(h):=\prf({\sf weight},h)$ where ${\sf weight}$ represents a fixed bit-string to make $\prf^{\sf weight}(h)$ be independent with the outputs from $\prf$ in solving proof-of-work puzzle. \todo{How to present it formaly.} Let $\heavyw>0$ be a parameter which specifies the ratio in slowing down generation rate. We set $\block.\weight=\heavyw$ when $\prf^{\sf weight}(h)<2^\secp/\heavyw$ and $\block.\weight=0$ otherwise. It means that a valid block will be assigned with weight $\heavyw$ with probability $1/\heavyw$. Thus the generation rate of the blocks with non-zero weight reduces $\heavyw$ times. 

%
When the block generation rate is low enough to solve issues in Ghost, the confirmation latency is as worse as nakamoto consensus. So the protocol switch between an optimistic consensus strategy and a conservative strategy accordingly. In the optimistic strategy, all the blocks have the same weight 1. When the protocol detects a serious attack happens, it switches to the conservative strategy in which only $1/\heavyw$ of blocks is assigned weight $\heavyw$. 

In a partial synchronous network, it is impossible to make all the honest nodes switch the settings simultaneously. So the protocol determines the consensus strategy for each block individually and incorporates the blocks generated in two strategies into one Tree-Graph. We set a deterministic function $\adp(\graph)$ which is responsible for detecting the presence of a liveness in a local state $\graph$. It takes the past-set of a block as input and outputs $\mathsf{opt}$ or $\mathsf{con}$ to indicate the consensus strategy for this block. For the block $\block$ with $\adp(\block.\past)=\mathsf{opt}$, $\block.\weight=1$. If $\adp(\block.\past)=\mathsf{con}$, the block weight equals to 0 or $\heavyw$ depending on $\prf^{\sf weight}(h)$. 

\subsubsection{Detect liveness attack}

Function $\adp(\cdot)$ is parameterized by a positive integer $\advan$ in $\vec{\eta}$.
%
We define a concept called $\advan-$\emph{dominant child}. In a given graph $\graph$, when block $\block$ has a child block $\block'$ whose subtree weight is at least $\advan$ larger than the subtree weight of the other blocks, we say block $\block'$ is $\advan-$dominant child of block $\block$ in graph $\graph$. Specially, if block $\block$ has only one child block, the subtree weight of its $\advan-$dominant child block should be at least $\advan$. Function $\adp(\cdot)$ starts with the genesis block $\genesisblock$, visits the $\advan-$dominant child repeatedly until reaching a block without $\advan-$dominant child. Let block $\block_{\sf c}$ be the last visited block. 

For another graph $\graph'$, if the total block weight of symmetric difference between graph $\graph$ and graph $\graph'$ is less than $\advan$, the pivot chain of $\graph$ and graph $\graph'$ must have the common prefix ended at $\block_{\sf c}$. \todo{Lemma XXX reason it formally.} Intuitively, for another honest node whose local state is not much different from $\graph$, it should also agree that block $\block_{\sf c}$ is in pivot chain. 
%

After getting the chain ended at $\block_{\sf c}$, function $\adp(\cdot)$ accesses a \emph{block age speculation} function $\mathrm{Old}(\graph,\block)$. (It is defined in~\ref{sec:blockage}.) It conjectures whether a block $\block$ is old enough (has been generated for a sufficient long time) at the time point that an honest node has local graph $\graph$. If $\mathrm{Old}(\graph,\block_{\sf c})$ output $\false$, we have relative high confidence that honest nodes have almost the same pivot chain and there is no liveness issue. So $\adp(\graph)$ outputs $\sf opt$ when $\mathrm{Old}(\graph,\block_{\sf c})=\false$ and outputs $\sf con$ otherwise.

The block age speculation function is required the following two properties: 
\begin{enumerate}
	\item  If a block is old enough, all the blocks in its past set should also be old enough. Formally, for any block $\block,\block'$ and graph $\graph$ with $\mathrm{Old}(\graph,\block)=\true$ and $\block'\in\block.\past$, it should be $\mathrm{Old}(\graph,\block')=\true$.
	\item  Once a block becomes old enough, it will always be old enough in the future. Formally, for any graph $\graph,\graph'$ and block $\block$ with $\mathrm{Old}(\graph,\block)=\true$ and $\graph \subseteq \graph'$, it will be $\mathrm{Old}(\graph',\block)=\true$. 
\end{enumerate}

Now we give a formal definition for function $\adp(\cdot)$. Let $\sibtreew{\graph,\block}$ returns the maximum subtree weight of the siblings of $\block$ (the other children of its parent block). 
\begin{equation}\label{def:sibsubtw}
    \sibtreew{\graph,\block}:=\max_{\block'\in \child(\graph,\block.\parent)\backslash\{\block\}}\mathrm{SubTW(\graph,\block')}.
\end{equation}
(If $\child(\graph,\block.\parent)\backslash\{\block\}$ is empty set, $\sibtreew{\graph,\block}$ returns 0.)

Notice that $\block$ is the $\advan-$dominant child iff $\mathrm{SubTW}(\graph,\block)-\sibtreew{\graph,\block}\ge \advan$. The function $\adp(\cdot)$ can be expressed in an equivalent form: 
\begin{definition}\label{def:adaptive}
	Function $\adp(\graph)$ outputs $\sf con$ if one of the following two conditions satisfied:
	\begin{itemize}[nosep]
		\item $\exists \block\in \mathrm{PivotChain}(\graph), \mathrm{Old}(\graph,\block.\parent)=\true \wedge \mathrm{SubTW}(\graph,\block)-\sibtreew{\graph,\block}< \advan.$
		\item For the last block $\block'$ in $\mathrm{PivotChain}(\graph)$, $\mathrm{Old}(\graph,\block')=\true$
	\end{itemize}
	For other cases, $\adp(\graph)$ outputs $\sf opt$. Specially, if $\graph=\emptyset$, $\adp(\graph)$ outputs $\sf opt$. The function $\mathrm{Old}(\graph,\block)$ is defined in definition~\ref{def:oldenough}.
\end{definition}

\subsubsection{Block age speculation function}\label{sec:blockage}

Block age speculation function $\mathrm{Old}(\graph,\block)$ parameterized by $\timerw,\timerdiff$ conjectures if a block has been generated for a long enough time when an honest node see graph $\graph$. 
Let graph $\graph$ be the local state of an honest node at round $r$. Informally, we want to achieve two properties with some integers $r_2<r_1$:
\begin{itemize}[nosep]
	\item If block $\block$ is generated by an honest node, $\mathrm{Old}(\graph,\block)$ outputs $\true$ if $\block$ is generated before round $r-r_1$ and outputs $\false$ if $\block$ is generated later than $r-r_2$ (with negligible exception).
	\item If block $\block$ is generated by adversary, $\mathrm{Old}(\graph,\block)$ outputs $\true$ if $\block$ is generated before round $r-r_1$ (with negligible exception).
\end{itemize}

Note that we allow $\mathrm{Old}(\graph,\block)$ to falsely think a newly generated malicious block has been withheld for a long time. Because in some attack cases, an honest node can not distinguish between a newly generated block and an old block. For example, the adversary generates block $\block_1$ when the blockchain system just launched, withholds block $\block_1$ for a long time, and then generates another block $\block_2$ with the same parent blocks and direct dependency blocks. Then block $\block_1$ and block $\block_2$ have the same past set and difference block age. \todo{A simple example with verbose description.} 

We start with an idea that the protocol runs another blockchain under nakamoto consensus \cite{bitcoin} as \emph{timer chain}. The previous work \cite{pass2017analysis} shows that the longest branch in nakamoto consensus excluding last $x$ blocks will be agreed by all the honest nodes (except with exponentially small probability in $x$), and the growth rate on chain length has a lower bound and an upper bound. So the timer chain excluding last several blocks grows steadily and never be reverted. A block in tree graph structure can indicate its earliest possible generation time by including a block hash in timer chain. Counting the block height difference between the included block and the newest block in the timer chain, we can speculate the age of a block. 

The timer chain is constructed by picking a sequence of blocks in tree-graph structure. Formally, we assign each block $\block=(h_{-1},\vec{h},{\sf m},s,h)$ another random tag computed by oracle $\prf^{\sf timer}(h):=\prf({\sf timer},h)$. The blocks with $\prf^{\sf timer}(h)<\timerw\cdot 2^\secp$ are called \emph{timer blocks}. Let $\mathrm{Timer}(\graph)$ denote all the timer blocks in $\graph$. For each timer block $\block$, we defines its height in timer chain as
\begin{equation}\label{eq:timerheight}
	\mathrm{TimerHeight}(\block):=\max_{\block'\in\mathrm{Timer}(\block.\past)}\mathrm{TimerHeight}(\block')+1.
\end{equation}
Specially, if $\block.\past$ has no timer block, let $\mathrm{TimerHeight}(\block)=1$. For a graph $\graph$, $\mathrm{MaxTM}(\graph)$ returns maximum timer height. 
\begin{equation}\label{eq:maxtimerheight}
	\mathrm{MaxTH}(\graph):=\max_{\block'\in\mathrm{Timer}(\graph)}\mathrm{TimerHeight}(\block').
\end{equation}

\begin{definition}[Block age speculation function]\label{def:oldenough}
	Given graph $\graph$ and block $\block$, $\mathrm{Old}(\graph,\block)$ outputs $\true$ iff 
	$$\mathrm{MaxTH}(\graph) - \mathrm{TimerHeight}(\block)\ge \timerdiff.$$
\end{definition}

Note that function $\mathrm{Old}(\graph,\block)$ is also well-defined even if $\block\notin \graph$,

\subsubsection{Summary for the GHAST protocol}

Formally, the protocol $(\proto_{\sf GHAST},\order_{\sf GHAST})$ is the same as $(\proto_{\sf TG},\order_{\sf TG})$ excepts the way in assigning block weights. Given function $\adp(\graph)$ defined in definition~\ref{def:adaptive}, the block weights in the GHAST protocol is defined as

\begin{equation}\label{eq:blockweight}
	\block.\weight:=\left\{\begin{array}{ll}
		1      & \adp(\graph)={\sf opt} \\
		\heavyw & \adp(\graph)={\sf con} \wedge \prf^{\sf weight}(\block.\digest)<2^\secp/\heavyw \\
		0	   & \adp(\graph)={\sf con} \wedge \prf^{\sf weight}(\block.\digest)\ge 2^\secp/\heavyw
	\end{array}	\right.
\end{equation}






\section{Security}

Now we analysis the security of Ghast protocol. Formally, given $\vec{\eta}:=(\difficulty,\heavyw,\advan,\timerw,\timerdiff)$, we analysis the finalization for $(m,\beta,\delay,\adv,\env)$ which is admissible w.r.t. $(\proto_{\sf GHAST}^{\vec{\eta}},\order_{\sf GHAST})$. 

\subsection{Skeleton of proofs}

We let the adversary $\adv$ maintains an additional \emph{adversary state} $\advs$ for security analysis only. The adversary state is updated only on the following four types of events:
\begin{itemize}[nosep]
	\item \emph{Honest block generation (denoted by $\hgen$).} An honest node constructs a valid block $\block$, incorporates it into local state and sends it to adversary $\adv$. 
	\item \emph{Malicious block generation (denoted by $\mgen$).} The adversary constructs a valid block $\block$. 
	\item \emph{Malicious block release (denoted by $\mrls$).} For a block $\block$ constructed by the adversary, the first time it appears in the local state of an honest node. 
	\item \emph{Block received by all honest nodes (denoted by $\allrec$).} If a block $\block$ is incorporated to the local state of an honest node at round $r$, an event happens at round $r+d$ which guarantees block $\block$ appears the local state of all the honest nodes after this time. 
\end{itemize}

We use a tuple $\event:=(r,\block,t)$ with $t\in \{\hgen,\mgen,\mrls,\allrec\}$ to denote a type $t$ event happens at round $r$ corresponding to block $\block$. A function $\psi(\advs\l,\event)=\advs$ directs adversary updates $\advs\l$ to $\advs$ when event $\event$ happens.


We define a \emph{global potential value} $\cpot(\advs,\graph)$ to quantify the adversary's power in changing the history of blocks in $\graph$. 
We show that the history of blocks in $\graph$ will be consistent among all the honest participants become unchangeable unless $\cpot(\advs,\graph)$ larger than a threshold in the future in theorem~\ref{thm:hisunchange}.
In other words, for any graph $\graph$, the finalization of blocks in graph $\graph$ can be reduced to the upper bound for $\cpot(\advs,\graph)$.

We impose an \emph{event value} on every events to upper bound its influence on global potential value.~(Theorem~\ref{thm:cpotdiff}.) The sum of event values naturally implies an upper bound for the global potential value. So the finalization property can be derived from the random variable for the sum of event values. 

We will introduce the intuitions in designing adversary state update function $\psi(\advs\l,\event)$, potential function $\cpot(\advs,\graph)$, and event value in section~\ref{sec:intuition}. Then we will define these functions formally in section~\ref{sec:concepts}. It is inconsistent with the intuition in some details since the intuitions are illustrative. Section~\ref{sec:properties} proves that our design achieves our aim in proof skeleton.

\subsection{Intuitions}\label{sec:intuition}

\paragraph{Common pivot chain} 
Recalling that the ordering algorithm $\mathrm{\order_{\sf GHAST}}(\state)$ is the same as $\mathrm{\order_{\sf TG}}(\state)$ except the definition of block weight. From the definition of $\mathrm{\order_{\sf TG}}(\state)$ in figure~\ref{fig:order}, the block order is only determined by the pivot chain $\mathrm{Pivot}(\state)$. (We use $\graph$ and $\state$ interchangeably.) Given two graphs $\graph_1,\graph_2$, if block $\block$ appears in the pivot chain of graph $\graph_1$ and graph $\graph_2$, block $\block$ and the blocks in its past-set $\block.\past$ must have the same history in graph $\graph_1$ and $\graph_2$. Because the execution of $\mathrm{\order_{\sf TG}}(\graph_1)$ and $\mathrm{\order_{\sf TG}}(\graph_2)$ are the same before block $\block$ appended to $\mathbf{L}$. So the history of a block will remain unchanged if it is kept in the pivot chain. 

A pivot chain block $\block$ will be kept in the pivot chain only if its subtree weight is always no less than the its sibling blocks. We notice that when all the honest nodes are trying to construct block in the subtree of block $\block$, the subtree weight of $\block$ will increase faster than all its siblings in expectation, no matter what adversary $\adv$ does. To study this case, we define \emph{common pivot chain} which refers the intersection of the local state pivot chains of all honest nodes. As long as a block $\block$ lies on the common pivot chain, all the newly generated honest blocks will fall into the subtree of $\block$ and contribute to the subtree weight of $\block$. So block $\block$ will accumulate subtree weight advantages compared to siblings blocks. Intuitively, if a block $\block$ can stay in a common pivot chain for a sufficient long time, block $\block$ will be finalized. 

\paragraph{Special status} 
Some known attack can make the honest nodes disagree on choosing the best child for an old pivot chain block. In section~\ref{sec:ghost}, we try to handle this attack by switching to a conservative setting. Here, we want all the honest nodes switch to the conservative setting when the last block of common pivot chain is old enough. Formally, let $\mathrm{P}$ be the common pivot chain, $\block_{\sf tip}$ be the last block of the common pivot chain, $\state$ be the local state of an honest node.  We want $\adp(\state)={\sf con}$ when $\mathrm{Old}(\state,\block_{\sf tip})=\true$. 

Let $\block_{\sf tipc}$ be the next block in $\mathrm{Pivot}(\state)$. ($\block_{\sf tipc}=\bot$ if $\block_{\sf tipc}$ is the last block in $\mathrm{Pivot}(\state)$.) By definition~\ref{def:adaptive}, when $\mathrm{Old}(\state,\block_{\sf tip})=\true$, $\adp(\state)={\sf con}$ if $\mathrm{SubTW}(\state,\block_{\sf tipc})-\sibtreew{\state,\block_{\sf tipc}}< \advan$ or $\block_{\sf tipc}=\bot$.
For the case $\block_{\sf tipc}\neq\bot$, since $\block_{\sf tipc}$ is not in the pivot chain, there exists local state $\state'$ of another honest node satisfying $\mathrm{SubTW}(\state',\block_{\sf tipc})-\sibtreew{\state',\block_{\sf tipc}}\le 0$. Let $\tmpset$ include blocks which are in subtree of $\block_{\sf tip}$ and have been received by only a part of honest nodes. Let $w$ be the total block weight in $\tmpset$. The total block weight of symmetric difference between $\mathrm{SubTW}(\state,\block_{\sf tip})$ and $\mathrm{SubTW}(\state',\block_{\sf tip})$ is at most $w$. If $w<\advan$, we have 
$$ \mathrm{SubTW}(\state,\block_{\sf tipc})-\sibtreew{\state,\block_{\sf tipc}}\le w+\mathrm{SubTW}(\state',\block_{\sf tipc})-\sibtreew{\state',\block_{\sf tipc}}<\advan.$$

So we can guarantee $\adp(\state)={\sf con}$ if $\mathrm{Old}(\state,\block_{\sf tip})=\true$ and $w< \advan$. For the case $w\ge \advan$, we can't provide such a guarantee. We define such case as \emph{special status}. 

Let $w_{\sf h}$ and $w_{\sf m}$ denote the total weight of honest blocks and malicious blocks in $\tmpset$. So $w=w_{\sf h}+w_{\sf m}$. $w_{\sf h}$ is upper bounded by the total weight of honest blocks generated in the past $d$ rounds. All honest nodes generate $(1-\beta)\cdot n\cdot \difficulty/2^\kappa$ block weight per round in average. So we can find an threshold $\xi$ such that $w_{\sf h}\ge \xi$ with a low probability. So if the adversary wants to trigger special status, it needs to release blocks in subtree of $\block_{\sf tip}$ with total weight at least $\advan-\xi$ in consecutive $d$ rounds.

\paragraph{Block potential value} 
For each block $\block$ in the common pivot chain, we introduce a \emph{block potential value} $\pot(\advs,\block)$ to quantify the ability of adversary in changing or choosing the next common pivot chain block $\block$. 
Given round number $r$, if there exists a block $\block_r$ in the common pivot chain and $\block_r.\past$ contains all the blocks released no later than $r$. If we want to keep $\block_r$ in the common pivot chain, for each block in $\mathrm{Chain}(\block_r.\parent)$, the adversary should not be able to change its next common pivot chain block. We set global potential value $\cpot_r(\advs)$ the maximum block potential values of blocks in $\mathrm{Chain}(\block_r.\parent)$. 

\paragraph{Event value}
Recalling that we impose event values to upper bounds event influence on global potential value. The random variable for the sum of event values is relevant to block finalization.

Noticing that solving the proof-of-work puzzle is essentially a Markov process. Given $r_1\le r_2$, whether a $\hgen$ event or $\mgen$ event happens in round $r_2$ is independent with events before round $r_1$ and adversary strategies. Since block generation time cannot be manipulated, the attacker’s strategy becomes almost transparent when only looking at block generation events. Therefore, we only impose non-zero event values on block generation events to skim complicate adversary strategies. Usually, $\hgen$ events have negative values and $\mgen$ events have positive values. The event values also depend on the background when such event happens. (For example, whether such event happens in special status.) 



%

\subsection{Concepts}\label{sec:concepts}

\paragraph{Notations}
We define some tool functions and notations used in the following discussion. 
For any set of blocks $\mathbf{T}$ , $\tw(\mathbf{T}):=\sum_{\block\in\mathbf{T}}\block.\weight$ returns the total weight of blocks in $\mathbf{T}$. 
For any two blocks $\block_1,\block_2$ in graph $\graph$, $\block_1\in\chain{\block_2}$ is equivalent to $\block_2\in\tree{\graph,\block_1}$ according to notations in section~\ref{sec:ghost}. We denote their relation by $\block_1\preceq\block_2$. If furthermore there is $\block_1\neq\block_2$, then we write $\block_1\prec \block_2$. If $\vcmp$ is a chain of blocks in which the adjacent blocks have parent/child relation, we use $\tip(\vcmp)$ to denote the last block in $\vcmp$, and use $\nxt(\vcmp,\block)$ to denote the next element (child block) of block $\block$ in $\vcmp$. Specially, if $\block=\tip(\vcmp)$ or $\block\notin\vcmp$,  $\nxt(\vcmp,\block)$ returns $\bot$. We emphasize that the function $\tree{\graph,\block}$ and $\treew{\graph,\block}$ are well-defined when $\graph$ is only a subset of valid blocks other than a valid graph. For event $\event=(r,\block,t)$, we use $\event.{\sf block}$ to denote block $\block$. 

\subsubsection{Adversary State}\label{sec:adversary state}

The adversary state $\advs$ is a tuple of $$(\ggen,\gmax,\gmin,\gdta,\mathbf{M},\flagb,\vcmp,\speset,v).$$

$\ggen,\gmax$ and $\gmin$ are three graphs, which include all the generated blocks, all the released blocks (the blocks whose $\mrls$ or $\hgen$ event has happened) and all the blocks released $d$ rounds before (the blocks whose $\allrec$ event has happened). $\gdta:=\gmax\backslash\gmin$ denotes the blocks which may be received by only a part of honest nodes. $\mathbf{M}$ is a set of malicious blocks in $\ggen$, so we can distinguish the honest block and malicious blocks. $\speset$ and $v$ are a set of blocks and a real number relevant to special status. $\flagb$ is the flag block. It equals to $\bot$ to represents there is no flag block. $\vcmp$ is a variant of common pivot chain. It will be formally defined later. 

\todo{Clearify chain property of $\mathbf{C}$.}

For convenient, symbols $\ggen$, $\gmax$, $\gmin$, $\gdta$, $\mathbf{M}$, $\flagb$, $\vcmp$, $\speset$, and $v$ denote corresponding components of $\advs$. The symbols with subscript denote components of adversary state with the same subscript in the context. For example, $\gmax\l$ denotes the first component of $\advs\l$. For event $\event=(r,\block,t)$, we use $\event.{\sf round}$, $\event.{\sf block}$ and $\event.{\sf type}$ to denote its three components.

Traverse function $\psi(\advs\l,\event)=\advs$ updates $\advs\l$ to $\advs$ when an event $\event$ happens. $\psi$ is parameterized by two non-negative integers $\kam,\kah$, which is for security analysis only. $\kam$ and $\kah$ satisfy $2\kah+2\kam\le \heavyw$. \todo{and discuss how to choose $\kam$ and $\kah$ later XXX}. The following part introduces how the traverse function maintain each component of the adversary state.

\paragraph{The first five sets}
$\ggen$, $\gmax$ and $\gmin$ are initiated with a set with genesis block $\{\genesisblock\}$. $\mathbf{M}$ is initiated with an empty set. Upon event $\event$ happens, $\ggen\l$, $\gmax\l$, $\gmin\l$ and $\mathbf{M}\l$ are updated to $\ggen$, $\gmax$, $\gmin$ and $\mathbf{M}$ according to $\event.{\sf type}$. 
\begin{itemize}[nosep]
    \item $\hgen$ event: add $\event.{\sf block}$ to $\ggen\l$ and $\gmax\l$
    \item $\mgen$ event: add $\event.{\sf block}$ to $\ggen\l$ and $\mathbf{M}\l$
    \item $\mrls$ event: add $\event.{\sf block}$ to $\gmax\l$
    \item $\allrec$ event: add $\event.{\sf block}$ to $\gmin\l$
\end{itemize}

Then we set $\gdta=\gmax\backslash\gmin$. 

$\gmax$ includes all the blocks appears in the local state of an honest node, and the blocks in $\gmin$ must appear in the local state of all the honest nodes in an admissible environment (by definition~\ref{def:admissible}). Thus we can claim an honest node local state $\state$ must always be $\gmin\subseteq \state\subseteq\gmax$. Since all the blocks are added to $\ggen$ when it is generated, there must be $\gmax\subseteq \ggen$.

\begin{claim}\label{clm:graphrel}
    \stateconds, and any honest node local state $\state$ at the same time, there must be $\gmin \subseteq \state \subseteq \gmax \subseteq \ggen$. 
\end{claim}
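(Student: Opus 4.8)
The plan is to proceed by induction on the sequence of events that updates the adversary state, together with a coupling between the adversary-state graphs and the actual honest-node local states in the execution. The base case is the initial state: $\ggen = \gmax = \gmin = \{\genesisblock\}$, and every honest node's local state is initialized to $\{\genesisblock\}$, so all four inclusions hold trivially. For the inductive step I would fix an event $\event = (r,\block,t)$ that updates $\advs\l$ to $\advs$, assume the inclusions $\gmin\l \subseteq \state\l \subseteq \gmax\l \subseteq \ggen\l$ hold for every honest node local state $\state\l$ immediately before $\event$, and verify that the inclusions are preserved after the update. There are really two things to track: (i) the monotonicity relations $\gmin \subseteq \gmax \subseteq \ggen$ purely among the adversary-state components, and (ii) the sandwiching $\gmin \subseteq \state \subseteq \gmax$ of any honest local state.

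For part (i), I would just inspect the four update rules in Section~\ref{sec:adversary state}. An $\hgen$ event adds $\block$ to both $\ggen\l$ and $\gmax\l$; an $\mgen$ event adds $\block$ to $\ggen\l$ only; an $\mrls$ event adds $\block$ to $\gmax\l$ only — but such a $\block$ is a malicious block already in $\ggen\l$ (its $\mgen$ event preceded its $\mrls$ event), so $\gmax \subseteq \ggen$ is maintained; and an $\allrec$ event adds $\block$ to $\gmin\l$ only — but such a $\block$ already appeared in some honest local state, hence (by the inductive hypothesis applied at the time of its $\hgen$ or $\mrls$ event, or directly) is already in $\gmax\l$, so $\gmin \subseteq \gmax$ is maintained. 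Thus after every event, $\gmin \subseteq \gmax \subseteq \ggen$; and $\gdta = \gmax\setminus\gmin$ is well defined. For part (ii), I would match each event type to the corresponding change in honest local states: when $\event$ is an $\hgen$ event, the generating node incorporates $\block$ into its state (and $\block$ enters $\gmax$, so $\state\subseteq\gmax$ survives for that node; for all other nodes $\state$ is unchanged and still $\subseteq \gmax\l\subseteq\gmax$); when $\event$ is an $\mrls$ event, the one honest node that first sees $\block$ adds it to its state, and $\block$ simultaneously enters $\gmax$, so $\state \subseteq \gmax$ is preserved; when $\event$ is an $\allrec$ event, by the admissibility condition in Definition~\ref{def:admissible} every honest node has $\block$ in its local state by phase~2 of round $r$, and $\block$ is added to $\gmin$, so $\gmin \subseteq \state$ is preserved; $\mgen$ events change neither $\gmin$, $\gmax$, nor any honest local state, so nothing to check. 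I should also note that an honest node never removes a block from its local state, which is why the lower inclusion $\gmin \subseteq \state$ is not threatened between events.

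The one step that needs care — and is the closest thing to an obstacle — is the timing/ordering bookkeeping around the $\allrec$ event. Definition~\ref{def:admissible} guarantees that a block first appearing in some honest state at round $r'$ is in \emph{all} honest states at and after phase~2 of round $r'+d$; the $\allrec$ event for that block is placed at round $r'+d$. So I must argue that at the moment the $\allrec$ event fires, every honest node genuinely has $\block$ in its (phase-2-or-later) local state, which is exactly the admissibility clause, and that no honest node has lost it in the meantime, which is monotonicity of local states. The rest is a mechanical case check; I would present it compactly as a table of "event type / effect on $(\ggen,\gmax,\gmin)$ / effect on $\state$" and observe that in each row all four inclusions are preserved. $\Box$
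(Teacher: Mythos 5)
Your proposal is correct and follows essentially the same reasoning the paper uses: the paper justifies this claim in the short paragraph preceding it (that $\gmax$ collects exactly the blocks that have appeared in some honest local state, that $\gmin$ collects blocks guaranteed by the admissibility condition to be in every honest local state, and that every block enters $\ggen$ at generation), and your event-by-event induction is just a more careful spelling-out of that observation.
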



\paragraph{Speical status}
%
%
The special status function $\mathrm{Spe}(\gdta,\vcmp)$ outputs $\true$ or $\false$ to indicate whether $\advs$ is in special status. $\mathrm{Spe}(\advs)$ represents $\mathrm{Spe}(\gdta,\vcmp)$ in case we don't care about which components are accessed. The special status is defined as follows.
\begin{definition}[Special Status]\label{def:special}
    Given $\gdta$ and chain $\vcmp$,  
    $\mathrm{Spe}(\gdta,\vcmp)$ returns $\true$ if one of the following three conditions satisfied:
    \begin{enumerate}
        \item $\tree{\gdta\cap \mathbf{M},\tip(\vcmp)}\ge \kam$
        \item $|\left\{\block \in \gdta\backslash \mathbf{M}\;|\;\block.\mathsf{weight}=1\right\}|\ge \kah$
        \item $|\left\{\block \in \gdta\backslash \mathbf{M}\;|\;\block.\mathsf{weight}=\heavyw\right\}|\ge 3$
    \end{enumerate}
\end{definition}

\paragraph{Flag block}
The flag block $\flagb$ is a block in $\gmax$ or equals to $\bot$ representing no flag block. It is initiated with $\bot$. Given adversary state $\advs\l$ and event $\event$, traverse function updates $\flagb\l$ by the following rules:
\begin{enumerate}
    \item If $\event.{\sf block}.\weight=\heavyw$, $\event.{\sf type}=\hgen$, $\mathrm{Spe}(\advs\l)=\false$, and $\gdta\l\backslash \mathbf{M}\l$ has no block with block weight $\heavyw$, then $\flagb=\event.{\sf block}$.
    \item If $\event.{\sf block}.\weight=\heavyw$, $\event.{\sf type}=\hgen$ and $\flagb\l\neq\bot$, then $\flagb=\bot$.
    \item If $\event.{\sf block}=\flagb\l\neq\bot$ and $\event.{\sf type}=\allrec$, then $\flagb=\bot$.
    \item For other cases, let $\flagb=\flagb\l$.
\end{enumerate}
The definition shows that a block can only become a flag block when it is generated, and it is no longer a flag block when its $\allrec$ event happens. So a flag block $\flagb$ must be in $\gdta$. 

\begin{claim}\label{clm:flagb}
    \stateconds, if $\flagb\neq \bot$, $\flagb$ must be an honest block in $\gdta$ with block weight $\heavyw$.
\end{claim}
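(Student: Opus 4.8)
The plan is to prove Claim~\ref{clm:flagb} by induction on the events in the execution, using the traverse function rules for the flag block together with Claim~\ref{clm:graphrel} and the update rules for the first five sets. First I would establish the base case: at initialization $\flagb=\bot$, so the claim holds vacuously. Then, for the inductive step, I would assume the claim holds for $\advs\l$ (i.e.\ either $\flagb\l=\bot$, or $\flagb\l$ is an honest block in $\gdta\l$ with weight $\heavyw$) and show it holds for $\advs=\psi(\advs\l,\event)$ for every possible event $\event$. The argument splits along the four cases in the definition of how $\psi$ updates $\flagb\l$.

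For rule (1) — $\flagb$ is set to $\event.{\sf block}$ — the hypotheses are precisely $\event.{\sf type}=\hgen$ and $\event.{\sf block}.\weight=\heavyw$, so the new flag block is honest with weight $\heavyw$ by construction. It remains to check membership in $\gdta$: a $\hgen$ event adds $\event.{\sf block}$ to $\gmax\l$, and since its $\allrec$ event has not happened yet (an $\allrec$ event happens only $d$ rounds after the block first appears in an honest local state, and here it has just appeared), $\event.{\sf block}\notin\gmin$; hence $\event.{\sf block}\in\gdta=\gmax\setminus\gmin$. For rules (2), (3), and the $\bot$-branch of rule (4) when $\flagb\l=\bot$, the new flag block is $\bot$ and there is nothing to prove. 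The only genuinely substantive case is rule (4) with $\flagb=\flagb\l\neq\bot$: here I need to argue that the old flag block, which by the inductive hypothesis was an honest weight-$\heavyw$ block in $\gdta\l$, is still in $\gdta$. Its type and weight are immutable, so honesty and weight persist; the issue is that the event could in principle remove it from $\gdta$ by adding it to $\gmin$. But the only event that adds a block to $\gmin$ is that block's own $\allrec$ event, and if $\event$ were the $\allrec$ event of $\flagb\l$, then rule (3) would have fired and set $\flagb=\bot$, contradicting that we are in the $\flagb=\flagb\l\neq\bot$ branch. Therefore no $\allrec$ event for $\flagb\l$ occurs at this event, so $\flagb\l$ stays out of $\gmin$ and hence remains in $\gdta$.

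The main obstacle — really the only place requiring care — is this bookkeeping about $\gmin$ membership and the interaction between rules (3) and (4): one must be sure that the flag block is cleared exactly at its $\allrec$ event and not allowed to "leak out" of $\gdta$ silently. A secondary subtlety is handling the precise round at which the block first enters an honest local state relative to its $\allrec$ event, to justify $\flagb\l\notin\gmin$ at the moment it is installed; this follows from Definition~\ref{def:admissible} (the $\allrec$ event is $d\ge 0$ rounds later and, by the way $\gmin$ is updated, only an $\allrec$ event can move a block into $\gmin$). Once these points are nailed down, the induction closes and the claim follows for every adversary state appearing in the execution.
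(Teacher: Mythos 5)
Your induction is correct and is exactly the formalization of the paper's own (one-sentence) justification: the paper observes that a block can only become a flag block at its $\hgen$ event and ceases to be one at its $\allrec$ event, hence must lie in $\gdta$; your case analysis over the four update rules, including the key point that rule (3) fires precisely at the $\allrec$ event so the flag block cannot silently enter $\gmin$ while remaining the flag, nails down the same argument. No gaps.
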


\paragraph{Variant of common pivot chain} 
The chain $\vcmp$ is defined on $\gmax,\gmin$, $\vcmp\l$ and $\flagb$. \footnote{Notice that $\flagb$ depends on $\mathrm{Spe}(\advs\l)$ other than $\mathrm{Spe}(\advs)$. So we don't have a recursive dependency here.} 
First, we define function $\mathrm{Adv}((\gmax,\gmin,\flagb),\block)$ which provides a lower bound for subtree weight difference between block $\block$ and its maximum sibling blocks in an honest node local state, with the assumption that the flag block (if exists) has been received by all the honest nodes. $\mathrm{Adv}((\gmax,\gmin,\flagb),\block)$ can be represented by $\mathrm{Adv}(\advs,\block)$ for simplicity.
$$\mathrm{Adv}((\gmax,\gmin,\flagb),\block):=\treew{\gmin\cup\{\flagb\},\block}-\sibtreew{\gmax,\block}.$$
Here we regard $\{\bot\}$ as an empty set because $\bot$ presents ``no such a block''.

\begin{lemma}\label{lma:onebest}
    For any two graphs $\gmin\subseteq \gmax$, let $\flagb\in \gdta$ or $\flagb=\bot$. For any block $\block$, there exists at most one block $\block'\in\child(\gmax,\block)$ with $\mathrm{Adv}((\gmax,\gmin,\flagb),\block')>0$. 
\end{lemma}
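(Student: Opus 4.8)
The plan is to prove the lemma by contradiction, exploiting the built-in asymmetry of $\mathrm{Adv}$: the positive term $\treew{\gmin\cup\{\flagb\},\block'}$ is evaluated in the \emph{small} block set $\gmin\cup\{\flagb\}$, whereas the subtracted sibling term $\sibtreew{\gmax,\block'}$ is evaluated in the \emph{large} graph $\gmax$. So suppose, for contradiction, that there are two \emph{distinct} children $\block_1,\block_2\in\child(\gmax,\block)$ with $\mathrm{Adv}((\gmax,\gmin,\flagb),\block_1)>0$ and $\mathrm{Adv}((\gmax,\gmin,\flagb),\block_2)>0$.

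First I would record two elementary facts. (i) Since $\flagb\in\gdta=\gmax\backslash\gmin$ or $\flagb=\bot$ (and $\{\bot\}$ is read as the empty set), we have $\gmin\cup\{\flagb\}\subseteq\gmax$; because all block weights are non-negative, $\treew{\graph,\block'}$ is non-decreasing in $\graph$, and hence $\treew{\gmin\cup\{\flagb\},\block_i}\le\treew{\gmax,\block_i}$ for $i=1,2$. (Here I rely on the remark in the Notations paragraph that $\tree{\cdot,\cdot}$ and $\treew{\cdot,\cdot}$ are well-defined on arbitrary sets of valid blocks, not only on valid graphs, which legitimizes writing $\treew{\gmin\cup\{\flagb\},\block_i}$.) (ii) Since $\block_1$ and $\block_2$ are distinct and share the parent $\block$ (i.e. $\block_1.\parent=\block_2.\parent=\block$), the block $\block_2$ lies in $\child(\gmax,\block_1.\parent)\backslash\{\block_1\}$, so by the definition of $\sibtreew$ we get $\sibtreew{\gmax,\block_1}\ge\treew{\gmax,\block_2}$, and symmetrically $\sibtreew{\gmax,\block_2}\ge\treew{\gmax,\block_1}$.

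Chaining these together, $\mathrm{Adv}((\gmax,\gmin,\flagb),\block_1)>0$ gives $\treew{\gmax,\block_1}\ge\treew{\gmin\cup\{\flagb\},\block_1}>\sibtreew{\gmax,\block_1}\ge\treew{\gmax,\block_2}$, and the symmetric argument applied to $\block_2$ gives $\treew{\gmax,\block_2}>\treew{\gmax,\block_1}$. Together these yield $\treew{\gmax,\block_1}>\treew{\gmax,\block_1}$, a contradiction. Hence at most one child $\block'$ of $\block$ can satisfy $\mathrm{Adv}((\gmax,\gmin,\flagb),\block')>0$.

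There is essentially no hard step here; the only points requiring a moment of care are the degenerate conventions (when $\flagb=\bot$ or when the sibling set is empty, $\sibtreew$ returns $0$, and $\mathrm{Adv}>0$ still forces a strictly positive subtree weight, so the inequality chain is unaffected) and the observation that the whole argument hinges on the $\treew$/$\sibtreew$ asymmetry — if both terms of $\mathrm{Adv}$ were measured in the same graph, two children could tie and the "at most one" conclusion would fail.
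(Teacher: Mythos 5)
Your proof is correct and follows essentially the same route as the paper's: both arguments rest on the two facts that $\gmin\cup\{\flagb\}\subseteq\gmax$ (so $\treew{\gmin\cup\{\flagb\},\block_i}\le\treew{\gmax,\block_i}$) and that $\sibtreew{\gmax,\block_1}\ge\treew{\gmax,\block_2}$ for distinct siblings. The only cosmetic difference is that the paper adds the two $\mathrm{Adv}$ values and shows the sum is at most $0$, whereas you chain the inequalities into the contradiction $\treew{\gmax,\block_1}>\treew{\gmax,\block_1}$.
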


The previous lemma shows that every blocks will have at most one child block $\block'$ with $\mathrm{Adv}((\gmax,\gmin,\flagb),\block')>0$. It is proved in appendix~\ref{sec:proof:lma:onebest}. Based on this property, we specifies the rule in updating $\vcmp\l$. $\vcmp$ is initiated with a genesis block $\genesisblock$. Given $\gmax$ and $\gmin$, $\vcmp\l$ is updated in the following steps:
\begin{enumerate}
    \item Started with the genesis block, we recursively visit its child block $\block$ in graph $\gmax$ which satisfies \\$\mathrm{Adv}((\gmax,\gmin,\flagb),\block)>0$, until we reach a block without such child block. These blocks organize the chain $\vcmp$.
    \item If there is a block $\block'\in \vcmp$ satisfying $\tip(\vcmp\l)\prec \block'$ and $\mathrm{Adv}((\gmax,\gmin,\flagb),\block')\le \kam+\kah$, we cut off the suffix started with $\block'$ from $\vcmp$.
\end{enumerate}
So we have the following claims for chain $\vcmp$.

\begin{claim}\label{clm:chainc}
    For any adversary state $\advs\l$ appearing in the blockchain protocol execution, and let $\advs=\psi(\advs\l,\event)$ for some event $\event$. Let block $\block$ be the last block of $\vcmp$ and block $\block\l$ be the last block of $\vcmp\l$. We have 
    \begin{enumerate}[nosep]
        \item For any block $\block'\in \vcmp$, $\mathrm{Adv}(\advs,\block')>0$. Specially, if $\tip(\vcmp\l)\prec \block'$, then $\mathrm{Adv}(\advs,\block')>\kam+\kah$.
        \item For any block $\block'\in \child(\gmax,\tip(\vcmp))$, $\mathrm{Adv}(\advs,\block')\le \kam+\kah$. Specially, if $\block'\preceq \tip(\vcmp\l)$, then $\mathrm{Adv}(\advs,\block')\le 0$.
        \item For any block $\block_{\sf c}$, if all the block $\block'$ in $\chain{\block_{\sf c}}$ satisfying $\mathrm{Adv}(\advs,\block')>\kam+\kah$, then $\block_{\sf c}\in \vcmp$.
    \end{enumerate}
\end{claim}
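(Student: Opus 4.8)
<br>

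The plan is to prove the three items of Claim~\ref{clm:chainc} by carefully unwinding the two-step update rule for $\vcmp$ and leveraging Lemma~\ref{lma:onebest} together with the monotonicity-type properties built into the traverse function $\psi$. Throughout, I would write $\block = \tip(\vcmp)$ and $\block\l = \tip(\vcmp\l)$, and I would keep in mind the key structural fact that the step-1 construction walks down $\gmax$ choosing, at each node, the unique child with positive $\mathrm{Adv}$ (uniqueness is exactly Lemma~\ref{lma:onebest}), and that step-2 can only \emph{shorten} the resulting chain, and only by cutting a suffix that starts strictly beyond $\block\l$.

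First I would handle item~(1). For $\block' \in \vcmp$, membership in $\vcmp$ after step~2 means $\block'$ survived the cut, so either $\block' \preceq \block\l$ or $\mathrm{Adv}(\advs,\block') > \kam + \kah$ (otherwise step~2 would have cut the suffix starting at or before $\block'$). In the first case $\block'$ was visited in step~1, so $\mathrm{Adv}(\advs,\block') > 0$ directly; in the second case the stronger bound is immediate. This gives both halves of item~(1): $\mathrm{Adv}(\advs,\block') > 0$ always, and $\mathrm{Adv}(\advs,\block') > \kam+\kah$ whenever $\block\l \prec \block'$ (since such a $\block'$ cannot satisfy $\block' \preceq \block\l$, it must fall in the second case). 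I would double-check the boundary: a $\block'$ with $\block' = \block\l$ need only have $\mathrm{Adv} > 0$, consistent with the "$\tip(\vcmp\l)\prec\block'$" qualifier being strict.

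Next, item~(2): for $\block' \in \child(\gmax,\block)$ I must show $\mathrm{Adv}(\advs,\block') \le \kam+\kah$, and $\le 0$ if additionally $\block' \preceq \block\l$. Since $\block = \tip(\vcmp)$, the step-1 walk stopped at $\block$; by Lemma~\ref{lma:onebest} there is at most one child of $\block$ with positive $\mathrm{Adv}$. If $\block'$ is not that child then $\mathrm{Adv}(\advs,\block') \le 0$, which already gives both conclusions. If $\block'$ \emph{is} that child, then step~1 would have continued past $\block$ unless it was stopped — i.e. the chain before step~2 extended past $\block$ and step~2 cut it at a block with $\mathrm{Adv} \le \kam+\kah$; the cut point is the child of $\block$ on that chain, which is $\block'$ (uniqueness again), so $\mathrm{Adv}(\advs,\block') \le \kam+\kah$. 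For the refinement, if $\block' \preceq \block\l$ and $\block' \in \child(\gmax,\block)$ then $\block \prec \block\l$; but $\block = \tip(\vcmp)$ so after the update the tip is an ancestor of (or equal to) $\block\l$, meaning the step-1 chain did reach down to at least $\block'$ and then $\block'$ (or a descendant) got cut — and a block $\preceq \block\l$ can only be cut if step~2's condition $\mathrm{Adv}\le\kam+\kah$ fired at a block strictly after $\block\l$, contradiction; so in fact this sub-case forces $\block'$ to not be on $\vcmp$ via step~1 surviving, and I would argue directly that $\block' \preceq \block\l \prec \block' $ is impossible, so actually $\block' \preceq \block\l$ with $\block'$ a child of $\tip(\vcmp)$ means $\tip(\vcmp) \prec \block\l$, and then $\block'$ was on the step-1 chain but got removed, which (since $\block' \preceq \block\l$) can only happen if... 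I would need to be careful here — this is the step I expect to be the main obstacle, because the interaction between "$\vcmp$ can only be shortened, and only beyond $\block\l$" and "the tip moved backward relative to $\block\l$" requires a precise case analysis of when $\tip(\vcmp) \prec \tip(\vcmp\l)$ is even possible. I anticipate needing an auxiliary observation (perhaps folded in as a separate short claim or drawn from how $\gmax$ and $\flagb$ evolve) that $\mathrm{Adv}(\advs,\cdot)$ on the prefix up to $\block\l$ does not decrease across a single event, so that the step-1 walk never retreats past $\block\l$ unless $\flagb$ changed in a way that drops $\gmin\cup\{\flagb\}$ — and then $\mathrm{Adv}(\advs,\block')\le 0$ holds for the demoted child.

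Finally, item~(3) is a clean consequence of item~(2) by contraposition and induction down $\chain{\block_{\sf c}}$: suppose every $\block'$ on $\chain{\block_{\sf c}}$ has $\mathrm{Adv}(\advs,\block') > \kam+\kah$ but $\block_{\sf c}\notin\vcmp$. Walk from the genesis along $\chain{\block_{\sf c}}$; let $\block^\ast$ be the first block of this chain not in $\vcmp$, with parent $\block^\sharp \in \vcmp$. Then $\block^\ast \in \child(\gmax,\block^\sharp)$; if $\block^\sharp = \tip(\vcmp)$, item~(2) gives $\mathrm{Adv}(\advs,\block^\ast)\le\kam+\kah$, contradiction; if $\block^\sharp \prec \tip(\vcmp)$, then the child of $\block^\sharp$ in $\vcmp$ (call it $\block^{\sharp\prime}$) has positive $\mathrm{Adv}$ by item~(1), and by Lemma~\ref{lma:onebest} it is the \emph{unique} positive-$\mathrm{Adv}$ child of $\block^\sharp$ in $\gmax$; since $\block^\ast\ne\block^{\sharp\prime}$ (as $\block^\ast\notin\vcmp$) we get $\mathrm{Adv}(\advs,\block^\ast)\le 0 \le \kam+\kah$, again a contradiction. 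Hence $\block_{\sf c}\in\vcmp$. I would close by remarking that items~(1)–(3) together say exactly that $\vcmp$ is the maximal chain all of whose blocks (past $\block\l$) have $\mathrm{Adv}$ safely above $\kam+\kah$, which is the property the later potential-function arguments rely on.
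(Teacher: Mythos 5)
Your proposal is essentially correct and follows the route the paper intends: the paper states Claim~\ref{clm:chainc} with no written proof at all (``So we have the following claims for chain $\vcmp$''), treating all three items as immediate consequences of the two-step update rule together with Lemma~\ref{lma:onebest}, which is exactly how you argue. Items~(1) and~(3) are handled cleanly, and your item~(3) derivation (first block of $\chain{\block_{\sf c}}$ outside $\vcmp$, split on whether its parent is the tip, then invoke item~(2) or uniqueness from Lemma~\ref{lma:onebest}) is precisely the right reduction.

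The one place you stall --- the ``$\block'\preceq\tip(\vcmp\l)$ implies $\mathrm{Adv}(\advs,\block')\le 0$'' refinement of item~(2) --- is not actually an obstacle, and the auxiliary fact you reach for is both unnecessary and false: $\mathrm{Adv}(\advs,\cdot)$ on the prefix up to $\tip(\vcmp\l)$ \emph{can} decrease across a single event (e.g., an $\mrls$ event releasing a block into a sibling subtree increases $\sibtreew{\gmax,\cdot}$). No cross-event reasoning is needed; the argument lives entirely at the state $\advs$. Suppose $\block'\in\child(\gmax,\tip(\vcmp))$, $\block'\preceq\tip(\vcmp\l)$, and $\mathrm{Adv}(\advs,\block')>0$. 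Since the step-1 walk reached $\tip(\vcmp)$ and $\block'$ is its unique positive-$\mathrm{Adv}$ child (Lemma~\ref{lma:onebest}), the step-1 chain contains $\block'$. For $\block'$ to be absent from the final $\vcmp$, the step-2 cut point $\block^*$ must satisfy $\block^*\preceq\block'$; but the cut condition requires $\tip(\vcmp\l)\prec\block^*$, so $\tip(\vcmp\l)\prec\block^*\preceq\block'\preceq\tip(\vcmp\l)$, a contradiction. Hence $\mathrm{Adv}(\advs,\block')\le 0$. With that two-line closure your item~(2) is complete and the whole proof goes through.
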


\paragraph{The set and value related to special status}
The set of blocks $\speset$ and value $\spevalue$ are initialized with empty set and 0. $\speset$ records all the malicious blocks have been taken into consider in special status. It is defined on $\gmax,\gmin,\vcmp,\mathbf{M}$ and $\speset\l$. Recalling that the special status consider the blocks in $\tmpset:=\tree{\gdta,\tip(\vcmp)}$. 
Upon the events happens, we set $\speset=\speset\l\cup (\tmpset\cap\mathbf{M})$. The value $v$ traces the total block weight in $\speset$ and increases by up to $\kam$ in each update. Formally, $\spevalue$ is initiated by 0 and let $\spevalue:=\spevalue\l+\min\left\{\kam,\tw( \speset\backslash\speset\l)\right\}$. We summarize several properties that are immediately induced from the definition.

\begin{claim}\label{clm:spesv}
    For any adversary state $\advs\l$ appearing in the blockchain protocol execution, and let $\advs=\psi(\advs\l,\event)$ for some event $\event$.  Let $\tmpset:=\tree{\gdta,\tip(\vcmp)}$, we have the following claims
    \begin{enumerate}[nosep]
        \item $\speset\l\subseteq\speset$
        \item $0\le \spevalue-\spevalue\l\le \kam$
        \item $\spevalue-\spevalue\l\le \tw(\speset)-\tw(\speset\l)$
        \item $\speset\backslash\speset\l\subseteq\tmpset\cap \mathbf{M}\subseteq \speset$.
    \end{enumerate}
\end{claim}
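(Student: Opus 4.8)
The plan is to obtain all four items directly from the rules by which $\psi$ updates the components $\speset$ and $\spevalue$. None of the items uses any property of the execution beyond the update rule itself, so in fact it suffices to fix an arbitrary transition $\advs=\psi(\advs\l,\event)$; the hypothesis that $\advs\l$ appears in the protocol execution is not needed. Recall the relevant update rules: writing $\tmpset:=\tree{\gdta,\tip(\vcmp)}$, where $\gdta=\gmax\backslash\gmin$ and $\vcmp$ are the components of the \emph{updated} state $\advs$, one sets $\speset=\speset\l\cup(\tmpset\cap\mathbf{M})$ and then $\spevalue=\spevalue\l+\min\{\kam,\tw(\speset\backslash\speset\l)\}$. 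Since every graph occurring in the execution is finite, $\speset$ is a finite set and $\tw(\speset)$ is a well-defined finite sum.

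First I would dispatch the two set-theoretic items. Item~1 is immediate because a union contains each of its operands, so $\speset\l\subseteq\speset\l\cup(\tmpset\cap\mathbf{M})=\speset$. For item~4, the same observation gives $\tmpset\cap\mathbf{M}\subseteq\speset$; and for $\speset\backslash\speset\l\subseteq\tmpset\cap\mathbf{M}$, note that every block of $\speset$ lies in $\speset\l$ or in $\tmpset\cap\mathbf{M}$ by the update rule, so a block that is not in $\speset\l$ must lie in $\tmpset\cap\mathbf{M}$.

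Next I would handle items~2 and~3, after recording two elementary facts: every block weight is non-negative --- by \eqref{eq:blockweight} it equals $0$, $1$, or $\heavyw>0$ --- so $\tw(\mathbf{T})\ge 0$ for any finite set of blocks $\mathbf{T}$, in particular $\tw(\speset\backslash\speset\l)\ge 0$; and $\kam\ge 0$ by assumption. Item~2 then follows: $\spevalue-\spevalue\l=\min\{\kam,\tw(\speset\backslash\speset\l)\}$ is the minimum of two non-negative numbers, hence $\ge 0$, and it is at most its first argument $\kam$. For item~3, invoke item~1: because $\speset\l\subseteq\speset$, the sets $\speset\l$ and $\speset\backslash\speset\l$ partition $\speset$, whence $\tw(\speset)=\tw(\speset\l)+\tw(\speset\backslash\speset\l)$, i.e.\ $\tw(\speset)-\tw(\speset\l)=\tw(\speset\backslash\speset\l)$; combining with $\min\{\kam,\tw(\speset\backslash\speset\l)\}\le\tw(\speset\backslash\speset\l)$ gives $\spevalue-\spevalue\l\le\tw(\speset)-\tw(\speset\l)$.

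I do not expect a genuine obstacle: as the text itself says, the claim is ``immediately induced from the definition,'' and the whole argument is bookkeeping. The only two points that merit a moment's attention are invoking non-negativity of block weights (so that the minimum is non-negative and $\tw$ decomposes additively over the disjoint pieces $\speset\l$ and $\speset\backslash\speset\l$) and using item~1 \emph{before} item~3 (so that this decomposition is legitimate); neither is difficult.
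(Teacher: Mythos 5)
Your proof is correct and matches the paper's intent: the paper offers no explicit proof, stating only that these properties are ``immediately induced from the definition,'' and your argument is exactly the direct unwinding of the update rules $\speset=\speset\l\cup(\tmpset\cap\mathbf{M})$ and $\spevalue=\spevalue\l+\min\{\kam,\tw(\speset\backslash\speset\l)\}$, together with non-negativity of block weights and of $\kam$. Nothing is missing.
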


\subsubsection{Potential value}

\paragraph{Block potential value}
Now we provide a formal definition for \emph{block potential value} over the adversary state defined above. The block potential value $\pot(\advs,\block)$ is defined for a single block $\block$ over some adversary state $\advs$. Formally, the potential value $\pot(\advs,\block)$ is $\bot$ by default and it is an integer when $\block$ is old enough in $\gmin$ and agreed by all honest nodes, i.e. $\mathrm{Old}(\gmax,\block)=\true$ and $\block\in \vcmp$. 
When not being $\bot$, 
the potential value $\pot(\advs,\block)$ is the summation of three components $\potwith,\potadv,\potspe$ defined as follows. 

\begin{itemize}
    \item $\potwith(\advs,\block)$ is the total weight of blocks withheld by adversary under the subtree of $\block$:
    \begin{align*}
        \potwith(\advs,\block)&:=\treew{\ggen\backslash\gmax,\block}
    \end{align*}

    \item $\potadv(\advs, \mainchild)$ roughly corresponds to the volatility of the pivot child of $\block$. Let $\mainchild:=\nxt(\vcmp,\block)$, $\mathbf{N}:=\{\block'\in \gmax\backslash\mathbf{M}\;|\;\block'.\weight=1\}$.
    \begin{align*}
        \potadv(\advs, \mainchild) 
        &:=\left\{
            \begin{array}{ll}
                \kah+\kam-\mathrm{Adv}(\advs,\mainchild)-\min\{\tw(\tree{\gdta,\mainchild}\cap\mathbf{N}),\kah\} & \mainchild\neq \bot\\
                0 & \mainchild= \bot\\
            \end{array}
        \right.
    \end{align*}

    \item $\potspe(\advs, \block)+\spevalue$ measures the total cost for triggering the special status. Let $\mainchild=\nxt(\vcmp,\block)$.
    \begin{align*}
        \potspe(\advs ,\block)
        &:=
        \left\{\begin{array}{ll}
            \tw((\tree{\gdta,\mainchild} \cap \mathbf{M})\backslash\speset) & \mainchild\neq \bot \\
            0 & \mainchild = \bot
        \end{array}\right.
    \end{align*}
\end{itemize}

And the potential value for block $\block$ is defined as follows

\begin{align*}
    \pot(\advs,\block) &:=
    \left\{
        \begin{array}{ll}
            \potwith(\advs ,\block)+\potadv(\advs ,\nxt(\vcmp,\block))+\potspe(\advs ,\block) & \block \in \vcmp \wedge \mathrm{Old}(\gmin,\block)=\true \\
        \bot & \text{Otherwise}
        \end{array}
    \right.
\end{align*}

\paragraph{Global potential value}
The global potential function $\cpot(\advs,\graph)$ is defined on the adversary state $\advs$ and a graph $\graph$. Given adversary state $\advs$ and graph $\graph$, let $\vcmp':=\{\block\in \vcmp|\graph\nsubseteq \block.\past\}$. It returns the maximum block potential values in $\vcmp'$. 
\begin{equation}\label{eq:cpot}
    \cpot(\advs,\graph):=\max_{\block\in\vcmp \,\wedge\, \graph\nsubseteq \block.\past}\pot(\advs,\block).
\end{equation}
Specially, if $\vcmp'$ is empty set or all the blocks in $\vcmp'$ have a block potential value $\bot$, $\cpot(\advs,\graph)$ returns $0$. 

\subsubsection{Event value}

In order to upper bound the global potential value $\cpot(\advs,\graph)$,
we investigate how an event influences the potential and prove its upper bound by a case-by-case analysis.
More specifically, given an event $\event$ happens when the adversary state is $\advs\l$,
we introduce the \emph{event value} of $\event$ with respect to the adversary state $\advs\l$.
Formally, the event value of $\event$ is denoted by $\eventweight(\advs\l,\event)$ and defined as follows: 
\begin{itemize}
    \item If $\event.\mathsf{type}\in\{\mrls,\allrec\}$, $\eventweight(\advs\l,\event):=0$.
    \item If $\event.\mathsf{type}=\mgen$, $\eventweight(\advs\l,\event):=\event.\mathsf{block}.\mathsf{weight}$.
    \item If $\event.\mathsf{type}=\hgen$ and $\event.\mathsf{block}.\mathsf{weight}=0$, $\eventweight(\advs\l,\event):=0$
    \item If $\event.\mathsf{type}=\hgen$ and $\event.\mathsf{block}.\mathsf{weight}=1$, then
    $$\eventweight(\advs\l,\event):=\left\{
    \begin{array}{ll}
        -1 & \mathrm{Spe}(\advs\l)=\false \\
        0  & \mathrm{Spe}(\advs\l)=\true
    \end{array}\right.$$
    \item If $\event.\mathsf{type}=\hgen$ and $\event.\mathsf{block}.\mathsf{weight}=\heavyw$, then $\eventweight(\advs\l,\event)$ is defined as follows
    \begin{itemize}
        \item If $\gdta\l$ doesn't have an honest block with block weight $\heavyw$ and $\mathrm{Spe}(\advs\l)=\false$, $\event.\mathsf{block}.\mathsf{weight}:=2\kah+2\kam-\heavyw$. (If $\flagb=\event.\mathsf{block}$, it must be in this case.) 
        \item If $\gdta\l$ doesn't have an honest block with block weight $\heavyw$ and $\mathrm{Spe}(\advs\l)=\true$, $\event.\mathsf{block}.\mathsf{weight}:=0$. 
        \item If $\gdta\l$ has an honest block with block weight $\heavyw$ and $\flagb\l=\bot$, $\event.\mathsf{block}.\mathsf{weight}:=0$. 
        \item If $\gdta\l$ has an honest block with block weight $\heavyw$ and $\flagb\l\neq\bot$, $\event.\mathsf{block}.\mathsf{weight}:=\heavyw+\kam$. (If $\flagb\l\neq\bot$ and $\flagb=\bot$, it must be in this case.) 
    \end{itemize}
    (Note that a flag block must be an honest block with block weight $\heavyw$ and it must belong to $\gdta\l$. So $\gdta\l$ doesn't have an honest block with block weight $\heavyw$ only if $\flagb\l=\bot$.)
\end{itemize}

\subsection{Properties}\label{sec:properties}

\paragraph{Notations}

Let $\event_n$ denote the $n^{th}$ event since the ghast protocol launched and $\advs_{n-1}$ denote the adversary state when event $\event_n$ happens. The traverse function $\psi(\advs_{n-1},\event_n)=\advs_n$ updates $\advs_{n-1}$ to $\advs_n$. Variable $N(r)$ denotes the number of events happens before round $r$. In other words, the last event before round $r$ is $\event_{N(r)}$. 

Let $\delta\eqdef1-\beta/(1-\beta)$ and $\lambda\eqdef m(d+1)/\difficulty$. So $1-\delta$ represents the ratio of computing power between the honest participants and the adversary, $\lambda$ approximately represents the expectation of block generated in a maximum network delay. 

All the notations in the previous sub-sections are inherited here. 
$\\$

The proof of theorems are in the appendix.


First, we shows the sufficient conditions for block finalization. Theorem~\ref{thm:hisunchange} shows that the history of blocks in $\gmin_{N(r_0)}$ will be consistent among all the honest participants and remain unchanged as long as $\cpot(\advs,\gmin_{N(r_0)})<-\heavyw$ holds and all the blocks $\block$ satisfying $\gmin_{N(r_0)}\nsubseteq\block.\mathsf{past}$ must be old enough. This implies that the analysis for block finalization can be reduced to the analysis of potential value and ``not old enough'' blocks. 

\begin{theorem}\label{thm:hisunchange}
    In execution of ghast protocol, for any $r_0$ and $r_1$, we have 
    $$ \forall \ti\block \in\gmin_{N(r_0)}, \left|\bigcup\nolimits_{\substack{r\in\{r_1,r_1+1,\cdots,\rmax\}\\\state\in \uniongraphs_r}} \mathrm{Prefix}(\order_{\sf GHAST}(\state),\ti\block)\right|=1$$
    as long as both of the following conditions satisfied for any $n>N(r_1)$,
    \begin{itemize}
        \item $\cpot(\advs_n,\gmin_{N(r_0)})<-\heavyw$
        \item For any block $\block\in \ggen_n$ with $\gmin_{N(r_0)}\nsubseteq\block.\mathsf{past}$, it will be $\mathrm{Old}(\gmin_n,\block)=\true$. 
    \end{itemize}
    (Note that $\uniongraphs_r$ collects all the local states of honest nodes in round $r$.)
\end{theorem}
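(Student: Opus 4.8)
The plan is to argue the contrapositive-style implication directly: assuming the two hypotheses hold for all $n > N(r_1)$, show that every block $\ti\block \in \gmin_{N(r_0)}$ receives a unique history across all honest local states appearing in rounds $\ge r_1$. The key observation is that, by the discussion in Section~\ref{sec:proto} following Figure~\ref{fig:order}, the history $\mathrm{Prefix}(\order_{\sf GHAST}(\state),\ti\block)$ is entirely determined by the portion of the pivot chain $\mathrm{Pivot}(\state)$ that precedes (and includes) $\ti\block$ — more precisely by which blocks lie on the pivot chain up to the first pivot block whose past-set contains $\ti\block$. So it suffices to show that for every honest local state $\state$ in every round $r \ge r_1$, the pivot chain of $\state$ agrees, up through a block whose past-set engulfs $\gmin_{N(r_0)}$, with a single canonical chain. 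The natural candidate for that canonical chain is (a prefix of) $\vcmp$ taken at a late enough adversary state, because Claim~\ref{clm:chainc} tells us $\vcmp$ is stable: blocks on it keep a strictly positive (indeed $>\kam+\kah$) advantage $\mathrm{Adv}$, and that advantage only relates $\gmin$ and $\gmax$, which are monotone.

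The main work, and the step I expect to be the hardest, is the bridge between the potential-value bound and the pivot-chain agreement. Concretely: fix an honest node with state $\state$ at round $r \ge r_1$, and let $n = N(r)$ so that $\gmin_n \subseteq \state \subseteq \gmax_n$ by Claim~\ref{clm:graphrel}. I want to show that for every block $\block \in \vcmp_n$ with $\gmin_{N(r_0)} \nsubseteq \block.\past$, the block $\block$ lies on $\mathrm{Pivot}(\state)$. Walking down the pivot chain of $\state$ from the genesis, at each such $\block$ I must rule out that GHOST picks a different child than $\nxt(\vcmp_n,\block)$. The child $\mainchild := \nxt(\vcmp_n,\block)$ has $\mathrm{Adv}(\advs_n,\mainchild) = \treew{\gmin_n\cup\{\flagb\},\mainchild} - \sibtreew{\gmax_n,\mainchild}$, which lower-bounds its true subtree-weight lead over any sibling in $\state$ provided the flag block (if any) has already reached all honest nodes — and here I need the hypothesis $\cpot(\advs_n,\gmin_{N(r_0)}) < -\heavyw$ to control $\potwith$ (no large withheld subtree competing with $\mainchild$), $\potadv$ (the $\mathrm{Adv}$ term is large enough, at least after accounting for the $\min\{\cdot,\kah\}$ slack from blocks honest-but-not-yet-delivered), and $\potspe$ together with the special-status machinery (no cheap way to have flipped which child dominates via withheld malicious weight counted in $\speset$). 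The old-enough hypothesis enters here too: a block $\block$ with $\gmin_{N(r_0)} \nsubseteq \block.\past$ has $\mathrm{Old}(\gmin_n,\block) = \true$ by assumption, so its potential value is a genuine integer (not $\bot$), and hence it is actually among the blocks over which the max in~(\ref{eq:cpot}) is taken — this is what makes the bound $\cpot < -\heavyw$ bite on exactly the blocks we care about.

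Granting that bridge, the rest is assembly. First I would establish a monotonicity/consistency lemma: if $\block \in \vcmp_n$ and $\block \in \vcmp_{n'}$ for $n \le n'$, and $\block$ is old enough, then $\block$ stays on $\vcmp$ forever after (using Claim~\ref{clm:chainc} part~1 plus the stability of $\mathrm{Old}$ on $\gmin$ — property~2 of $\mathrm{Old}$ — and the fact that $\cpot < -\heavyw$ forbids $\mathrm{Adv}$ from ever dropping to $\le \kam+\kah$ on these blocks, since that would force a large positive $\potadv$ contribution). Then, for a fixed $\ti\block \in \gmin_{N(r_0)}$: there is some round $r_* $ by which some block $\block_*$ on $\vcmp$ has $\ti\block \in \block_*.\past$ and $\block_*$ is old enough (because $\gmin$ eventually contains $\ti\block$ strictly inside a past-set of a later pivot block, and $\mathrm{Old}$ fires once enough timer height accumulates — using property~1 of $\mathrm{Old}$ to push oldness down to $\ti\block$ itself). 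From $r_1$ onward, every honest state's pivot chain agrees with $\vcmp$ up through $\block_*$ by the bridge lemma, and this agreement fixes $\mathrm{Prefix}(\order_{\sf GHAST}(\state),\ti\block)$ to a single value; blocks generated before $\vcmp$ reaches $\block_*$ are handled because by round $r_1$ (chosen after $r_*$, or by taking $r_1$ large — note the theorem quantifies over arbitrary $r_1$, so the content is really for $r_1$ large relative to $r_0$) the relevant pivot prefix is already settled. I would organize the write-up as: (i) pivot-determines-history lemma; (ii) the bridge lemma (the crux, by induction down the pivot chain with a case analysis on which term of $\cpot$ would be violated); (iii) the eventual-oldness-and-engulfment observation; (iv) combine. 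The delicate bookkeeping — matching the $\kah,\kam$ slacks in $\potadv$ against blocks in $\gdta \setminus \mathbf{M}$ and the special-status thresholds of Definition~\ref{def:special} — is where I expect most of the length and all of the risk of the proof to concentrate.
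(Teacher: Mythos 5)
Your proposal follows essentially the same route as the paper's proof: use the old-enough hypothesis to guarantee that the relevant blocks on $\vcmp_n$ have non-$\bot$ potential (so the bound $\cpot(\advs_n,\gmin_{N(r_0)})<-\heavyw$ actually bites on them), convert that bound into a large $\mathrm{Adv}$, conclude via the GHOST sufficient condition (Lemma~\ref{lma:ghost}) that the first $\vcmp_n$-block whose past contains $\gmin_{N(r_0)}$ lies on every honest node's pivot chain, and use the prefix relation between consecutive $\vcmp$'s (Lemma~\ref{lma:tipmove1}) to get stability of that block across events.

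Two remarks. First, you leave a genuine loose end: you observe that $\mathrm{Adv}(\advs_n,\mainchild)$ lower-bounds the true subtree-weight lead in $\state$ only ``provided the flag block (if any) has already reached all honest nodes'' --- but by Claim~\ref{clm:flagb} a flag block always lies in $\gdta$, i.e.\ precisely has \emph{not} yet reached everyone, so that proviso never holds while a flag block exists. The paper closes this by noting $\mathrm{Adv}(\advs_n,\block)\le\heavyw+\treew{\gmin_n,\block}-\sibtreew{\gmax_n,\block}$; the hypothesis threshold $-\heavyw$ (rather than $0$) is exactly the slack that absorbs the flag block's weight $\heavyw$ and still yields $\treew{\gmin_n,\block}-\sibtreew{\gmax_n,\block}>0$, hence a positive lead in every $\state$ with $\gmin_n\subseteq\state\subseteq\gmax_n$. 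Second, your ``bridge'' is far simpler than you anticipate: since $\potwith\ge 0$, $\potspe\ge 0$ and $\potadv\ge\kam-\mathrm{Adv}\ge-\mathrm{Adv}$, one gets $\pot(\advs_n,\block)\ge-\mathrm{Adv}(\advs_n,\nxt(\vcmp_n,\block))$ in one line, so $\pot<-\heavyw$ immediately forces $\mathrm{Adv}>\heavyw$; no case analysis on which component of the potential is violated, and none of the $\kah,\kam$ or special-status bookkeeping you flag as the crux, is needed here (that bookkeeping lives in the proofs of Theorems~\ref{thm:potcase:final} and~\ref{thm:cpotdiff}, not this one).
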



We study the difference of the addition of potential value and special value between two adjacent adversary state. For the block potential value, we have the following theorem. This theorem shows that when the adversary state is updated from $\advs\l$ to $\advs$, we provide an upper bound for block potential values except one special case: $\pot(\advs\l,\block)=\bot$, $\pot(\advs,\block)\neq\bot$ and $\mathrm{Old}(\gmin\l,\block)=\false$. 

\begin{theorem}\label{thm:potcase:final}
    \statecondm. 

    For any block $\block$ with $\pot(\advs\l,\block)\neq \bot$ and $\pot(\advs,\block)\neq\bot$, we have 
    $$ (\pot(\advs,\block)+\spevalue)-(\pot(\advs\l,\block)+\spevalue\l)\le \eventweight(\advs\l,\event).$$
    
    For any block $\block$ with $\pot(\advs\l,\block)=\bot$, $\pot(\advs,\block)\neq\bot$ and $\mathrm{Old}(\gmin\l,\block)=\true$, we have 
    $$ (\pot(\advs,\block)+\spevalue)-(\pot(\advs\l,\tip(\vcmp\l))+\spevalue\l)\le \eventweight(\advs\l,\event).$$
\end{theorem}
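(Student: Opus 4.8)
The plan is to prove Theorem~\ref{thm:potcase:final} by a careful case analysis on $\event.\mathsf{type}$ and on the configuration of the adversary state $\advs\l$ (whether $\mathrm{Spe}(\advs\l)$ holds, whether $\gdta\l\setminus\mathbf{M}\l$ contains a weight-$\heavyw$ honest block, whether $\flagb\l=\bot$, and whether $\block=\tip(\vcmp\l)$ or lies strictly before it or is newly added to $\vcmp$). For the first claim, fix a block $\block$ with $\pot(\advs\l,\block)\neq\bot$ and $\pot(\advs,\block)\neq\bot$, so in particular $\block\in\vcmp\l\cap\vcmp$ and $\mathrm{Old}(\gmin\l,\block)=\true$. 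The quantity to bound is
$$
\bigl(\potwith(\advs,\block)-\potwith(\advs\l,\block)\bigr)
+\bigl(\potadv(\advs,\mainchild)-\potadv(\advs\l,\mainchild\l)\bigr)
+\bigl(\potspe(\advs,\block)-\potspe(\advs\l,\block)\bigr)
+\bigl(\spevalue-\spevalue\l\bigr),
$$
where $\mainchild=\nxt(\vcmp,\block)$ and $\mainchild\l=\nxt(\vcmp\l,\block)$. The first observation is that $\allrec$ and $\mrls$ events can only move a block from $\ggen\setminus\gmax$ into $\gmax$, or from $\gmax\setminus\gmin$ into $\gmin$; each such move (weakly) decreases $\potwith$, can only increase $\mathrm{Adv}$ hence decrease $\potadv$, and shrinks $\gdta$ hence weakly decreases $\potspe$ — and the $\flagb$-update rule (rule 3) exactly compensates the one subtle case where removing the flag block from $\{\flagb\}\cup\gmin$ could matter. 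So these events have event value $0$ and one checks the left side is $\le 0$. For $\mgen$ events, adding a withheld malicious block under $\block$ raises $\potwith$ by at most $\event.\mathsf{block}.\mathsf{weight}$ (and $\potspe$ is unaffected since the block is not yet in $\gdta$), matching $\eventweight=\event.\mathsf{block}.\mathsf{weight}$.

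The real work is in the $\hgen$ cases. For a weight-$0$ honest block there is essentially nothing to check. For a weight-$1$ honest block $\block'$: if it falls in $\tree{\gmax,\mainchild}$ it increases $\mathrm{Adv}(\advs,\mainchild)$ via $\treew{\gmin\cup\{\flagb\},\cdot}$ — wait, no: $\block'$ enters $\gmax$ but not $\gmin$, so it does \emph{not} immediately raise $\treew{\gmin\cup\{\flagb\},\mainchild}$; rather it is the corresponding later $\allrec$ event that does. What the fresh weight-$1$ honest block does is potentially enlarge $\tw(\tree{\gdta,\mainchild}\cap\mathbf{N})$ (good, lowers $\potadv$) or enlarge a sibling subtree in $\gmax$ (raising $\sibtreew{\gmax,\mainchild}$, hence lowering $\mathrm{Adv}$, hence \emph{raising} $\potadv$ by $1$) — and this $+1$ is exactly cancelled by the $\min\{\cdot,\kah\}$ term unless the cap is already saturated, in which case $\mathrm{Spe}(\advs\l)=\true$ by Definition~\ref{def:special}(2) and $\eventweight=0$ permits the $+1$... actually one must verify the cap $\kah$ on $\tw(\tree{\gdta,\mainchild}\cap\mathbf{N})$ interacts correctly with condition 2 of $\mathrm{Spe}$; this is the kind of bookkeeping that has to be done precisely. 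The weight-$\heavyw$ honest block cases are the most delicate: here the flag-block mechanism is the whole point. When $\gdta\l$ has no honest weight-$\heavyw$ block and $\mathrm{Spe}(\advs\l)=\false$, this block \emph{becomes} $\flagb$ (rule 1), so from now on $\treew{\gmin\cup\{\flagb\},\mainchild}$ gets a free $+\heavyw$ even before $\allrec$ — this drops $\potadv$ by up to $\heavyw$, but the block also itself sits in $\gdta\setminus\mathbf{M}$ contributing to future $\mathrm{Spe}$ triggers, and $\eventweight=2\kah+2\kam-\heavyw$ is the (possibly negative) budget; the accounting must show $\Delta\potadv + \Delta\potwith \le 2\kah+2\kam-\heavyw$, using $\mathrm{Adv}(\advs,\mainchild)\le\kah+\kam$ from Claim~\ref{clm:chainc}(2). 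Symmetrically, when $\flagb\l\neq\bot$ and a \emph{new} weight-$\heavyw$ honest block arrives, rule 2 clears the flag, so $\treew{\gmin\cup\{\flagb\},\mainchild}$ \emph{loses} the $\heavyw$ it had been getting for free — a jump of $+\heavyw$ in $\potadv$ — which is why $\eventweight=\heavyw+\kam$ here; the extra $+\kam$ covers the simultaneous change in $\potspe+\spevalue$.

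For the second claim, $\pot(\advs\l,\block)=\bot$ but $\pot(\advs,\block)\neq\bot$ with $\mathrm{Old}(\gmin\l,\block)=\true$: since $\mathrm{Old}$ is monotone (property 2 of the block-age function) and $\block$ was already old in $\gmin\l$, the only way $\pot$ went from $\bot$ to non-$\bot$ is that $\block$ just entered $\vcmp$ — so the event is $\hgen$, $\mrls$ or $\allrec$ and $\block$ is among the newly-appended suffix of $\vcmp$ beyond $\tip(\vcmp\l)$. By Claim~\ref{clm:chainc}(1), every such block satisfies $\mathrm{Adv}(\advs,\block)>\kam+\kah$, which forces $\block$ to be $\tip(\vcmp\l)$'s unique positive-$\mathrm{Adv}$ descendant chain; I would argue that $\pot(\advs,\block)\le\pot(\advs\l,\tip(\vcmp\l)) + \eventweight(\advs\l,\event)$ by transferring the withheld-weight and special-status contributions along the newly-revealed chain segment (they were already "charged" under $\tip(\vcmp\l)$ in $\advs\l$ because $\tree{\ggen\l\setminus\gmax\l,\tip(\vcmp\l)}\supseteq\tree{\ggen\l\setminus\gmax\l,\block}$ and likewise for the $\mathbf{M}$-terms), while $\potadv$ of the new tip is bounded by $\kah+\kam$ minus a now-large $\mathrm{Adv}$, hence small. \textbf{The main obstacle} I anticipate is exactly this last transfer argument together with the flag-block cases of claim one: making sure that every "free" $\heavyw$ granted or revoked by $\{\flagb\}$, every saturation of a $\min\{\cdot,\kah\}$ or $\min\{\cdot,\kam\}$ cap, and every simultaneous shift between $\potspe$ and $\spevalue$ are matched against the correct branch of the $\eventweight$ definition — this is a large but finite checklist, and the constraint $2\kah+2\kam\le\heavyw$ is what makes the weight-$\heavyw$ budgets non-vacuous, so I would keep that inequality in view throughout.
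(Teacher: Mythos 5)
Your overall plan --- bounding each of $\potwith$, $\potadv$, and $\potspe+\spevalue$ separately and matching the totals against the branches of the event-value definition, with the flag-block rules and the constraint $2\kah+2\kam\le\heavyw$ carrying the weight-$\heavyw$ cases --- is the same decomposition the paper uses, and your accounting for the flag grant and revocation ($2\kah+2\kam-\heavyw$ on creation, $\heavyw+\kam$ on clearing) is right. But there is a genuine gap at the weight-$1$ honest block, which is the case that actually produces the negative drift. If such a block lands in the subtree of a \emph{sibling} of $\mainchild\l$, it raises $\sibtreew{\gmax,\mainchild\l}$ and hence raises $\potadv$ by $1$; your proposed cancellation via the $\min\{\tw(\tree{\gdta,\mainchild}\cap\mathbf{N}),\kah\}$ term does not apply, because that term only counts blocks inside the subtree of $\mainchild$, not blocks in sibling subtrees. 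Since $\eventweight(\advs\l,\event)=-1$ when $\mathrm{Spe}(\advs\l)=\false$, a $+1$ here would falsify the theorem. What closes the case is not cancellation but impossibility: $\pot(\advs\l,\block)\neq\bot$ forces $\mathrm{Old}(\gmin\l,\block)=\true$, and Lemma~\ref{lma:alwaysheavy} then shows that any honest block whose chain leaves $\vcmp\l$ at or below $\block$ must have $\adp(\cdot)={\sf con}$ and hence weight $0$ or $\heavyw$, never $1$. A weight-$1$ honest block therefore necessarily extends $\mainchild\l$ (via Lemma~\ref{lma:eptinpivot}), contributes $+1$ to the $\mathbf{N}$ term, and that is the sole source of the required $-1$. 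This ``enforced strategy'' step is the linchpin of the theorem and is absent from your argument.

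Two smaller points. Your claim that an $\mrls$ event ``can only increase $\mathrm{Adv}$ \ldots and shrinks $\gdta$'' is backwards: releasing a withheld malicious block adds it to $\gmax$ and to $\gdta$, so it can lower $\mathrm{Adv}(\advs,\mainchild\l)$ (raising $\potadv$ by its weight) or enter $\tree{\gdta,\mainchild\l}\cap\mathbf{M}\backslash\speset\l$ (raising $\potspe$ by its weight); the case survives only because $\potwith$ simultaneously drops by exactly that weight (cases 3.1--3.3 of Table~\ref{tab:potsum:1}), i.e.\ by cross-component cancellation rather than componentwise monotonicity. And for the second claim of the theorem you admit $\mrls$ as a way for $\block$ to newly enter $\vcmp$; Lemma~\ref{lma:tipmove2} rules that out (only an $\allrec$ event with unchanged flag, or a flag-creating $\hgen$ event, can advance the tip), and that restriction is precisely what makes the transfer of the bound from $\block$ back to $\tip(\vcmp\l)$ go through.
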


Since the global potential value takes the maximum block potential over a block set, theorem~\ref{thm:potcase:final} derives a similar property for the global potential value in theorem~\ref{thm:cpotdiff}. 



\begin{theorem}\label{thm:cpotdiff}
    \statecondm. For any graph $\graph$ 
    we have the following inequality 
    $$ (\cpot(\advs,\graph)+\spevalue)-(\cpot(\advs\l,\graph)+\spevalue\l)\le \eventweight(\advs\l,\event)$$
    as long as $\mathrm{Old}(\gmin\l,\genesisblock)=\true$ for genesis block $\genesisblock$ and the following holds for every block $\block\in \{\block'\in\vcmp|\graph\nsubseteq \block'.\past\}$, 
    $$ \mathrm{Old}(\gmin,\block)=\false \vee \mathrm{Old}(\gmin\l,\block)=\true \vee \cpot(\advs,\graph)\neq\pot(\advs,\block)$$
\end{theorem}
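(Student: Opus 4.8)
The plan is to reduce the global statement to the per-block statement of Theorem~\ref{thm:potcase:final} by exploiting that $\cpot(\advs,\graph)$ is a maximum of block potentials over $\vcmp' := \{\block'\in\vcmp \mid \graph\nsubseteq\block'.\past\}$, together with the fact that the value $\spevalue$ is shared across all blocks (so it adds the same additive term to every block potential). Let $\block^\star$ be a block achieving the maximum in the definition of $\cpot(\advs,\graph)$; if $\vcmp'$ is empty or all its block potentials are $\bot$, then $\cpot(\advs,\graph)=0$ and I handle that degenerate case separately by noting $\spevalue-\spevalue\l\ge 0$ (Claim~\ref{clm:spesv}) and $\cpot(\advs\l,\graph)\ge 0$, so the left side is $\le \spevalue-\spevalue\l$... wait, that direction is wrong; I will instead argue $\cpot(\advs\l,\graph)$ was also $0$ in this subcase using monotonicity of the relevant sets, or bound via $\eventweight \ge -1$ type reasoning. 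The cleaner route: split on whether $\pot(\advs,\block^\star)=\bot$ — it cannot be, since $\block^\star$ attains a maximum that by hypothesis is a genuine potential value whenever $\cpot\neq 0$; and if $\cpot(\advs,\graph)=0$ I treat it as the max being $\le 0$ and compare against $\cpot(\advs\l,\graph)\ge 0$.

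The main line of argument is: take $\block^\star$ attaining $\cpot(\advs,\graph)=\pot(\advs,\block^\star)$. The hypothesis of the theorem gives, for this $\block^\star$, that either $\mathrm{Old}(\gmin,\block^\star)=\false$, or $\mathrm{Old}(\gmin\l,\block^\star)=\true$, or $\cpot(\advs,\graph)\neq\pot(\advs,\block^\star)$; the last is false by choice of $\block^\star$, and the first forces $\pot(\advs,\block^\star)=\bot$ (since $\mathrm{Old}(\gmin,\block^\star)=\false$ contradicts the defining condition $\mathrm{Old}(\gmin,\block^\star)=\true$ in the definition of $\pot$), which is impossible unless $\cpot(\advs,\graph)=0$. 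So in the non-degenerate case we are left with $\mathrm{Old}(\gmin\l,\block^\star)=\true$. Now two sub-cases: if $\pot(\advs\l,\block^\star)\neq\bot$, apply the first inequality of Theorem~\ref{thm:potcase:final} to $\block^\star$ to get $(\pot(\advs,\block^\star)+\spevalue)-(\pot(\advs\l,\block^\star)+\spevalue\l)\le\eventweight(\advs\l,\event)$, and then observe $\cpot(\advs\l,\graph)\ge\pot(\advs\l,\block^\star)$ because $\block^\star$ is still a valid candidate in $\vcmp'\l$ — here I need to check that $\block^\star\in\vcmp\l$ and $\graph\nsubseteq\block^\star.\past$; the latter is inherited since $\block^\star.\past$ only grows, and the former needs Claim~\ref{clm:chainc} (a block in $\vcmp$ with old-enough status and nonzero advantage margin was already in $\vcmp\l$, or at least $\block^\star\preceq\tip(\vcmp\l)$). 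If instead $\pot(\advs\l,\block^\star)=\bot$, apply the second inequality of Theorem~\ref{thm:potcase:final}, which bounds $(\pot(\advs,\block^\star)+\spevalue)-(\pot(\advs\l,\tip(\vcmp\l))+\spevalue\l)\le\eventweight(\advs\l,\event)$, and then I need $\cpot(\advs\l,\graph)\ge\pot(\advs\l,\tip(\vcmp\l))$, i.e.\ that $\tip(\vcmp\l)$ is an eligible candidate block for the previous global potential — requiring $\tip(\vcmp\l)\in\vcmp\l$ (immediate), $\graph\nsubseteq\tip(\vcmp\l).\past$, and $\mathrm{Old}(\gmin\l,\tip(\vcmp\l))=\true$; the old-enough part should follow from property~1 of $\mathrm{Old}$ (if a later block $\block^\star$ is old in $\gmin\l$... but $\block^\star$ may be a new pivot-chain extension not in $\gmin\l$, so this needs care) combined with $\mathrm{Old}(\gmin\l,\genesisblock)=\true$ and monotonicity.

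Combining, in every sub-case $\cpot(\advs,\graph)+\spevalue = \pot(\advs,\block^\star)+\spevalue \le \pot(\advs\l,\cdot)+\spevalue\l+\eventweight(\advs\l,\event) \le \cpot(\advs\l,\graph)+\spevalue\l+\eventweight(\advs\l,\event)$, which is exactly the claim; the degenerate $\cpot(\advs,\graph)=0$ case is dispatched by showing $\cpot(\advs\l,\graph)+\spevalue\l+\eventweight(\advs\l,\event)\ge 0$, for which I separate on the sign/type of $\eventweight$ — it is negative only for $\hgen$ events of weight $1$ with $\mathrm{Spe}(\advs\l)=\false$ (value $-1$) or the $\heavyw$-block case $2\kah+2\kam-\heavyw\le 0$, and in those situations I must lean on $\cpot(\advs\l,\graph)$ already carrying enough slack or on $\spevalue\l$. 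I expect the main obstacle to be precisely this bookkeeping of ``which block is the eligible maximizer in the previous state'': verifying that $\block^\star$ or $\tip(\vcmp\l)$ genuinely lies in $\vcmp\l$ with $\graph\nsubseteq\block.\past$ and is old in $\gmin\l$, which is where Claim~\ref{clm:chainc} and the two monotonicity properties of $\mathrm{Old}$ do all the work — plus handling the annoying $\cpot=0$ boundary so that the inequality is not vacuously violated when the true maximum over block potentials is negative but is clamped to $0$.
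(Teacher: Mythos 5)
Your proposal follows essentially the same route as the paper's proof: pick the maximizer $\block^\star$ of $\pot(\advs,\cdot)$ over $\{\block'\in\vcmp\mid\graph\nsubseteq\block'.\past\}$, use the theorem's hypothesis to force $\mathrm{Old}(\gmin\l,\block^\star)=\true$, then split on whether $\pot(\advs\l,\block^\star)$ is $\bot$ and invoke the two inequalities of Theorem~\ref{thm:potcase:final}, checking that $\block^\star$ (resp.\ $\tip(\vcmp\l)$) is an eligible candidate for $\cpot(\advs\l,\graph)$ — exactly the bookkeeping the paper does, with $\mathrm{Old}(\gmin\l,\tip(\vcmp\l))=\true$ following from property 1 of $\mathrm{Old}$ applied to $\block^\star$ as you suggest. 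The one place you go astray is the degenerate case: you propose to dispatch $\cpot(\advs,\graph)=0$ by showing $\cpot(\advs\l,\graph)+\spevalue\l+\eventweight(\advs\l,\event)\ge 0$, which would fail for events with large negative value such as $2\kah+2\kam-\heavyw$; the correct observation (and the entire purpose of the hypothesis $\mathrm{Old}(\gmin\l,\genesisblock)=\true$) is that the genesis block always lies in $\vcmp'$ with non-$\bot$ potential, so the maximum is genuinely attained and the degenerate branch of the definition of $\cpot$ never fires.
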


This theorem shows that when an event $\event$ happens, the value of $\cpot(\advs\l,\graph)+\spevalue\l$ grows at most the event value $\eventweight(\advs\l,\event)$. The exceptional case is that a block $\block$ just becomes old enough at $\advs$ and at the same time $\cpot(\advs, \block)$ determines $\cpot(\advs,\graph)$. 
So for any event index $n$, let $n'$ denote the last event triggering the exceptional case. The global potential value $\cpot(\advs_n,\graph)$ is upper bounded by the summation of two terms: 1) the global potential value after event $\event_{n'}$, a.k.a. $\cpot(\advs_{n'},\graph)$; 2) The summation of event values (removing the influence from special value), a.k.a. $\sum_{i=n'+1}^{n} \left( \eventweight(\advs_{i-1},\event_i) - (\spevalue_i-\spevalue_{i-1})\right)$. For the first term, theorem~\ref{thm:cpotwhenold} provides an upper bound. For the second term, theorem~\ref{thm:eventdec} show that the summation of tends to negative infinity in a long enough time. 

So if we want to prove $\cpot(\advs_n,\gmin_{N(r_0)})<-\heavyw$ for all the sufficient large $n$, (the first sufficient condition for block finalization given in theorem~\ref{thm:hisunchange}), we only need to show that the exceptional case of theorem~\ref{thm:cpotdiff} never happens after a time point. Recalling that $\cpot(\advs_n,\gmin_{N(r_0)})$ takes the maximum block potential weight among blocks with $\block.\past \nsubseteq \gmin_{N(r_0)}$. Since theorem~\ref{thm:mustoldenough} proves that for all the blocks satisfying $\block.\past \nsubseteq \gmin_{N(r_0)}$ must be old enough after a time point, this guarantee that the exceptional case will never happen for $\gmin_{N(r_0)}$ after that. This is also the second sufficient condition for block finalization.

\begin{theorem}\label{thm:cpotwhenold}
    Given $(m,\beta,\delay,\adv,\env)$ which is admissible w.r.t. $(\proto_{\sf GHAST}^{\vec{\eta}},\order_{\sf GHAST})$. Let 
    $$w(\varepsilon) \eqdef  4\lambda\cdot \max\left\{\frac{140}{\delta^2}\cdot\log\left(\frac{9000}{\varepsilon\delta^2}\right),\frac{8(\timerdiff+4)}{\delta}\right\}.$$
    
    Let ${\ti \graph}_r$ denote all the blocks with $\mathrm{Old}(\gmin_{N(r)},\block)=\false$. When $\lambda \ge 0.8\log(500/\delta)$, $\timerw= 2\lambda/\delta$ and $\heavyw=30\lambda/\delta$, for any $r_2\ge 0$ and $\varepsilon>0$, we have
    $$ \Pr\left[ \exists N(r_2)<n\le N(r_2+1), \exists \block \in {\ti \graph}_{r_2}, \pot(\advs_n,\block)\ge w(\varepsilon) \right]\le \varepsilon.$$
\end{theorem}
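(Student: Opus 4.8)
The plan is to split
$\pot(\advs_n,\block)=\potwith(\advs_n,\block)+\potadv(\advs_n,\nxt(\vcmp_n,\block))+\potspe(\advs_n,\block)$
and bound the three summands separately, exploiting that only $\potwith$ and $\potspe$ are stochastic while $\potadv$ has a clean deterministic bound. Since $\pot(\advs_n,\block)\neq\bot$ forces $\block\in\vcmp_n$, writing $\mainchild:=\nxt(\vcmp_n,\block)$ we have either $\mainchild=\bot$ (so $\potadv=0$) or $\mainchild\in\vcmp_n$, in which case Claim~\ref{clm:chainc} gives $\mathrm{Adv}(\advs_n,\mainchild)>0$ and hence $\potadv(\advs_n,\mainchild)<\kah+\kam\le\heavyw/2=15\lambda/\delta$. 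With the stated $\timerw$ and $\heavyw$ this is below $w(\varepsilon)/2$ (the $\tfrac{8(\timerdiff+4)}{\delta}$ branch of $w(\varepsilon)$ alone already exceeds $30\lambda/\delta$), so it suffices to prove $\potwith(\advs_n,\block)+\potspe(\advs_n,\block)<w(\varepsilon)/2$ for all admissible $n,\block$ simultaneously, except with probability $\varepsilon$.

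The next step reduces $\potwith+\potspe$ to a block-generation count in a short window. By definition $\potwith(\advs_n,\block)=\treew{\ggen_n\setminus\gmax_n,\block}$ and $\potspe(\advs_n,\block)\le\tw(\tree{\gdta_n,\mainchild}\cap\mathbf M_n)$ each count only \emph{malicious} blocks (honest blocks enter $\gmax$ at generation), all of which are descendants of $\block$, and the two block sets are disjoint (withheld vs.\ already released). A descendant of $\block$ is generated no earlier than the best timer ancestor $\block_T$ of $\block$, and no later than round $r_2$ because $n\le N(r_2+1)$; hence $\potwith(\advs_n,\block)+\potspe(\advs_n,\block)$ is at most the total weight of malicious blocks generated in the round interval $(\rho_T,r_2]$, where $\rho_T$ is the round in which $\block_T$ was generated.

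The crux is to show that $\mathrm{Old}(\gmin_{N(r_2)},\block)=\false$ — which holds for every $\block\in{\ti\graph}_{r_2}$ — forces $\rho_T>r_{\sf start}:=r_2-\delay-W_0$ with $W_0=O\!\left((\timerdiff+\log(1/(\varepsilon\delta)))\cdot(\delay+1)/\delta\right)$, except with probability $\varepsilon/2$. Unpacking the definition, $\mathrm{Old}(\gmin_{N(r_2)},\block)=\false$ means $\mathrm{TimerHeight}(\block_T)>\mathrm{MaxTH}(\gmin_{N(r_2)})-\timerdiff$. With $\timerw$ as chosen the timer chain is a Nakamoto chain whose expected number of blocks per $\delay$-window is $\lambda_T=\delta/2<1$, so it obeys the chain-growth estimates of~\cite{pass2017analysis}: a linear lower bound $\mathrm{MaxTH}(\gmin_{N(r_2)})\gtrsim g_{\min}\cdot(r_2-\delay)$, and — since the adversary owns strictly less than half of the timer-mining power, so even a privately extended timer chain grows no faster than the public one plus an adversarial term — an upper bound on how early a released timer block of a given height can appear. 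Combining the two with the gap $\timerdiff$ yields the claimed $W_0$. I expect this to be the main obstacle: it is essentially a stand-alone statement about the timer chain, and one must rule out long-withheld, adversarially mined timer ancestors rather than merely invoke chain growth of the honest chain.

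Finally, condition on the good event of the previous step. Then, uniformly over all $\block\in{\ti\graph}_{r_2}$ and all $N(r_2)<n\le N(r_2+1)$, $\potwith(\advs_n,\block)+\potspe(\advs_n,\block)$ is dominated by the total weight $X$ of malicious blocks generated during the fixed interval $(r_{\sf start},r_2]$, so no union bound over $\block$ or $n$ is needed. Here $X$ is a sum of the adversary's i.i.d.\ per-round mining outcomes with $\mathbb E[X]=\beta\,m/\difficulty\cdot(\delay+W_0)=O\!\left(\beta\lambda(\timerdiff+\log(1/(\varepsilon\delta)))/\delta\right)$, and a Chernoff bound gives $\Pr[X\ge w(\varepsilon)/2]\le\varepsilon/2$; the explicit constants $140$, $9000$, $8$ in $w(\varepsilon)$, together with its extra $1/\delta$ factor and $\log(1/\varepsilon)$ term, are exactly what is needed to absorb $\mathbb E[X]$ and the Chernoff deviation (the hypothesis $\lambda\ge0.8\log(500/\delta)$ controls the lower-order terms). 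A union bound over this event and the event of the preceding step gives $\pot(\advs_n,\block)=\potwith+\potadv+\potspe<w(\varepsilon)$ with probability at least $1-\varepsilon$, completing the proof.
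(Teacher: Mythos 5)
Your proposal is correct and follows essentially the same route as the paper's proof: bound $\potadv$ deterministically by $\kah+\kam\le\heavyw$, observe that $\potwith+\potspe$ counts disjoint sets of malicious descendants of $\block$, use the timer-chain growth/quality statement (the paper's Lemma~\ref{thm:chaindiffgrowth}, which indeed handles withheld adversarial timer blocks) to show every not-old block was generated within an $O((\timerdiff+\log(1/\varepsilon\delta))(\delay+1)/\delta)$ window, and finish with a Chernoff bound on the adversary's mined weight in that fixed window, with no union bound over $\block$ or $n$. Your routing through the timer ancestor $\block_T$ rather than through the set of early-generated blocks is only a cosmetic repackaging of the same argument.
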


\begin{theorem}\label{thm:eventdec}
    Given $(m,\beta,\delay,\adv,\env)$ which is admissible w.r.t. $(\proto_{\sf GHAST}^{\vec{\eta}},\order_{\sf GHAST})$. When $\lambda \ge {0.8\log(500/\delta)}$ and $\heavyw=30\lambda/\delta$, for any round $r_1<r_2$, $\rho>0$ and $\varepsilon>0$, if 
    $$\frac{r_2-r_1}{\delay+1}\ge \max\left\{(3+\rho/\heavyw)\cdot\frac{600}{\delta^2},\log\left(\frac{4}{\varepsilon}\right)\cdot\frac{3000}{\delta^3},\log\left(\frac{404}{\varepsilon}\right)\cdot\frac{200}{\delta}\right\},$$
    we have 
    $$ \Pr\left[\forall r_2,\sum_{i=N(r_1)+1}^{N(r_2)}\eventweight(\advs_{i-1},\event_i)-\left(\spevalue_{N(r_2)}-\spevalue_{N(r_1)}\right)\ge -\rho\middle| \view_{r_1}\right]\le \varepsilon.$$
(Let $\view_r$ denote the joint view in $\view^{(\proto_{\sf GHAST},\order_{\sf GHAST})}(\env,\adv,\secp)$ before round $r$. )

\end{theorem}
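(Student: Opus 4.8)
The plan is to show that the random quantity $X:=\sum_{i=N(r_1)+1}^{N(r_2)}\eventweight(\advs_{i-1},\event_i)-\bigl(\spevalue_{N(r_2)}-\spevalue_{N(r_1)}\bigr)$ lies below $-\rho$ except with probability $\varepsilon$ (conditioned on $\view_{r_1}$), by first establishing a strictly negative per-round conditional drift and then invoking a martingale concentration bound. The starting point is the sign structure of the event value: only $\hgen$ and $\mgen$ events are nonzero; an $\mgen$ event contributes $+\block.\weight\ge 0$; an honest weight-$0$ block and an honest weight-$1$ block generated in special status contribute $0$; an honest weight-$1$ block generated outside special status contributes $-1$; and an honest weight-$\heavyw$ block contributes $2\kah+2\kam-\heavyw\le 0$ when it becomes a flag block, $0$ in its two neutral sub-cases, and the positive value $\heavyw+\kam$ only when its generation removes an existing flag block. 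Since $\spevalue_{N(r_2)}-\spevalue_{N(r_1)}\ge 0$ by claim~\ref{clm:spesv}, it suffices to make $\sum\eventweight$ negative by $-\rho$; combined with theorem~\ref{thm:cpotdiff} this is exactly what pushes the global potential below $-\heavyw$.

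First I would bound the positive contribution $W_m$, the total weight of malicious blocks generated in $[r_1,r_2)$. The weight of a malicious block is $1$ in expectation irrespective of which strategy bit its past graph induces, since the weight tag $\prf^{\sf weight}$ depends only on the block hash, which is fixed after the past graph; hence $\mathbb{E}[W_m\mid\view_{r_1}]$ is at most the expected number of malicious blocks, $\beta m(r_2-r_1)/\difficulty=\beta\lambda(r_2-r_1)/(\delay+1)$. A Freedman/Bernstein bound on the martingale of centered block weights (increments bounded by $\heavyw$, predictable quadratic variation $O(\heavyw\cdot\lambda(r_2-r_1)/(\delay+1))$) then gives $W_m\le(1+\delta/20)\,\beta\lambda(r_2-r_1)/(\delay+1)+O(\heavyw\delta^{-1}\log(1/\varepsilon))$ outside probability $O(\varepsilon)$, and under $\lambda\ge 0.8\log(500/\delta)$ and $(r_2-r_1)/(\delay+1)\ge\log(4/\varepsilon)\cdot 3000/\delta^3$ the additive slack is a small fraction of the main term.

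Next I would lower-bound the magnitude of the negative contribution by partitioning the window into maximal optimistic-mode and conservative-mode sub-intervals. On an optimistic round every honest block has weight $1$ and contributes $-1$ unless the adversary state is in special status, while each malicious block contributes $+1$; using $1-2\beta=\delta(1-\beta)$ this gives conditional drift $\le-\delta(1-\beta)\lambda/(\delay+1)$ outside special status. On a conservative round the only negative contribution comes from honest weight-$\heavyw$ flag-set blocks, each worth $-(\heavyw-2\kah-2\kam)$ and arriving at essentially the honest weight-$\heavyw$ generation rate $(1-\beta)\lambda/((\delay+1)\heavyw)$, giving conditional drift $\le-\bigl((1-2\kah/\heavyw-2\kam/\heavyw)(1-\beta)-\beta\bigr)\lambda/(\delay+1)$, which is $-\Omega(\delta\lambda/(\delay+1))$ once $\kah,\kam$ are fixed — consistently with $2\kah+2\kam\le\heavyw$ — so that $2\kah+2\kam$ sits a constant fraction below $\delta\heavyw=30\lambda$, with $\kah=\Omega(\lambda)$ and $\kam>\beta\lambda$. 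Summing over the window yields conditional drift $\le-c\,\delta\lambda(r_2-r_1)/(\delay+1)$ for an absolute constant $c>0$; writing $X$ as this drift plus a bounded-increment martingale and applying an Azuma/Freedman bound together with the hypothesis $(r_2-r_1)/(\delay+1)\ge(3+\rho/\heavyw)\cdot 600/\delta^2$ (the $\rho/\heavyw$ summand makes the drift dominate $\rho$, the $3$ leaves room for the slacks) gives $X<-\rho$ outside probability $\varepsilon$; the version quantified over all later horizons follows by a union bound over geometrically growing windows, using linearity of the drift and the exponential concentration.

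The hard part will be controlling the periods where the negative drift stalls. Honest weight-$1$ blocks contribute $0$ during special status, and honest weight-$\heavyw$ blocks can contribute $+(\heavyw+\kam)$ at flag removals. Special-status conditions $2$ and $3$ (too many honest weight-$1$, resp. weight-$\heavyw$, in-transit blocks) and flag removals each require an unusually dense burst of honest blocks within one $\delay$-round window; taking $\kah=\Omega(\lambda)$ makes this probability $e^{-\Omega(\lambda)}$ per window, so by $\lambda\ge 0.8\log(500/\delta)$ and a Chernoff bound on the \emph{number} of bad windows among the $(r_2-r_1)/(\delay+1)$ windows (rather than a union bound on each), the negative drift lost to them is an $O(\delta)$ fraction of the main term, with failure probability absorbed by the $\log(404/\varepsilon)\cdot 200/\delta$ term. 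Special-status condition $1$ is the genuinely adversarial case: the adversary can release withheld malicious blocks of total weight $\ge\kam$ under $\tip(\vcmp)$ to neutralize honest weight-$1$ contributions for up to $\delay$ rounds per $\kam$ weight spent. The plan is to charge each stalled round against malicious weight the adversary must supply — weight generated inside the window is counted on the positive side but still beaten by the honest drift by a $\delta$ margin; the $-(\spevalue_{N(r_2)}-\spevalue_{N(r_1)})$ term cancels up to $\kam$ of it per event; and since $\kam>\beta\lambda$ the adversary's steady-state block budget cannot sustain condition $1$, so any prolonged stall is fed from a bounded pre-window stockpile whose total damage is a fixed additive quantity absorbed by the $3$ in the $(3+\rho/\heavyw)$ factor. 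Making this charging argument airtight — in particular reconciling the moving tip $\tip(\vcmp)$ with where the adversary's withheld subtrees actually sit — is the step I expect to demand the most care.
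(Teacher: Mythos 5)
Your plan captures the essential mechanism of the paper's proof: the sign structure of the event values, an exponential-moment (Chernoff/martingale) concentration on the malicious and honest contributions, the observation that the honest-burst stalls (special-status conditions 2 and 3, and flag removals) each require two rare events within one $\delay$-window and are therefore controllable at rate $e^{-\Omega(\lambda)}$, and the charging of condition-1 special status against the $\spevalue$ term. The paper organizes this differently: it decomposes $\eventweight$ into four additive components $\eventweight_{\sf M},\eventweight_{\sf H},\eventweight_{\sf F},\eventweight_{\sf T}$ (Lemma~\ref{lma:event components}), bounds each with an explicit MGF computation per oracle query (Lemmas~\ref{lma:probc:varsm}--\ref{lma:probc:3}), and only at the end optimizes $\kam=1.5\lambda$, $\kah=3\lambda$ and the tilting parameters (Lemma~\ref{lma:probf}). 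Two of your steps do not survive as stated, though both are repairable.

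First, the partition of $[r_1,r_2)$ into ``optimistic-mode'' and ``conservative-mode'' sub-intervals is not well-defined: the strategy bit is a function of each block's past graph, so in a single round different honest nodes may produce blocks of different modes, and the adversary chooses the mode of each of its queries independently. The paper sidesteps this by conditioning on the per-query choice (the $x_1,x_2$ split in Lemma~\ref{lma:probc:1}) and using that the expected weight per query is $1/\difficulty$ in either mode, so a single uniform discount $u=(\heavyw-2\kah-2\kam)/\heavyw$ applies to every honest query outside special status; your own remark that the two modes have equal weight expectation is exactly what makes this per-query formulation work, but the interval-level drift computation has to be replaced by it. Second, the charging argument you flag as the hard part is indeed the crux, and your worry about the moving tip is resolved by a structural fact you would need to prove: $\speset$ accumulates $\tree{\gdta,\tip(\vcmp)}\cap\mathbf{M}$ at \emph{every} event and is monotone, and $\spevalue$ increases by $\min\{\kam,\tw(\speset\backslash\speset\l)\}$ per event (Claim~\ref{clm:spesv}); since every block witnessing condition 1 at round $r$ lies in $\gdta$ and hence was released after round $r-\delay$, none of it can already sit in $\speset_{N(r-\delay)}$, which forces $\spevalue_{N(r+1)}-\spevalue_{N(r-\delay)}\ge\kam$ (Lemma~\ref{lma:probc:varsm}). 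This converts each stalled $\delay$-window into a guaranteed $\kam$ of subtracted $\spevalue$ with no double counting, which is what your ``steady-state budget'' heuristic is trying to express; the requirement it imposes is $\kam\gtrsim(1-\beta)\lambda$ (the lost honest drift per window), not $\kam>\beta\lambda$. With these two repairs your outline matches the paper's argument.
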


\begin{theorem}\label{thm:mustoldenough}
    Given $(m,\beta,\delay,\adv,\env)$ which is admissible w.r.t. $(\proto_{\sf GHAST}^{\vec{\eta}},\order_{\sf GHAST})$. When $\beta\ge 0.1$, $\timerw\ge 2\lambda/\delta$ and $r_\Delta$ satisfying 
    $$r_\Delta\ge\frac{\timerw\difficulty}{m}\cdot 
    \max\left\{
    \frac{128}{\delta^2}\cdot \log\left(\frac{8400}{\varepsilon\delta^2}\right),
    \frac{8(\timerdiff+2)}{\delta}
    \right\}
    $$
    for any $r$, we have 
    $$\Pr\left[\exists n\ge N(r+r_\Delta), \exists \block \in \ggen_n, \gmin_{N(r)}\nsubseteq\block.\past \wedge  \mathrm{Old}(\gmin_n,\block)=\false\right]\le \varepsilon.$$
\end{theorem}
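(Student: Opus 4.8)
The plan is to reduce Theorem~\ref{thm:mustoldenough} to two classical facts about the embedded timer chain, which by construction behaves exactly like a Nakamoto longest chain: every block sets its $\mathrm{TimerHeight}$ to one plus the maximum timer height in its past, so each honest timer block extends the longest timer chain in its generator's local state, while malicious timer blocks may be withheld. Write $\mu:=m/(\difficulty\,\timerw)$ for the per-round rate at which timer blocks are generated; since $\mu\delay=\lambda/\timerw\le\delta/2$ under the hypothesis $\timerw\ge2\lambda/\delta$, the timer chain falls in the regime where it inherits a $\Theta(\delta)$ Nakamoto security margin. Write $\Lambda(r'):=\mathrm{MaxTH}(\gmin_{N(r')})$ for the timer height already seen by all honest nodes at round $r'$. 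Because $\mu\cdot\frac{\timerw\difficulty}{m}=1$, the hypothesis on $r_\Delta$ says exactly that the expected timer-chain growth over $r_\Delta$ rounds, $\mu r_\Delta$, exceeds $\max\{\frac{128}{\delta^2}\log(\frac{8400}{\varepsilon\delta^2}),\,\frac{8(\timerdiff+2)}{\delta}\}$.

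First I would pin down which blocks the statement must handle. If $\gmin_{N(r)}\nsubseteq\block.\past$, then $\block.\past$ contains no honest block generated in a round $r'\ge r$: the past of such a block would already contain $\gmin_{N(r')}\supseteq\gmin_{N(r)}$, because the admissibility delay bound forces every block whose $\allrec$ event precedes round $r'$ into every honest local state by phase~2 of round $r'$, and the past of a block contained in $\block.\past$ is itself contained in $\block.\past$. Hence every honest timer block lying on a parent-chain inside $\block.\past\cup\{\block\}$ was generated before round $r$; combined with Claim~\ref{clm:graphrel} and a ``no long withheld fork'' estimate, its timer height is at most $\Lambda(r)$ plus a bounded, exponentially-tailed overhang. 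This already disposes of honest blocks generated before round $r$, and reduces the malicious case to bounding the purely-adversarial top segment of $\block$'s timer chain, the part sitting above its highest honest timer ancestor $\block_H$ (generated, we just argued, in some round $\tau_H<r$).

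Next I would establish the two Nakamoto-style estimates for the timer chain and combine them. \emph{(a) Chain growth.} With overwhelming probability $\Lambda(r_2)-\Lambda(r_1)\ge g(r_2-r_1)-(\text{slack})$ for a rate $g\ge\frac{(1-\beta)\mu}{1+(1-\beta)\mu\delay}$; using $\mu\delay\le\delta/2$ (and $\beta\ge0.1$ only to fix constants) one gets $g-\beta\mu=\Theta(\delta\mu)>0$, and the choice of $r_\Delta$ makes $g\,r_\Delta$ exceed $\timerdiff$ plus every overhang or concentration-slack term used anywhere. \emph{(b) No catching up.} The number of adversarial timer blocks generated in any window $(\tau,r^\ast]$ concentrates around $\beta\mu(r^\ast-\tau)$, so a parent-chain of timer blocks whose top segment above $\block_H$ is entirely adversarial adds to $\block_H$'s height at most that many levels; since the honest chain grew by at least $g(r^\ast-\tau_H)$ over the same window and $g>\beta\mu$, no such chain ever reaches height $\Lambda(r^\ast)$ up to a $\Theta(1/\delta)$-plus-exponential-tail overhang. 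Combining: for $\block\in\ggen_n$ with $n\ge N(r+r_\Delta)$ and $\gmin_{N(r)}\nsubseteq\block.\past$, let $r^\ast$ be the generation round of $\block$'s adversarial top segment, so $r^\ast\le\event_n.{\sf round}$ and $\tau_H<r\le r^\ast$. On one hand $\mathrm{TimerHeight}(\block)\le\mathrm{TimerHeight}(\block_H)+\beta\mu(r^\ast-\tau_H)+(\text{fluct})+1\le\Lambda(\tau_H)+\beta\mu(r^\ast-\tau_H)+(\text{overhang})$; on the other $\mathrm{MaxTH}(\gmin_n)$ is at least both $\Lambda(\tau_H)+g(r^\ast-\tau_H)-(\text{slack})$ (since $\event_n.{\sf round}\ge r^\ast$) and $\Lambda(r)+g\,r_\Delta-(\text{slack})$ (since $\gmin_n\supseteq\gmin_{N(r+r_\Delta)}$). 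If $r^\ast-\tau_H\ge\Theta(1/(\delta\mu))$ the first bound wins, giving a gap $\ge(g-\beta\mu)(r^\ast-\tau_H)-(\text{overhang})\ge\timerdiff$; otherwise the adversarial segment contributes only $O(1/\delta)$ levels, so $\mathrm{TimerHeight}(\block)\le\Lambda(\tau_H)+O(1/\delta)\le\Lambda(r)+O(1/\delta)$ (here $\tau_H<r$ is used), and the second bound gives a gap $\ge g\,r_\Delta-O(1/\delta)\ge\timerdiff$. Either way $\mathrm{Old}(\gmin_n,\block)=\true$; a union bound over the polynomially many bad events, each of probability exponentially small in a slack of order $\delta^2\mu r_\Delta$ (matching the $\frac{128}{\delta^2}\log(\cdots)$ term), finishes the proof. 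The hypothesis $\beta\ge0.1$ is a mere normalization, since a smaller $\beta$ only increases $g$ and leaves every inequality valid a fortiori.

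The step I expect to be the real obstacle is estimate (b) for a \emph{recently generated} malicious block: the adversary may graft an old, long, withheld timer fork onto recently-released honest timer blocks and keep extending it, so one has to rule out that this adversarial top segment ever overtakes the honest timer chain while also arguing that when its fork point $\tau_H$ is close to the block's generation round the honest portion of the chain is already below $\Lambda(r)+O(1/\delta)$ (this is exactly where $\tau_H<r$ enters). The clean route is a drift argument on the lead of the best available adversarial timer chain over $\Lambda$, dominated by a random walk with negative drift $-(g-\beta\mu)=\Theta(-\delta\mu)$ whose running maximum has an exponential tail; some care is needed because timer blocks are not mined independently of tree-graph blocks, but the independence of $\prf^{\sf timer}$ from the proof-of-work hash makes the timer sub-process a genuine Nakamoto execution with network delay $\delay$, so the chain-growth and common-prefix estimates of the cited Nakamoto analysis transfer directly.
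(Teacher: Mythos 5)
Your plan is essentially the paper's: Theorem~\ref{thm:mustoldenough} is proved there by a two-line reduction to Lemma~\ref{thm:chaindiffgrowth}, whose proof is exactly the argument you sketch --- lower-bound the growth of $\mathrm{MaxTH}(\gmin)$ by spacing honest timer-block generations $\delay$ apart, upper-bound the purely adversarial top segment sitting above a pivotal honest timer block by the adversary's timer-block production in a window anchored at that block, and union-bound over window lengths. The one repair needed is in your case split: with $r^\ast-\tau_H=\Theta(1/(\delta\mu))$ the gap $(g-\beta\mu)(r^\ast-\tau_H)$ is only $\Theta(1)$, not $\ge\timerdiff$, and the claim $r\le r^\ast$ is unwarranted (the withheld segment may be generated entirely before round $r$); both issues vanish if you bound the adversarial segment over the whole window $[\tau_H,\event_n.{\sf round}]$, whose length is at least $r_\Delta$ because $\tau_H<r$, so that $(g-\beta\mu)\,r_\Delta\ge\timerdiff+\text{slack}$ follows directly from the $\frac{8(\timerdiff+2)}{\delta}$ term of the hypothesis with no case analysis at all.
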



Summarize all the previous proofs, finally we gives the finalization properties of the GHAST protocol under any possible attack strategies. 

\begin{theorem}\label{thm:final}
    Given $(m,\beta,\delay,\adv,\env)$ which is admissible w.r.t. $(\proto_{\sf GHAST}^{\vec{\eta}},\order_{\sf GHAST})$.
    When $\beta\ge 0.1$, $\lambda \ge 0.8\log(500/\delta)$, $\timerw= 2\lambda/\delta$ and $\heavyw=30\lambda/\delta$, $\view^{(\proto_{\sf GHAST},\order_{\sf GHAST})}(\env,\adv,\secp)$ has the $\varepsilon$ latency

    $$\delay \cdot O\left(\max\left\{\frac{\log\left(\frac{1}{\varepsilon\delta}\right)}{\delta^3}+\frac{\timerdiff}{\delta^2}\right\}\right)\;\text{rounds}.$$
\end{theorem}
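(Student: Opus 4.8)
\medskip\noindent\textbf{Proof plan.} The plan is to derive the $\varepsilon$-latency bound by instantiating Theorem~\ref{thm:hisunchange} for the graph $\gmin_{N(r_0)}$ and controlling its two hypotheses with Theorems~\ref{thm:mustoldenough}, \ref{thm:cpotwhenold}, \ref{thm:cpotdiff} and~\ref{thm:eventdec}, then collecting the round budgets. Fix any confirmation round $\rcon\le\rmax-r_\varepsilon$; the value of $r_\varepsilon$ is pinned down at the end. Every block lying in some honest node's local state after phase~2 of round $\rcon$ has its $\allrec$ event no later than round $\rcon+\delay$, so by Definition~\ref{def:admissible} and Claim~\ref{clm:graphrel} the joint honest local state $\graph_\rcon$ of Definition~\ref{def:finalization} satisfies $\graph_\rcon\subseteq\gmin_{N(r_0)}$ with $r_0:=\rcon+\delay$. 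Hence it suffices to choose $r_\varepsilon$ so that, except with probability $\varepsilon+\negl(\secp)$, both hypotheses of Theorem~\ref{thm:hisunchange} hold for this $r_0$ and every $n>N(r_1)$ with $r_1:=\rcon+r_\varepsilon$: then every $\ti\block\in\gmin_{N(r_0)}\supseteq\graph_\rcon$ has a single history across all honest states from round $\rcon+r_\varepsilon$ onward, which is exactly $(\varepsilon,\adv,\env,\rcon+r_\varepsilon,\secp)$-finalization. Nothing below depends on $\rcon$, so this yields the claimed latency.

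The ``old-enough'' hypothesis of Theorem~\ref{thm:hisunchange}---for every $n>N(r_1)$, each $\block\in\ggen_n$ with $\gmin_{N(r_0)}\nsubseteq\block.\past$ has $\mathrm{Old}(\gmin_n,\block)=\true$---is exactly the conclusion of Theorem~\ref{thm:mustoldenough} with $r:=r_0$ and offset $r_\Delta$. Substituting $\timerw=2\lambda/\delta$ and $\lambda=m(\delay+1)/\difficulty$ gives $\timerw\difficulty/m=2(\delay+1)/\delta$, so $r_\Delta=(\delay+1)\cdot O(\log(1/\varepsilon'\delta)/\delta^{3}+\timerdiff/\delta^{2})$ suffices with failure probability $\le\varepsilon'$. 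On the complement of that bad event, monotonicity of $\mathrm{Old}(\cdot,\block)$ in the graph argument shows that no block $\block$ with $\gmin_{N(r_0)}\nsubseteq\block.\past$ newly becomes old after round $r_0+r_\Delta$; since $\genesisblock$ (empty past) is itself such a block, $\genesisblock$ is old in $\gmin_n$ for all $n\ge N(r_0+r_\Delta)$, and the exceptional case of Theorem~\ref{thm:cpotdiff} for $\gmin_{N(r_0)}$---a block of $\{\block'\in\vcmp\mid\gmin_{N(r_0)}\nsubseteq\block'.\past\}$ turning old while simultaneously realizing the maximum in $\cpot$---is confined to the $O(r_\Delta)$ rounds up to $r_0+r_\Delta$. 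So this first hypothesis needs only $r_1\ge r_0+r_\Delta$.

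For the second hypothesis, $\cpot(\advs_n,\gmin_{N(r_0)})<-\heavyw$ for $n>N(r_1)$, fix such an $n$ and let $n''$ be the last event at which either the exceptional case of Theorem~\ref{thm:cpotdiff} fires for $\gmin_{N(r_0)}$ or $\genesisblock$ turns old in $\gmin$ (and $n''=N(r_0)$ if neither ever happens after $N(r_0)$); on the good event of the previous paragraph $n''\le N(r_0+r_\Delta)$, and on $(n'',n]$ neither the exceptional case occurs nor is $\genesisblock$ not-old, so Theorem~\ref{thm:cpotdiff} applies at every event of $(n'',n]$. Telescoping it gives
\[
\cpot(\advs_n,\gmin_{N(r_0)})\;\le\;\cpot(\advs_{n''},\gmin_{N(r_0)})+\sum_{i=n''+1}^{n}\Bigl(\eventweight(\advs_{i-1},\event_i)-(\spevalue_i-\spevalue_{i-1})\Bigr).
\]
At event $n''$ every block realizing $\cpot(\advs_{n''},\gmin_{N(r_0)})$ has just turned old (before $\genesisblock$ was old no block was old, by property~1 of $\mathrm{Old}$; in the exceptional case the maximizer turned old there), hence lay in ${\ti\graph}_{r}$ for the last round $r$ in which it was not old, so Theorem~\ref{thm:cpotwhenold} together with a union bound over the $O(r_\Delta/(\delay+1))$ candidate rounds $r$ gives $\cpot(\advs_{n''},\gmin_{N(r_0)})\le w(\varepsilon')$ except with probability $O(\varepsilon')$. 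For the sum, set $\rho:=\heavyw+w(\varepsilon')+O(\heavyw\log(1/\varepsilon'))$, the last term absorbing the $O(\log(1/\varepsilon'))$ block generations that may sit between $n''$ and the end of its round, and apply Theorem~\ref{thm:eventdec} with $r_1$ ranging over those same candidate rounds; since $w(\varepsilon')/\heavyw=O(\log(1/\varepsilon'\delta)/\delta+\timerdiff)$ one has $\rho/\heavyw=O(\log(1/\varepsilon'\delta)/\delta+\timerdiff)$, so the Theorem~\ref{thm:eventdec} threshold is met once the round of $n$ exceeds $r_0+r_\Delta$ by $(\delay+1)\cdot O(\log(1/\varepsilon'\delta)/\delta^{3}+\timerdiff/\delta^{2})$, after which the sum is $<-\rho\le-\heavyw-w(\varepsilon')$ except with probability $O(\varepsilon')$. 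Combining, $\cpot(\advs_n,\gmin_{N(r_0)})<-\heavyw$ for all $n>N(r_1)$ once $r_\varepsilon$ exceeds $\delay$ plus $r_\Delta$ plus the Theorem~\ref{thm:eventdec} budget, i.e.\ $r_\varepsilon=\delay\cdot O(\log(1/\varepsilon\delta)/\delta^{3}+\timerdiff/\delta^{2})$; taking $\varepsilon'=\varepsilon/c$ for an absolute constant $c$ and absorbing the ``with negligible exception'' clauses of the cited theorems into the $\negl(\secp)$ slack completes the proof. The hypotheses $\beta\ge0.1$, $\lambda\ge0.8\log(500/\delta)$, $\timerw=2\lambda/\delta$, $\heavyw=30\lambda/\delta$ are exactly those demanded by the four invoked theorems.

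The main obstacle I expect is the bookkeeping around the exceptional case of Theorem~\ref{thm:cpotdiff}: one must argue cleanly that (i) once Theorem~\ref{thm:mustoldenough} forces every block $\block$ with $\gmin_{N(r_0)}\nsubseteq\block.\past$ to be old, no such block, nor $\genesisblock$, can newly become old, so both the exceptional case and the ``$\genesisblock$ not old'' obstruction live only in the $O(r_\Delta)$ rounds after $r_0$; (ii) the value $\cpot(\advs_{n''},\gmin_{N(r_0)})$ at the restart event $n''$ is governed by Theorem~\ref{thm:cpotwhenold}, whose statement speaks only of the not-yet-old set ${\ti\graph}$, which is precisely why a union bound over those $O(r_\Delta)$ rounds---and a matching union-bounded use of Theorem~\ref{thm:eventdec}---is needed; and (iii) the leftover partial-round sums of event values must be bounded crudely without disturbing the $\delta^{-3}$ scaling. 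After that the work is routine substitution of the parameter values into $r_\Delta$, $w(\cdot)$ and the Theorem~\ref{thm:eventdec} threshold.
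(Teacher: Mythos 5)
Your proposal follows essentially the same route as the paper's proof: reduce finalization to the two hypotheses of Theorem~\ref{thm:hisunchange}, use Theorem~\ref{thm:mustoldenough} to discharge the old-enough condition and to confine the exceptional case of Theorem~\ref{thm:cpotdiff} to the $O(r_\Delta)$ rounds after $r_0$, bound the potential at the last exceptional event via Theorem~\ref{thm:cpotwhenold}, and telescope Theorem~\ref{thm:cpotdiff} with Theorem~\ref{thm:eventdec} to drive $\cpot$ below $-\heavyw$. The only differences are presentational (you are more explicit about the union bounds over candidate rounds and about the reduction from Definition~\ref{def:finalization}, which the paper handles by a slightly enlarged constant in $w_2$); the argument is correct.
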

\begin{proof}
    Given round $r_0$, let $n$ denote the largest event index with
    $$ (\cpot(\advs_{n}, \gmin_{N(r_0)})+\spevalue_{n})-(\cpot(\advs_{n-1}, \gmin_{N(r_0)})+\spevalue_{n-1})>\eventweight(\advs_{n-1},\event_n).$$

    We define 
    $$ w_1(\varepsilon)\eqdef 4\lambda\cdot \max\left\{\frac{140}{\delta^2}\cdot\log\left(\frac{9000}{\varepsilon\delta^2}\right),\frac{8(\timerdiff+4)}{\delta}\right\}.$$
    $$ w_2(\varepsilon)\eqdef \max\left\{(4+w_1(\varepsilon)/\heavyw)\cdot\frac{600}{\delta^2},\log\left(\frac{4}{\varepsilon}\right)\cdot\frac{3000}{\delta^3},\log\left(\frac{404}{\varepsilon}\right)\cdot\frac{200}{\delta}\right\}.$$
    $$ w_3(\varepsilon)\eqdef \frac{\timerw\difficulty}{m}\cdot 
    \max\left\{
    \frac{128}{\delta^2}\cdot \log(\frac{8400}{\varepsilon\delta^2}),
    \frac{8(\timerdiff+2)}{\delta}
    \right\}.$$

    According to theorem~\ref{thm:cpotdiff}, since $(\cpot(\advs_{n}, \gmin_{N(r_0)})+\spevalue_{n})-(\cpot(\advs_{n-1}, \gmin_{N(r_0)})+\spevalue_{n-1})>\eventweight(\advs_{n-1},\event_n)$, there must exists block $\block$ with $\mathrm{Old}(\gmin_{n-1},\block)=\false$, $\mathrm{Old}(\gmin_{n},\block)=\true$ and $\cpot(\advs_n, \gmin_{N(r_0)})=\pot(\advs_n,\block)$. According to theorem~\ref{thm:cpotwhenold}, since $\mathrm{Old}(\gmin_{n-1},\block)=\false$, with probability $1-\varepsilon$, 
    $\pot(\advs_n,\block)\le w(\varepsilon).$

    According to the choice of $n$, for any $n'>n$, we claim 
    $\cpot(\advs_{n'}, \gmin_{N(r_0)})\le w(\varepsilon) +\sum_{i=n+1}^{n'} \eventweight(\advs_{i-1},\event_i)-(v_{n'}-v_n).$
    So $\cpot(\advs_{n'}, \gmin_{N(r_0)})< -\heavyw$ if 
    \begin{equation}\label{ieq:event_sum}
        \sum_{i=n+1}^{n'} \eventweight(\advs_{i-1},\event_i)-(v_{n'}-v_n) < -\heavyw-w(\varepsilon).
    \end{equation}
    
    According to theorem~\ref{thm:eventdec}, with probability $1-\varepsilon$, for any event $n'$ which is $w_2(\varepsilon)$ rounds later than event $n$, inequality~\ref{ieq:event_sum} holds. 

    Let round $r_{\sf young}$ denote the largest round that exists $\block$ such that $\gmin_{N(r_0)}\nsubseteq \block.\past$ and $\mathrm{Old}(\gmin_{N(r_{\sf young})},\block)=\false$.  According to theorem~\ref{thm:mustoldenough}, with probability at least $1-\varepsilon$, 
    $$ r_{\sf young}\le r_0+w_3(\varepsilon).$$

    Notice that $n<N(r_{\sf young}+1)$ by definition, so we claim for any $n'>N(r_0+w_3(\varepsilon)+w_2(\varepsilon))$, there will be 
    \begin{itemize}
        \item $\pot(\advs_{n'},\gmin_{N(r_0)})<-\heavyw$
        \item For any block $\block\in \ggen_{n'}$ with $\gmin_{N(r_0)}\nsubseteq\block.\mathsf{past}$, it will be $\mathrm{Old}(\gmin_{n'},\block)=\true$. 
    \end{itemize}
    
    According to theorem~\ref{thm:hisunchange}, the history of blocks in $\gmin_{N(r_0)}$ will not be changed after round $r_0+w_2(\varepsilon)+w_3(\varepsilon)$. 

    In all, the blocks released before round $r_0-\delay$ will be confirmed after round $r_0+w_2(\varepsilon)+w_3(\varepsilon)$ with probability $1-3\varepsilon$. Notice that $w_2(\varepsilon)+w_3(\varepsilon)+\delay=\delay\cdot O\left(\max\left\{\frac{\log\left(1/(\varepsilon\delta)\right)}{\delta^3},\frac{\timerdiff}{\delta^2}\right\}\right)$, we have proved this theorem. 
\end{proof}



\newcommand{\ablock}{\mathbf{a}}
\newcommand{\cblock}{\mathbf{c}}

\section{Confirmation Policy}

In order to show the low confirmation delay of {\name} protocol, we provide a concrete method in estimating the block confirmation risk. 

Consider a block $\block'$ in Tree-Graph structure and suppose $\block'$ is in the past set of a pivot block $\block$. Then the finalization of $\block'$ can be reduced to block $\block$. Given a local state $\state$ and block $\block\in \mathrm{Pivot}(\state)$, if block $\block$ has more subtree weight than one of its siblings, block $\block$ may be kicked out of the pivot chain and the history of block $\block'$ may change. So in this section, we study the probability that block $\block$ is kicked out of pivot chain in the future under the assumption that the blocks in $\chain{\block.\parent}$ are always the common pivot chain. 
%


For simplicity, we ignore the network delay and assume all the blocks are delivered to all the honest nodes at once. But we still allow the attacker to withhold blocks. The confirmation policy considering the network delay can have a similar idea, except assuming an honest participant can not see the newly generated honest block in past time $2\delay$.

The confirmation rule consists of two parts. First, we estimate the confirmation risk under an assumption that for the first $\theta$ blocks $\block'$ generated later that block $\block.\parent$, if $\block'$ is in the subtree of block $\block'.\parent$, there will be $\mathrm{Adapted}(\block')=\false$. $\theta$ can be an arbitrary positive integer. Second, we estimate the probability that such assumption is break. By the union bound, we get the confirmation risk finally. 

\subsection{Confirmation risk under an assumption}\label{sec:confirmation_risk}

Here, we define two variables $m$ and $n$ for a Tree-Graph $\graph$. $m$ and $n$ shadow the same symbols defined in execution model and security analysis. 
\begin{align}
	m&\eqdef \mbox{An }\textit{upper bound }\mbox{for the number of honest blocks generated later than }\block.\parent \\
	n&\eqdef \mbox{A }\textit{lower bound }\mbox{for subtree weight advantage compared between }\block\mbox{ and its siblings}\\
	&\mbox{(When computing the subtree weight of }\block\mbox{'s sliblings, only honest blocks are taken into account.)}\notag
\end{align}

The confirmation risk is computed conditioned on given $m$ and $n$. In estimation for value $m$, notice that the blocks in $\block.\parent.\past$ must be generated earlier than $P(\block)$, we can just count the number of blocks in $\block\backslash\block.\parent.\past$ as the value of $m$. In estimation for value $n$, we try to distinguish the malicious blocks as many as possible. Then we compare the subtree weight of $\block$ and its maximum siblings for the normal case. If we fail to distinguish malicious block, we will get a more conservative confirmation risk estimation. 

Let $K$ denote the total weight of malicious block in the subtree of block $\block.\parent$, $T$ denote the number of blocks generated after the generation of $\block.\parent$. Although $K$ and $T$ are fixed in the view of execution model, but a participants can not know the exact number of how many malicious blocks generated in a given time interval. So we regard $K$ and $T$ as random variables and discuss their probability distribution conditioned on $m,n$. 

Let random variable $X_0$ denote the current difference between subtree weight of block $\block$ and the maximum subtree weight of $\block$'s sibling blocks (including the blocks withheld by the attacker). Then it will be $X_0\ge n-K$. Let $X_i$ denote the difference after $i$ blocks generation. So block $\block$ will be kicked out of pivot chain only if $X_i<0$. Notice that $X_{i+1}-X_i$ is independent with $K$ and $T$ because their randomness are from different time interval. So we can regard $K$ and $T$ as fixed value in discussing $X_{i+1}-X_i$.

We have assume that for the first $\theta$ blocks $\block'$ generated later than $\block.\parent$, if block $\block'$ is in the subtree of $\block.\parent$, its block weight must be 1.
\footnote{But we don't discuss the probability conditioned on the assumption. If we want to learn the probability of event $E$, the probability under the assumption $A$ means $\Pr[E\wedge A]$ and the probability conditioned on the assumption $A$ means that $\Pr[E|A]$. }
Let $\ti\theta = \max\{ \theta-T,0\}$. 
When $i\le \ti\theta$, each time an honest node generate a block, $X_i$ will increase by 1. Each time an malicious node generates a block, $X_i$ will decrease by at most one. We have the following claim: 

\begin{align*}
	X_{i}-X_{i-1}\ge \left\{\begin{array}{ll}
		1  & \mbox{The }i\mbox{-th block is generated by honest nodes.}\\
		-1 & \mbox{The }i\mbox{-th block is generated by the attacker. }
	\end{array}\right.
\end{align*}

When $i>\ti\theta$, the $i$-th block could be a normal block or an adapted block. Since the block weight never exceed $\heavyw$, we have $-\heavyw\le X_{i+1}-X_{i}\le \heavyw$. Since the normal block and adapted block have the same block weight expectation, similarly, we have

\begin{align*}
	\mathbb{E}[X_{i}-X_{i-1}|X_{i-1},\cdots,X_0]\ge \left\{\begin{array}{ll}
		1  & \mbox{The }i\mbox{-th block is generated by honest nodes.}\\
		-1 & \mbox{The }i\mbox{-th block is generated by the attacker. }
	\end{array}\right.
\end{align*}

Since the adversary shares $\beta$ computing power, we have $\mathbb{E}[X_{i}-X_{i-1}|X_{i-1},\cdots,X_0]=1-2\beta$. Thus 
\begin{align*}
 \forall s>0, \mathbb{E}[e^{s(X_{i-1}-X_{i})}|X_{i-1},\cdots,X_0] \le 
\left\{\begin{array}{ll}
	e^{s\beta}+e^{-s(1-\beta)}  & i\le \ti\theta\\
	\frac{1}{\heavyw}\cdot\left(e^{s\heavyw\beta}+e^{-s\heavyw(1-\beta)}\right) & i> \ti\theta
\end{array}\right.
\end{align*}

We use $g_1(s)$ to denote $e^{s\beta}+e^{-s(1-\beta)}$ and $g_2(s)$ to denote $\frac{1}{h}\cdot\left(e^{sh\beta}+e^{-sh(1-\beta)}\right)$. We have 
\begin{align*}
	\forall s>0, \mathbb{E}[e^{s(X_{0}-X_{i})}|X_{i-1},\cdots,X_0] \le g_1(s)^{\min\{i,\ti\theta\}}\cdot g_2(s)^{\max\{0,i-\ti\theta\}}.
\end{align*}

According to Markov's inequality, 
\begin{align*}
	\forall c>0,\forall i\ge 0, \Pr[X_0-X_i\ge c] \le \min_s g_1(s)^{\min\{i,\ti\theta\}}\cdot g_2(s)^{\max\{0,i-\ti\theta\}}\cdot e^{-tc}.
\end{align*}

Recalling $X_0\ge n-K$, applying the union bound, 
\begin{align*}\label{eq:partial_risk}
	\Pr[\exists i\ge 0, X_i\le 0] \le \min\left\{1,\sum_{i=1}^{+\infty} \min_s \left(g_1(s)^{\min\{i,\ti\theta\}}\cdot g_2(s)^{\max\{0,i-\ti\theta\}}\cdot e^{s(K-n)}\right)\right\}.
\end{align*}

Let random variable $K'$ denote the number of malicious blocks generated from the creation of $\block.\parent$ to the current time. So $T\le m+K'$. During the time interval that honest nodes generates $m+1$ blocks, the number of malicious blocks follows the negative binomial distribution with $m+1$ successes and $1-\beta$ success probability. Thus 
$$ \forall k>0, \Pr[K'\ge k]\le I_{1-\beta}(k,m+1).\quad (I\mbox{ is regularized beta function.}).$$

When $K'+m \le \theta$, all the $K'$ malicious blocks will be normal blocks with weight one except the blocks not in subtree of $\block.\parent$. Thus $K\le K'$.

Let $p(K,T)$ equal the right hand side of inequality (\ref{eq:partial_risk}). Notice that $p(K,T)$ is non-decreasing in terms of $K$ and $T$ and $p(n,T)=1$. For any given $t\le \theta$, the confirmation risk will be 

\begin{align*}
	   \mathbb{E}_{K,T}[p(K,T)] \le & \Pr[T>t] + \mathbb{E}_{K}[p(K,t)] \\ 
	\le & \Pr[K'+m>t] + p(0,t) + \sum_{k=1}^{n} \Pr[K\ge k]\cdot (p(k,t)-p(k-1,t)) \\ 
	\le &  \Pr[K'+m>t\vee K'+m>\theta] + p(0,t) + \sum_{k=0}^{n-1} \Pr[K'\ge k]\cdot p(k,t) \\
	\le &I_{1-\beta}(t-m+1,m+1) + p(0,t) + \sum_{k=0}^{n-1} I_{1-\beta}(k,m+1)\cdot p(k,t)
\end{align*}

\subsection{Risk for breaking the assumption}

The previous computation assumes the GHAST weight adaption is not triggered under the subtree of $\block.\parent$ during the generation of $\theta$ blocks. In this subsection, we provide a concrete method to estimate the risk that the attacker breaks this assumption. The $\theta^{th}$ block generation time after  $\block.\parent$ is called the \emph{deadline}. 

If a block $\cblock$ in subtree of $\block.\parent$ satisfying $\mathrm{Adapted}(\cblock)=\true$. There must exists block $\ablock\in \mathrm{Chain}(\cblock)$ such that 
\begin{enumerate}
    \item $\mathrm{MaxTH}(\cblock.\past)-\mathrm{TimerHeight}(\ablock.\parent)> \timerdiff.$ 
    \item $\treew{\cblock.\past,\ablock}-\mathrm{SibSubTW}(\cblock.\past,\ablock)\le \advan.$
\end{enumerate}

Let event $E_1(\ablock)$ and $E_2(\ablock)$ denote that there exists block $\cblock$ generated before the deadline in subtree of $\block.\parent$ satisfying the first and the second condition respectively. Then the risk that the assumption is broken is no more than
\begin{align}
    \Pr[\exists \ablock,E_1(\ablock)\wedge E_2(\ablock)]\le \min\{\Pr[\exists \ablock, E_1(\ablock)],\Pr[\exists \ablock, E_2(\ablock)]\}.
\end{align}

Now, we will discuss how to compute $\Pr[E_1(\ablock)]$ and $\Pr[E_2(\ablock)]$ for fixed block $\ablock$. 

Let $Z$ denote the maximum timer height among all the generated blocks at the creation of $\block.\parent$ (including the blocks withheld by the adversary).  A necessary condition for $E_1(\cblock)$ is that there are $\timerdiff-(Z-\mathrm{TimerHeight}(\ablock.\parent))$ timer blocks among the first $\theta$ blocks after the generation of $\block.\parent$. Thus 
\begin{align}
    \Pr[E_1(\ablock)] \le \Pr[\mathrm{B}(\theta,1/\timerw)\le \timerdiff-(Z-\mathrm{TimerHeight}(\ablock.\parent)]\qquad \mbox{$\mathrm{B}$ denotes the binomial distribution}
\end{align}

The value of $Z-\mathrm{MaxTH}(\block.\parent)$ is upper bounded by the number of consecutive malicious blocks in the timer chain, which can be estimated by chain-quality property in Nakamoto consensus proved by Pass et al~\cite{pass2017analysis}. We will formally reason it in the later version. 

In the estimation for $\Pr[E_2(\ablock)]$, we only consider the block $\ablock$ with $\ablock\in \chain{\block.\parent}$. Since $\cblock$ is in the subtree of $\block.\parent$, it will be $\block.\parent.\past\subseteq \cblock.\past$. Thus $\treew{\cblock.\past,\ablock}\ge \treew{\block.\parent.\past,\ablock}.$ Let $w\eqdef  \treew{\block.\parent.\past,\ablock}$ is a known value. So $E_2(\ablock)$ happens only if 
$$ \mathrm{SibSubTW}(\cblock.\past,\ablock)\ge w-\advan.$$

We takes two parameters as input, 
\begin{align*}
	m&\eqdef \mbox{An }\textit{upper bound }\mbox{for the number of honest blocks generated later than }\ablock.\parent\mbox{ and before }\block.\parent \\
	l&\eqdef \mbox{An }\textit{upper bound }\mbox{for the total weight of honest blocks contributes to the subtree of sliblings blocks of }\ablock
\end{align*}

So the malicious blocks must generate at least $w-\advan-l$ block weights between the creation of $\ablock.\parent$ and the deadline. The probability estimation for this events compose of two steps: 1) the number $N$ of blocks that the adversary generates in this time interval; 2) If the adversary is allowed to generated $N=n$ blocks and switch its consensus strategy adaptively, how many block weights it can generated. We can choose proper $n$ and claim that if the adversary generates $w-\advan-l$ block weights during the time interval, the adversary must be either generates more than $n$ blocks during the time interval or generates $w-\advan-l$ block weights in the first $n$ blocks. 

Let $N_1$ be the number of malicious blocks generated between the creation of $\ablock.\parent$ and $\block.\parent$ and $N_2$ be the number of malicious blocks generated between the creation of $\block.\parent$ and the deadline. So $N=N_1+N_2$. $N_1$ is upper bounded by the negative binomial distribution with $m+1$ successes and $\beta$ success probability, $N_2$ is a binomial distribution with $\theta$ trials and $\beta$ success probability. By the union bound, we have 

$$ \forall n_1,n_2,\Pr[N\ge n_1+n_2]\le \Pr_{N_1\sim \mathrm{NB}(m+1,1-\beta)}[N_1\ge n_1]+\Pr_{N_2\sim \mathrm{B}(\theta,\beta)}[N_2 \ge n_2].$$

If the adversary is allowed to generate $n$ blocks during the this interval. Let $Y_i$ denote the total block weight after the generation of $i^{th}$ block. Then we have $Y_0=0$ and 
$$ \mathbb{E}[Y_i-Y_{i-1}|Y_{i-1},\cdots,Y_0]=1.$$

Notice that $Y_i\le \heavyw$, thus 
$$ \forall t>0, \mathbb{E}[e^{t(Y_i-Y_{i-1})}|Y_{i-1},\cdots,Y_0]\le \frac{\heavyw-1+e^{t\heavyw}}{\heavyw}.$$

So we have 
$$ \Pr[Y_n\ge \rho]\le \min_t \left(\left(\frac{\heavyw-1+e^{t\heavyw}}{\heavyw}\right)^n\cdot e^{-t\rho}\right).$$

So an upper bound for probability of $E_2(\ablock)$ is $\Pr[N\ge n]+\Pr[Y_n\ge w-\advan-l]$. 
$\\$

Taking the union bound over all the possible $\ablock$ gives the risk that an attacker generated an adaptive weighted block before the deadline. In order to achieve a better result, we compute the probability of $E_1(\ablock)\wedge E_2(\ablock)$ in batch. Formally, we split $\chain{\block.\parent}$ into several slices. For each slice $\mathbf{S}$, we compute the probability that  
$$\Pr[\exists \cblock \in \mathbf{S}, E_1(\cblock)\wedge E_2(\cblock)] \le \min\{\Pr[\exists \cblock \in \mathbf{S}, E_1(\cblock)],\Pr[\exists \cblock \in \mathbf{S}, E_2(\cblock)]\}.$$
Let $\cblock'$ be the oldest block in $\mathbf{S}$, then $\Pr[\exists \cblock \in \mathbf{S}, E_1(\cblock)]=\Pr[E_1(\cblock')]$. In estimation of $\Pr[\exists \cblock \in \mathbf{S}, E_2(\cblock)]$, we pick the maximum $m,l$ and minimum $w$ among all the blocks $\ablock$ in the slice. (Notice that $m,l,w$ are defined on given block $\ablock$ in the estimation of $\Pr[E_2(\ablock)]$.)
This gives a better result.

\section{Implementation}

A separated work~\cite{conflux-sys} implements this protocol in the Conflux blockchain system. 
Under the experiment with 20Mbps network bandwidth limit per node, Conflux chooses the block generation rate of 4 blocks per second and the block size limit of 300K. Under this parameter, Conflux achieves a block throughput of 9.6Mbps. The experiments shows that all the blocks can be propagated to 99\% full nodes in 15 seconds, which implies $\lambda = 60$. ($\lambda$ is defined in section~\ref{sec:properties}.) Conflux chooses the consensus protocol parameters $\vec{\eta}$ to tolerate liveness attacks from a powerful attacker that controls 40\% of the network computation power. (a.k.a. $\beta = 0.4$, $\delta = 1-\beta/(1-\beta)=1/3$.) We choose the parameter $\heavyw=600$, \footnote{Theorem~\ref{thm:final} requires $\heavyw=30\lambda/\delta$. This requirement derives from lemma~\ref{lma:probf}. With the concrete parameter $\lambda = 60$ and $\delta = 1/3$, we find a smaller solution $\heavyw=600$ which achieves the same results as lemma~\ref{lma:probf} in complexity.} $\timerw = 2\lambda/\delta = 360$ and $\advan=3\heavyw$. 

Waiting for six blocks in Bitcoin has the confirmation risk $2\times 10^{-5}$ with $\beta =0.1$ adversary. To obtain the same confidence as waiting for six blocks in Bitcoin in a short time, we set $\timerdiff=160$ to make the risk output by the computation policy small enough. In the global view, a block can be confirmed in less than three times of network delay. 

\section{Conclusion}

In this work, we design a novel consensus protocol that achieves security and low confirmation delay in a normal scenario. The protocol executes two consensus strategies to achieve efficiency in normal cases and switch to a conservative strategy in defending a liveness attack. We provide a rigorous security analysis to show such design can resolve all the possible liveness issues in GHOST protocol. Any block will become $\varepsilon-$finalized after a logarithm time after its release. 
A separated work \cite{conflux-sys} implements this protocol and shows that the block can be confirmed in less than $3\delay$ in the global view. 












{\footnotesize \bibliographystyle{acm}
	\bibliography{paper}}
\clearpage
\appendix


\section{Potential Values}

\subsection{Preparation}

First we remind the most frequently used notations in this section. Functions $\chain{\cdot}$,$\tree{\cdot}$, $\treew{\cdot}$ are $\sibtreew{\cdot}$ are defined in section~\ref{sec:proto}. Functions $\tw(\cdot)$, $\tip(\cdot)$ and $\nxt$ are defined in the beginning of section~\ref{sec:concepts}. There are four types of events $\hgen$, $\mgen$, $\mrls$ and $\allrec$ defined in section~\ref{sec:adversary state}. Recalling that we use symbols $\ggen$, $\gmax$, $\gmin$, $\mathbf{M}$, $\flagb$, $\vcmp$, $\speset$, $v$ to denote corresponding components of adversary state $\advs$ in the context and use the symbols with subscript ``$\l$'' to denote components of $\advs\l$. These symbols will never be used as ephemeral symbols. $\block_1\preceq\block_2$ represents the relation $\block_1\in\chain{\block_2}$. If furthermore there is $\block_1\neq\block_2$, we write $\block_1\prec\block_2$. Since the local state $\state$ of a participant node is essentially a graph $\graph$, we use $\graph$ to denote the local state here.

We claim the necessary condition and sufficient condition for a block $\block\in\graph$ be in the pivot chain $\mathrm{Pivot}(\graph)$. 

\begin{lemma}\label{lma:ghost}
    For any graph $\graph$ and a block $\block$ in graph $\graph$, let $\genesisblock$ be the genesis block, the sufficient condition and necessary condition of $\block\in\mathrm{Pivot}(\graph)$.
    \begin{itemize}
        \item Sufficient condition: $\forall \block'\in \chain{\block}\backslash\{\genesisblock\}$, $\treew{\graph,\block'}-\sibtreew{\graph,\block'}>0$
        \item Necessary condition: $\forall \block'\in \chain{\block}$, $\treew{\graph,\block'}-\sibtreew{\graph,\block'}\ge 0$
    \end{itemize}
\end{lemma}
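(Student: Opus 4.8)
The plan is to prove both implications by tracing the greedy construction of $\mathrm{Pivot}(\graph)$ in Algorithm~\ref{fig:pivot} one step at a time, using two elementary facts. First, the list returned by that algorithm is literally $\chain{\cdot}$ of its last entry: every appended block is the $\mathrm{BestChild}$ of the previously appended one, hence that block's parent, so the output is a $\genesisblock$-rooted parent/child path; consequently $\block\in\mathrm{Pivot}(\graph)$ forces $\chain{\block}$ to be a prefix of $\mathrm{Pivot}(\graph)$, i.e. $\chain{\block}\subseteq\mathrm{Pivot}(\graph)$. Second, $\treew{\graph,\block'}-\sibtreew{\graph,\block'}>0$ (resp. $\ge 0$) says exactly that $\block'$ strictly (resp. weakly) beats every sibling in subtree weight, and a block with no sibling automatically meets the non-strict version since then the quantity equals $\treew{\graph,\block'}\ge 0$; in particular $\sibtreew{\graph,\genesisblock}=0$.

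Before the two directions I would dispatch the weight-zero bookkeeping that the $\child(\graph,\cdot)$ filter creates. Every valid block's parent is $\tip(\mathrm{Pivot}(\cdot))$ of its past graph, and the tip of any pivot chain is either $\genesisblock$ or a block once chosen as a $\mathrm{BestChild}$, hence of positive weight; so every block of $\chain{\block}$ other than $\genesisblock$ and $\block$ itself has positive weight, and if $\block$ itself has weight $0$ with $\block\ne\genesisblock$ then $\block\notin\mathrm{Pivot}(\graph)$ outright, so the lemma is only of interest --- and is only ever applied --- with $\block.\weight>0$. Under that convention the filter never drops a block lying on $\chain{\block}$, which is all that is used below.

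For sufficiency I would induct along $\chain{\block}=(\genesisblock=c_0,c_1,\dots,c_\ell=\block)$ to show that $\mathrm{Pivot}(\graph)$ visits $c_0,\dots,c_\ell$ in this order. The base case $c_0=\genesisblock$ is the initialization. For the step, suppose the algorithm has reached $c_k$ with $k<\ell$: the hypothesis gives $\treew{\graph,c_{k+1}}-\sibtreew{\graph,c_{k+1}}>0$, i.e. $\treew{\graph,c_{k+1}}>\treew{\graph,\block''}$ for every other child $\block''$ of $c_k$, and $c_{k+1}\in\child(\graph,c_k)$ by the previous paragraph; hence $c_{k+1}$ is the \emph{unique} maximizer of $\treew{\graph,\cdot}$ over $\child(\graph,c_k)$ --- the strict inequality makes the digest tie-break irrelevant --- so $\mathrm{BestChild}(\graph,c_k)=c_{k+1}$ and the loop moves to $c_{k+1}$. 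After $\ell$ steps the algorithm has visited $\block$, so $\block\in\mathrm{Pivot}(\graph)$.

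For necessity, assume $\block\in\mathrm{Pivot}(\graph)$, hence $\chain{\block}\subseteq\mathrm{Pivot}(\graph)$. Fix $\block'\in\chain{\block}$; if $\block'=\genesisblock$ then $\treew{\graph,\block'}-\sibtreew{\graph,\block'}=\treew{\graph,\block'}\ge 0$. Otherwise $\block'.\parent\in\chain{\block}\subseteq\mathrm{Pivot}(\graph)$ and $\block'$ is the block $\mathrm{Pivot}$ appends right after $\block'.\parent$, so $\block'=\mathrm{BestChild}(\graph,\block'.\parent)$; by definition this forces $\treew{\graph,\block'}\ge\treew{\graph,\block''}$ for every $\block''\in\child(\graph,\block'.\parent)$, in particular for every sibling of $\block'$, whence $\sibtreew{\graph,\block'}\le\treew{\graph,\block'}$. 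I do not expect a real obstacle here; the only care points are justifying that $\mathrm{Pivot}(\graph)$ equals $\chain{\cdot}$ of its tip (so the two senses of ``lying on the pivot chain'' coincide), treating the ``no sibling'' and ``$\block'=\genesisblock$'' cases uniformly via the convention $\sibtreew=0$, and remembering that sufficiency genuinely needs strictness (to defeat the tie-break) whereas necessity can only deliver the non-strict inequality because $\mathrm{BestChild}$ may have won a tie.
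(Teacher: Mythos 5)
Your proof is correct and follows essentially the same route as the paper's: reduce membership in $\mathrm{Pivot}(\graph)$ to the statement that every non-genesis block on $\chain{\block}$ is the $\mathrm{BestChild}$ of its parent, then read off the strict inequality as sufficient (it defeats the digest tie-break) and the non-strict one as necessary. Your extra bookkeeping about the zero-weight filter in $\child(\graph,\cdot)$ is a point the paper's proof silently skips, and you are right that the sufficiency direction is only honest under the implicit convention $\block.\weight>0$ (or $\block=\genesisblock$), since a zero-weight block can satisfy the stated inequality yet never be selected by $\mathrm{BestChild}$.
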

\begin{proof}
    Recalling that the pivot chain is a list of blocks which starts with the genesis block $\genesisblock$ and recursively expands the best child (defined in equation~\ref{eq:def:bestchild}) of last block into it. So the necessary and sufficient condition for $\block\in\mathrm{Pivot}(\graph)$ is $\forall \block'\in \chain{\block}\backslash\{\genesisblock\}, \block'=\mathrm{BestChild}(\graph,\block'.\parent)$. ($\genesisblock$ is the genesis block.) The definition of best child in eq.~(\ref{eq:def:bestchild}) shows that the sufficient condition and necessary condition for $\block'=\mathrm{BestChild}(\graph,\block'.\parent)$
    \begin{itemize}
        \item Sufficient condition: $\treew{\graph,\block'}-\sibtreew{\graph,\block'}>0$
        \item Necessary condition: $\treew{\graph,\block'}-\sibtreew{\graph,\block'}\ge 0$
    \end{itemize}
    So we have the necessary condition and sufficient condition for pivot chain.
    \begin{itemize}
        \item Sufficient condition: $\forall \block'\in \chain{\block}\backslash\{\genesisblock\}$, $\treew{\graph,\block'}-\sibtreew{\graph,\block'}>0$
        \item Necessary condition: $\forall \block'\in \chain{\block}\backslash\{\genesisblock\}$, $\treew{\graph,\block'}-\sibtreew{\graph,\block'}\ge 0$
    \end{itemize}

    Specially, the genesis block doesn't have any sibling blocks, so $\sibtreew{\graph,\genesisblock}=0$ always holds. Thus $\treew{\graph,\genesisblock}-\sibtreew{\graph,\genesisblock}\ge 0$ can be also a necessary condition.
\end{proof}
$\\$

We claim some properties which derives from the definitions directly. We do not explicitly refer the previous two claims in use since they are intuitive.

By the definition of $\tree{\graph,\block},\treew{\graph,\block}$ and $\sibtreew{\graph,\block}$ in eq.~(\ref{eq:def:subtree},\ref{eq:def:subtreew},\ref{def:sibsubtw}), when the graph $\graph$ includes more blocks, the outputs of these three functions will be non-decreasing. Formally, we have the following claim. \todo{describe this claim.}

\begin{claim}
    For any set of blocks $\graph$ and a block $\block$, we have $\tree{\graph,\block}\subseteq\graph$. 

    For any set of blocks $\graph_1$ and $\graph_2$ with $\graph_1\subseteq\graph_2$ and a block $\block$, 
    we have $\tree{\graph_2,\block}=\tree{\graph_1,\block}\cup \tree{\graph_2\backslash\graph_1,\block}$ , 
    $\treew{\graph_2,\block}= \treew{\graph_1,\block} + \treew{\graph_2\backslash\graph_1,\block}$ 
    and $\sibtreew{\graph_1,\block}\le \sibtreew{\graph_2,\block}$. 

    For any set of blocks $\graph$ and blocks $\block_1,\block_2$ with $\block_1\preceq\block_2$, 
    we have $\tree{\graph,\block_2}\subseteq\tree{\graph,\block_1}$ and $\treew{\graph,\block_2}\le\treew{\graph,\block_1}$.
\end{claim}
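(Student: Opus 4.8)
The approach is to verify each of the three assertions directly from the definitions of $\tree{\cdot}$, $\treew{\cdot}$, $\child(\cdot)$ and $\sibtreew{\cdot}$ in equations~(\ref{eq:def:subtree}), (\ref{eq:def:subtreew}), (\ref{eq:def:children}) and (\ref{def:sibsubtw}), using the single structural fact that the conditions ``$\block\in\chain{\block'}$'' and ``$\block'.\parent=\block$'' are properties of the blocks alone and are insensitive to which ambient set of blocks is being considered. Note in particular that we do not need $\graph$ to be a valid graph: $\chain{\block'}$ is determined purely by parent pointers, so $\tree{\graph,\block}$, $\treew{\graph,\block}$ and $\sibtreew{\graph,\block}$ are well defined on an arbitrary block set $\graph$, and every block weight is $0$, $1$ or $\heavyw>0$, hence non-negative. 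The first assertion, $\tree{\graph,\block}\subseteq\graph$, is then immediate, since $\tree{\graph,\block}$ is by definition the subset of $\graph$ carved out by a predicate.

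For the second assertion I would write $\graph_2$ as the disjoint union of $\graph_1$ and $\graph_2\backslash\graph_1$; since the defining predicate of $\tree{\cdot}$ does not involve the ambient set, the comprehension splits over this union, giving $\tree{\graph_2,\block}=\tree{\graph_1,\block}\cup\tree{\graph_2\backslash\graph_1,\block}$ with the two parts disjoint, and summing block weights over this disjoint union yields $\treew{\graph_2,\block}=\treew{\graph_1,\block}+\treew{\graph_2\backslash\graph_1,\block}$; as a by-product (using non-negativity of weights) this also gives the termwise monotonicity $\treew{\graph_1,\block'}\le\treew{\graph_2,\block'}$ for every $\block'$. For the $\sibtreew{\cdot}$ inequality, the same ambient-independence of the child predicate gives $\child(\graph_1,\block.\parent)\subseteq\child(\graph_2,\block.\parent)$, hence $\child(\graph_1,\block.\parent)\backslash\{\block\}\subseteq\child(\graph_2,\block.\parent)\backslash\{\block\}$; combining this inclusion with the termwise monotonicity of $\treew{\cdot}$ shows the maximum defining $\sibtreew{\graph_1,\block}$ is at most the maximum defining $\sibtreew{\graph_2,\block}$, and the empty-set case is handled by the convention that $\sibtreew{\cdot}$ returns $0$ together with $\treew{\cdot}\ge 0$.

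For the third assertion, assume $\block_1\preceq\block_2$, i.e.\ $\block_1\in\chain{\block_2}$, and take any $\block'\in\tree{\graph,\block_2}$, so that $\block'\in\graph$ and $\block_2\in\chain{\block'}$. The one sub-fact that needs a (very short) argument is the prefix property of $\chain{\cdot}$: if $\block_2\in\chain{\block'}$ then $\chain{\block_2}$ is an initial segment of $\chain{\block'}$, which follows by induction on the recursive definition~(\ref{eq:def:chain}) together with uniqueness of the parent pointer. Given this, $\block_1\in\chain{\block_2}\subseteq\chain{\block'}$, so $\block'\in\tree{\graph,\block_1}$; hence $\tree{\graph,\block_2}\subseteq\tree{\graph,\block_1}$, and $\treew{\graph,\block_2}\le\treew{\graph,\block_1}$ follows by summing non-negative weights over the larger set. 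There is essentially no obstacle here: the only content beyond unfolding definitions is the prefix property of $\chain{\cdot}$, which is a one-line induction, and everything else is routine bookkeeping with set comprehensions and non-negativity of block weights.
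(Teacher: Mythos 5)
Your proof is correct. The paper states this claim without any proof at all (it is introduced with ``We claim some properties which derives from the definitions directly'' and labelled intuitive), so your write-up is simply the routine unfolding of the definitions of $\tree{\cdot}$, $\treew{\cdot}$, $\child(\cdot)$ and $\sibtreew{\cdot}$ that the authors omit; the only point with any content, the prefix property of $\chain{\cdot}$ used for the third assertion, is handled correctly by your one-line induction on the recursive definition and uniqueness of parent pointers.
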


Recalling that protocol $\proto_{\sf GHAST}$ is the same as $\proto_{\sf TG}$ except the block weight. In $\proto_{\sf TG}$, the validity of a block requires $\mathrm{Pivot}(\block.\past)\circ\block=\mathrm{Chain}(\block)$. So the ghast protocol inherits such property. 

\begin{claim}
    For any block $\block$, we have $\mathrm{Pivot}(\block.\past)\circ\block=\mathrm{Chain}(\block)$.
\end{claim}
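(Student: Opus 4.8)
The plan is to observe that this claim merely re-states, for $\proto_{\sf GHAST}$, the block-validity constraint that already defines $\proto_{\sf TG}$, together with the small amount of bookkeeping needed to see that the two phrasings of that constraint coincide. Recall that a block is valid under $\proto_{\sf TG}$ iff (besides the proof-of-work and hash-consistency conditions, which $\proto_{\sf GHAST}$ keeps verbatim) its parent pointer selects the pivot tip of its past graph, i.e.\ $\block.\parent=\tip(\mathrm{Pivot}(\block.\past))$. Since $\proto_{\sf GHAST}$ coincides with $\proto_{\sf TG}$ except in how it assigns block weights --- and the weights affect only which chain $\mathrm{Pivot}$ selects, not the form of the validity rule --- every valid GHAST block obeys the same constraint, now read with the GHAST notion of $\mathrm{Pivot}$. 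So it remains only to deduce $\mathrm{Pivot}(\block.\past)\circ\block=\chain{\block}$ from $\block.\parent=\tip(\mathrm{Pivot}(\block.\past))$.

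For that I would first note, from the recursive definition of $\chain{\cdot}$ in~(\ref{eq:def:chain}), that $\chain{\block}=\chain{\block.\parent}\circ\block$ for every non-genesis $\block$, so the claim reduces to $\mathrm{Pivot}(\block.\past)=\chain{\block.\parent}$. The one structural fact I would isolate is that, for \emph{any} graph $\graph$, the list $\mathrm{Pivot}(\graph)$ equals $\chain{\tip(\mathrm{Pivot}(\graph))}$: the pivot procedure starts at $\genesisblock$ and, while possible, replaces the current block $\block'$ by $\mathrm{BestChild}(\graph,\block')$, and by~(\ref{eq:def:bestchild}) that best child is drawn from $\child(\graph,\block')$, hence has $\block'$ as its parent; therefore consecutive entries of $\mathrm{Pivot}(\graph)$ stand in the parent/child relation, the list begins at the genesis, and so it is exactly the parent-chain of its last element. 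Instantiating this with $\graph=\block.\past$ and using the validity constraint $\block.\parent=\tip(\mathrm{Pivot}(\block.\past))$ gives $\mathrm{Pivot}(\block.\past)=\chain{\tip(\mathrm{Pivot}(\block.\past))}=\chain{\block.\parent}$, whence $\mathrm{Pivot}(\block.\past)\circ\block=\chain{\block.\parent}\circ\block=\chain{\block}$.

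I do not anticipate any genuine obstacle: the whole argument is unwinding definitions. The only two points that deserve a sentence of care are the structural observation above --- that the best-child recursion never skips a generation, so the pivot list is honestly a parent-chain --- and the degenerate genesis case, for which $\chain{\genesisblock}=\genesisblock$ and $\genesisblock.\past$ is empty, so the statement is to be understood for non-genesis blocks (or holds by the special-casing of $\genesisblock$).
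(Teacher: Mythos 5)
Your proposal is correct and matches the paper's treatment: the paper states this claim without proof because the validity rule of $\proto_{\sf TG}$ (inherited verbatim by $\proto_{\sf GHAST}$) is literally phrased both as ``the parent is the pivot tip of $\block.\past$'' and as the equation $\mathrm{Pivot}(\block.\past)\circ\block=\chain{\block}$, so the claim is definitional. Your extra step --- showing that $\mathrm{Pivot}(\graph)$ is the parent-chain of its tip because the best-child recursion never skips a generation --- is exactly the bookkeeping needed to see that the two phrasings coincide, and your handling of the genesis degenerate case is fine.
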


\begin{claim}
    Let $\mathbf{P}$ be a list of blocks in which any two consecutive blocks are in parent/child relation. (The outputs of $\chain{\block'}$ for some block $\block'$ and the chain $\vcmp$ in the adversary state satisfy such requirement.)
    
    For any $\block\in\mathbf{P}\backslash\{\tip(\mathbf{P})\}$, let $\block_1:=\nxt(\mathbf{P},\block)$, it will be $\block_1\neq\bot$ and $\block_1.\parent=\block$.
\end{claim}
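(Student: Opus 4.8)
The plan is to reduce the statement to an elementary fact about how the two lists named in the parenthetical — $\chain{\block'}$ and the variant common pivot chain $\vcmp$ — are built: both begin at the genesis block $\genesisblock$ and are extended one child at a time. So the first step is to fix notation. Write $\mathbf{P}=[\block^{(0)},\dots,\block^{(k)}]$, so that $\block^{(0)}=\genesisblock$, $\tip(\mathbf{P})=\block^{(k)}$, and any consecutive pair $\block^{(i)},\block^{(i+1)}$ is in parent/child relation; I would also record that the entries of $\mathbf{P}$ are pairwise distinct, since $\chain{\cdot}$ follows parent pointers inside an acyclic valid graph, and the truncation step in the definition of $\vcmp$ only deletes a suffix, so it cannot introduce a repeated block.

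The second step is to prove, by induction on $i$, that $\block^{(i+1)}.\parent=\block^{(i)}$ for every $0\le i<k$ — i.e. that the successor of a block inside $\mathbf{P}$ is its child, not its parent. For $i=0$: $\block^{(1)}$ is in parent/child relation with $\genesisblock$, and since $\genesisblock.\parent=\bot$ the only possibility is $\block^{(1)}.\parent=\genesisblock$. For the step, assume $\block^{(i)}.\parent=\block^{(i-1)}$; if we had $\block^{(i)}.\parent=\block^{(i+1)}$ then $\block^{(i+1)}=\block^{(i-1)}$, contradicting distinctness of the entries at positions $i-1$ and $i+1$, so $\block^{(i+1)}.\parent=\block^{(i)}$. (Equivalently one can read the orientation directly off the recursion $\chain{\block}=\chain{\block.\parent}\circ\block$ and off step~1 of the construction of $\vcmp$, which appends a block immediately after its parent; the induction just avoids an explicit case split.)

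The third step is to unfold the definition of $\nxt$. Since $\block\in\mathbf{P}\backslash\{\tip(\mathbf{P})\}$ we have $\block=\block^{(i)}$ for some $0\le i<k$, so $\nxt(\mathbf{P},\block)$ returns the immediately following entry $\block^{(i+1)}$; in particular $\block_1=\block^{(i+1)}\neq\bot$, and by the induction $\block_1.\parent=\block^{(i+1)}.\parent=\block^{(i)}=\block$. There is no genuine difficulty here; the one point worth stating explicitly is that ``parent/child relation'' a priori allows either orientation, and what pins it down is simply that $\mathbf{P}$ is anchored at the parentless genesis block.
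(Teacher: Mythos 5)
Your proof is correct. The paper itself offers no argument for this claim: it is listed among the properties that are said to ``derive from the definitions directly,'' so there is nothing to compare line by line. Your verification is sound and, if anything, more careful than the paper intends --- you correctly identify that the only nontrivial content is the \emph{orientation} of the parent/child relation along $\mathbf{P}$, and you pin it down by anchoring the list at the parentless genesis block and inducting with the uniqueness of parents plus distinctness of entries. Two small remarks. First, your restriction to the two instances named in the parenthetical (outputs of $\chain{\cdot}$ and the chain $\vcmp$) is the right move: for a genuinely arbitrary list satisfying only ``consecutive blocks are in parent/child relation,'' the orientation could flip (e.g.\ $[\block,\block.\parent]$), so the claim is only definitional once one either adopts the paper's implicit convention that $\nxt$ returns ``the next element (child block)'' or, as you do, grounds the list at $\genesisblock$. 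Second, your distinctness argument for $\vcmp$ (strictly increasing tree depth under step~1, suffix truncation under step~2) is exactly what is needed to make $\nxt(\mathbf{P},\block)$ well defined; this is a point the paper glosses over entirely.
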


\subsection{Properties for concepts}\label{sec:proof:lma:onebest}

Before touching the potential value, we prove some important properties for concepts first. 

\begin{proofof}{Lemma~\ref{lma:onebest}}
    For any block $\block\in\gmax$ and its children blocks $\block_1,\block_2\in\child(\gmax,\block)$ ($\block_1\neq\block_2$). Since $\sibtreew{\graph,\block}$ (defined in eq.~(\ref{def:sibsubtw}))returns the maximum subtree weight of sibling blocks in $\block$, and block $\block_1$ and $\block_2$ are on sibling relationship, we can claim $\sibtreew{\gmax,\block_1}\ge \treew{\gmax,\block_2}$ and $\sibtreew{\gmax,\block_2}\ge \treew{\gmax,\block_1}$.
    Because block $\flagb$ is in $\gdta$, we have $\gmin\cup\{\flagb\}\subseteq\gmax$. Thus 
    \begin{align*}
        & \mathrm{Adv}((\gmax,\gmin,\flagb),\block_1)+\mathrm{Adv}((\gmax,\gmin,\flagb),\block_2)\\
        \le & \treew{\gmax,\block_1}-\treew{\gmax,\block_2}
        + \treew{\gmax,\block_2}-\treew{\gmax,\block_1}\\
        \le & 0
    \end{align*}
    Thus it can not be $\mathrm{Adv}((\gmax,\gmin,\flagb),\block_1)>0$ and $\mathrm{Adv}((\gmax,\gmin,\flagb),\block_2)>0$.
\end{proofof}

\begin{lemma}\label{lma:eptinpivot}
    \statecondm. If flag block $\flagb\l= \bot$, then $\vcmp\l$ must be a prefix of pivot chain $\mathrm{Pivot}(\event.\mathsf{block})$.  
\end{lemma}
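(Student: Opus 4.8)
The plan is to reduce the claim to the sufficient condition for pivot-chain membership in Lemma~\ref{lma:ghost}. Write $\graph \eqdef \event.\mathsf{block}.\past$ and $\block^\ast \eqdef \tip(\vcmp\l)$. By the way $\vcmp$ is built (start at $\genesisblock$, follow the unique child with positive $\mathrm{Adv}$ — unique by Lemma~\ref{lma:onebest} — and then possibly trim a suffix), $\vcmp\l$ is exactly the parent-chain $\chain{\block^\ast}$, and $\mathrm{Pivot}(\graph)$ is a chain rooted at $\genesisblock$; so it suffices to show $\block^\ast \in \mathrm{Pivot}(\graph)$, since then the prefix of $\mathrm{Pivot}(\graph)$ ending at $\block^\ast$ must coincide with $\chain{\block^\ast} = \vcmp\l$. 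By Lemma~\ref{lma:ghost} it is in turn enough to prove $\treew{\graph,\block'} - \sibtreew{\graph,\block'} > 0$ for every non-genesis $\block' \in \vcmp\l$. (If $\advs\l$ is the initial state then $\vcmp\l = \{\genesisblock\}$ and there is nothing to prove, so I assume $\advs\l = \psi(\cdot,\cdot)$ below.)

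Two facts drive the argument. First, I would establish the sandwich $\gmin\l \subseteq \graph \subseteq \gmax\l$. The statement is substantive precisely when $\event$ is an $\hgen$ event (for a $\mgen$ event the adversary may build its block over an arbitrarily small past graph, and the statement is read accordingly); for an $\hgen$ event $\graph$ is the local state of the generating honest node in phase 3(a) of some round, so Claim~\ref{clm:graphrel} gives $\graph \subseteq \gmax\l$, while the last clause of Definition~\ref{def:admissible} together with the intra-round phase order forces every block whose $\allrec$ event has already fired — i.e.\ every block of $\gmin\l$ — into that local state, giving $\gmin\l \subseteq \graph$. Second, since $\flagb\l = \bot$, Claim~\ref{clm:chainc}(1) applied to $\advs\l$ (equivalently, the construction of $\vcmp$, whose trimming step only deletes a suffix) gives, for every $\block' \in \vcmp\l$,
$$ \treew{\gmin\l,\block'} - \sibtreew{\gmax\l,\block'} \;=\; \mathrm{Adv}(\advs\l,\block') \;>\; 0, $$
where the first equality is just the definition of $\mathrm{Adv}$ with an empty flag block.

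These combine directly. From the displayed inequality $\treew{\gmin\l,\block'} \ge 1$, so $\tree{\gmin\l,\block'}$ contains some block; that block lies in $\gmin\l \subseteq \graph$ and has $\block'$ among its parent-ancestors, so validity of $\graph$ (closure under dependency blocks) forces $\block' \in \graph$. Hence $\vcmp\l \subseteq \graph$ and the weight quantities below are well defined. Using monotonicity of $\treew{\cdot,\block'}$ and of $\sibtreew{\cdot,\block'}$ in their graph argument together with $\gmin\l \subseteq \graph \subseteq \gmax\l$,
$$ \treew{\graph,\block'} - \sibtreew{\graph,\block'} \;\ge\; \treew{\gmin\l,\block'} - \sibtreew{\gmax\l,\block'} \;>\; 0 $$
for every non-genesis $\block' \in \vcmp\l$, which is exactly the hypothesis needed in the first paragraph; applying Lemma~\ref{lma:ghost} finishes the proof.

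I expect the only real friction to be bookkeeping rather than mathematics. The delicate point is $\gmin\l \subseteq \event.\mathsf{block}.\past$: it hinges on the precise intra-round phase order and on the exact definitions of the $\hgen$, $\mrls$ and $\allrec$ events, and it is also where the restriction to $\hgen$ events is genuinely used. A second point worth spelling out is why the hypothesis $\flagb\l = \bot$ is essential: when $\flagb\l \neq \bot$, $\mathrm{Adv}(\advs\l,\block')$ credits $\block'$ with the weight of a block sitting in $\gdta\l$ that the generating miner need not have received, so the inequality $\treew{\graph,\block'} \ge \treew{\gmin\l \cup \{\flagb\l\},\block'}$ can fail and the monotonicity chain breaks — consistent with the lemma being stated only for the flag-free case.
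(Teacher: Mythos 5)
Your proof is correct and follows essentially the same route as the paper's: the identity $\mathrm{Adv}(\advs\l,\block')=\treew{\gmin\l,\block'}-\sibtreew{\gmax\l,\block'}$ under $\flagb\l=\bot$, positivity from Claim~\ref{clm:chainc}.1, the sandwich $\gmin\l\subseteq\event.\mathsf{block}.\past\subseteq\gmax\l$ from Claim~\ref{clm:graphrel}, monotonicity, and the sufficient condition of Lemma~\ref{lma:ghost}. You are in fact slightly more careful than the paper in flagging that the argument (and the lemma's intended scope) is for $\hgen$ events and in checking that $\vcmp\l$ actually lies inside the past graph.
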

\begin{proof}
    Let $\block:=\event.\mathsf{block}$, $\graph:=\block.\past$. 
    Since $\flagb\l= \bot$, for any block $\block'\in \gmax\l$, $\mathrm{Adv}(\advs\l,\block')=\treew{\gmin\l,\block'}-\sibtreew{\gmax\l,\block'}$. 
    By claim~\ref{clm:chainc}.1, for any block $\block'\in \vcmp\l$, $\mathrm{Adv}(\advs\l,\block')>0$.
    Since $\graph\l$ is the local state of the honest node who generates $\block$, according to claim~\ref{clm:graphrel}, $\gmin\l\subseteq\graph\l\subseteq\gmax\l$. So we have
    $$ \forall \block'\in \vcmp\l, \treew{\graph\l,\block'}-\sibtreew{\graph\l,\block'}\ge \treew{\gmin\l,\block'}-\sibtreew{\gmax\l,\block'}>0.$$
    According to lemma~\ref{lma:ghost}, block $\tip(\vcmp\l)$ must be in $\mathrm{Pivot}(\graph\l)$ and thus $\vcmp\l$ is a prefix of pivot chain $\mathrm{Pivot}(\graph\l)$ when $\flagb\l=\bot$. Since $\mathrm{Pivot}(\graph\l)\circ\block=\chain{{\block}}$, $\vcmp\l$ is also a prefix of $\chain{{\block}}$.
\end{proof}

\begin{lemma}\label{lma:treewbound}
    \stateconds,
    any graph $\graph$ satisfying $\gmin\subseteq\graph\subseteq\gmax$ and any block $\block$ which is not genesis block. We have
    $$ \treew{\graph,\block}-\sibtreew{\graph,\block}\le \mathrm{Adv}(\advs,\block) + \treew{\gdta,\block.\parent}.$$

\end{lemma}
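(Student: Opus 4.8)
The plan is to unfold the definition $\mathrm{Adv}(\advs,\block)=\treew{\gmin\cup\{\flagb\},\block}-\sibtreew{\gmax,\block}$ so that the claimed inequality becomes
$$\left(\treew{\graph,\block}-\treew{\gmin\cup\{\flagb\},\block}\right)+\left(\sibtreew{\gmax,\block}-\sibtreew{\graph,\block}\right)\le\treew{\gdta,\block.\parent},$$
and then to bound the two parenthesised differences separately, each by the $\gdta$-subtree weight of one of two \emph{disjoint} subtrees hanging off the parent $\block.\parent$ — namely the subtree of $\block$ itself and the subtree of a maximizing sibling of $\block$.

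For the first difference I would use $\treew{\gmin\cup\{\flagb\},\block}\ge\treew{\gmin,\block}$ together with the additivity $\treew{\graph,\block}=\treew{\gmin,\block}+\treew{\graph\setminus\gmin,\block}$ (valid since $\gmin\subseteq\graph$), which gives $\treew{\graph,\block}-\treew{\gmin\cup\{\flagb\},\block}\le\treew{\graph\setminus\gmin,\block}\le\treew{\gdta,\block}$, the last step because $\graph\setminus\gmin\subseteq\gmax\setminus\gmin=\gdta$. For the second difference, if $\block$ has no sibling in $\gmax$ then both sib-weights vanish and there is nothing to prove; otherwise I would pick $\block^\sharp\in\child(\gmax,\block.\parent)\setminus\{\block\}$ with $\treew{\gmax,\block^\sharp}=\sibtreew{\gmax,\block}$. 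If $\block^\sharp\in\graph$ then it is still a sibling of $\block$ in $\graph$ (its parent and weight are intrinsic to the block), so $\sibtreew{\graph,\block}\ge\treew{\graph,\block^\sharp}$; if $\block^\sharp\notin\graph$ then, $\graph$ being a valid (past-closed) graph, $\block^\sharp$ has no descendant in $\graph$, so $\treew{\graph,\block^\sharp}=0\le\sibtreew{\graph,\block}$. In either case $\sibtreew{\gmax,\block}-\sibtreew{\graph,\block}\le\treew{\gmax,\block^\sharp}-\treew{\graph,\block^\sharp}=\treew{\gmax\setminus\graph,\block^\sharp}\le\treew{\gdta,\block^\sharp}$, again because $\gmax\setminus\graph\subseteq\gdta$.

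To conclude I would combine the two bounds. The subtrees $\tree{\gdta,\block}$ and $\tree{\gdta,\block^\sharp}$ are disjoint — a common element would have both of the distinct children $\block,\block^\sharp$ of $\block.\parent$ on its parent chain, which is impossible since that chain is linearly ordered under $\preceq$ — and both are contained in $\tree{\gdta,\block.\parent}$ because $\block.\parent\preceq\block$ and $\block.\parent\preceq\block^\sharp$. Since all block weights are non-negative, $\treew{\gdta,\block}+\treew{\gdta,\block^\sharp}\le\treew{\gdta,\block.\parent}$, and adding the two displayed bounds yields exactly $\treew{\graph,\block}-\sibtreew{\graph,\block}-\mathrm{Adv}(\advs,\block)\le\treew{\gdta,\block.\parent}$.

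The step I expect to need the most care is the sibling term: one has to make sure the $\gmax$-maximizing sibling $\block^\sharp$ is genuinely a sibling also relative to $\graph$ (same parent, weight still positive) and treat uniformly the cases $\block^\sharp\in\graph$ and $\block^\sharp\notin\graph$, where in the latter case past-closedness of $\graph$ (or of $\gmin$) is what forces $\treew{\graph,\block^\sharp}=0$. The disjointness of the two $\gdta$-subtrees is the crux that makes the two independent bounds add up to a single $\treew{\gdta,\block.\parent}$ rather than to twice it.
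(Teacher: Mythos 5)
Your proposal is correct and follows essentially the same route as the paper's proof: both bound the subtree-weight gap and the sibling-weight gap separately, handle the $\gmax$-maximizing sibling by cases (using past-closedness to force its subtree weight to vanish when it is absent from the smaller graph), and use the disjointness of the two subtrees under $\block.\parent$ to combine the bounds into a single $\treew{\gdta,\block.\parent}$. The only cosmetic difference is that the paper compares $\gmin$ against $\gmax$ throughout and squeezes $\graph$ in between by monotonicity at the end, whereas you compare $\graph$ directly against $\gmin\cup\{\flagb\}$ and $\gmax$ term by term.
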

\begin{proof}
    %
    %
    By the definition of $\treew{\cdot}$, for any block $\block$, we have 
    $$\treew{\gmax,\block}-\treew{\gmin,\block}=\treew{\gdta,\block}.$$

    As for the upper bound of $\sibtreew{\gmax,\block}-\sibtreew{\gmin,\block}$ for any block $\block$ except the genesis block $\genesisblock$. 
    If $\child(\gmax,\block.\parent)\backslash\{\block\}$ is empty set, then $\child(\gmin,\block.\parent)\backslash\{\block\}$ will also be empty set and $\sibtreew{\gmax,\block}-\sibtreew{\gmin,\block}=0$. 
    Otherwise let $\block'$ be the block with maximum subtree weight in $\child(\gmax,\block'.\parent)\backslash\{\block'\}$. Then $\sibtreew{\gmax,\block'}=\treew{\gmax,\block'}$. 
    If $\block'\notin \child(\gmin,\block.\parent)\backslash\{\block\}$, it must be $\block'\in\gdta$. So all the blocks in subtree of $\block'$ not appears in $\gmin$, and thus $\tree{\gmin,\block'}=0$. 
    If $\block'\in \child(\gmin,\block.\parent)\backslash\{\block\}$, then $\sibtreew{\gmin,\block'}\ge \treew{\gmin,\block'}$. In summary for the case $\block'$ exists, we have 
    $$ \sibtreew{\gmax,\block'}-\sibtreew{\gmin,\block'}\le \treew{\gmax,\block'}-\treew{\gmin,\block'}=\treew{\gdta,\block'}.$$
    Recalling that $\block'.\parent=\block.\parent$, we have the following result for both cases that $\block'$ exists or not.
    \begin{align*}
       \sibtreew{\gmax,\block}-\sibtreew{\gmin,\block}
       \le \treew{\gdta,\block.\parent}-\treew{\gdta,\block}.
    \end{align*}

    Notice that $\mathrm{Adv}(\advs,\block)\ge \treew{\gmin,\block}-\sibtreew{\gmax,\block}$ by definition. We have
    \begin{align*}
         \mathrm{Adv}(\advs,\block) + \treew{\gdta,\block.\parent}
        \ge & \treew{\gmax,\block}-\sibtreew{\gmin,\block}\\
        \ge & \treew{\graph,\block}-\sibtreew{\graph,\block}
    \end{align*}
\end{proof}

\begin{lemma}\label{lma:flagadv}
    \statecondm. If $\event$ is the $\hgen$ event of a flag block (a.k.a. $\flagb\l=\bot$ and $\flagb=\event.\mathsf{block}$), then we have $\forall \block' \in \chain{\event.\mathsf{block}}, \mathrm{Adv}(\advs,\block')>\heavyw-\kah-\kam$. 
\end{lemma}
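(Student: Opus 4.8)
The plan is to first collapse the whole statement to a bound on the \emph{previous} state $\advs\l$, and then verify that bound block by block along $\chain{\event.\mathsf{block}}$. Write $\block:=\event.\mathsf{block}$. Since $\event$ is an $\hgen$ event we have $\gmin=\gmin\l$ and $\gmax=\gmax\l\cup\{\block\}$, and by hypothesis $\flagb\l=\bot$, $\flagb=\block$; moreover, because $\flagb=\block$ can only arise through the first flag-update rule, we also get $\block.\weight=\heavyw$, $\mathrm{Spe}(\advs\l)=\false$, and $\gdta\l\setminus\mathbf{M}\l$ has no weight-$\heavyw$ block. For any $\block'\in\chain{\block}$ we have $\block'\preceq\block$, so $\block$ lies in $\tree{\gmax,\block'}$, contributing its full weight $\heavyw$ there (and to $\treew{\gmin\cup\{\flagb\},\block'}=\treew{\gmin\l,\block'}+\heavyw$, since $\block\notin\gmin\l$), while $\block$ lies in no sibling subtree of $\block'$, so $\sibtreew{\gmax,\block'}=\sibtreew{\gmax\l,\block'}$. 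Hence $\mathrm{Adv}(\advs,\block')=\mathrm{Adv}(\advs\l,\block')+\heavyw$, and it suffices to prove $\mathrm{Adv}(\advs\l,\block')>-(\kah+\kam)$ for every $\block'\in\chain{\block}$.

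The one quantitative estimate doing the real work is a bound on the in-transit weight hanging below $\tip(\vcmp\l)$. Let $\tmpset\l:=\tree{\gdta\l,\tip(\vcmp\l)}$. From $\mathrm{Spe}(\advs\l)=\false$ (Definition~\ref{def:special}) I read off $\tw(\tmpset\l\cap\mathbf{M}\l)=\treew{\gdta\l\cap\mathbf{M}\l,\tip(\vcmp\l)}<\kam$ and that $\gdta\l$ has at most $\kah-1$ weight-$1$ honest blocks; the flag-creation hypothesis removes all weight-$\heavyw$ honest blocks from $\gdta\l$; weight-$0$ blocks contribute nothing. Summing over $\tmpset\l\subseteq\gdta\l$, and using that all block weights and counts are integers, gives $\tw(\tmpset\l)<\kah+\kam$ with strict inequality.

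Now I would split $\chain{\block}$ into three parts. (i) For $\block'\in\vcmp\l$: by Lemma~\ref{lma:eptinpivot} (applicable since $\flagb\l=\bot$) $\vcmp\l$ is a prefix of $\chain{\block}$, and Claim~\ref{clm:chainc}.1 already gives $\mathrm{Adv}(\advs\l,\block')>0$. (ii) For $\block'\in\chain{\block}\setminus\vcmp\l$ with $\block'\neq\block$: these lie on $\mathrm{Pivot}(\block.\past)$, so the pivot necessary condition (Lemma~\ref{lma:ghost}) gives $\treew{\block.\past,\block'}\ge\sibtreew{\block.\past,\block'}$; feeding $\gmin\l\subseteq\block.\past\subseteq\gmax\l$ (Claim~\ref{clm:graphrel}) into Lemma~\ref{lma:treewbound} yields $\mathrm{Adv}(\advs\l,\block')\ge-\treew{\gdta\l,\block'.\parent}$, and since $\block'\succ\tip(\vcmp\l)$ forces $\block'.\parent\succeq\tip(\vcmp\l)$ we get $\treew{\gdta\l,\block'.\parent}\le\tw(\tmpset\l)<\kah+\kam$. (iii) For $\block'=\block$: $\mathrm{Adv}(\advs\l,\block)=\treew{\gmin\l,\block}-\sibtreew{\gmax\l,\block}=-\sibtreew{\gmax\l,\block}$ (a freshly generated block has no descendants in $\gmin\l$), and every sibling of $\block$ is a positive-weight child of $\block.\parent=\tip(\mathrm{Pivot}(\block.\past))$, which has no positive-weight child inside $\block.\past$; hence each such sibling subtree lies entirely in $\gmax\l\setminus\block.\past\subseteq\gdta\l$ and, being rooted strictly below $\tip(\vcmp\l)$, inside $\tmpset\l$, so $\sibtreew{\gmax\l,\block}\le\tw(\tmpset\l)<\kah+\kam$. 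In every case $\mathrm{Adv}(\advs\l,\block')>-(\kah+\kam)$, which with the first paragraph finishes the proof.

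I expect the main obstacle to be the bookkeeping around $\tmpset\l$: arranging for $\tmpset\l$ to simultaneously dominate $\treew{\gdta\l,\block'.\parent}$ for the interior chain blocks and the sibling subtrees of $\block$ at the tip --- which hinges on $\block'.\parent\succeq\tip(\vcmp\l)$ and on $\tip(\mathrm{Pivot}(\block.\past))$ being a pivot leaf --- and then squeezing the strict bound $\tw(\tmpset\l)<\kah+\kam$ cleanly out of the three ``not in special status'' inequalities together with the flag-creation hypothesis.
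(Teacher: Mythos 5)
Your proposal is correct and follows essentially the same route as the paper's proof: reduce to showing $\mathrm{Adv}(\advs\l,\block')>-(\kah+\kam)$ via the $+\heavyw$ shift from the new flag block, bound $\tw(\tmpset\l)<\kah+\kam$ from the no-special-status conditions together with the flag-creation hypothesis, use Lemma~\ref{lma:eptinpivot} and Claim~\ref{clm:chainc}.1 for the $\vcmp\l$ prefix, and the pivot necessary condition plus Lemma~\ref{lma:treewbound} for the remainder. The only cosmetic difference is that you treat $\block'=\event.\mathsf{block}$ by a direct sibling-subtree argument, whereas the paper folds it into the same Lemma~\ref{lma:treewbound} application using $\sibtreew{\graph\l,\block}=0$.
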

\begin{proof}
    We denote block $\event.\mathsf{block}$ by $\block$ and denote graph $\block.\mathsf{past}$ by $\graph\l$. Since $\block$ is an honest block, $\graph\l$ must be the local state of an honest node when adversary state is $\advs$, thus $\gmin\l\subseteq\graph\l\subseteq\gmax\l$ by claim~\ref{clm:graphrel}. 
    Since $\mathrm{Chain}(\block)\backslash\{\block\}=\mathrm{Pivot}(\graph\l)$, according to the necessary condition of pivot chain, for any block $\block'$ in $\mathrm{Chain}(\block)\backslash\{\block\}$, $\treew{\graph\l,\block'}-\sibtreew{\graph\l,\block'}\ge 0$. 
    Since $\block.\parent$ is the last block of the pivot chain $\mathrm{Pivot}(\graph\l)$, $\graph\l$ must have no child block in $\graph\l$. Thus $\sibtreew{\graph\l,\block}=0$. So we have 
    \begin{align*}
        \forall \block' \in \mathrm{Chain}(\block),\treew{\graph\l,\block'}-\sibtreew{\graph\l,\block'} \ge 0.
    \end{align*}

    Since $\flagb\l=\bot$ and $\gmax\l\subseteq\graph\l\subseteq\gmin\l$, according to lemma~\ref{lma:treewbound}, we have
    \begin{align*}
        \forall \block' \in \mathrm{Chain}(\block)\backslash\{\genesisblock\}, \quad & 
        \mathrm{Adv}(\advs\l,\block')+\treew{\gdta\l,\block'.\parent} \\
        \ge &\treew{\graph\l,\block'}-\sibtreew{\graph\l,\block'} \\
        \ge &0.
    \end{align*}

     According to our rule in maintaining the flag block, when $\flagb\l=\bot$ and $\flagb=\event.\mathsf{block}$, $\gdta\l$ has no honest block with block weight $\heavyw$ and $\mathrm{Spe}(\advs\l)=\false$. 
     Let $\tmpset\l:=\tree{\gdta\l,\tip(\vcmp\l)}$, according to the definition of special status (definition~\ref{def:special}), we have $\tw(\tmpset\l\cap \mathbf{M}\l) <\kam$ and $|\left\{\block \in \gdta\l\backslash \mathbf{M}\l\;|\;\block.\mathsf{weight}=1\right\}|< \kah$. 
     Since there is no honest block with block weight $\heavyw$ in $\gdta\l$, $|\left\{\block \in \gdta\l\backslash \mathbf{M}\l\;|\;\block.\mathsf{weight}=\heavyw\right\}|=0$. 
     Thus $ \tw(\tmpset\l)<\kam+\kah.$ 
     Since $\flagb\l=\bot$, according to lemma~\ref{lma:eptinpivot}, $\vcmp\l$ is a prefix of $\mathrm{Chain}(\block)$. For any block $\block'$ in $ \mathrm{Chain}(\block) \backslash \vcmp\l$, we have $\tree{\gdta\l,\block'.\parent}\subseteq\tmpset\l$. Thus, \todo{Recheck calling special status}
     $$ \forall \block'\in \mathrm{Chain}(\block) \backslash \vcmp\l, \treew{\gdta\l,\block'.\parent}\le \tw(\tmpset)<\kam+\kah.$$
    
     Since genesis block $\genesisblock$ must be in $\vcmp\l$, for all the blocks $\block'$ in $\mathrm{Chain}(\block) \backslash \vcmp\l$, the previous two inequalities give $\mathrm{Adv}(\advs\l,\block')>-\kah-\kam$. According to claim~\ref{clm:chainc}.1, for the blocks $\block'$ in $\vcmp\l$, we have $\mathrm{Adv}(\advs\l,\block')>0$. Thus
     $$ \forall \block'\in \mathrm{Chain}(\block), \mathrm{Adv}(\advs\l,\block')>-\kah-\kam.$$
   
     According to the rule in updating adversary state, $\hgen$ event does not modify $\gmin$, so $\gmin\l=\gmin$. 
     By claim~\ref{clm:flagb}, $\flagb\notin \gmin$.
     For all the blocks $\block'$ in $\chain{\block}$, $\tree{\{\block\},\block'}=\{\block\}$. 
     Note that $\block$ and $\flagb$ refers the same block, and the block weight of flag block is $\heavyw$. So we have
     $$ \treew{\gmin\cup\{\flagb\},\block'}=\treew{\gmin\l,\block'} +\heavyw.$$
     
     For any block $\block'\in \chain{\block}$ and $\block'_1\in \child(\gmax,\block'.\parent)\backslash\{\block'\}$, it must be $\treew{\gmax\l,\block'_1}=\treew{\gmax,\block'_1}$ because $\gmax$ is one block $\block$ different from $\gmax\l$ and it is not in subtree of $\block'_1$. Thus 
     $$\forall\block'\in \chain{\block}, \sibtreew{\gmax,\block'}=\sibtreew{\gmax\l,\block'}.$$ 

     Summarize all the previous results, we have
     \begin{align*}
        \forall \block'\in\chain{\block}, \quad
        &\treew{\gmin\cup\{\flagb\},\block'}-\sibtreew{\gmax,\block'}\\
        = &\treew{\gmin\l,\block'}-\sibtreew{\gmax\l,\block'}+\heavyw \\
        > & \heavyw - \kam -\kah
     \end{align*}
\end{proof}

Now we will show that when the adversary state is updated from $\advs\l$ to $\advs$, one of $\vcmp\l$ and $\vcmp$ must be the prefix of another. 

\begin{lemma}\label{lma:tipmove1}
    \statecondm. One of $\vcmp\l$ and $\vcmp$ must be the prefix of another. 
\end{lemma}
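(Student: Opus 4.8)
The plan is to compare $\vcmp\l$ and $\vcmp$ through an auxiliary ``untruncated'' chain. For an adversary state $\advs'$ write $\vcmp^{*}(\advs')$ for the chain produced by Step~1 of the update rule for $\vcmp$ relative to the triple $(\gmax',\gmin',\flagb')$ of $\advs'$: start at $\genesisblock$ and repeatedly move to a child whose $\mathrm{Adv}((\gmax',\gmin',\flagb'),\cdot)$ is strictly positive, stopping at a block with no such child. By Lemma~\ref{lma:onebest} every block has at most one such child, so $\vcmp^{*}(\advs')$ is well defined, and since Step~2 only deletes a suffix, $\vcmp$ is a prefix of $\vcmp^{*}(\advs)$ and $\vcmp\l$ is a prefix of $\vcmp^{*}(\advs\l)$ (the latter holds also when $\advs\l$ is the initial state, where $\vcmp\l=\genesisblock$). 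Hence it suffices to show that $\vcmp^{*}(\advs\l)$ and $\vcmp^{*}(\advs)$ are nested, with one event type treated separately.

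First I would prove a monotonicity dichotomy for the single event $\event$ taking $\advs\l$ to $\advs$, by going through the four event types and the four rules for $\flagb$. When $\event.{\sf type}=\mgen$, and when $\event.{\sf type}=\allrec$ with $\event.{\sf block}=\flagb\l$ (here $\flagb$ drops to $\bot$ but $\flagb\l$ is simultaneously added to $\gmin$, using $\flagb\l\in\gdta\l$ from Claim~\ref{clm:flagb}, so $\gmin\cup\{\flagb\}$ is unchanged), both $\gmin\cup\{\flagb\}$ and $\gmax$ are unchanged and $\mathrm{Adv}(\advs,\cdot)=\mathrm{Adv}(\advs\l,\cdot)$ identically. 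When $\event.{\sf type}=\mrls$, when $\event.{\sf type}=\hgen$ with $\flagb=\flagb\l$, and when $\event.{\sf type}=\hgen$ with $\event.{\sf block}.\weight=\heavyw$ and $\flagb\l\neq\bot$ (so $\flagb=\bot$ and $\flagb\l$ leaves $\gmin\cup\{\flagb\}$), the term $\treew{\gmin\cup\{\flagb\},\cdot}$ weakly decreases while $\sibtreew{\gmax,\cdot}$ weakly increases, so $\mathrm{Adv}(\advs,\cdot)\le\mathrm{Adv}(\advs\l,\cdot)$ pointwise. When $\event.{\sf type}=\allrec$ with $\event.{\sf block}\neq\flagb\l$ the reverse holds, $\mathrm{Adv}(\advs,\cdot)\ge\mathrm{Adv}(\advs\l,\cdot)$ pointwise. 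The only remaining possibility is an $\hgen$ event creating a new flag block.

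In the pointwise-decreasing case I would show $\vcmp^{*}(\advs)$ is a prefix of $\vcmp^{*}(\advs\l)$: for any non-tip block $\block$ of $\vcmp^{*}(\advs)$, its successor $\block_{\sf next}:=\nxt(\vcmp^{*}(\advs),\block)$ has $\mathrm{Adv}(\advs,\block_{\sf next})>0$, hence $\mathrm{Adv}(\advs\l,\block_{\sf next})\ge\mathrm{Adv}(\advs,\block_{\sf next})>0$, so by Lemma~\ref{lma:onebest} applied to $\advs\l$ the block $\block_{\sf next}$ is the unique positive-$\mathrm{Adv}(\advs\l,\cdot)$ child of $\block$ and thus the successor of $\block$ in $\vcmp^{*}(\advs\l)$; induction from $\genesisblock$ finishes. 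Symmetrically, in the increasing case $\vcmp^{*}(\advs\l)$ is a prefix of $\vcmp^{*}(\advs)$, and in the unchanged case they are equal, so in all three $\vcmp$ and $\vcmp\l$ are prefixes of whichever of $\vcmp^{*}(\advs),\vcmp^{*}(\advs\l)$ is longer. For the exceptional event — $\event$ is the $\hgen$ event of a flag block, so $\flagb\l=\bot$, $\flagb=\event.{\sf block}$ and $\event.{\sf block}.\weight=\heavyw$ — I would instead show $\vcmp\l$ is a prefix of $\vcmp$ directly: Lemma~\ref{lma:flagadv} gives $\mathrm{Adv}(\advs,\block')>\heavyw-\kah-\kam$ for every $\block'\in\chain{\event.{\sf block}}$, which exceeds $\kah+\kam$ since $2\kah+2\kam\le\heavyw$, so by Claim~\ref{clm:chainc}.3 (with $\block_{\sf c}=\event.{\sf block}$) we get $\event.{\sf block}\in\vcmp$ and hence $\chain{\event.{\sf block}}$ is a prefix of $\vcmp$; meanwhile $\flagb\l=\bot$ lets Lemma~\ref{lma:eptinpivot} place $\vcmp\l$ as a prefix of $\mathrm{Pivot}(\event.{\sf block})$, and the block-validity identity $\mathrm{Pivot}(\event.{\sf block}.\past)\circ\event.{\sf block}=\chain{\event.{\sf block}}$ makes $\vcmp\l$ a prefix of $\chain{\event.{\sf block}}$, hence of $\vcmp$.

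The main obstacle is the flag bookkeeping in the second step. Because flipping $\flagb$ moves a whole weight-$\heavyw$ block in or out of $\gmin\cup\{\flagb\}$, one must pair each flag transition with the graph change that co-occurs with it and verify the net effect on $\mathrm{Adv}=\treew{\gmin\cup\{\flagb\},\cdot}-\sibtreew{\gmax,\cdot}$ is monotone in a single direction — in particular, that the $\allrec$ event finalizing the current flag block leaves $\mathrm{Adv}$ exactly invariant, and that the flag-creation $\hgen$ event, the only case where monotonicity genuinely fails, is precisely the case handled by Lemmas~\ref{lma:flagadv} and~\ref{lma:eptinpivot}. Everything else is a short induction along a chain via Lemma~\ref{lma:onebest}.
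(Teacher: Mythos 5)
Your proposal is correct and follows essentially the same route as the paper: the same exhaustive case split on event type and flag transitions, the same pointwise monotonicity of $\mathrm{Adv}$ (invariant for $\mgen$ and for the $\allrec$ event of the flag block, decreasing for $\mrls$/$\hgen$ and for flag removal, increasing for other $\allrec$ events), and the same special treatment of the flag-creation $\hgen$ event via Lemmas~\ref{lma:flagadv}, \ref{lma:eptinpivot} and Claim~\ref{clm:chainc}.3. The only difference is presentational — you argue nesting directly through the untruncated chain $\vcmp^{*}$, whereas the paper argues by contradiction at the first divergence point of $\vcmp\l$ and $\vcmp$ — but the underlying facts invoked are identical.
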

\begin{proof}
    We prove this property by contradiction. We assume $\vcmp\l$ is not the prefix of $\vcmp$ and $\vcmp$ is not the prefix of $\vcmp\l$. Let $\block_{\sf c}$ be the last block of common prefix between $\vcmp$ and $\vcmp\l$. Let $\block_1:=\nxt(\vcmp\l,\block_{\sf c})$, $\block_2:=\nxt(\vcmp,\block_{\sf c})$. There should be $\block_1\neq\block_2$, $\block_1\neq\bot$ and $\block_2\neq \bot$. According to claim \ref{clm:chainc}.1, $\mathrm{Adv}(\advs\l,\block_1)>0$, $\mathrm{Adv}(\advs,\block_2)>0$. Since $\block_1$ and $\block_2$ have the same parent block, according to lemma~\ref{lma:onebest}, $\mathrm{Adv}(\advs\l,\block_2)\le 0$, $\mathrm{Adv}(\advs,\block_1)\le 0$. Thus 
    $$\mathrm{Adv}(\advs,\block_1)-\mathrm{Adv}(\advs\l,\block_1)<0\;\;\emph{ and }\;\;\mathrm{Adv}(\advs,\block_2)-\mathrm{Adv}(\advs\l,\block_2)>0.$$

    Recalling that $\mathrm{Adv}(\advs,\block)$ is defined by $\treew{\gmin\cup\{\flagb\},\block}-\sibtreew{\gmax,\block}.$ We discuss in two cases with $\flagb\l=\flagb$ and $\flagb\l\neq\flagb$. 

    \textbf{Case 1:} $\flagb\l=\flagb$, claim~\ref{clm:flagb} guarantees $\flagb\notin \gmin$ and $\flagb\l\notin \gmin\l$ when $\flagb\neq\bot$. So for any block $\block$
    \begin{align*}
        &\mathrm{Adv}(\advs,\block)-\mathrm{Adv}(\advs\l,\block)\\
        =&\left(\treew{\gmin,\block}-\treew{\gmin\l,\block}\right)-\left(\sibtreew{\gmax,\block}-\sibtreew{\gmax\l,\block}\right).
    \end{align*}

    \textbf{Case 1.1:} $\event$ is $\hgen$ or $\mrls$ event. 
    
    It will be $\gmin\l=\gmin$ and $\gmax\l\subseteq \gmax$. So for any block $\block$, $\mathrm{Adv}(\advs,\block)-\mathrm{Adv}(\advs\l,\block)\le 0$. It can not be $\mathrm{Adv}(\advs,\block_2)-\mathrm{Adv}(\advs\l,\block_2)>0$.
    
    \textbf{Case 1.2:} $\event$ is $\allrec$ event. 
    
    It will be $\gmin\l\subseteq\gmin$ and $\gmax\l= \gmax$. So for any block $\block$, $\mathrm{Adv}(\advs,\block)-\mathrm{Adv}(\advs\l,\block)\ge 0$. It can not be $\mathrm{Adv}(\advs,\block_1)-\mathrm{Adv}(\advs\l,\block_1)<0$.
    
    \textbf{Case 1.3:} $\event$ is a $\mgen$ event. 
    
    It will be $\gmin\l=\gmin$ and $\gmax\l= \gmax$. So for any block $\block$, $\mathrm{Adv}(\advs,\block)-\mathrm{Adv}(\advs\l,\block)=0$. It can not be $\mathrm{Adv}(\advs,\block_2)-\mathrm{Adv}(\advs\l,\block_2)>0$.
    $\\$

    \textbf{Case 2:} $\flagb\l\neq\flagb$. There could be three possible sub-cases according to the rule updating $\flagb$. 

    \textbf{Case 2.1:} $\event.\mathsf{type}=\hgen$, $\flagb\l\neq\bot$ and $\flagb=\bot$. 
    
    It will be $\gmin\l\cup\{\flagb\l\}\subseteq\gmin\cup\{\flagb\}$ and $\gmax\l\subseteq \gmax$. So for any block $\block$, $\mathrm{Adv}(\advs,\block)-\mathrm{Adv}(\advs\l,\block)\le 0$. It can not be $\mathrm{Adv}(\advs,\block_2)-\mathrm{Adv}(\advs\l,\block_2)>0$.

    \textbf{Case 2.2:} $\event.\mathsf{type}=\allrec$, $\flagb\l=\event.\mathsf{block}$ and $\flagb=\bot$. 
    
    $\flagb\l$ is included into $\gmin$. So $\gmin\l\cup\{\flagb\l\}=\gmin\cup\{\flagb\}$. An $\allrec$ event does not update $\gmax$, so $\gmax\l=\gmax$. Thus for any block $\block$,  $\mathrm{Adv}(\advs,\block)-\mathrm{Adv}(\advs\l,\block)=0$. It can not be $\mathrm{Adv}(\advs,\block_2)-\mathrm{Adv}(\advs\l,\block_2)>0$.

    \textbf{Case 2.3:}  $\event.\mathsf{type}=\hgen$, $\flagb\l=\bot$ and $\flagb=\event.\mathsf{block}$.
    
    Let $\graph_{\sf f}=\flagb.\mathsf{past}$. According to lemma~\ref{lma:flagadv}, $\forall \block'\in \chain{\flagb}$, $\mathrm{Adv}(\advs,\block')>\heavyw-\kam-\kah\ge \kam+\kah$. (Recalling that we require $\heavyw\ge 2\kam+2\kah$.\todo{Formal recalling}) 
    According to claim~\ref{clm:chainc}.3, $\chain{\flagb}$ is a prefix of $\vcmp$. 
    Since $\flagb\neq\bot$ and $\event$ is the $\hgen$ event of $\flagb$, according to lemma~\ref{lma:eptinpivot}, $\vcmp\l$ is a prefix of $\chain{\flagb}$. 
    Thus $\vcmp\l$ is a prefix of $\vcmp$. 
    $\\$

   As a summary, the cases except case 2.3 are proved by contradiction. For the case 2.3, we show  $\vcmp\l$ is a prefix of $\vcmp$ directly.
\end{proof}

\begin{lemma}\label{lma:tipmove2}
    \statecondm. 
    
    $\tip(\vcmp\l)\prec\tip(\vcmp)$ only if one of the following condition holds:
    \begin{itemize}[nosep]
        \item $\event.\mathsf{type}=\allrec$ and $\flagb\l=\flagb$.
        \item $\event.\mathsf{type}=\hgen$, $\flagb\l=\bot$ and $\flagb=\event.\mathsf{block}$.
    \end{itemize}

    $\tip(\vcmp)\prec\tip(\vcmp\l)$ only if one of the following condition holds:
    \begin{itemize}[nosep]
        \item $\event.\mathsf{type}\in \{\hgen,\mrls\}$ and $\flagb\l=\flagb$.
        \item $\event.\mathsf{type}=\hgen$, $\flagb\l\neq\bot$ and $\flagb=\bot$.
    \end{itemize}
\end{lemma}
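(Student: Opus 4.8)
The plan is to recycle the case split on the triple $(\event.\mathsf{type},\flagb\l,\flagb)$ already carried out in the proof of Lemma~\ref{lma:tipmove1}, and in each case to read off a monotonicity property of the map $\block\mapsto\mathrm{Adv}(\advs,\block)$ as the state passes from $\advs\l$ to $\advs$. Recall that $\vcmp$ is obtained from $\gmax,\gmin,\flagb$ and $\vcmp\l$ by (i) starting at the genesis and recursively following the unique child with $\mathrm{Adv}(\advs,\cdot)>0$ (uniqueness is Lemma~\ref{lma:onebest}), producing a chain $\vcmp^{(1)}$, and then (ii) truncating $\vcmp^{(1)}$ at the first block strictly after $\tip(\vcmp\l)$ whose $\mathrm{Adv}$-value is $\le\kam+\kah$. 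Since Lemma~\ref{lma:tipmove1} already guarantees that one of $\vcmp\l,\vcmp$ is a prefix of the other, it suffices to show that $\tip(\vcmp\l)\prec\tip(\vcmp)$ (a strictly-forward move) is impossible for every case not appearing in the first list, and that $\tip(\vcmp)\prec\tip(\vcmp\l)$ (a strictly-backward move) is impossible for every case not appearing in the second list.

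First I would establish two auxiliary facts. (a) If $\mathrm{Adv}(\advs,\block)\le\mathrm{Adv}(\advs\l,\block)$ for all $\block$, then $\vcmp$ is a prefix of $\vcmp\l$ (no forward move): otherwise, put $\block_{\sf next}:=\nxt(\vcmp,\tip(\vcmp\l))$; Claim~\ref{clm:chainc}.1 for $\advs$ gives $\mathrm{Adv}(\advs,\block_{\sf next})>\kam+\kah$, hence $\mathrm{Adv}(\advs\l,\block_{\sf next})>\kam+\kah\ge0$; then $\treew{\gmin\l\cup\{\flagb\l\},\block_{\sf next}}>0$, so some $\block^\star$ with $\block_{\sf next}\preceq\block^\star$ lies in $\gmin\l\cup\{\flagb\l\}$, which sits inside the valid graph $\gmax\l$ (it contains $\gmin\l$, and by Claim~\ref{clm:flagb} also $\flagb\l$), forcing $\block_{\sf next}\in\gmax\l$ and hence $\block_{\sf next}\in\child(\gmax\l,\tip(\vcmp\l))$; but Claim~\ref{clm:chainc}.2 for $\advs\l$ then gives $\mathrm{Adv}(\advs\l,\block_{\sf next})\le\kam+\kah$, a contradiction. (b) If $\mathrm{Adv}(\advs,\block)\ge\mathrm{Adv}(\advs\l,\block)$ for all $\block$, then $\vcmp\l$ is a prefix of $\vcmp$ (no backward move): every $\block'\in\vcmp\l$ has $\mathrm{Adv}(\advs\l,\block')>0$ (Claim~\ref{clm:chainc}.1), hence $\mathrm{Adv}(\advs,\block')>0$, so by Lemma~\ref{lma:onebest} step~(i) retraces $\vcmp\l$ verbatim, while step~(ii) only deletes blocks strictly after $\tip(\vcmp\l)$.

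Then I would run through the cases using $\mathrm{Adv}(\advs,\block)=\treew{\gmin\cup\{\flagb\},\block}-\sibtreew{\gmax,\block}$ together with the update rules for $\gmin,\gmax,\flagb$. When $\flagb\l=\flagb$: a $\hgen$ or $\mrls$ event fixes $\gmin$ and only grows $\gmax$, so $\mathrm{Adv}$ is non-increasing and (a) rules out forward motion (these lie only in the second list); an $\allrec$ event fixes $\gmax$ and only grows $\gmin$, so $\mathrm{Adv}$ is non-decreasing and (b) rules out backward motion (first list); a $\mgen$ event changes neither $\gmin$ nor $\gmax$, so $\mathrm{Adv}$ is unchanged and (a),(b) jointly give $\vcmp=\vcmp\l$, consistent with $\mgen$ appearing in neither list. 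For the three flag-changing cases (the only ones permitted by the flag-update rules, since $\mrls$ and $\mgen$ never alter $\flagb$): if $\event.\mathsf{type}=\hgen$ with $\flagb\l\ne\bot$ and $\flagb=\bot$, then $\gmin$ is fixed and $\gmax$ grows while $\gmin\cup\{\flagb\}=\gmin\l\subseteq\gmin\l\cup\{\flagb\l\}$, so $\mathrm{Adv}$ is non-increasing and (a) forbids forward motion (second list); if $\event.\mathsf{type}=\allrec$ with $\flagb\l=\event.\mathsf{block}\ne\bot$ and $\flagb=\bot$, then $\flagb\l$ is absorbed into $\gmin$ so $\gmin\cup\{\flagb\}=\gmin\l\cup\{\flagb\l\}$ while $\gmax$ is fixed, hence $\mathrm{Adv}$ is unchanged and $\vcmp=\vcmp\l$ (neither list); and if $\event.\mathsf{type}=\hgen$ with $\flagb\l=\bot$ and $\flagb=\event.\mathsf{block}$, then Lemma~\ref{lma:flagadv} gives $\mathrm{Adv}(\advs,\block')>\heavyw-\kam-\kah\ge\kam+\kah$ for every $\block'\in\chain{\flagb}$ (using $\heavyw\ge2\kam+2\kah$), so by Claim~\ref{clm:chainc}.3 the chain $\chain{\flagb}$ is a prefix of $\vcmp$, while Lemma~\ref{lma:eptinpivot} makes $\vcmp\l$ a prefix of $\chain{\flagb}$, whence $\vcmp\l$ is a prefix of $\vcmp$ and backward motion is impossible (so this case belongs only to the first list). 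Assembling these observations yields exactly the two ``only if'' statements.

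The step I expect to be the crux is auxiliary fact (a): one must show that the block $\block_{\sf next}$ by which $\vcmp$ would overtake $\tip(\vcmp\l)$ --- a priori possibly a block freshly created by the current event --- already belongs to $\gmax\l$, so that the invariant Claim~\ref{clm:chainc}.2 may legitimately be invoked at $\advs\l$. This is precisely where validity of the graph $\gmax\l$, combined with the positivity of $\treew{\gmin\l\cup\{\flagb\l\},\block_{\sf next}}$ forced by $\mathrm{Adv}(\advs\l,\block_{\sf next})>0$, closes the gap; the remainder is mechanical verification of the $\gmin/\gmax/\flagb$ update rules against the two lists.
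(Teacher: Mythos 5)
Your proof is correct and follows essentially the same route as the paper's: the paper likewise identifies the boundary block $\nxt(\vcmp,\tip(\vcmp\l))$ (resp.\ $\nxt(\vcmp\l,\tip(\vcmp))$), derives from Claim~\ref{clm:chainc} that its $\mathrm{Adv}$-value must strictly increase (resp.\ decrease), and then excludes cases via exactly the per-case monotonicity of $\mathrm{Adv}$ under the $\gmin/\gmax/\flagb$ updates, handling the new-flag case separately through Lemmas~\ref{lma:flagadv} and~\ref{lma:eptinpivot}. Your auxiliary facts (a) and (b) merely package that case analysis, and your extra step verifying $\block_{\sf next}\in\gmax\l$ before invoking Claim~\ref{clm:chainc}.2 is a point the paper glosses over, so no gap.
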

\begin{proof}
    If $\tip(\vcmp\l)\prec\tip(\vcmp)$, let $\block:=\nxt(\vcmp,\tip(\vcmp\l))$. 
    According to claim~\ref{clm:chainc}.2, since $\block$ is the child block of $\tip(\vcmp\l)$, $\mathrm{Adv}(\advs\l,\block)\le \kah+\kam$. According to claim~\ref{clm:chainc}.2, since $\tip(\vcmp\l)\prec \block$ and $\block\in\vcmp$, so $\mathrm{Adv}(\advs,\block)> \kah+\kam$. So $\mathrm{Adv}(\advs,\block)-\mathrm{Adv}(\advs\l,\block)>0$. 

    In the proof of lemma~\ref{lma:tipmove1}, there is no block satisfying $\mathrm{Adv}(\advs,\block)-\mathrm{Adv}(\advs\l,\block)>0$ in case 1.1, 1.3, 2.1, 2.2. Thus $\tip(\vcmp\l)\prec\tip(\vcmp)$ only if one of the following condition holds.
    \begin{itemize}[nosep]
        \item $\event.\mathsf{type}=\allrec$ and $\flagb\l=\flagb$.
        \item $\event.\mathsf{type}=\hgen$, $\flagb\l=\bot$ and $\flagb=\event.\mathsf{block}$.
    \end{itemize}
    $\\$

    If $\tip(\vcmp)\prec\tip(\vcmp\l)$, let $\block:=\nxt(\vcmp\l,\tip(\vcmp))$. 
    Since $\block$ is the child block of $\tip(\vcmp\l)$ and $\block\preceq \tip(\vcmp)$, according to claim~\ref{clm:chainc}.1, $\mathrm{Adv}(\advs,\block)\le 0$. Since $\block\in\vcmp\l$, according to claim~\ref{clm:chainc}.2, $\mathrm{Adv}(\advs\l,\block)> 0$. So $\mathrm{Adv}(\advs,\block)-\mathrm{Adv}(\advs\l,\block)<0$. 

    In the proof of lemma~\ref{lma:tipmove1}, there is no block satisfying $\mathrm{Adv}(\advs,\block)-\mathrm{Adv}(\advs\l,\block)<0$ in case 1.2, 1.3, 2.2. It also shows $\vcmp\l$ is a prefix of $\vcmp$ in case 2.3. Thus $\tip(\vcmp)\prec\tip(\vcmp\l)$ only if one of the following condition holds.
    \begin{itemize}[nosep]
        \item $\event.\mathsf{type}\in \{\hgen,\mrls\}$ and $\flagb\l=\flagb$.
        \item $\event.\mathsf{type}=\hgen$, $\flagb\l\neq\bot$ and $\flagb=\bot$.
    \end{itemize}
\end{proof}

\begin{lemma}\label{lma:alwaysheavy}
    \statecondm. Let $\block_{\sf c}:=\tip(\mathrm{Pivot}(\graph\l)\cap \vcmp\l)$. 
    
    If $\event.\mathsf{type}=\hgen$, $\mathrm{Spe}(\advs\l)=\false$ and $\mathrm{Old}(\gmin\l,\block)=\true$, then $\adp(\event.\mathsf{block})={\sf con}$.
\end{lemma}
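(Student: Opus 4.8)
The plan is to unfold the definition of $\adp$ (Definition~\ref{def:adaptive}) on the local state $\graph\l:=\event.\mathsf{block}.\past$ of the node that generates the honest block $\event.\mathsf{block}$ (so that $\adp(\event.\mathsf{block})$ abbreviates $\adp(\graph\l)$, and $\gmin\l\subseteq\graph\l\subseteq\gmax\l$ by claim~\ref{clm:graphrel}), and to exhibit a block witnessing one of the two conditions of Definition~\ref{def:adaptive} that make $\adp$ output ${\sf con}$. The witness is built around $\block_{\sf c}$. First I would note that the hypothesis gives $\mathrm{Old}(\gmin\l,\block_{\sf c})=\true$ (either directly, or via the first required property of $\mathrm{Old}$ from Section~\ref{sec:blockage} together with $\block_{\sf c}\in\mathrm{Pivot}(\graph\l)\subseteq\event.\mathsf{block}.\past$), and then, by the monotonicity property of $\mathrm{Old}$ in its graph argument and $\gmin\l\subseteq\graph\l$, that $\mathrm{Old}(\graph\l,\block_{\sf c})=\true$.

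Next I would split on whether $\block_{\sf c}$ is the last block of $\mathrm{Pivot}(\graph\l)$. If it is, the second condition of Definition~\ref{def:adaptive} applies at once, so $\adp(\graph\l)={\sf con}$. Otherwise let $\block_{\sf cc}:=\nxt(\mathrm{Pivot}(\graph\l),\block_{\sf c})\neq\bot$; it is a non-genesis block on $\mathrm{Pivot}(\graph\l)$ with $\block_{\sf cc}.\parent=\block_{\sf c}$, it is not on $\vcmp\l$ (by maximality of $\block_{\sf c}$ in $\mathrm{Pivot}(\graph\l)\cap\vcmp\l$), and it lies in $\child(\graph\l,\block_{\sf c})\subseteq\child(\gmax\l,\block_{\sf c})$ with $\block_{\sf cc}=\mathrm{BestChild}(\graph\l,\block_{\sf c})$. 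By lemma~\ref{lma:treewbound} applied with the graph $\graph\l$,
$$\treew{\graph\l,\block_{\sf cc}}-\sibtreew{\graph\l,\block_{\sf cc}}\;\le\;\mathrm{Adv}(\advs\l,\block_{\sf cc})+\treew{\gdta\l,\block_{\sf c}},$$
so it suffices to bound the right-hand side strictly below $\advan$; combined with $\mathrm{Old}(\graph\l,\block_{\sf cc}.\parent)=\mathrm{Old}(\graph\l,\block_{\sf c})=\true$ this makes $\block_{\sf cc}$ witness the first condition of Definition~\ref{def:adaptive}.

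To bound the right-hand side I would split on the position of $\vcmp\l$ relative to $\mathrm{Pivot}(\graph\l)$ (the case that $\mathrm{Pivot}(\graph\l)$ is a prefix of $\vcmp\l$ cannot occur here, since it forces $\block_{\sf cc}=\bot$). If $\vcmp\l$ is a prefix of $\mathrm{Pivot}(\graph\l)$, then $\block_{\sf c}=\tip(\vcmp\l)$, so claim~\ref{clm:chainc}.2 gives $\mathrm{Adv}(\advs\l,\block_{\sf cc})\le\kam+\kah$, and $\treew{\gdta\l,\block_{\sf c}}=\tw(\tree{\gdta\l,\tip(\vcmp\l)})$ is bounded, using $\mathrm{Spe}(\advs\l)=\false$ (all three clauses of Definition~\ref{def:special} fail: the malicious weight under $\tip(\vcmp\l)$ is $<\kam$, the honest weight-$1$ blocks of $\gdta\l$ contribute $<\kah$, and there are at most two honest weight-$\heavyw$ blocks), by $\kam+\kah+2\heavyw$, so the advantage is $<2\kam+2\kah+2\heavyw$. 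Otherwise $\vcmp\l$ and $\mathrm{Pivot}(\graph\l)$ diverge at $\block_{\sf c}$; then $\flagb\l\neq\bot$ (otherwise lemma~\ref{lma:eptinpivot} puts us in the previous case), and $\block_{\sf cv}:=\nxt(\vcmp\l,\block_{\sf c})$ is a sibling of $\block_{\sf cc}$ lying in $\vcmp\l$, so $\mathrm{Adv}(\advs\l,\block_{\sf cv})>0$ by claim~\ref{clm:chainc}.1, i.e.
$$\treew{\graph\l,\block_{\sf cc}}\;\le\;\sibtreew{\gmax\l,\block_{\sf cv}}\;<\;\treew{\gmin\l\cup\{\flagb\l\},\block_{\sf cv}}\;\le\;\treew{\gmin\l,\block_{\sf cv}}+\heavyw$$
(using claim~\ref{clm:flagb}, $\flagb\l$ is honest of weight $\heavyw$). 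If $\treew{\gmin\l,\block_{\sf cv}}>0$ then $\block_{\sf cv}$ is a positive-weight child of $\block_{\sf c}$ in $\graph\l$ (its subtree meets $\gmin\l\subseteq\graph\l$, and $\graph\l$ is a valid graph, hence closed under parent-ancestors), so $\sibtreew{\graph\l,\block_{\sf cc}}\ge\treew{\gmin\l,\block_{\sf cv}}$ and the advantage is $<\heavyw$; if $\treew{\gmin\l,\block_{\sf cv}}=0$ the advantage is $\le\treew{\graph\l,\block_{\sf cc}}<\heavyw$ directly. In either branch the advantage is $<2\kam+2\kah+2\heavyw\le 3\heavyw\le\advan$ (using $2\kah+2\kam\le\heavyw$ and the standing choice $\advan\ge 3\heavyw$), so $\adp(\graph\l)={\sf con}$.

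I expect the diverging subcase to be the main obstacle: it is the only place where the true pivot chain $\mathrm{Pivot}(\graph\l)$ and its adversary-state surrogate $\vcmp\l$ — computed with the flag block $\flagb\l$ optimistically injected into $\gmin\l$ — genuinely differ, and the argument must trade that single extra weight-$\heavyw$ block against the best-child inequality inside $\graph\l$. The remaining care points are routine: pinning down the constant via $2\kam+2\kah+2\heavyw\le 3\heavyw$ and the assumption on $\advan$; the bookkeeping that $\flagb\l$ is honest of weight $\heavyw$ in $\gdta\l$, so that $\treew{\gmin\l\cup\{\flagb\l\},\cdot}\le\treew{\gmin\l,\cdot}+\heavyw$; and the edge case $\block_{\sf c}=\genesisblock$, where $\mathrm{Old}(\graph\l,\genesisblock)=\true$ follows as above and lemma~\ref{lma:treewbound} still applies to the non-genesis block $\block_{\sf cc}$.
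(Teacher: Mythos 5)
Your proposal is correct and follows essentially the same route as the paper's proof: the same reduction to $\adp(\graph\l)$ with $\mathrm{Old}(\graph\l,\block_{\sf c})=\true$, the same three-way split ($\block_{\sf cc}=\bot$; $\vcmp\l$ a prefix of $\mathrm{Pivot}(\graph\l)$ handled via claim~\ref{clm:chainc}.2, the failure of all three clauses of Definition~\ref{def:special}, and lemma~\ref{lma:treewbound}; the diverging case handled via the sibling $\nxt(\vcmp\l,\block_{\sf c})$ with $\mathrm{Adv}>0$), and the same final constants. The only differences are cosmetic — your explicit $\flagb\l\neq\bot$ observation and the small case split on $\treew{\gmin\l,\block_{\sf cv}}$, where the paper instead bounds $\sibtreew{\graph\l,\block_{\sf c2}}-\treew{\graph\l,\block_{\sf c2}}$ directly by monotonicity.
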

\begin{proof}
    Let $\block:=\event.\mathsf{block}$ and $\graph\l:=\block.\mathsf{past}$. 
    Since $\graph\l$ is the local state of the honest node who generates $\block$, according to claim~\ref{clm:graphrel}, $\gmin\l\subseteq\graph\l\subseteq\gmax\l$. 
    Since $\mathrm{Old}(\gmin\l,\block)=\true$ and $\gmin\l\subseteq \graph\l$, it can be verified that 
    $$\mathrm{Old}(\graph\l,\block_{\sf c})=\true.$$
    
    Let $\block_{\sf c1}=\nxt(\mathrm{Pivot}(\graph\l),\block_{\sf c})$. 
    If $\block_{\sf c1}=\bot$, $\block_{\sf c}$ will be the last block in the pivot chain of $\graph\l$. Since we have $\mathrm{Old}(\graph\l,\block_{\sf c})=\true$, $\adp(\block)={\sf con}$ according to the second rule in definition~\ref{def:adaptive}.
    
    In the following, we discuss the case $\block_{\sf c1}\neq\bot$ according to whether $\vcmp\l$ is a prefix of $\mathrm{Pivot}(\graph\l)$ or not. Since $\block_{\sf c1}\neq \bot$, it is the child block of $\block_{\sf c}$.
    
    \textbf{Case 1:} $\vcmp\l$ is a prefix of (or equals to) $\mathrm{Pivot}(\graph\l)$.

    %
    In this case, $\block_{\sf c}=\tip(\vcmp\l)$. Since $\block_{\sf c1}$ is the child of $\block_{\sf c}$, 
    %
    according to claim~\ref{clm:chainc}.2, 
    $$\mathrm{Adv}(\advs\l,\block_{\sf c1})\le \kah+\kam.$$ 
    Let $\tmpset\l:=\tree{\gdta\l,\tip(\vcmp\l)}$, according to the definition of special status (definition~\ref{def:special}), we have $\tw(\tmpset\l\cap \mathbf{M}\l) <\kam$, $|\left\{\block' \in \gdta\l\backslash \mathbf{M}\l\;|\;\block'.\mathsf{weight}=1\right\}|< \kah$ and  $|\left\{\block' \in \gdta\l\backslash \mathbf{M}\l\;|\;\block'.\mathsf{weight}=\heavyw\right\}|\le 2$. Thus 
    $$\tree{\gdta\l,\block_{\sf c1}.\parent}=\tw(\tmpset)< \kam+\kah+2\heavyw.$$
    Recalling that we require $\advan\ge 2\kam+2\kah+2\heavyw$. According to lemma~\ref{lma:treewbound}, 
    \begin{align*}
           &\treew{\graph\l,\block_{\sf c1}}-\sibtreew{\graph\l,\block_{\sf c1}} \\
        \le&  \mathrm{Adv}(\advs\l,\block_{\sf c1})+\tree{\gdta\l,\block_{\sf c1}.\parent} \\
        < &2\kam+2\kah+2\heavyw\\
        \le &\advan.
    \end{align*}
    Since we are given $\mathrm{Old}(\graph,\block_{\sf c})=\true$, we have $\adp(\block)={\sf con}$ according to the first rule in definition~\ref{def:adaptive}. 

    \textbf{Case 2:} $\vcmp\l$ is not a prefix of (or equals to) $\mathrm{Pivot}(\graph\l)$.

    In this case, we claim that $\block_{\sf c}$ is not the last block in $\vcmp\l$. 
    Let $\block_{\sf c2}:=\nxt(\vcmp\l,\block_{\sf c})$. 
    According to claim~\ref{clm:chainc}.1, $\mathrm{Adv}(\advs\l,\block_{\sf c2})>0$. 
    Since $\block_{\sf c1}$ and $\block_{\sf c2}$ are in sibling relations and $\sibtreew{\cdot}$ returns the maximum subtree weight among sibling blocks (defined in eq.~(\ref{def:sibsubtw})), 
    we have $\treew{\graph\l,\block_{\sf c1}}\le \sibtreew{\graph\l,\block_{\sf c2}}$ and $\treew{\graph\l,\block_{\sf c2}}\le \sibtreew{\graph\l,\block_{\sf c1}}$. 
    (For the special case $\block_{\sf c2}\notin \graph\l$, these two inequalities also hold.) So we have 
    \begin{align*}
        &\treew{\graph\l,\block_{\sf c1}}-\sibtreew{\graph\l,\block_{\sf c1}}\\
     \le&  \sibtreew{\graph\l,\block_{\sf c2}}-\treew{\graph\l,\block_{\sf c2}} \\
     \le & \sibtreew{\gmax\l,\block_{\sf c2}}-\treew{\gmin\l\cup\{\flagb\l\},\block_{\sf c2}} + \heavyw \\
     < &\heavyw\\
     \le &\advan.
 \end{align*}
 Since we are given $\mathrm{Old}(\graph,\block_{\sf c})=\true$, we have $\adp(\block)={\sf con}$ according to the first rule in definition~\ref{def:adaptive}. 
\end{proof}


\subsection{Case discussions for potential value (Part 1)}\label{sec:potcase}

\paragraph{Common settings}
\statecondm. 
In this sub-section, we study the upper bound of block potential value difference $\pot(\advs,\block)-\pot(\advs\l,\block)$ under the assumption that $\pot(\advs\l,\block)\neq\bot$ and $\pot(\advs,\block)\neq\bot$. All the lemmas in this section assumes $\pot(\advs\l,\block)\neq\bot$ and $\pot(\advs,\block)\neq\bot$. We will also not repeat them in the following lemmas (except lemma~\ref{lma:potcase:final}) since they are not referred outside this section. The upper bound of 

We define symbol $\mainchild\l:=\nxt(\vcmp\l,\block)$ and $\mainchild:=\nxt(\vcmp,\block)$ for given $\vcmp\l$, $\vcmp$ and $\block$ in the context. So $\mainchild$ and $\mainchild\l$ are the child of block $\block$ when they are not $\bot$. According to lemma~\ref{lma:tipmove1}, one of $\vcmp\l$ and $\vcmp$ must be the prefix of another. So if $\mainchild\l\neq\bot$ and $\mainchild\neq\bot$, there must be $\mainchild\l=\mainchild$. These property are frequently used and we do not explicitly refer them in the following. 
We define symbol $\mathbf{N}:=\{\block' \in \gmax\backslash\mathbf{M}\;|\;\block'.\weight=1\}$ for given $\gmax$ and $\mathbf{M}$ in the context 
and we define $\mathbf{N}\l$ similarly. 

Now we start the case discussion for three components $\potwith,\potadv,\potspe$.

\subsubsection{The first component}

\begin{lemma}\label{lma:potcase:1}
    If $\event$ is $\mgen$ event, then $\potwith(\advs,\block)-\potwith(\advs\l,\block)\le \event.{\sf block}.\weight$.

    If $\event$ is $\mrls$ event and $\block\preceq\event.\mathsf{block}$, then $\potwith(\advs,\block)-\potwith(\advs\l,\block)\le -\event.{\sf block}.\weight$.

    For other cases, $\potwith(\advs,\block)-\potwith(\advs\l,\block)=0$.
\end{lemma}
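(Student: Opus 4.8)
The plan is to unfold the definition $\potwith(\advs,\block)=\treew{\ggen\backslash\gmax,\block}$ and to track, event type by event type, how the single set $\ggen\backslash\gmax$ changes. The only tool needed is the elementary observation that $\treew{\graph,\block}=\sum_{\block'\in\graph:\,\block\preceq\block'}\block'.\weight$ counts precisely the weights of the descendants of $\block$ present in $\graph$; consequently, inserting or deleting a single block $\block''$ to or from a block set changes $\treew{\cdot,\block}$ by $\block''.\weight$ when $\block\preceq\block''$ and by $0$ otherwise. Together with Claim~\ref{clm:graphrel} ($\gmax\subseteq\ggen$) and the update rules for $\ggen\l,\gmax\l$, this reduces the whole statement to bookkeeping.

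First I would treat the event types that essentially do not touch $\ggen$ or $\gmax$. An $\allrec$ event modifies only $\gmin\l$, so $\ggen\backslash\gmax=\ggen\l\backslash\gmax\l$ and $\potwith(\advs,\block)=\potwith(\advs\l,\block)$. An $\hgen$ event adds the same fresh block $\event.\mathsf{block}$ to both $\ggen\l$ and $\gmax\l$; a newly generated block lies in neither $\ggen\l$ nor $\gmax\l$, so again $\ggen\backslash\gmax=\ggen\l\backslash\gmax\l$ and the potential is unchanged. Both land in the ``other cases'' clause. For an $\mgen$ event, $\gmax=\gmax\l$ while $\ggen=\ggen\l\cup\{\event.\mathsf{block}\}$, hence $\ggen\backslash\gmax\subseteq(\ggen\l\backslash\gmax\l)\cup\{\event.\mathsf{block}\}$, and the one-block estimate yields $\potwith(\advs,\block)\le\potwith(\advs\l,\block)+\event.\mathsf{block}.\weight$, which is the first claim.

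It remains to handle $\mrls$, which is where the only non-mechanical step occurs. Here $\ggen=\ggen\l$ and $\gmax=\gmax\l\cup\{\event.\mathsf{block}\}$, so $\ggen\backslash\gmax=(\ggen\l\backslash\gmax\l)\backslash\{\event.\mathsf{block}\}$ exactly. I would then argue that $\event.\mathsf{block}\in\ggen\l\backslash\gmax\l$: an $\mrls$ event fires only for a block constructed by the adversary, which has therefore already triggered its (unique) $\mgen$ event and so lies in $\mathbf{M}\l\subseteq\ggen\l$, and by definition $\mrls$ marks the \emph{first} appearance of that block in an honest local state, so it is not yet in $\gmax\l$. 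Granting this, deleting $\event.\mathsf{block}$ from $\ggen\l\backslash\gmax\l$ decreases $\treew{\cdot,\block}$ by exactly $\event.\mathsf{block}.\weight$ when $\block\preceq\event.\mathsf{block}$, giving $\potwith(\advs,\block)-\potwith(\advs\l,\block)=-\event.\mathsf{block}.\weight$ (in particular $\le-\event.\mathsf{block}.\weight$, the second claim), and leaves $\treew{\cdot,\block}$ unchanged when $\block\not\preceq\event.\mathsf{block}$, the last ``other case''. I expect this $\mrls$ membership fact to be the only obstacle, and it is a routine consequence of the event semantics rather than a genuine difficulty; everything else is the one-block accounting described above.
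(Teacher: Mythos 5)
Your proposal is correct and follows essentially the same route as the paper's proof: unfold $\potwith(\advs,\block)=\treew{\ggen\backslash\gmax,\block}$ and track how the set $\ggen\backslash\gmax$ changes under each of the four event types, with the only substantive observation being that an $\mrls$ event removes exactly the released block (which lies in $\ggen\l\backslash\gmax\l$) from that set. Your extra justification that the released block is indeed in $\ggen\l$ and not in $\gmax\l$ is a harmless elaboration of what the paper states tersely; no gap.
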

\begin{proof}
    Let $\eblock=\event.\mathsf{block}$. Recalling that $\potwith(\advs,\block)=\treew{\ggen\backslash\gmax,\block}$. 

    If $\event$ is $\mgen$ event, we have $\ggen\backslash\ggen\l=\{\eblock\}$ and $\gmax=\gmax\l$. So 
    $\potwith(\advs,\block)-\potwith(\advs\l,\block)\le \eblock.\weight.$

    If $\event$ is $\mrls$ event, we have $\ggen=\ggen\l$, $\gmax\backslash\gmax\l=\{\eblock\}$ and $\eblock\in\ggen$. 
    So $\ggen\l\backslash\gmax\l$ has one more element $\eblock$ than $\ggen\backslash\gmax$. 
    When $\eblock\preceq \block$, $\eblock\in \tree{\ggen\l\backslash\gmax\l,\block}$ and thus $\potwith(\advs,\block)-\potwith(\advs\l,\block)\le -\eblock.\weight.$ Otherwise $\potwith(\advs,\block)-\potwith(\advs\l,\block)=0.$

    If $\event$ is $\hgen$ event, we have $\ggen\backslash\ggen\l=\{\eblock\}$ and $\gmax\backslash\gmax\l=\{\eblock\}$. Thus $\ggen\l\backslash\gmax\l=\ggen\backslash\gmax$  and $\potwith(\advs,\block)-\potwith(\advs\l,\block)=0.$

    If $\event$ is $\allrec$ event, we have $\ggen=\ggen\l$ and $\gmax=\gmax\l$. Thus  $\potwith(\advs,\block)-\potwith(\advs\l,\block)=0.$
\end{proof}

\subsubsection{The second component}

In discussing the component $\potadv$, for the case $\flagb=\flagb\l$, we discuss the upper bound of  $\potadv(\advs,\mainchild)-\potadv(\advs,\mainchild\l)$ and $\potadv(\advs,\mainchild\l)-\potadv(\advs\l,\mainchild\l)$ respectively and combine them later. 
%



%

\begin{lemma}\label{lma:potcase:2.1.1}
    If $\event$ is $\hgen$ event, $\flagb\l=\flagb$, $\mathsf{Spe}(\advs\l)=\false$ and $\event.\mathsf{block}.\weight=1$, then it will be 
    $$\potadv(\advs,\mainchild\l)-\potadv(\advs\l,\mainchild\l)\le -1.$$
\end{lemma}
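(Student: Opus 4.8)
The plan is to compute $\potadv(\advs,\mainchild\l)-\potadv(\advs\l,\mainchild\l)$ directly from the definition of $\potadv$, reducing it to one structural fact about where the newly generated honest block $\eblock:=\event.\mathsf{block}$ (honest, weight $1$) lands in the tree. Since $\event$ is an $\hgen$ event with $\flagb\l=\flagb$, the state update gives $\gmin=\gmin\l$, $\flagb=\flagb\l$, $\mathbf{M}=\mathbf{M}\l$, $\gdta=\gdta\l\cup\{\eblock\}$ and $\mathbf{N}=\mathbf{N}\l\cup\{\eblock\}$. Expanding $\potadv$ and cancelling $\kah+\kam$, the difference equals $\big(\mathrm{Adv}(\advs\l,\mainchild\l)-\mathrm{Adv}(\advs,\mainchild\l)\big)+\big(\min\{\tw(\tree{\gdta\l,\mainchild\l}\cap\mathbf{N}\l),\kah\}-\min\{\tw(\tree{\gdta,\mainchild\l}\cap\mathbf{N}),\kah\}\big)$. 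The first bracket is $\ge 0$ because $\treew{\gmin\cup\{\flagb\},\mainchild\l}$ is unchanged while $\sibtreew{\cdot,\mainchild\l}$ is monotone in the graph, and $\eblock$ raises $\tw(\tree{\cdot,\mainchild\l}\cap\mathbf{N})$ by at most $1$; hence to get $\le -1$ it is enough to show (i) $\eblock$ lies strictly below $\mainchild\l$ in the tree, and (ii) $\tw(\tree{\gdta\l,\mainchild\l}\cap\mathbf{N}\l)<\kah$ so the $\min$ is not clamped. Item (ii) is immediate: $\mathrm{Spe}(\advs\l)=\false$ makes the second clause of Definition~\ref{def:special} fail, so $|\{\block'\in\gdta\l\setminus\mathbf{M}\l:\block'.\weight=1\}|<\kah$, and $\tree{\gdta\l,\mainchild\l}\cap\mathbf{N}\l$ is a set of weight-$1$ blocks contained in that set.

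For (i), write $\graph\l:=\eblock.\past$; since $\chain{\eblock}=\mathrm{Pivot}(\graph\l)\circ\eblock$ and $\eblock$ is fresh, it suffices to prove $\mainchild\l\neq\bot$ and $\mainchild\l\in\mathrm{Pivot}(\graph\l)$ — then $\mainchild\l\prec\eblock$, so $\eblock\in\tree{\gdta,\mainchild\l}\cap\mathbf{N}$, $\eblock.\parent\neq\block$, and $\eblock$ is outside every sibling subtree of $\mainchild\l$; consequently $\sibtreew{\gmax,\mainchild\l}=\sibtreew{\gmax\l,\mainchild\l}$ so the first bracket is exactly $0$, while $\tw(\tree{\gdta,\mainchild\l}\cap\mathbf{N})=\tw(\tree{\gdta\l,\mainchild\l}\cap\mathbf{N}\l)+1$ so the second bracket is $-1$. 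To prove this claim, set $\block_{\sf c}:=\tip(\mathrm{Pivot}(\graph\l)\cap\vcmp\l)$; as $\mathrm{Pivot}(\graph\l)$ and $\vcmp\l$ are both root-paths of the tree their intersection is a common prefix, so $\block_{\sf c}$ is well defined, and it is enough to show $\block\prec\block_{\sf c}$ (then $\mainchild\l=\nxt(\vcmp\l,\block)\neq\bot$ and $\mainchild\l\preceq\block_{\sf c}\in\mathrm{Pivot}(\graph\l)$). Recall $\pot(\advs\l,\block)\neq\bot$ gives $\block\in\vcmp\l$ and $\mathrm{Old}(\gmin\l,\block)=\true$, and $\eblock.\weight=1$ gives $\adp(\graph\l)=\mathsf{opt}$ via \eqref{eq:blockweight}.

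Suppose for contradiction $\block_{\sf c}\preceq\block$; the monotonicity properties of $\mathrm{Old}$ together with $\gmin\l\subseteq\graph\l$ give $\mathrm{Old}(\graph\l,\block_{\sf c})=\true$. If $\block_{\sf c}=\tip(\mathrm{Pivot}(\graph\l))$, the second clause of Definition~\ref{def:adaptive} failing forces $\mathrm{Old}(\graph\l,\block_{\sf c})=\false$, contradiction. Otherwise let $\block_{\sf c1}:=\nxt(\mathrm{Pivot}(\graph\l),\block_{\sf c})$; by maximality of the common prefix $\block_{\sf c1}\notin\vcmp\l$, and the first clause of Definition~\ref{def:adaptive} failing (at the pivot block $\block_{\sf c1}$, whose parent $\block_{\sf c}$ is old) gives $\treew{\graph\l,\block_{\sf c1}}-\sibtreew{\graph\l,\block_{\sf c1}}\ge\advan$. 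Now split on whether $\block_{\sf c}$ has a child in $\vcmp\l$. If not, then $\block_{\sf c}=\tip(\vcmp\l)=\block$, and combining Claim~\ref{clm:chainc}.2 ($\mathrm{Adv}(\advs\l,\block_{\sf c1})\le\kam+\kah$), Lemma~\ref{lma:treewbound}, and the bound $\tw(\tree{\gdta\l,\tip(\vcmp\l)})<\kam+\kah+2\heavyw$ (all three clauses of Definition~\ref{def:special} fail) gives $\advan<2\kam+2\kah+2\heavyw$, contradicting the parameter constraint $\advan\ge 2\kam+2\kah+2\heavyw$. If $\block_{\sf c}$ has a $\vcmp\l$-child $\block_{\sf c2}:=\nxt(\vcmp\l,\block_{\sf c})$ (distinct from $\block_{\sf c1}$), then $\mathrm{Adv}(\advs\l,\block_{\sf c2})>0$ by Claim~\ref{clm:chainc}.1, and since $\block_{\sf c1},\block_{\sf c2}$ are siblings the chain $\treew{\gmin\l\cup\{\flagb\l\},\block_{\sf c2}}>\sibtreew{\gmax\l,\block_{\sf c2}}\ge\treew{\gmax\l,\block_{\sf c1}}\ge\treew{\graph\l,\block_{\sf c1}}\ge\advan+\treew{\graph\l,\block_{\sf c2}}$, combined with $\treew{\gmin\l\cup\{\flagb\l\},\block_{\sf c2}}\le\treew{\graph\l,\block_{\sf c2}}+\heavyw$ (Claim~\ref{clm:flagb}: $\flagb\l$ is a single weight-$\heavyw$ block or $\bot$), yields $\heavyw>\advan$, again contradicting the constraint. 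Hence $\block\prec\block_{\sf c}$, which establishes (i); with (ii), $\potadv(\advs,\mainchild\l)-\potadv(\advs\l,\mainchild\l)=-1$.

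The main obstacle is exactly part (i) — showing the pivot chain of the honest generator really extends through $\mainchild\l$ rather than branching off earlier. This is where the GHOST pivot characterization (Lemma~\ref{lma:ghost}), the $\advan$-dominance forced by $\adp=\mathsf{opt}$, the special-status bound on the ``in-transit'' subtree weight under $\tip(\vcmp\l)$, the flag-block slack of at most $\heavyw$, and the parameter gap $\advan\ge 2\kam+2\kah+2\heavyw$ all have to be used simultaneously, and the three sub-cases for $\block_{\sf c}$ (it is the pivot tip of $\graph\l$ / it equals $\tip(\vcmp\l)$ / it has a proper $\vcmp\l$-child) are the bookkeeping most likely to need care.
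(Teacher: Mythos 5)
Your proof is correct and follows the same overall plan as the paper's: expand $\potadv$ into the $\mathrm{Adv}$ term and the clamped $\min$ term, show the new honest weight-$1$ block $\eblock$ lands strictly below $\mainchild\l$ so that $\mathrm{Adv}(\cdot,\mainchild\l)$ is unchanged while the $\min$ term rises by exactly $1$ (unclamped because $\mathrm{Spe}(\advs\l)=\false$). The one genuine divergence is in how you establish part (i). The paper gets $\mainchild\l\neq\bot$ by citing lemma~\ref{lma:alwaysheavy} and then gets $\mainchild\l\preceq\eblock$ by citing lemma~\ref{lma:eptinpivot}, whose stated hypothesis is $\flagb\l=\bot$ — a hypothesis not supplied by this lemma's assumptions (a weight-$1$ $\hgen$ event with $\flagb\l=\flagb$ permits $\flagb\l\neq\bot$), so the paper's own invocation is loose there. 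You instead prove the single claim $\block\prec\block_{\sf c}$ with $\block_{\sf c}=\tip(\mathrm{Pivot}(\graph\l)\cap\vcmp\l)$, which yields both $\mainchild\l\neq\bot$ and $\mainchild\l\in\mathrm{Pivot}(\graph\l)$ at once and never needs the full ``$\vcmp\l$ is a prefix of $\chain{\eblock}$'' statement; your case analysis is exactly the contrapositive of lemma~\ref{lma:alwaysheavy}'s proof (same use of claim~\ref{clm:chainc}, lemma~\ref{lma:treewbound}, the special-status weight bound, the $\heavyw$ flag-block slack, and $\advan\ge 2\kam+2\kah+2\heavyw$), just inlined. So you buy a cleaner handling of the $\flagb\l\neq\bot$ case at the cost of re-deriving two lemmas the paper treats as black boxes; both arguments are sound.
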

\begin{proof}
    Let $\eblock=\event.\mathsf{block}$. If $\mainchild\l=\bot$, we have $\block=\tip(\vcmp\l)$. According to lemma~\ref{lma:eptinpivot}, $\vcmp\l$ is a prefix of $\eblock$. Thus $\block=\tip(\vcmp\l\cap\chain{\eblock})$. Since $\pot(\advs\l,\block)\neq\bot$, we have $\mathrm{Old}(\gmin\l,\block)=\true$. According to lemma~\ref{lma:alwaysheavy}, $\adp(\eblock)={\sf con}$. So $\eblock.\weight\neq 1$ according to equation~\ref{eq:blockweight}. This contradicts the assumption $\eblock.\weight=1$ in this lemma. Thus, $\mainchild\l$ cannot be $\bot$.

    Since $\event.\type=\hgen$ and $\eblock.\weight=1$, $\mathbf{N}$ must has one more element $\eblock$ compared to $\mathbf{N}\l$. 
    Since $\mainchild\l\in\vcmp\l$ and $\vcmp\l$ is a prefix of $\eblock$, it will be $\mainchild\l\preceq \eblock$. Thus $\tree{\gdta,\mainchild\l}$ has one more element $\eblock$ compared to $\tree{\gdta\l,\mainchild\l}$. 
    Since $\mathrm{Spe}(\advs\l)=\false$, according to the definition of special status, $\gdta\l$ has at most $\kah-1$ honest blocks with block weight 1. Thus $|\tree{\gdta\l,\mainchild\l}|\le \kah-1$. So we have 
    $$ \tw(\tree{\gdta,\mainchild\l}\cap\mathbf{N})=\tw(\tree{\gdta\l,\mainchild\l}\cap\mathbf{N}\l)+1\le \kah.$$
    
    Since $\mainchild\l\preceq \eblock$ and $\gmax$ has one more element $\eblock$ than $\gmax\l$, all the sibling blocks of $\mainchild\l$ has the same subtree in $\gmax$ and $\gmax\l$. So we have $\sibtreew{\gmax,\eblock}=\sibtreew{\gmax\l,\eblock}$. We also have $\gmin\cup\{\flagb\}=\gmin\l\cup\{\flagb\l\}$ because $\event.\type\neq \allrec$ and $\flagb=\flagb\l$. Thus
    $$\mathrm{Adv}(\advs,\mainchild\l)-\mathrm{Adv}(\advs\l,\mainchild\l)=0.$$

    Recalling that $\potadv(\advs,\mainchild)=\kah+\kam-\mathrm{Adv}(\advs,\mainchild)-\min\{\kah,\tw(\tree{\gdta,\mainchild}\cap\mathbf{N})\}$. Thus we have 
    $$ \mathrm{Adv}(\advs,\mainchild\l)-\mathrm{Adv}(\advs,\mainchild\l)=-1.$$

\end{proof}

\begin{lemma}\label{lma:potcase:2.1.2}
    If $\event$ is a $\mrls$ event, $\flagb=\flagb\l$, $\mainchild\l\neq\bot$ and $\event.\mathsf{block}\in \tree{\gmax,\block}\backslash\tree{\gmax,\mainchild\l}$. We have
    $$\potadv(\advs,\mainchild\l)-\potadv(\advs\l,\mainchild\l)\le \event.\mathsf{block}.\weight.$$

    For all the other cases satisfying $\flagb=\flagb\l$, 
    $$\potadv(\advs,\mainchild\l)-\potadv(\advs\l,\mainchild\l)\le 0.$$
\end{lemma}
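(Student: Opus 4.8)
The plan is to expand the definition of $\potadv$ and then split on $\event.\mathsf{type}$. If $\mainchild\l=\bot$ then $\potadv(\advs,\mainchild\l)=\potadv(\advs\l,\mainchild\l)=0$ by definition, so this falls under the ``other cases'' and the bound $0$ is immediate; from now on assume $\mainchild\l\neq\bot$, so $\mainchild\l$ is a child of $\block$. Writing $x:=\tw(\tree{\gdta,\mainchild\l}\cap\mathbf{N})$ and $x\l:=\tw(\tree{\gdta\l,\mainchild\l}\cap\mathbf{N}\l)$, the definition of $\potadv$ gives $\potadv(\advs,\mainchild\l)-\potadv(\advs\l,\mainchild\l)=\bigl(\mathrm{Adv}(\advs\l,\mainchild\l)-\mathrm{Adv}(\advs,\mainchild\l)\bigr)+\bigl(\min\{x\l,\kah\}-\min\{x,\kah\}\bigr)$, where $\mathrm{Adv}(\advs,\mainchild\l)=\treew{\gmin\cup\{\flagb\},\mainchild\l}-\sibtreew{\gmax,\mainchild\l}$; since $\flagb=\flagb\l$ is assumed throughout, the only moving parts are $\gmin$, $\gmax$, $\gdta$, $\mathbf{M}$.

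For the headline case, $\event$ is the $\mrls$ event of a malicious block $\eblock=\event.\mathsf{block}$: then $\gmin$, $\mathbf{M}$, $\ggen$ are unchanged and $\gmax=\gmax\l\cup\{\eblock\}$, so $\gdta=\gdta\l\cup\{\eblock\}$. Because $\eblock\in\mathbf{M}$ we have $\eblock\notin\mathbf{N}$ and $\mathbf{N}=\mathbf{N}\l$, hence $\tree{\gdta,\mainchild\l}\cap\mathbf{N}=\tree{\gdta\l,\mainchild\l}\cap\mathbf{N}\l$, i.e. $x=x\l$; also $\treew{\gmin\cup\{\flagb\},\mainchild\l}$ is unchanged. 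Thus $\potadv(\advs,\mainchild\l)-\potadv(\advs\l,\mainchild\l)=\sibtreew{\gmax,\mainchild\l}-\sibtreew{\gmax\l,\mainchild\l}$, and it remains to bound the increase of the largest sibling-subtree weight of $\mainchild\l$ caused by adding the single block $\eblock$. Two observations finish this: (i) a freshly released block has a singleton subtree in $\gmax$ (otherwise one of its parent-descendants lies in $\gmax\l$ and would already have $\eblock$ in its past, contradicting $\eblock\notin\gmax\l$), so a possibly new child of $\block$ contributes subtree weight exactly $\eblock.\weight$, and every already-present sibling subtree weight rises by at most $\eblock.\weight$, giving $\sibtreew{\gmax,\mainchild\l}-\sibtreew{\gmax\l,\mainchild\l}\le\eblock.\weight$; (ii) if $\eblock\notin\tree{\gmax,\block}$ then $\eblock$ lies in no subtree of a child of $\block$ and $\eblock.\parent\neq\block$, and if $\eblock\in\tree{\gmax,\mainchild\l}$ then $\eblock$ is a proper descendant of $\mainchild\l$ and hence in no sibling subtree --- in either complementary case the increase is $0$. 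Together (i) and (ii) yield both the $\le\eblock.\weight$ bound when $\eblock\in\tree{\gmax,\block}\setminus\tree{\gmax,\mainchild\l}$ and the $\le 0$ bound for all remaining $\mrls$ situations.

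For the other event types (all with $\flagb=\flagb\l$) the plan is a short case check. For $\mgen$, the fresh malicious block is not released, so $\gmax$, $\gmin$, $\gdta$ are unchanged and $\eblock\notin\gmax$ keeps $\mathbf{N}$ unchanged, so the difference is $0$. For the $\allrec$ event of $\eblock$, $\gmax$ (hence $\sibtreew$) is unchanged while $\gmin$ grows (so $\mathrm{Adv}$ does not decrease) and $\gdta$ loses $\eblock$ (so the capped term does not increase); noting $\eblock\neq\flagb\l$ (otherwise flag-update rule~3 would break $\flagb=\flagb\l$) and then distinguishing whether $\mainchild\l\preceq\eblock$ and whether $\eblock\in\mathbf{N}$ shows the two effects cancel up to sign, leaving $\le 0$. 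For an $\hgen$ event with $\eblock.\weight=0$, nothing entering $\potadv(\cdot,\mainchild\l)$ moves in a harmful direction, so the difference is $0$; the case $\eblock.\weight=1$ with $\mathrm{Spe}(\advs\l)=\false$ is exactly Lemma~\ref{lma:potcase:2.1.1}, giving $\le -1$. The leftover honest-generation cases are $\eblock.\weight=1$ with $\mathrm{Spe}(\advs\l)=\true$, and $\eblock.\weight=\heavyw$ (which by flag-update rule~2 forces $\flagb\l=\bot$ here); in both I would argue that the newly generated honest block extends $\mainchild\l$, i.e. $\mainchild\l\preceq\eblock$: for $\flagb\l=\bot$ this is immediate from Lemma~\ref{lma:eptinpivot} ($\vcmp\l$ is a prefix of $\mathrm{Pivot}(\eblock.\past)$), and once $\mainchild\l\preceq\eblock$, the block $\eblock$ adds nothing to any sibling subtree, so $\sibtreew$ and $\mathrm{Adv}$ are unchanged while the capped term can only grow, giving $\le 0$.

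The main obstacle is the single honest-generation case $\eblock.\weight=1$, $\mathrm{Spe}(\advs\l)=\true$, $\flagb\l\neq\bot$: here Lemma~\ref{lma:eptinpivot} does not apply, and since the generating node's local state $\eblock.\past$ need not contain $\flagb\l$, the bound $\mathrm{Adv}(\advs\l,\mainchild\l)>0$ from Claim~\ref{clm:chainc} does not by itself place $\mainchild\l$ on $\mathrm{Pivot}(\eblock.\past)$ (it only gives $\treew{\eblock.\past,\mainchild\l}-\sibtreew{\eblock.\past,\mainchild\l}>-\heavyw$). I expect to close this by exploiting that $\eblock.\weight=1$ forces $\adp(\eblock.\past)=\mathsf{opt}$: combining Definition~\ref{def:adaptive}, the hypothesis $\mathrm{Old}(\gmin\l,\block)=\true$ (from $\pot(\advs\l,\block)\neq\bot$) and the monotonicity properties of $\mathrm{Old}$, every ancestor of $\block$ lying on $\mathrm{Pivot}(\eblock.\past)$ must have an $\advan$-dominant child there, and since $\advan\ge 2\kam+2\kah+2\heavyw$ dominates every divergence quantity between $\eblock.\past$ and $\gmax\l$ (the same kind of estimate used in Lemma~\ref{lma:treewbound} and Lemma~\ref{lma:alwaysheavy}), this pins $\block$ and then $\mainchild\l$ onto $\mathrm{Pivot}(\eblock.\past)$, hence $\mainchild\l\preceq\eblock$ and the previous paragraph's argument applies. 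Carefully verifying this reduction --- in particular that $\mainchild\l$ itself (not merely $\block$) stays on the generating node's pivot chain --- is where the bulk of the work lies; the $\mrls$, $\mgen$, $\allrec$ and weight-$0$ cases are routine.
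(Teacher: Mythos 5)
Your case analysis is sound and, outside the one case you yourself flag, it matches the paper's proof in substance: the paper likewise reduces everything to the question of when $\sibtreew{\gmax,\mainchild\l}$ can grow, disposes of the $\mgen$, $\allrec$ and weight-$0$ cases by the same bookkeeping (including the cancellation between the $\mathrm{Adv}$ term and the capped term for $\allrec$), bounds the $\mrls$ case by $\event.\mathsf{block}.\weight$ exactly as in your observations (i)--(ii), and uses Lemma~\ref{lma:eptinpivot} for the honest weight-$\heavyw$ case. The substantive problem is the case you call the main obstacle ($\hgen$, weight $1$, $\mathrm{Spe}(\advs\l)=\true$, $\flagb\l\neq\bot$): there your argument is only a sketch, and its key step --- ``this pins $\block$ and then $\mainchild\l$ onto $\mathrm{Pivot}(\eblock.\past)$'' --- is both unproven and stated more strongly than needed. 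If the generator's pivot chain leaves $\vcmp\l$ strictly before $\block$, then $\block\notin\mathrm{Pivot}(\eblock.\past)$ and no pinning of $\mainchild\l$ is possible; what you actually need is only the disjunction ``$\block\not\preceq\eblock$ or $\mainchild\l\preceq\eblock$'', since in the first alternative $\eblock$ touches no sibling subtree of $\mainchild\l$ anyway.

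The paper closes this by reversing the order of quantification. It first shows that $\potadv(\advs,\mainchild\l)>\potadv(\advs\l,\mainchild\l)$ already forces $\eblock\in\tree{\gmax,\block}\backslash\tree{\gmax,\mainchild\l}$, and only then asks what an honest $\eblock$ in that position must look like: its chain passes through $\block$ but not through $\mainchild\l=\nxt(\vcmp\l,\block)$, so $\tip(\chain{\eblock}\cap\vcmp\l)=\block$, which is old because $\pot(\advs\l,\block)\neq\bot$ gives $\mathrm{Old}(\gmin\l,\block)=\true$; then Lemma~\ref{lma:alwaysheavy} yields $\adp(\eblock.\past)={\sf con}$ and hence $\eblock.\weight\in\{0,\heavyw\}$, so the weight-$1$ configuration you are struggling with is vacuous. (Your instinct that the $\mathrm{Spe}(\advs\l)=\false$ hypothesis of Lemma~\ref{lma:alwaysheavy} is an obstacle is understandable from its statement, but the branch of its proof that applies here --- Case~2, where $\vcmp\l$ is not a prefix of the generator's pivot chain, which is exactly the situation once $\mainchild\l\notin\chain{\eblock}$ --- never uses that hypothesis; it only compares $\advan\ge 2\kam+2\kah+2\heavyw$ against $\mathrm{Adv}(\advs\l,\cdot)>0$.) Your contrapositive can be completed along the same lines by restricting to $\block\preceq\eblock$ first, so that the divergence occurs exactly at the old block $\block$, but as written the crux of the lemma is missing.
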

\begin{proof}
    Let $\eblock:=\event.\mathsf{block}$. In this proof, we try to find all the cases with $\potadv(\advs,\mainchild\l)-\potadv(\advs\l,\mainchild\l)>0$. When $\mainchild\l=\bot$, both $\potadv(\advs,\mainchild\l)$ and $\potadv(\advs\l,\mainchild\l)$ equal to $0$. So we only focus on the case with $\mainchild\l\neq\bot$. 


    We study the difference between $\tree{\gdta\l,\mainchild\l}\cap\mathbf{N}\l$ and $\tree{\gdta,\mainchild\l}\cap\mathbf{N}$. Since $\event$ does not a $\hgen$ event, we have $\mathbf{N}=\mathbf{N}\l$. Thus
    \begin{align*}
        &\min\{\tw(\tree{\gdta\l,\mainchild\l}\cap\mathbf{N}\l),\kah\}-\min\{\tw(\tree{\gdta,\mainchild\l}\cap\mathbf{N}),\kah\}\\
    \le & \max\{0,\tw(\tree{\gdta\l,\mainchild\l}\cap\mathbf{N})-\tw(\tree{\gdta,\mainchild\l}\cap\mathbf{N})\} \\
   \le   & \treew{(\gdta\l\backslash\gdta)\cap \mathbf{N},\mainchild\l}
    \end{align*}
    By the definition of $\potadv$, we have
    $$ \potadv(\advs,\mainchild\l)-\potadv(\advs\l,\mainchild\l) \le \mathrm{Adv}(\advs\l,\mainchild\l)-\mathrm{Adv}(\advs,\mainchild\l)+\treew{(\gdta\l\backslash\gdta)\cap \mathbf{N},\mainchild\l}.$$
    Notice that $\gdta\l\backslash\gdta=(\gmax\l\backslash\gmin\l)\backslash(\gmax\backslash\gmin)$. If $\gmin=\gmin\l$, then $\gdta\l\backslash\gdta\subseteq(\gmax\l\backslash\gmax)=\emptyset$. So 
    $\gdta\l\backslash\gdta\neq \emptyset$ only if $\gmin\neq\gmin\l$ and thus $\event$ must be an $\allrec$ event. We can claim $\treew{(\gdta\l\backslash\gdta)\cap \mathbf{N},\mainchild\l}>0$ only if $\event$ is an $\allrec$ event and $\mainchild\l\preceq\eblock$. In this case, $\gmax\l=\gmax$ and $\gdta\subseteq \gdta\l$, so we have
    \begin{align*}
        &\!\!\!\!\!\text{If }\treew{(\gdta\l\backslash\gdta)\cap \mathbf{H},\mainchild\l}>0,\\
        &\mathrm{Adv}(\advs\l,\mainchild\l)-\mathrm{Adv}(\advs,\mainchild\l)+\treew{(\gdta\l\backslash\gdta)\cap \mathbf{H},\mainchild\l} \\
        \le & \treew{\gmin\l,\mainchild\l}-\treew{\gmin,\mainchild\l}+\treew{\gdta\l,\mainchild\l}-\treew{\gdta,\mainchild\l}\\
        = & \treew{\gmax\l,\mainchild\l}-\treew{\gmax,\mainchild\l} \\
        = & 0
    \end{align*}
    Thus we have 
    $$ \potadv(\advs,\mainchild\l)-\potadv(\advs\l,\mainchild\l) \le \max\{0,\mathrm{Adv}(\advs\l,\mainchild\l)-\mathrm{Adv}(\advs,\mainchild\l)\}.$$

    Now we only need to study in which cases $\mathrm{Adv}(\advs\l,\mainchild\l)>\mathrm{Adv}(\advs,\mainchild\l)$. 
    It can only happens when $\sibtreew{\gmax,\mainchild\l}>\sibtreew{\gmax\l,\mainchild\l}$. 
    Thus it must be 
    $$\event.\type\in\{\hgen,\mrls\}\quad\text{ and }\quad\eblock\in\tree{\gmax,\block}\backslash\tree{\gmax,\mainchild\l}.$$ 

    If $\event.\type=\mrls$, since $\gmax$ and $\gmax\l$ differ at one block $\eblock$, $\sibtreew{\gmax,\mainchild\l}-\sibtreew{\gmax\l,\mainchild\l}\le \eblock.\weight$ and thus
    $$\mathrm{Adv}(\advs\l,\mainchild\l)-\mathrm{Adv}(\advs,\mainchild\l)\le \eblock.\weight.$$

    If $\event.\type=\hgen$, recalling that block $\block$ is the parent of $\mainchild\l$ and $\mainchild\l\in\vcmp\l$, we have $\tip(\chain{\eblock}\cap \vcmp\l)=\block$. Since $\pot(\advs\l,\block)\neq\bot$, we have $\mathrm{Old}(\gmin\l,\block)=\true$. According to lemma~\ref{lma:alwaysheavy}, $\adp(\eblock)={\sf con}$. So $\eblock.\weight$ equals to 0 or $\heavyw$. We have three sub-cases as follows.
    \begin{enumerate}[nosep]
        \item  If $\eblock.\weight=0$, then $\gmax$ and $\gmax\l$ differ at one block with zero block weight. So $\sibtreew{\gmax,\mainchild\l}=\sibtreew{\gmax\l,\mainchild\l}$.
        \item  If $\eblock.\weight=\heavyw$ and $\flagb\neq\bot$, since $\eblock$ is an honest block and $\event.\type=\hgen$, according to the rule in updating flag block, we have $\flagb=\bot\neq\flagb$. This contradicts to our assumption. 
        \item  If $\flagb=\bot$, since $\eblock$ is an honest block and $\event.\type=\hgen$, according to lemma~\ref{lma:eptinpivot}, $\vcmp\l$ should be a prefix of $\chain{\eblock}$ and thus $\mainchild\l\in \chain{\eblock}$. This contradicts $\eblock\notin \tree{\gmax,\mainchild\l}$. 
    \end{enumerate}
    In all, if $\event.\type=\hgen$, all the three sub-cases cannot be $\mathrm{Adv}(\advs\l,\mainchild\l)>\mathrm{Adv}(\advs,\mainchild\l)$. 
    $\\$

    As a summary for the whole proof, for the case $\flagb=\flagb\l$, $\potadv(\advs,\mainchild\l)-\potadv(\advs\l,\mainchild\l)> 0$ only if $\event.\type=\mrls$, $\mainchild\l\neq\bot$ and $\event.\mathsf{block}\in \tree{\gmax,\block}\backslash\tree{\gmax,\mainchild\l}$. For this case, we have $$\potadv(\advs,\mainchild\l)-\potadv(\advs\l,\mainchild\l)\le \event.\mathsf{block}.\weight.$$
\end{proof}

\begin{lemma}\label{lma:potcase:2.2.1}
    If $\mainchild\l\neq\bot$ and $\mainchild=\bot$, then
    $$ \potadv(\advs,\mainchild)-\potadv(\advs,\mainchild\l)\le -\kam.$$
    
    For the other cases, 
    $$ \potadv(\advs,\mainchild)-\potadv(\advs,\mainchild\l)\le 0.$$
\end{lemma}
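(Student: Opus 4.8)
The plan is a short case analysis on whether $\mainchild\l$ and $\mainchild$ equal $\bot$, using only the defining formula for $\potadv$ together with Claim~\ref{clm:chainc}. The key observation is that both terms $\potadv(\advs,\mainchild)$ and $\potadv(\advs,\mainchild\l)$ are evaluated at the \emph{same} adversary state $\advs$, so only the ``which child'' argument varies; and from the common settings, $\pot(\advs\l,\block)\neq\bot$ forces $\block\in\vcmp\l$, $\pot(\advs,\block)\neq\bot$ forces $\block\in\vcmp$, and (by Lemma~\ref{lma:tipmove1}) if $\mainchild\l\neq\bot$ and $\mainchild\neq\bot$ then $\mainchild\l=\mainchild$. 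The four cases are: (i) $\mainchild\l=\mainchild=\bot$; (ii) $\mainchild\l=\bot$ and $\mainchild\neq\bot$; (iii) $\mainchild\l\neq\bot$ and $\mainchild=\bot$; (iv) $\mainchild\l\neq\bot$ and $\mainchild\neq\bot$.

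Cases (i) and (iv) are trivial: in (i) both $\potadv$ values are $0$ by definition, and in (iv) $\mainchild=\mainchild\l$, so in both the difference is $0\le 0$ and we land in the ``other cases'' clause. For case (ii), $\mainchild\l=\nxt(\vcmp\l,\block)=\bot$ with $\block\in\vcmp\l$ gives $\block=\tip(\vcmp\l)$; since $\mainchild=\nxt(\vcmp,\block)\neq\bot$ with $\block\in\vcmp$, we get $\mainchild\in\vcmp$ and $\tip(\vcmp\l)=\block\prec\mainchild$. The ``specially'' clause of Claim~\ref{clm:chainc}.1 then gives $\mathrm{Adv}(\advs,\mainchild)>\kam+\kah$, so $\potadv(\advs,\mainchild)=(\kah+\kam)-\mathrm{Adv}(\advs,\mainchild)-\min\{\tw(\tree{\gdta,\mainchild}\cap\mathbf{N}),\kah\}<0$; since $\potadv(\advs,\mainchild\l)=0$, the difference is strictly negative, so this too falls under the ``other cases'' clause.

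The main case is (iii), which produces the $-\kam$ bound. Here $\mainchild=\nxt(\vcmp,\block)=\bot$ with $\block\in\vcmp$ forces $\block=\tip(\vcmp)$, while $\mainchild\l=\nxt(\vcmp\l,\block)\neq\bot$ gives $\mainchild\l\in\vcmp\l$ (hence $\mainchild\l\preceq\tip(\vcmp\l)$) with $\mainchild\l.\parent=\block$; spelling out $\mainchild\l\in\vcmp\l\subseteq\gmax$ and $\mainchild\l.\weight>0$ shows $\mainchild\l\in\child(\gmax,\tip(\vcmp))$. The ``specially'' clause of Claim~\ref{clm:chainc}.2 then yields $\mathrm{Adv}(\advs,\mainchild\l)\le 0$, so $\potadv(\advs,\mainchild\l)=(\kah+\kam)-\mathrm{Adv}(\advs,\mainchild\l)-\min\{\tw(\tree{\gdta,\mainchild\l}\cap\mathbf{N}),\kah\}\ge(\kah+\kam)-0-\kah=\kam$; since $\potadv(\advs,\mainchild)=0$, the difference is $\le-\kam$, as claimed. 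The only real care needed is verifying the hypotheses of the two ``specially'' clauses of Claim~\ref{clm:chainc} — in particular that $\mainchild\l$ is genuinely a positive-weight child of $\tip(\vcmp)$ inside $\gmax$ and that $\mainchild\l\preceq\tip(\vcmp\l)$ — so I would make those membership facts explicit before invoking the claim; the rest is arithmetic.
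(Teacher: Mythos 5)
Your proof is correct and follows essentially the same route as the paper's: both arguments reduce to the two nontrivial cases where exactly one of $\mainchild\l,\mainchild$ is $\bot$, identify $\block$ as $\tip(\vcmp)$ or $\tip(\vcmp\l)$ accordingly, and invoke the two ``specially'' clauses of Claim~\ref{clm:chainc} to get $\mathrm{Adv}(\advs,\mainchild\l)\le 0$ (yielding $\potadv(\advs,\mainchild\l)\ge\kam$) and $\mathrm{Adv}(\advs,\mainchild)>\kah+\kam$ (yielding $\potadv(\advs,\mainchild)<0$), with the same elementary bounds on the $\min\{\cdot,\kah\}$ term. Your extra care in checking that $\mainchild\l$ is a positive-weight child of $\tip(\vcmp)$ in $\gmax$ is a detail the paper leaves implicit but does not change the argument.
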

\begin{proof}
    If $\mainchild=\mainchild\l$, then $\potadv(\advs,\mainchild)-\potadv(\advs,\mainchild\l)=0$ holds trivially. 

    If $\mainchild\neq\mainchild\l$, then one of $\mainchild$ and $\mainchild\l$ equals to $\bot$. (Recalling that $\mainchild=\mainchild\l$ if $\mainchild\l\neq\bot$ and $\mainchild\neq\bot$.)

    \textbf{Case 1:} $\mainchild=\bot$ and $\mainchild\l\neq \bot$. 

    Since $\mainchild=\bot$, $\block$ should be $\mathrm{Tip}(\vcmp)$. Since block $\mainchild\l$ is the child block of $\block$ and $\mainchild\l\preceq\tip(\vcmp\l)$, according to claim~\ref{clm:chainc}.2, $\mathrm{Adv}(\advs,\mainchild\l)\le 0$. Since $\min\{\tw(\tree{\gmax,\mainchild\l}\cap\mathbf{N}),\kah\}\le \kah$ and $\kah\ge 0$, we have 
    $$ \potadv(\advs,\mainchild)-\potadv(\advs,\mainchild\l)=-\potadv(\advs,\mainchild\l)\le \mathrm{Adv}(\advs,\mainchild\l)+\kah-\kah-\kam\le -\kam.$$
    
    \textbf{Case 2:} $\mainchild\neq\bot$ and $\mainchild\l= \bot$. 

    Since $\mainchild\l=\bot$, $\block$ should be $\mathrm{Tip}(\vcmp\l)$. Since block $\mainchild$ is the child block of $\block$, we have $\mathrm{Tip}(\vcmp\l)\prec\mainchild$, according to claim~\ref{clm:chainc}.1, $\mathrm{Adv}(\advs,\mainchild)>\kah+\kam$. Since $\min\{\tw(\tree{\gmax,\mainchild}\cap\mathbf{N}),\kah\}\ge 0$, we have 
    $$ \potadv(\advs,\mainchild)-\potadv(\advs,\mainchild\l)=\potadv(\advs,\mainchild)\le \kah+\kam-\mathrm{Adv}(\advs,\mainchild)<0.$$
\end{proof}

\begin{lemma}\label{lma:potcase:2.3.1}
    If $\event.\type=\hgen$, $\event.\mathsf{block}.\weight=\heavyw$, $\flagb\l=\bot$ and $\flagb=\event.\mathsf{block}$, then 
    $$\potadv(\advs,\mainchild)-\potadv(\advs\l,\mainchild\l)\le 2\kah+2\kam-\heavyw.$$ 
\end{lemma}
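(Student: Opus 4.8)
The plan is to recognize that the hypotheses $\flagb\l=\bot$, $\flagb=\event.\mathsf{block}$ and $\event.\type=\hgen$ place us exactly in the setting of Lemmas~\ref{lma:eptinpivot} and~\ref{lma:flagadv}: $\event$ is the $\hgen$ event of a newly created flag block. Write $\eblock:=\event.\mathsf{block}$. The first step is to pin down the nesting of three chains $\vcmp\l\subseteq\chain{\eblock}\subseteq\vcmp$. Lemma~\ref{lma:eptinpivot} gives that $\vcmp\l$ is a prefix of $\chain{\eblock}$; Lemma~\ref{lma:flagadv} gives $\mathrm{Adv}(\advs,\block')>\heavyw-\kah-\kam\ge\kah+\kam$ for every $\block'\in\chain{\eblock}$ (using $\heavyw\ge2\kah+2\kam$), so Claim~\ref{clm:chainc}.3 shows $\chain{\eblock}$ is a prefix of $\vcmp$. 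Since $\pot(\advs\l,\block)\neq\bot$ and $\pot(\advs,\block)\neq\bot$ force $\block\in\vcmp\l$, and $\eblock$ is brand new so $\eblock\notin\gmax\l\supseteq\vcmp\l$, we get $\block\prec\eblock$; hence $\block$ sits strictly below the tip $\eblock$ of $\chain{\eblock}$, so $\mainchild=\nxt(\vcmp,\block)=\nxt(\chain{\eblock},\block)$ is non-$\bot$ and satisfies $\mainchild\preceq\eblock$, i.e.\ $\mainchild\in\chain{\eblock}$.

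Next I would split on whether $\mainchild\l=\bot$. If $\mainchild\l=\bot$, then $\potadv(\advs\l,\mainchild\l)=0$ by definition, while $\mainchild\in\chain{\eblock}$ lets Lemma~\ref{lma:flagadv} give $\mathrm{Adv}(\advs,\mainchild)>\heavyw-\kah-\kam$; discarding the nonnegative $\min\{\cdot,\kah\}$ term yields $\potadv(\advs,\mainchild)\le\kah+\kam-\mathrm{Adv}(\advs,\mainchild)<2\kah+2\kam-\heavyw$, which is the claimed bound. If $\mainchild\l\neq\bot$, the chain nesting forces $\mainchild=\mainchild\l$, and the difference splits into $\bigl(\mathrm{Adv}(\advs\l,\mainchild)-\mathrm{Adv}(\advs,\mainchild)\bigr)$ plus a difference of the two $\min\{\cdot,\kah\}$ terms. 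I would evaluate the first bracket exactly: an $\hgen$ event leaves $\gmin$ fixed and adds only $\eblock$ to $\gmax$, while $\flagb$ passes from $\bot$ to $\eblock$; since $\mainchild\preceq\eblock$ the term $\treew{\gmin\cup\{\flagb\},\mainchild}$ increases by exactly $\eblock.\weight=\heavyw$, whereas $\sibtreew{\gmax,\mainchild}$ is unchanged because $\eblock$ lies inside $\mainchild$'s own subtree, not in any sibling's; thus $\mathrm{Adv}(\advs,\mainchild)-\mathrm{Adv}(\advs\l,\mainchild)=\heavyw$. The $\min$-terms coincide because $\eblock$, the only block by which $\gmax$ and $\gmax\l$ differ, has weight $\heavyw\neq1$, so $\mathbf{N}=\mathbf{N}\l$ and $\tree{\gdta,\mainchild}\cap\mathbf{N}=\tree{\gdta\l,\mainchild}\cap\mathbf{N}\l$. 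Hence $\potadv(\advs,\mainchild)-\potadv(\advs\l,\mainchild\l)=-\heavyw\le2\kah+2\kam-\heavyw$, finishing both cases.

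The main obstacle is the first step: establishing the nesting $\vcmp\l\subseteq\chain{\eblock}\subseteq\vcmp$ and thereby locating $\mainchild$ on $\chain{\eblock}$. Once $\mainchild$ is known to lie on $\chain{\eblock}$, everything after is short bookkeeping. This geometry is essentially what the proof of Lemma~\ref{lma:tipmove1} (Case 2.3) already extracts from Lemmas~\ref{lma:eptinpivot},~\ref{lma:flagadv} and Claim~\ref{clm:chainc}.3, so the work is in re-assembling those facts rather than in any new idea. A secondary subtlety is that in the $\mainchild\l\neq\bot$ case one cannot simply use $\potadv(\advs\l,\mainchild\l)\ge0$ --- this quantity can be negative when $\mathrm{Adv}(\advs\l,\mainchild\l)$ is large --- so the exact $\heavyw$-jump computation, not a crude lower bound, is what makes that case close.
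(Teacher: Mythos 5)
Your proposal is correct and follows essentially the same route as the paper's proof: establish $\chain{\eblock}$ as a prefix of $\vcmp$ via Lemma~\ref{lma:flagadv} and Claim~\ref{clm:chainc}.3, conclude $\mainchild\neq\bot$ with $\mainchild\in\chain{\eblock}$, then split on $\mainchild\l=\bot$ versus $\mainchild\l\neq\bot$ with the same $\heavyw$-jump computation. The only differences are cosmetic (you justify $\block\prec\eblock$ via $\block\in\vcmp\l$ and $\eblock\notin\gmax\l$ rather than via $\pot(\advs,\eblock)=\bot$, and you obtain the exact value $-\heavyw$ where the paper settles for $\kah-\heavyw$).
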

\begin{proof}
    Let $\eblock:=\event.\mathsf{block}$. 
    If $\event.\type=\hgen$, $\flagb\l=\bot$ and $\flagb=\event.\mathsf{block}$, 
    according to lemma~\ref{lma:flagadv}, 
    for all the block $\block'$ in $\chain{\eblock}$, 
    we have $\mathrm{Adv}(\advs,\block')> \heavyw-\kah-\kam\ge \kah+\kam$. 
    According to claim~\ref{clm:chainc}.3, $\chain{\eblock}$ is a prefix of $\vcmp$.
    Since $\eblock\notin \gmin$, $\eblock$ cannot be an old enough block and thus $\pot(\advs,\eblock)=\bot$. 
    So $\block$ cannot be the last block in $\vcmp$. 
    We claim $$\mainchild\neq\bot.$$

    \textbf{Case 1:}  $\mainchild\l\neq\bot$. It will be $\mainchild=\mainchild\l$

     Recalling that $\mainchild\in\chain{\eblock}$, so the subtree weight for the sibling blocks of $\mainchild$ does not change. We have $\sibtreew{\gmax,\mainchild}=\sibtreew{\gmax\l,\mainchild}$. Since $\event.\type=\hgen$, $\flagb\l=\bot$ and $\flagb=\event.\mathsf{block}$, $\gmin\cup\{\flagb\}$ has one more block $\eblock$ than $\gmin\l\cup\{\flagb\l\}$. So $\treew{\gmin\l\cup\{\flagb\l\},\mainchild}-\treew{\gmin\cup\{\flagb\},\mainchild}=-\eblock.\weight$. Since $\eblock=\flagb$, the block weight of $\eblock$ must be $\heavyw$. Thus 
    $$\mathrm{Adv}(\advs\l,\mainchild)-\mathrm{Adv}(\advs,\mainchild\l)=-\heavyw.$$

    Since $\eblock.\weight=\heavyw$ and $\mathbf{N}$ only contains honest blocks with block weight 1, we have $\tree{\gdta\l,\mainchild}\cap \mathbf{N}\l = \tree{\gdta,\mainchild}\cap \mathbf{N}$. So in this case,
    $$\potadv(\advs,\mainchild)-\potadv(\advs\l,\mainchild\l)\le \mathrm{Adv}(\advs\l,\mainchild)-\mathrm{Adv}(\advs,\mainchild)+\kah=\kah-\heavyw.$$

    \textbf{Case 2:}  $\mainchild\l=\bot$.

    Recalling that $\mainchild\in\chain{\eblock}$ and so $\mathrm{Adv}(\advs,\mainchild)>\heavyw-\kah-\kam$ according to lemma~\ref{lma:flagadv}. Thus 
    $$\potadv(\advs,\mainchild)-\potadv(\advs\l,\mainchild\l)\le \kah+\kam-\mathrm{Adv}(\advs,\mainchild)<2\kah+2\kam-\heavyw.$$
\end{proof}

\begin{lemma}\label{lma:potcase:2.3.2}
    If $\event.\type=\hgen$, $\event.\mathsf{block}.\weight=\heavyw$, $\flagb\l \neq\bot$ and $\flagb=\bot$, we have 
    $$ \potadv(\advs,\mainchild)-\potadv(\advs\l,\mainchild\l)\le \heavyw.$$
\end{lemma}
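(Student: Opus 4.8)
The plan is to reduce the statement to a bound on $\mathrm{Adv}(\advs\l,\mainchild\l)-\mathrm{Adv}(\advs,\mainchild)$ and then exploit the fact that an honest block always extends the pivot chain of its own past graph, which is precisely what prevents the two "bad" effects of this event from each costing a full $\heavyw$.

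First I would record what the event does. Since $\event$ is an $\hgen$ event of an honest weight-$\heavyw$ block $\eblock:=\event.\mathsf{block}$ and $\flagb\l\neq\bot$, Claim~\ref{clm:flagb} tells us $\flagb\l$ is an honest weight-$\heavyw$ block in $\gdta\l$ (hence $\flagb\l\notin\gmin\l$), the second rule of the flag-block update fires so $\flagb=\bot$, and we have $\gmin=\gmin\l$, $\gmax=\gmax\l\cup\{\eblock\}$, $\mathbf{M}=\mathbf{M}\l$, $\gdta=\gdta\l\cup\{\eblock\}$. The common assumption of this subsection gives $\block\in\vcmp\l\cap\vcmp$ and $\mathrm{Old}(\gmin\l,\block)=\true$. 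Next I would use Lemma~\ref{lma:tipmove1} and Lemma~\ref{lma:tipmove2}: one of $\vcmp\l,\vcmp$ is a prefix of the other, and $\tip(\vcmp\l)\prec\tip(\vcmp)$ is impossible here (neither listed condition matches). So either $\vcmp=\vcmp\l$ or $\vcmp$ is a strict prefix of $\vcmp\l$, and splitting on whether $\block=\tip(\vcmp)$ leaves exactly: (I$_0$) $\mainchild=\mainchild\l=\bot$, trivially giving difference $0$; (I) $\mainchild=\mainchild\l\neq\bot$ with $\mainchild\in\vcmp$; and (II) $\mainchild=\bot$, $\mainchild\l\neq\bot$, $\block=\tip(\vcmp)$.

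In case (I), since $\eblock$ has weight $\heavyw\neq1$ and $\eblock\notin\mathbf{M}$, the set $\mathbf{N}$ and $\tw(\tree{\gdta,\mainchild}\cap\mathbf{N})$ are unchanged, so $\potadv(\advs,\mainchild)-\potadv(\advs\l,\mainchild\l)=\mathrm{Adv}(\advs\l,\mainchild)-\mathrm{Adv}(\advs,\mainchild)=A+B$, where $A:=\treew{\gmin\l\cup\{\flagb\l\},\mainchild}-\treew{\gmin\l,\mainchild}=[\mainchild\preceq\flagb\l]\cdot\heavyw\in\{0,\heavyw\}$ and $B:=\sibtreew{\gmax,\mainchild}-\sibtreew{\gmax\l,\mainchild}\in[0,\heavyw]$. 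If $B=0$ the bound $A+B\le\heavyw$ is immediate. If $B>0$, the newly added $\eblock$ must lie in the subtree of some sibling $\block''$ of $\mainchild$; because $\eblock$ is honest and $\chain{\eblock}=\mathrm{Pivot}(\eblock.\past)\circ\eblock$, either $\block''$ is the best child of $\block=\block''.\parent$ in $\eblock.\past$, or $\eblock.\parent=\block$ (so $\block$ is the pivot tip of $\eblock.\past$ and $\mainchild\notin\eblock.\past$). In both situations $\sibtreew{\gmax\l,\mainchild}\ge\treew{\gmax\l,\block''}\ge\treew{\eblock.\past,\block''}\ge\treew{\eblock.\past,\mainchild}\ge\treew{\gmin\l,\mainchild}$, which together with $A\le\heavyw$ forces $\mathrm{Adv}(\advs\l,\mainchild)\le\heavyw$; since $\mathrm{Adv}(\advs,\mainchild)>0$ by Claim~\ref{clm:chainc}.1 (using $\mainchild\in\vcmp$), we get $A+B<\heavyw$. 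Case (II) is handled the same way: $\potadv(\advs,\mainchild)=0$, so I must show $\potadv(\advs\l,\mainchild\l)\ge-\heavyw$; Claim~\ref{clm:chainc}.2 gives $\mathrm{Adv}(\advs,\mainchild\l)\le0$ because $\mainchild\l\in\child(\gmax,\tip(\vcmp))$ and $\mainchild\l\preceq\tip(\vcmp\l)$, and the identical $A/B$ analysis yields $\mathrm{Adv}(\advs\l,\mainchild\l)\le\heavyw$, hence $\potadv(\advs\l,\mainchild\l)\ge(\kah+\kam)-\heavyw-\kah=\kam-\heavyw\ge-\heavyw$.

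The main obstacle is exactly the $B>0$ analysis: a naive estimate only gives $A+B\le2\heavyw$, and the whole content of the lemma is that the true bound is $\heavyw$. The resolution rests on the observation that whenever the honest block $\eblock$ inflates a sibling subtree of $\mainchild$, that sibling was already a best child in $\eblock.\past\subseteq\gmax\l$, hence already at least as heavy as $\mainchild$'s subtree within $\gmin\l$, which alone caps $\mathrm{Adv}(\advs\l,\mainchild)$ at $\heavyw$ and makes the separate flag-removal contribution harmless. The only real care needed is in the degenerate case $\eblock.\parent=\block$ (where $\treew{\eblock.\past,\mainchild}=0$), in checking $\block''$ is genuinely a sibling of $\mainchild$ in $\gmax\l$, and in lining up the subtree-weight monotonicity inequalities; everything else is bookkeeping against the definitions of $\potadv$, $\mathrm{Adv}$, the flag-block update rules, and Claims~\ref{clm:flagb} and \ref{clm:chainc}.
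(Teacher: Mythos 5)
Your proof is correct: the case split on $(\mainchild\l,\mainchild)$, the use of Lemma~\ref{lma:tipmove1}/\ref{lma:tipmove2} to kill the case $\mainchild\l=\bot\wedge\mainchild\neq\bot$, and the reduction to bounding $\mathrm{Adv}(\advs\l,\cdot)-\mathrm{Adv}(\advs,\cdot)$ (after noting $\mathbf{N}$ and the $\min\{\cdot,\kah\}$ term are unchanged) all match the paper. Where you diverge is in how the two non-trivial cases are closed. In Case (I) the paper never entertains $B>0$: it invokes Claim~\ref{clm:chainc}.1 to get $\mathrm{Adv}(\advs,\block')>0$ along $\chain{\mainchild}$, pushes this through Lemma~\ref{lma:treewbound}-style monotonicity into $\eblock.\past$, and concludes via Lemma~\ref{lma:ghost} that $\mainchild\in\mathrm{Pivot}(\eblock.\past)$, hence $\mainchild\prec\eblock$ and $\sibtreew{\gmax,\mainchild}=\sibtreew{\gmax\l,\mainchild}$ exactly; the whole $\heavyw$ then comes from dropping $\flagb\l$. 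You instead permit the sibling term to grow and show that whenever it does, the sibling $\block''$ absorbing $\eblock$ was already a best child of $\block$ in $\eblock.\past\supseteq\gmin\l$, which caps $\mathrm{Adv}(\advs\l,\mainchild)$ at $A\le\heavyw$ and, combined with $\mathrm{Adv}(\advs,\mainchild)>0$, still yields the bound. In Case (II) the paper argues by contradiction with the $\vcmp$-maintenance rule (if $\mathrm{Adv}(\advs\l,\mainchild\l)>\heavyw$ then $\mainchild\l$ would survive into $\vcmp$), whereas you get $\mathrm{Adv}(\advs,\mainchild\l)\le 0$ directly from Claim~\ref{clm:chainc}.2 and reuse the same $A/B$ dichotomy. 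Both routes rest on the same fact — a valid block's chain is the pivot chain of its past graph, plus $\gmin\l\subseteq\eblock.\past\subseteq\gmax\l$ for honest blocks — but yours trades the paper's "prove the sibling term is frozen / derive a contradiction" for a uniform "if the bad branch fires, the prior advantage was already small" estimate, which is arguably more self-contained (it avoids re-deriving pivot membership of $\mainchild$ in $\eblock.\past$ and avoids appealing to the suffix-cutting rule of $\vcmp$); the paper's version buys the sharper intermediate facts $B=0$ and $\mathrm{Adv}(\advs\l,\mainchild\l)\le\heavyw$ as standalone statements. The only places needing care in your write-up, and which you did flag, are the degenerate sub-case $\block''=\eblock$ (where $\treew{\eblock.\past,\mainchild}=0$ because $\mainchild\notin\eblock.\past$) and the verification that $\block''\in\child(\gmax\l,\block)$ when $\block''\neq\eblock$.
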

\begin{proof}
    Let $\eblock:=\event.\mathsf{block}$ and $\graph\l:=\eblock.\past$. Since $\eblock$ is an honest block, according to claim~\ref{clm:graphrel}, $\gmin\l\subseteq\graph\l\subseteq\gmax\l$. Since $\event.\mathsf{block}.\weight=\heavyw$, $\flagb\l \neq\bot$ and $\flagb=\bot$, according to lemma~\ref{lma:tipmove2}, it cannot be $\tip(\vcmp\l)\prec\tip(\vcmp)$. Thus it cannot be $\mainchild\l=\bot\wedge\mainchild\neq\bot$. 

    \textbf{Case 1:} $\mainchild\l\neq\bot$ and $\mainchild\neq\bot$. It will be $\mainchild=\mainchild\l$. 

    Since $\eblock.\weight=\heavyw$ and $\mathbf{N}$ only contains honest blocks with block weight 1, we have $\tree{\gdta\l,\mainchild}\cap \mathbf{N}\l = \tree{\gdta,\mainchild}\cap \mathbf{N}$. So in this case, 
    $$ \potadv(\advs,\mainchild)-\potadv(\advs\l,\mainchild\l)\le \mathrm{Adv}(\advs\l,\mainchild\l)-\mathrm{Adv}(\advs,\mainchild).$$
    Since $\mainchild\in\vcmp$, according to claim~\ref{clm:chainc}.1, $\forall \block'\in \chain{\mainchild},\mathrm{Adv}(\advs,\mainchild)>0$. Since $\flagb=\bot$, we have $\gmin\cup\{\flagb\}=\gmin$. Since $\event.\type=\hgen$, we have $\gmin=\gmin\l$. Recalling that $\gmin\l\subseteq\graph\l\subseteq\gmax\l$, we have
    \begin{align*}
        \forall \block'\in \chain{\mainchild}, \quad & \treew{\graph\l,\block'}-\sibtreew{\graph\l,\block'} \\
        \ge & \treew{\gmin\l,\block'}-\sibtreew{\gmax\l,\block'} \\
        \ge & \treew{\gmin,\block'}-\sibtreew{\gmax,\block'}\\
        >   & 0
    \end{align*}
    According to lemma~\ref{lma:ghost}, $\mainchild\in\mathrm{Pivot}(\graph\l)$. Thus $\mainchild\prec\eblock$. Since $\gmax$ and $\gmax\l$ only differs at block $\eblock$, we have $\sibtreew{\gmax\l,\mainchild}=\sibtreew{\gmax,\mainchild}$. Recalling that $\gmin\l=\gmin$ and $\flagb=\bot$, we have
    $$ \mathrm{Adv}(\advs\l,\mainchild\l)-\mathrm{Adv}(\advs,\mainchild)=\treew{\gmin\l\cup\{\flagb\l\},\mainchild}-\treew{\gmin\cup\{\flagb\},\mainchild}\le \heavyw.$$  

    Combined with the first inequality in this proof, we have proved $\potadv(\advs,\mainchild)-\potadv(\advs\l,\mainchild\l)\le\heavyw$ for this case. 

    \textbf{Case 2:} $\mainchild\l\neq\bot$ and $\mainchild=\bot$. 

    We prove this case by showing that $\mathrm{Adv}(\advs\l,\mainchild\l)\le \heavyw$. If $\mathrm{Adv}(\advs\l,\mainchild\l)> \heavyw$, recalling that $\gmin\l\subseteq\graph\l\subseteq\gmax\l$, we have 
    \begin{align*}
        & \treew{\graph\l,\mainchild\l}-\sibtreew{\graph\l,\mainchild\l}\\
        \ge &\treew{\gmin\l,\mainchild\l}-\sibtreew{\gmax\l,\mainchild\l} \\
        \ge &\treew{\gmin\l\cup \{\flagb\},\mainchild\l} -\heavyw-\sibtreew{\gmax\l,\mainchild\l}\\
        > & 0.
    \end{align*}

    So $\mainchild\l$ is the child with maximum subtree weight of $\block$. If $\block\in\mathrm{Pivot}(\graph\l)$, there must be $\mainchild\l\in \mathrm{Pivot}(\graph\l)$. Thus no matter $\block$ belongs to $\mathrm{Pivot}(\graph\l)$ or not, the sibling blocks of $\mainchild\l$ are not in $\mathrm{Pivot}(\graph\l)$ and $\chain{\eblock}$. So we have $\sibtreew{\gmax\l,\mainchild\l}=\sibtreew{\gmax,\mainchild\l}$. ($\gmax$ and $\gmax\l$ only differs at block $\eblock$.) Recalling that $\gmin=\gmin\l$ and $\flagb=\bot$, we have
    $$ \mathrm{Adv}(\advs\l,\mainchild\l)-\mathrm{Adv}(\advs,\mainchild)=\treew{\gmin\l\cup\{\flagb\l\},\mainchild}-\treew{\gmin\cup\{\flagb\},\mainchild}\le \heavyw.$$  
    
    Thus $\mathrm{Adv}(\advs,\mainchild)\ge \mathrm{Adv}(\advs\l,\mainchild\l)-\heavyw>0$. Since $\pot(\advs,\block)\neq \bot$, we have $\block\in\vcmp$. Since $\mainchild\l\preceq \tip(\vcmp\l)$, $\mathrm{Adv}(\advs,\mainchild\l)>0$ and $\mainchild\l.\parent=\block$, $\mainchild\l$ should also belong to $\vcmp$ according to our rule in maintaining chain $\vcmp$. This contradicts to $\mainchild=\bot$. 

    So in this case, there must be $\mathrm{Adv}(\advs\l,\mainchild\l)\le \heavyw$. And thus
    $$ \potadv(\advs,\mainchild)-\potadv(\advs\l,\mainchild\l)\le -\kah-\kam+\mathrm{Adv}(\advs\l,\mainchild\l)+\kah\le \heavyw-\kam.$$

    \textbf{Case 3:} $\mainchild\l=\bot$ and $\mainchild=\bot$. 

    In this case, $ \potadv(\advs,\mainchild)-\potadv(\advs\l,\mainchild\l)=0<\heavyw$.
\end{proof}

\begin{lemma}\label{lma:potcase:2.3.3}
    If $\event.\type=\allrec$, $\event.\mathsf{block}.\weight=\heavyw$, $\flagb\l=\event.\mathsf{block}$ and $\flagb=\bot$, we have
    $$ \pot(\advs,\mainchild)-\pot(\advs\l,\mainchild\l)=0.$$
\end{lemma}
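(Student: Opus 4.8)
The plan is to observe that this event is essentially inert. By Claim~\ref{clm:flagb} the block $\flagb\l = \event.\mathsf{block}$ is an honest block of weight $\heavyw$ lying in $\gdta\l$, and an $\allrec$ event on it merely moves it out of $\gdta$ into $\gmin$ while resetting the flag to $\bot$. So I would first record the identities $\ggen = \ggen\l$, $\gmax = \gmax\l$, $\mathbf{M} = \mathbf{M}\l$, $\gmin = \gmin\l \cup \{\flagb\l\}$ (hence $\gmin \cup \{\flagb\} = \gmin\l \cup \{\flagb\l\}$, using $\flagb = \bot$ and the $\{\bot\} = \emptyset$ convention), and $\gdta = \gdta\l \backslash \{\flagb\l\}$. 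The immediate consequence I would draw is that $\mathrm{Adv}(\advs, \cdot) = \mathrm{Adv}(\advs\l, \cdot)$ as functions on blocks, since both $\treew{\gmin \cup \{\flagb\}, \cdot}$ and $\sibtreew{\gmax, \cdot}$ are untouched.

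Next I would establish $\vcmp = \vcmp\l$, hence $\mainchild = \mainchild\l$. Lemma~\ref{lma:tipmove2} shows that $\tip(\vcmp\l) \prec \tip(\vcmp)$ would require either $\flagb\l = \flagb$ (false here, since $\flagb\l = \event.\mathsf{block} \neq \bot = \flagb$) or $\event.\type = \hgen$ (false, since $\event.\type = \allrec$), and likewise $\tip(\vcmp) \prec \tip(\vcmp\l)$ would require $\event.\type \in \{\hgen, \mrls\}$ (again false). Combining this with Lemma~\ref{lma:tipmove1} (one of $\vcmp\l, \vcmp$ is a prefix of the other) forces $\tip(\vcmp) = \tip(\vcmp\l)$, so $\vcmp = \vcmp\l$ and $\mainchild = \mainchild\l$.

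Then I would split on $\mainchild$. If $\mainchild = \bot$ then $\mainchild\l = \bot$ and both potential values are $0$. If $\mainchild = \mainchild\l \neq \bot$, the first term $\mathrm{Adv}(\advs, \mainchild) = \mathrm{Adv}(\advs\l, \mainchild\l)$ by the first step, and the remaining term $\min\{\tw(\tree{\gdta, \mainchild} \cap \mathbf{N}), \kah\}$ is also unchanged: $\mathbf{N} = \mathbf{N}\l$ (it depends only on $\gmax$, $\mathbf{M}$ and weight-$1$ blocks), and $\gdta, \gdta\l$ differ only in $\flagb\l$, which has weight $\heavyw \neq 1$ and so is not in $\mathbf{N}$; therefore $\tree{\gdta, \mainchild} \cap \mathbf{N} = \tree{\gdta\l, \mainchild\l} \cap \mathbf{N}\l$. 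Hence $\potadv(\advs, \mainchild) = \potadv(\advs\l, \mainchild\l)$, which is the claim.

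The only delicate point is the identification $\vcmp = \vcmp\l$: the update rule for $\vcmp$ recomputes the whole ``$\mathrm{Adv} > 0$'' descent and then performs a cutoff relative to $\tip(\vcmp\l)$, so a direct argument would have to track that cutoff; piggybacking on the already-proven Lemmas~\ref{lma:tipmove1} and~\ref{lma:tipmove2} bypasses it entirely. I would also note, in case the statement is to be read with $\pot$ rather than $\potadv$, that $\potwith$ is unchanged by the $\allrec$ branch of Lemma~\ref{lma:potcase:1}, and that $\speset$, $\spevalue$, $\potspe$ are unchanged because $\flagb\l$ is an honest block, so deleting it from $\gdta$ leaves $\tree{\gdta, \tip(\vcmp)} \cap \mathbf{M}$ intact.
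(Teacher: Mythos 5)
Your proposal is correct and follows essentially the same route as the paper's proof: invoke Lemma~\ref{lma:tipmove2} (together with Lemma~\ref{lma:tipmove1}) to get $\vcmp=\vcmp\l$ and hence $\mainchild=\mainchild\l$, observe $\gmin\cup\{\flagb\}=\gmin\l\cup\{\flagb\l\}$ and $\gmax=\gmax\l$ so that $\mathrm{Adv}$ is unchanged, and note that removing the weight-$\heavyw$ block $\flagb\l$ from $\gdta$ cannot affect $\tree{\gdta,\mainchild}\cap\mathbf{N}$. Your explicit handling of the $\mainchild=\bot$ case and your remark that the stated $\pot$ should be read as $\potadv$ (as Table~\ref{tab:potsum:3} confirms) are welcome clarifications but do not change the argument.
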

\begin{proof}
    Let $\eblock:=\event.\mathsf{block}$. According to lemma~\ref{lma:tipmove2}, we have $\vcmp\l=\vcmp$ if  $\event.\type=\allrec$, $\flagb\l=\event.\mathsf{block}$ and $\flagb=\bot$. So 
    $$\mainchild\l=\mainchild.$$ 
    Since $\event$ is the $\allrec$ event of $\flagb\l$ and $\flagb=\bot$, we have $\gmin\l\cup\{\flagb\l\}=\gmin\cup\{\flagb\}$. Since $\eblock.\weight=\heavyw$ and $\tree{\gdta,\mainchild}\cap \mathbf{N}$ only contains honest blocks with block weight 1, we have $\tree{\gdta\l,\mainchild\l}\cap \mathbf{N}\l = \tree{\gdta,\mainchild}\cap \mathbf{N}$. So in this case, 
    $$ \pot(\advs,\mainchild)-\pot(\advs\l,\mainchild\l)=0.$$
\end{proof}

\subsubsection{The third component}

Recalling that the third component of potential value is defined as follows. Let $\mainchild=\nxt(\vcmp,\block)$, 
\begin{align*}
    \potspe(\advs ,\block)
    &:=
    \left\{\begin{array}{ll}
        \tw(\tree{\gdta,\mainchild} \cap \mathbf{M}\backslash\speset) & \mainchild\neq \bot \\
        0 & \mainchild = \bot
    \end{array}\right.
\end{align*}


Here we define another intermediate potential value for the third component. Let $\mainchild\l=\nxt(\vcmp\l,\block)$, 
\begin{align*}
    \potspe'(\advs,\advs\l,\block)
    &:=
    \left\{\begin{array}{ll}
        \tw(\tree{\gdta,\mainchild\l} \cap \mathbf{M}\backslash\speset\l) & \mainchild\l\neq \bot \\
        \tw(\tree{\gdta,\block} \cap \mathbf{M}\backslash\speset\l) & \mainchild\l = \bot
    \end{array}\right.
\end{align*}


\begin{lemma}\label{lma:potcase:3.1.1}
    If $\event.\type=\mgen$ and one of the following properties hold,
    \begin{itemize}[nosep]
        \item $\mainchild\l\neq\bot$ and $\mainchild\l\preceq\event.\mathsf{block}$
        \item $\mainchild\l=\bot$ and $\block\preceq\event.\mathsf{block}$
    \end{itemize}
    then we have 
    $$ \potspe'(\advs,\advs\l,\block)-\potspe(\advs\l,\block)\le \event.\mathsf{block}.\weight.$$
    For all the other cases, 
    $$ \potspe'(\advs,\advs\l,\block)-\potspe(\advs\l,\block)\le 0.$$
\end{lemma}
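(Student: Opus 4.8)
The plan is to observe that a $\mgen$ event is essentially inert with respect to every ingredient of $\potspe$ and $\potspe'$, so that the difference in question is in fact identically $0$ and the two claimed inequalities then follow for free.

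First I would unwind how $\psi(\advs\l,\event)$ acts on a $\mgen$ event: by the update rules of Section~\ref{sec:adversary state} it only inserts $\eblock:=\event.\mathsf{block}$ into $\ggen\l$ and $\mathbf{M}\l$, while $\gmax\l$, $\gmin\l$ (hence also $\gdta$) and $\flagb$ are left unchanged; since $\vcmp$ is a deterministic function of $\gmax,\gmin,\flagb$ and $\vcmp\l$, this forces $\vcmp=\vcmp\l$ and therefore $\mainchild\l=\nxt(\vcmp\l,\block)=\nxt(\vcmp,\block)=\mainchild$. Because a freshly generated (malicious) block cannot already lie in the released set, $\eblock\notin\gmax\l\supseteq\gdta$, so $\tree{\gdta,\block'}\cap\mathbf{M}=\tree{\gdta\l,\block'}\cap\mathbf{M}\l$ for every block $\block'$. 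When $\mainchild\l\neq\bot$ this already yields $\potspe'(\advs,\advs\l,\block)=\tw((\tree{\gdta,\mainchild\l}\cap\mathbf{M})\backslash\speset\l)=\tw((\tree{\gdta\l,\mainchild\l}\cap\mathbf{M}\l)\backslash\speset\l)=\potspe(\advs\l,\block)$, i.e.\ the difference is $0$.

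For the remaining case $\mainchild\l=\bot$ I would invoke the common assumption of this subsection that $\pot(\advs\l,\block)\neq\bot$, which forces $\block\in\vcmp\l$ and hence $\block=\tip(\vcmp\l)$; then $\potspe(\advs\l,\block)=0$ by definition, while $\potspe'(\advs,\advs\l,\block)=\tw((\tree{\gdta,\block}\cap\mathbf{M})\backslash\speset\l)=\tw((\tree{\gdta\l,\tip(\vcmp\l)}\cap\mathbf{M}\l)\backslash\speset\l)$. Applying the inclusion $\tmpset\cap\mathbf{M}\subseteq\speset$ of Claim~\ref{clm:spesv} to the event that produced $\advs\l$ gives $\tree{\gdta\l,\tip(\vcmp\l)}\cap\mathbf{M}\l\subseteq\speset\l$, so the last quantity equals $\tw(\emptyset)=0$ and the difference is again $0$.

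Finally, since $\eblock.\weight\ge 0$ always, the identity $\potspe'(\advs,\advs\l,\block)-\potspe(\advs\l,\block)=0$ established in both cases implies simultaneously $\le\event.\mathsf{block}.\weight$ and $\le 0$, so it is immaterial into which of the two bullets of the statement the current event falls. The only step that is not purely mechanical is the sub-case $\mainchild\l=\bot$, where one has to recall that $\speset$ has already absorbed every malicious block sitting under the current common-pivot tip (the inclusion from Claim~\ref{clm:spesv}); everything else is just bookkeeping that a $\mgen$ event disturbs none of $\gmax,\gmin,\gdta,\vcmp,\flagb$ and adds only a block lying outside $\gdta$.
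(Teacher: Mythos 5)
Your observation that a $\mgen$ event leaves $\gmax$, $\gmin$, $\gdta$, $\flagb$ and hence $\vcmp$ untouched, and inserts $\event.\mathsf{block}$ only into $\ggen$ and $\mathbf{M}$ where it lies outside $\gdta$, is correct, and your treatment of the sub-case $\mainchild\l=\bot$ via Claim~\ref{clm:spesv} matches the paper's own Case~1. But this only shows that $\mgen$ events leave the $\potspe$-type quantities unchanged; it does not prove the lemma.

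The clause ``for all the other cases'' quantifies over every event not matching the hypothesis, so it includes all $\hgen$, $\mrls$ and $\allrec$ events, none of which your argument addresses. That is where the actual content of the lemma lives: the paper's proof bounds the difference by $\tw(\tree{\gdta\backslash\gdta\l,\cdot}\cap\mathbf{M})$ and then argues that $\gdta\backslash\gdta\l$ can acquire a malicious block only through a $\mrls$ event whose block falls under the relevant subtree --- in which case the difference can genuinely be positive and is bounded by $\event.\mathsf{block}.\weight$. (This also reveals that the $\mgen$ in the statement is a typo for $\mrls$: the paper's proof, its closing summary, and the use of the lemma in Table~\ref{tab:potsum:1}, where the $\potspe^{\sf \Delta1}$ column is $0$ for the $\mgen$ case and $w$ for the $\mrls$ cases 3.2/3.3, all concern $\mrls$.) Read literally, the ``$\le 0$'' claim is false for such $\mrls$ events, so a proof that silently skips them cannot be completed; read as intended, the case you prove is the trivial one and the case you omit is the theorem.
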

\begin{proof}
    Let $\eblock=\event.\mathsf{block}$. In this proof, we try to figure out the cases with $ \potspe'(\advs,\advs\l,\block)-\potspe(\advs\l,\block)> 0.$

    \textbf{Case 1:} $\mainchild\l=\bot$.

    In this case, we have $\block=\tip(\vcmp\l)$. According to claim~\ref{clm:spesv}, $\tree{\gdta\l,\block}\cap\mathbf{M}\l\subseteq\speset\l$. Since $\mathbf{M}$ and $\mathbf{M}\l$ only differs at blocks which have not been generated when the adversary state is $\advs\l$, we have $\tree{\gdta\l,\block}\cap\mathbf{M}\l=\tree{\gdta\l,\block}\cap\mathbf{M}.$
    \begin{align*}
        & \potspe'(\advs,\advs\l,\block)-\potspe(\advs\l,\block) \\
        = & \tw(\tree{\gdta,\block}\cap\mathbf{M}\backslash\speset\l)\\
        \le & \tw\left(\left(\tree{\gdta,\block}\cap\mathbf{M}\right)\backslash\left(\tree{\gdta\l,\block}\cap\mathbf{M}\right)\right)\\
        = & \tw(\tree{\gdta\backslash\gdta\l,\block}\cap\mathbf{M})
    \end{align*}
    $\tree{\gdta\backslash\gdta\l,\block}\cap\mathbf{M}\neq\emptyset$ only if $\event.\type=\mrls$ and $\block\preceq\eblock$. Since $\gdta$ and $\gdta\l$ can only differ at block $\eblock$, we have 
    $$ \potspe'(\advs,\advs\l,\block)-\potspe(\advs\l,\block)\le \eblock.\weight.$$

    \textbf{Case 2:} $\mainchild\l\neq\bot$.
    
    Since $\mathbf{M}$ and $\mathbf{M}\l$ only differs at blocks which have not been generated when the adversary state is $\advs\l$, we have $\tree{\gdta\l,\mainchild\l}\cap\mathbf{M}\l=\tree{\gdta\l,\mainchild\l}\cap\mathbf{M}.$

    \begin{align*}
        & \potspe'(\advs,\advs\l,\block)-\potspe(\advs\l,\block) \\
        = & \tw(\tree{\gdta,\mainchild\l}\cap\mathbf{M}\backslash\speset\l)-\tw(\tree{\gdta\l,\mainchild\l}\cap\mathbf{M}\backslash\speset\l) \\
        \le & \tw(\tree{\gdta\backslash\gdta\l,\mainchild\l}\cap\mathbf{M}\backslash\speset\l) \\
        \le & \tw(\tree{\gdta\backslash\gdta\l,\mainchild\l}\cap\mathbf{M}).
    \end{align*}

    Similar with case 1, $\tree{\gdta\backslash\gdta\l,\mainchild\l}\cap\mathbf{M}\neq\emptyset$ only if $\event.\type=\mrls$ and $\block\preceq\eblock$. And we have 
    $$ \potspe'(\advs,\advs\l,\block)-\potspe(\advs\l,\block)\le \eblock.\weight.$$

    $\\$

    As a summary, $\potspe'(\advs,\advs\l,\block)-\potspe(\advs\l,\block)\le \eblock.\weight$ always holds. And $\potspe'(\advs,\advs\l,\block)-\potspe(\advs\l,\block)>0$ only if $\mainchild\l\neq\bot\wedge \mainchild\l\preceq \eblock$ or $\mainchild\l=\bot\wedge \block\preceq\eblock$. 
\end{proof}

\begin{lemma}\label{lma:potcase:3.2.1}
    If $\mainchild\l\neq\bot$ and $\mainchild=\bot$, we have 
    $$ (\potspe(\advs,\block)+\spevalue)-(\potspe'(\advs,\advs\l,\block)+\spevalue\l)\le\kam.$$
    For other cases, 
    $$ (\potspe(\advs,\block)+\spevalue)-(\potspe'(\advs,\advs\l,\block)+\spevalue\l)\le 0.$$
\end{lemma}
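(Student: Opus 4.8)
The plan is a short case analysis on whether $\mainchild\l:=\nxt(\vcmp\l,\block)$ and $\mainchild:=\nxt(\vcmp,\block)$ equal $\bot$. Under the standing assumption of this subsection ($\pot(\advs\l,\block)\neq\bot$ and $\pot(\advs,\block)\neq\bot$) we have $\block\in\vcmp\l\cap\vcmp$, so $\mainchild\l=\bot$ iff $\block=\tip(\vcmp\l)$ and $\mainchild=\bot$ iff $\block=\tip(\vcmp)$; moreover, by lemma~\ref{lma:tipmove1}, $\mainchild\l\neq\bot\wedge\mainchild\neq\bot$ forces $\mainchild=\mainchild\l$. I will use three facts immediate from the definitions: (a) $0\le\spevalue-\spevalue\l=\min\{\kam,\tw(\speset\backslash\speset\l)\}$, hence $\spevalue-\spevalue\l\le\kam$ and $\spevalue-\spevalue\l\le\tw(\speset\backslash\speset\l)$; (b) claim~\ref{clm:spesv}, i.e. $\speset\backslash\speset\l\subseteq\tmpset\cap\mathbf{M}$ with $\tmpset=\tree{\gdta,\tip(\vcmp)}$, and $\speset\backslash\speset\l$ disjoint from $\speset\l$; (c) for any block set $A$, $\tw(A\backslash\speset)=\tw(A\backslash\speset\l)-\tw(A\cap(\speset\backslash\speset\l))$, which equals $\tw(A\backslash\speset\l)-\tw(\speset\backslash\speset\l)$ once $\speset\backslash\speset\l\subseteq A$.

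The special case $\mainchild\l\neq\bot$, $\mainchild=\bot$ is immediate: $\potspe(\advs,\block)=0$ by definition and $\potspe'(\advs,\advs\l,\block)=\tw(\tree{\gdta,\mainchild\l}\cap\mathbf{M}\backslash\speset\l)\ge 0$, so the left-hand side equals $-\potspe'(\advs,\advs\l,\block)+(\spevalue-\spevalue\l)\le\spevalue-\spevalue\l\le\kam$. For the three remaining cases the common idea is to arrange for the block set measured by $\potspe(\advs,\block)$ (when $\mainchild\neq\bot$) or by $\potspe'$ (when $\mainchild=\bot$, which then also forces $\mainchild\l=\bot$) to contain $\tmpset$, so that (c) cancels the whole term $\tw(\speset\backslash\speset\l)$ against $\spevalue-\spevalue\l$. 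The required inclusion is cheap: when $\mainchild\neq\bot$ then $\mainchild\in\vcmp$, so $\mainchild\preceq\tip(\vcmp)$ and $\tmpset\subseteq\tree{\gdta,\mainchild}$; when $\mainchild=\mainchild\l=\bot$ then $\block=\tip(\vcmp)$ and $\tmpset=\tree{\gdta,\block}$.

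Carrying this out: if $\mainchild\l=\mainchild\neq\bot$, then $\potspe(\advs,\block)=\tw(\tree{\gdta,\mainchild}\cap\mathbf{M}\backslash\speset)$ and $\potspe'(\advs,\advs\l,\block)=\tw(\tree{\gdta,\mainchild}\cap\mathbf{M}\backslash\speset\l)$, and since $\speset\backslash\speset\l\subseteq\tmpset\cap\mathbf{M}\subseteq\tree{\gdta,\mainchild}\cap\mathbf{M}$, fact (c) gives $\potspe(\advs,\block)=\potspe'(\advs,\advs\l,\block)-\tw(\speset\backslash\speset\l)$. If $\mainchild\l=\bot$, $\mainchild\neq\bot$ (so $\block=\tip(\vcmp\l)$ and $\tree{\gdta,\mainchild}\subseteq\tree{\gdta,\block}$), then $\potspe(\advs,\block)\le\tw(\tree{\gdta,\block}\cap\mathbf{M}\backslash\speset)=\potspe'(\advs,\advs\l,\block)-\tw(\speset\backslash\speset\l)$ by (c) together with $\speset\backslash\speset\l\subseteq\tmpset\cap\mathbf{M}\subseteq\tree{\gdta,\block}\cap\mathbf{M}$. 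If $\mainchild\l=\mainchild=\bot$, then $\potspe(\advs,\block)=0$ and $\potspe'(\advs,\advs\l,\block)=\tw(\tree{\gdta,\block}\cap\mathbf{M}\backslash\speset\l)\ge\tw(\speset\backslash\speset\l)$, since $\speset\backslash\speset\l\subseteq\tree{\gdta,\block}\cap\mathbf{M}$ is disjoint from $\speset\l$. In all three cases, adding $\spevalue-\spevalue\l\le\tw(\speset\backslash\speset\l)$ yields $(\potspe(\advs,\block)+\spevalue)-(\potspe'(\advs,\advs\l,\block)+\spevalue\l)\le 0$, as claimed.

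The arithmetic is routine; the one spot needing care is the set-inclusion bookkeeping showing that in every non-special case the newly counted special-status blocks $\speset\backslash\speset\l$ sit inside the subtree that $\potspe(\advs,\block)$ measures — exactly the inclusion that collapses when $\mainchild=\bot$, where that subtree is empty, which is why that case alone gains the extra $+\kam$. I would also flag that $\potspe'$ is evaluated with the current $\gdta$ and $\mathbf{M}$ (only $\speset\l$ comes from the previous state), so no change in $\gdta$ or $\mathbf{M}$ ever enters these comparisons, consistent with the statement being insensitive to $\event.\type$.
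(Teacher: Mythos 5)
Your proof is correct and follows essentially the same route as the paper: the same four-way case split on whether $\mainchild\l$ and $\mainchild$ are $\bot$, with the key step in each non-special case being that $\speset\backslash\speset\l\subseteq\tree{\gdta,\tip(\vcmp)}\cap\mathbf{M}$ sits inside the subtree measured by the relevant $\potspe$ term, so its weight cancels against $\spevalue-\spevalue\l\le\tw(\speset\backslash\speset\l)$, while the case $\mainchild\l\neq\bot,\mainchild=\bot$ falls back on the crude bound $\spevalue-\spevalue\l\le\kam$. No gaps.
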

\begin{proof}
    We prove this lemma under four cases partitioned by whether $\mainchild=\bot$ and whether $\mainchild\l=\bot$. According to claim~\ref{clm:spesv}.1, we have $\spevalue-\spevalue\l\le \tw(\speset\backslash\speset\l)$.

    \textbf{Case 1:} $\mainchild\l=\bot$ and $\mainchild=\bot$. 

    In this case, according to claim~\ref{clm:spesv}.4, we have $\tree{\gdta,\tip(\vcmp)}\cap\mathbf{M}\subseteq \speset$. Since $\mainchild\l=\bot$, we have $\tip(\vcmp)=\block$. Thus
    \begin{align*}
        & (\potspe(\advs,\block)+\spevalue)-(\potspe'(\advs,\advs\l,\block)+\spevalue\l) \\
        = & \spevalue-\spevalue\l-\tw(\tree{\gdta,\block}\cap\mathbf{M}\backslash\speset\l) \\
        \le & \spevalue-\spevalue\l-\tw(\speset\backslash\speset\l) \\
        \le &0.
    \end{align*}

    \textbf{Case 2:} $\mainchild\l\neq\bot$ and $\mainchild=\bot$. 

    According to claim~\ref{clm:spesv}.2, we have $\spevalue-\spevalue\l\le \kam.$ Thus 
    \begin{align*}
        & (\potspe(\advs,\block)+\spevalue)-(\potspe'(\advs,\advs\l,\block)+\spevalue\l) \\
        = & \spevalue-\spevalue\l-\tw(\tree{\gdta,\mainchild\l}\cap\mathbf{M}\backslash\speset)\\
        \le & \kam.
    \end{align*}

    \textbf{Case 3:} $\mainchild\l=\bot$ and $\mainchild\neq\bot$. 

    According to claim~\ref{clm:spesv}.4, we have $\speset\l\subseteq\speset$ and $\speset\backslash\speset\l\subseteq\tree{\gdta,\tip(\vcmp)}\cap\mathbf{M}$. Since $\mainchild\in \vcmp$, we have $\tree{\gdta,\tip(\vcmp)}\subseteq \tree{\gdta,\mainchild}$. Thus 
    \begin{align*}
        & (\potspe(\advs,\block)+\spevalue)-(\potspe'(\advs,\advs\l,\block)+\spevalue\l) \\
        = & \spevalue-\spevalue\l+\tw(\tree{\gdta,\mainchild}\cap\mathbf{M}\backslash\speset)-\tw(\tree{\gdta,\block}\cap\mathbf{M}\backslash\speset\l) \\
        \le & \spevalue-\spevalue\l+\tw(\tree{\gdta,\mainchild}\cap\mathbf{M}\backslash\speset)-\tw(\tree{\gdta,\mainchild}\cap\mathbf{M}\backslash\speset\l) \\
        = & \spevalue-\spevalue\l-\tw(\tree{\gdta,\mainchild}\cap\mathbf{M}\cap\left(\speset\backslash\speset\l\right)) \\
        = & \spevalue-\spevalue\l-\tw(\speset\backslash\speset\l) \\
        \le & 0.
    \end{align*}
    \textbf{Case 4:} $\mainchild\l\neq\bot$ and $\mainchild\neq\bot$. It will be $\mainchild=\mainchild\l$. For the same reason as case 3, we have
    \begin{align*}
        & (\potspe(\advs,\block)+\spevalue)-(\potspe'(\advs,\advs\l,\block)+\spevalue\l) \\
        = & \spevalue-\spevalue\l+\tw(\tree{\gdta,\mainchild}\cap\mathbf{M}\backslash\speset)-\tw(\tree{\gdta,\mainchild}\cap\mathbf{M}\backslash\speset\l) \\
        \le & 0.
    \end{align*}
\end{proof}

\subsubsection{Collect the case discussions}

Now we collect the previous results and gives the upper bound for block potential value $\pot(\advs,\block)-\pot(\advs\l,\block)$. Similar with the discussion in the second and the third component, we define an intermediate block potential value as 
$$ \pot'(\advs,\advs\l,\block):=\potwith(\advs,\block)+\potadv(\advs,\mainchild\l)+\potspe'(\advs,\advs\l,\block).$$

\begin{lemma}\label{lma:potsum:1}
    If $\flagb=\flagb\l$, we have 
    $$ \pot'(\advs,\advs\l,\block)-\pot(\advs\l,\block)\le\eventweight(\advs\l,\event).$$
\end{lemma}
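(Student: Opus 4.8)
The plan is to expand the left‑hand side as the sum of the three component differences attached to $\potwith$, $\potadv$ and $\potspe$ — namely $\potwith(\advs,\block)-\potwith(\advs\l,\block)$, $\potadv(\advs,\mainchild\l)-\potadv(\advs\l,\mainchild\l)$ and $\potspe'(\advs,\advs\l,\block)-\potspe(\advs\l,\block)$ — and then to bound this sum by a case analysis on $\event.\mathsf{type}$, feeding in the one‑step estimates of Lemmas~\ref{lma:potcase:1}, \ref{lma:potcase:2.1.1}, \ref{lma:potcase:2.1.2} and~\ref{lma:potcase:3.1.1}. Writing $\eblock:=\event.\mathsf{block}$, the hypothesis $\flagb=\flagb\l$ is precisely what confines us to the ``$\flagb=\flagb\l$'' branches of those sub‑lemmas, so each component difference has a clean bound; the remaining work is to add the three bounds and compare the total with $\eventweight(\advs\l,\event)$ in each event type.

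First I would dispatch the three comparatively easy event types. For $\event.\mathsf{type}=\allrec$ we have $\eventweight(\advs\l,\event)=0$: Lemma~\ref{lma:potcase:1} gives the $\potwith$‑difference $=0$, and the ``other cases'' branches of Lemmas~\ref{lma:potcase:2.1.2} and~\ref{lma:potcase:3.1.1} give the $\potadv$‑ and $\potspe'$‑differences $\le 0$, so the total is $\le 0$. For $\event.\mathsf{type}=\mgen$ we have $\eventweight(\advs\l,\event)=\eblock.\weight$: Lemma~\ref{lma:potcase:1} gives the $\potwith$‑difference $\le\eblock.\weight$, Lemma~\ref{lma:potcase:2.1.2} gives the $\potadv$‑difference $\le 0$, and the $\potspe'$‑difference is $0$ since a $\mgen$ event leaves $\gdta$ unchanged and the new block lies outside $\gmax\supseteq\gdta$ (the $\mainchild\l=\bot$ case using in addition $\tree{\gdta\l,\block}\cap\mathbf{M}\l\subseteq\speset\l$ from Claim~\ref{clm:spesv}); the total is $\le\eblock.\weight$. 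For $\event.\mathsf{type}=\hgen$, the $\potwith$‑difference is $0$ (Lemma~\ref{lma:potcase:1}) and the $\potspe'$‑difference is $0$ (the new block is honest, hence not in $\mathbf{M}$), so everything rests on $\potadv$: if $\eblock.\weight=1$ and $\mathrm{Spe}(\advs\l)=\false$, Lemma~\ref{lma:potcase:2.1.1} gives the $\potadv$‑difference $\le -1=\eventweight(\advs\l,\event)$; in every other $\hgen$ sub‑case Lemma~\ref{lma:potcase:2.1.2} gives the $\potadv$‑difference $\le 0$, and one checks that $\flagb=\flagb\l$ forces $\eventweight(\advs\l,\event)=0$ there — in particular a weight‑$\heavyw$ honest $\hgen$ event with $\flagb=\flagb\l$ cannot lie in the event‑value sub‑case whose value is $2\kah+2\kam-\heavyw$ (that sub‑case makes the block the new flag block, so $\flagb\neq\flagb\l$) nor in the one whose value is $\heavyw+\kam$ (that needs $\flagb\l\neq\bot$ and $\flagb=\bot$).

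The delicate case, and the one I expect to be the main obstacle, is $\event.\mathsf{type}=\mrls$: here $\eventweight(\advs\l,\event)=0$, yet both the $\potadv$‑difference and the $\potspe'$‑difference can individually be as large as $\eblock.\weight$, so cancellation against the $\potwith$‑difference is essential. Using that a $\mrls$ event places the malicious block $\eblock$ into $\gmax$ (so membership in $\tree{\gmax,\cdot}$ matches the relation $\preceq$), I would argue: if $\block\npreceq\eblock$ then the $\potwith$‑difference is $0$ and the positive branches of Lemmas~\ref{lma:potcase:2.1.2} and~\ref{lma:potcase:3.1.1} all require $\block\preceq\eblock$ (directly or through $\mainchild\l\preceq\eblock$), so all three differences are $\le 0$. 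If $\block\preceq\eblock$ then the $\potwith$‑difference is $\le -\eblock.\weight$ by Lemma~\ref{lma:potcase:1}, and I split on where $\eblock$ sits relative to $\mainchild\l$: if $\mainchild\l=\bot$ or $\mainchild\l\preceq\eblock$ then $\eblock\in\tree{\gmax,\mainchild\l}$ when $\mainchild\l\neq\bot$ (and $\potadv$ is frozen at $0$ when $\mainchild\l=\bot$), so the positive branch of Lemma~\ref{lma:potcase:2.1.2} does not fire and the $\potadv$‑difference is $\le 0$, while Lemma~\ref{lma:potcase:3.1.1} allows the $\potspe'$‑difference $\le\eblock.\weight$; if instead $\mainchild\l\neq\bot$ and $\mainchild\l\npreceq\eblock$ then $\eblock\in\tree{\gmax,\block}\backslash\tree{\gmax,\mainchild\l}$, so Lemma~\ref{lma:potcase:2.1.2} gives the $\potadv$‑difference $\le\eblock.\weight$ while the positive branch of Lemma~\ref{lma:potcase:3.1.1} (needing $\mainchild\l\preceq\eblock$) does not fire and the $\potspe'$‑difference is $\le 0$. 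In both sub‑cases the total is $\le -\eblock.\weight+\eblock.\weight+0=0=\eventweight(\advs\l,\event)$. The intuition is that a malicious block released under $\block$ either lands in the main‑child subtree, helping $\potspe$ but not $\potadv$, or in a sibling subtree, helping $\potadv$ but not $\potspe$, and in either case it leaves the withheld set and cuts $\potwith$ by exactly its weight; the bookkeeping needing the most care is keeping the $\tree{\gmax,\cdot}$‑based hypotheses aligned with the chain relations $\block\preceq\eblock$ and $\mainchild\l\preceq\eblock$ and handling the degenerate case $\mainchild\l=\bot$, where Lemma~\ref{lma:potcase:3.1.1} states its bound through $\block$ rather than $\mainchild\l$.
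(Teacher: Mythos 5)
Your proposal is correct and follows essentially the same route as the paper: the same decomposition into the $\potwith$, $\potadv$ and $\potspe'$ differences, the same appeals to Lemmas~\ref{lma:potcase:1}, \ref{lma:potcase:2.1.1}, \ref{lma:potcase:2.1.2} and~\ref{lma:potcase:3.1.1}, and the same case split on the event type (your four $\mrls$ sub-cases match the paper's Cases 3.1--3.4 exactly). If anything, your handling of the weight-one $\hgen$ event is slightly more careful than the paper's table, since you explicitly separate the $\mathrm{Spe}(\advs\l)=\true$ sub-case where the event value is $0$ rather than $-1$.
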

\begin{proof}
    First, we define
    \begin{align*}
        \potwith^\Delta&:=\potwith(\advs,\block)-\potwith(\advs\l,\block), \\
        \potadv^{\sf \Delta1}&:=\potadv(\advs,\mainchild\l)-\potadv(\advs\l,\mainchild\l), \\
        \potspe^{\sf \Delta1}&:=\potspe'(\advs,\advs\l,\block)-\potspe(\advs\l,\block). 
    \end{align*}
    Thus we have
    $$ \pot'(\advs,\advs\l,\block)-\pot(\advs\l,\block)=\potwith^\Delta+\potadv^{\sf \Delta1}+ \potspe^{\sf \Delta1}$$

    With the assumption $\flagb\l=\flagb$, we category all the possible cases for $\event$ as follows.
    \begin{itemize}[nosep]
        \item \textbf{Case 1:} $\event.\type=\mgen$.
        \item \textbf{Case 2:} $\event.\type=\allrec$. 
        \item \textbf{Case 3.1:} $\event.\type=\mrls$, $\mainchild\l\neq\bot$ and $\event.\mathsf{block}\in \tree{\gmax,\block}\backslash\tree{\gmax,\mainchild\l}$.
        \item \textbf{Case 3.2:} $\event.\type=\mrls$, $\mainchild\l\neq\bot$ and $\event.\mathsf{block}\in \tree{\gmax,\mainchild\l}$.
        \item \textbf{Case 3.3:} $\event.\type=\mrls$, $\mainchild\l=\bot$ and $\event.\mathsf{block}\in \tree{\gmax,\block}$.
        \item \textbf{Case 3.4:} $\event.\type=\mrls$ and $\event.\mathsf{block}\notin \tree{\gmax,\block}$.
        \item \textbf{Case 4.1:} $\event.\type=\hgen$ and $\event.\mathsf{block}.\weight= 0$. 
        \item \textbf{Case 4.2:} $\event.\type=\hgen$ and $\event.\mathsf{block}.\weight= 1$. 
        \item \textbf{Case 4.3:} $\event.\type=\hgen$, $\event.\mathsf{block}.\weight= \heavyw$ and $\mathrm{Spe}(\advs\l)=\true$. 
        \item \textbf{Case 4.4:} $\event.\type=\hgen$,  $\event.\mathsf{block}.\weight= \heavyw$ and $\mathrm{Spe}(\advs\l)=\false$. According to our rule in updating flag block, if $\flagb\l=\flagb$ in this case, there must be $\flagb\l=\flagb=\bot$ and that $\gdta\l$ has an honest block with block weight $\heavyw$.
    \end{itemize}
  
    \begin{table}[!htbp]
        \begin{center}
            Let $w:=\event.\mathsf{block}.\weight$. \\
            \begin{tabular}{|l|l|l|l||l|}
            \hline
                & $\potwith^\Delta$  & $\potadv^{\sf \Delta1}$   & $\potspe^{\sf \Delta1}$   & $\eventweight(\advs\l,\event)$ \\
            \hline
            \textbf{Case 1  }  & $w$   & 0     & 0     & $w$ \\
            \hline
            \textbf{Case 2  }  & 0     & 0     & 0     & 0 \\
            \hline
            \textbf{Case 3.1}  & $-w$  & $w$   & 0     & 0 \\
            \hline
            \textbf{Case 3.2}  & $-w$  & 0     & $w$   & 0 \\
            \hline
            \textbf{Case 3.3}  & $-w$  & 0     & $w$   & 0 \\
            \hline
            \textbf{Case 3.4}  & 0     & 0     & 0     & 0 \\
            \hline
            \textbf{Case 4.1}  & 0     & 0     & 0     & 0 \\
            \hline
            \textbf{Case 4.2}  & 0     & $-1$  & 0     & $-1$\\
            \hline
            \textbf{Case 4.3}  & 0     & 0     & 0     & 0 \\
            \hline
            \textbf{Case 4.4}  & 0     & 0     & 0     & 0 \\
            \hline
            \end{tabular}%
        \label{tab:addlabel}%
        \end{center}
        \caption{The upper bounds for each component under different cases (Lemma~\ref{lma:potsum:1})}\label{tab:potsum:1}
    \end{table}%

    Table~\ref{tab:potsum:1} shows the upper bounds for $\potwith^\Delta$, $\potadv^{\sf \Delta1}$ and $\potspe^{\sf \Delta1}$ under difference cases. $w$ denotes $\event.\mathsf{block}.\weight$. 
    The upper bounds for $\potwith^\Delta$ follow lemma~\ref{lma:potcase:1}. 
    The upper bounds for $\potadv^{\sf \Delta1}$ follow lemma~\ref{lma:potcase:2.1.2} except case 4.2, which follows lemma~\ref{lma:potcase:2.1.1}. 
    The upper bounds for $\potspe^{\sf \Delta1}$ follow lemma~\ref{lma:potcase:3.1.1}. The last column shows $\eventweight(\advs\l,\event)$ under different cases. For each case (each row in the table), we can check that $$\potwith^\Delta+\potadv^{\sf \Delta1}+\potspe^{\sf \Delta1}\le \eventweight(\advs\l,\event).$$ 
  
\end{proof}

\begin{lemma}\label{lma:potsum:2}
    If $\flagb\l=\flagb$, we have
    $$ (\pot(\advs,\block)+\spevalue)-(\pot'(\advs,\advs\l,\block)+\spevalue\l)\le 0.$$
\end{lemma}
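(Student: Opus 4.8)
The plan is to reduce the claim to the two ``collect'' lemmas already established for the second and third components of the block potential value. First I would expand both sides using the definitions: since
$$\pot(\advs,\block)=\potwith(\advs,\block)+\potadv(\advs,\mainchild)+\potspe(\advs,\block)\quad\text{and}\quad\pot'(\advs,\advs\l,\block)=\potwith(\advs,\block)+\potadv(\advs,\mainchild\l)+\potspe'(\advs,\advs\l,\block),$$
and the $\potwith(\advs,\block)$ term (which does not reference $\vcmp$ or $\vcmp\l$) is common to both, subtracting and adding $\spevalue-\spevalue\l$ yields the identity
$$(\pot(\advs,\block)+\spevalue)-(\pot'(\advs,\advs\l,\block)+\spevalue\l)=\bigl(\potadv(\advs,\mainchild)-\potadv(\advs,\mainchild\l)\bigr)+\Bigl((\potspe(\advs,\block)+\spevalue)-(\potspe'(\advs,\advs\l,\block)+\spevalue\l)\Bigr).$$
Under the running assumptions of this sub-section, $\pot(\advs,\block)\neq\bot$ and $\pot(\advs\l,\block)\neq\bot$, so every quantity above is a well-defined integer.

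Next I would bound the two summands separately by invoking Lemma~\ref{lma:potcase:2.2.1} for the first and Lemma~\ref{lma:potcase:3.2.1} for the second. The key observation is that both lemmas case-split on exactly the same condition, namely whether ``$\mainchild\l\neq\bot$ and $\mainchild=\bot$'' holds. In that distinguished case, Lemma~\ref{lma:potcase:2.2.1} gives $\potadv(\advs,\mainchild)-\potadv(\advs,\mainchild\l)\le-\kam$ while Lemma~\ref{lma:potcase:3.2.1} gives $(\potspe(\advs,\block)+\spevalue)-(\potspe'(\advs,\advs\l,\block)+\spevalue\l)\le\kam$, so the two contributions cancel and the right-hand side of the identity is $\le 0$. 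In the complementary case (both $\mainchild$ and $\mainchild\l$ equal to $\bot$, or both non-$\bot$ and hence equal by Lemma~\ref{lma:tipmove1}, or $\mainchild\l=\bot$ with $\mainchild\neq\bot$), each lemma gives an upper bound of $0$ for its respective summand, so again the total is $\le 0$. This exhausts all cases and completes the proof; note that the hypothesis $\flagb\l=\flagb$ is not actually needed for this step—it is inherited so that this lemma pairs cleanly with Lemma~\ref{lma:potsum:1} when the two are combined to bound $\pot(\advs,\block)-\pot(\advs\l,\block)$ against the event value.

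There is no genuine obstacle here: the lemma is pure bookkeeping. The only point requiring care is making sure the $+\kam$ penalty appearing in the $\potspe$ bound and the $-\kam$ saving appearing in the $\potadv$ bound are triggered by the \emph{same} configuration of $\mainchild$ and $\mainchild\l$, so that they offset rather than accumulate; this is immediate once one checks that the case partitions in Lemma~\ref{lma:potcase:2.2.1} and Lemma~\ref{lma:potcase:3.2.1} are literally identical.
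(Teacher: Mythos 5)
Your proof is correct and follows essentially the same route as the paper: decompose the difference into the $\potadv$ and $\potspe$ increments (the $\potwith$ term cancels), then apply Lemma~\ref{lma:potcase:2.2.1} and Lemma~\ref{lma:potcase:3.2.1}, observing that the $-\kam$ and $+\kam$ bounds are triggered by the identical case $\mainchild\l\neq\bot\wedge\mainchild=\bot$ and therefore offset. Your explicit write-out of the decomposition identity and the remark that the hypothesis $\flagb\l=\flagb$ is not actually used are both accurate refinements of the paper's terser presentation.
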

\begin{proof}
    First we define 
    \begin{align*}
        \potadv^{\sf \Delta2}&:=\potadv(\advs,\mainchild)-\potadv(\advs,\mainchild\l), \\
        \potspe^{\sf \Delta2}&:=(\pot(\advs,\block)+\spevalue)-(\pot'(\advs,\advs\l,\block)+\spevalue\l). 
    \end{align*}

    If $\mainchild\l\neq\bot$ and $\mainchild=\bot$, we have $\potadv^{\sf \Delta2}\le-\kam$ (lemma~\ref{lma:potcase:2.2.1}) and $\potspe^{\sf \Delta2}\le\kam$ (lemma~\ref{lma:potcase:3.2.1}). 
    
    For the other cases, we have $\potadv^{\sf \Delta2}\le0$ (lemma~\ref{lma:potcase:2.2.1}) and $\potspe^{\sf \Delta2}\le0$ (lemma~\ref{lma:potcase:3.2.1}).
    
    Thus we have 
    
    $$  \pot(\advs,\block)-\pot'(\advs,\advs\l,\block)=\potadv^{\sf \Delta2}+\potspe^{\sf \Delta2}\le 0.$$
\end{proof}

\begin{lemma}\label{lma:potsum:3}
    If $\flagb\l\neq\flagb$, we have
    $$ (\pot(\advs,\block)+\spevalue)-(\pot'(\advs,\advs\l,\block)+\spevalue\l)\le \eventweight(\advs\l,\event).$$
\end{lemma}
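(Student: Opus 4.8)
The plan is to prove Lemma~\ref{lma:potsum:3} by the same three‑component bookkeeping used for Lemmas~\ref{lma:potsum:1} and~\ref{lma:potsum:2}, now restricted to the events that actually move the flag block. By the flag‑update rules of Section~\ref{sec:adversary state}, $\flagb\l\neq\flagb$ can occur in exactly one of three ways: (i) $\event$ is an $\hgen$ event of a weight‑$\heavyw$ block with $\flagb\l\neq\bot$ and $\flagb=\bot$; (ii) $\event$ is the $\allrec$ event of $\flagb\l$, with $\flagb=\bot$; (iii) $\event$ is an $\hgen$ event of a weight‑$\heavyw$ block with $\flagb\l=\bot$ and $\flagb=\event.\mathsf{block}$ (the freshly created flag block). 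First I would read off $\eventweight(\advs\l,\event)$ in each: in (i), Claim~\ref{clm:flagb} makes $\flagb\l$ an honest weight‑$\heavyw$ block of $\gdta\l$, so $\eventweight=\heavyw+\kam\ge 0$; in (ii), $\eventweight=0$; in (iii), the only flag‑creation rule forces $\mathrm{Spe}(\advs\l)=\false$ and no honest weight‑$\heavyw$ block in $\gdta\l$, so $\eventweight=2\kah+2\kam-\heavyw$, which is $\le 0$ since $\heavyw\ge 2\kah+2\kam$.

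Since the $\potwith$ term is untouched by an $\hgen$ or an $\allrec$ event (Lemma~\ref{lma:potcase:1}), the quantity to bound reduces to a $\potadv$‑part plus a $\potspe$‑with‑$\spevalue$‑part, exactly the decomposition into $\potadv^{\Delta}+\potspe^{\Delta}$ appearing in the proof of Lemma~\ref{lma:potsum:2}; in the flag‑creation case the $\potadv$ drop must be taken against $\pot(\advs\l,\block)$, i.e.\ it is the quantity $\potadv(\advs,\mainchild)-\potadv(\advs\l,\mainchild\l)$ that supplies the margin. For the $\potadv$‑part I would invoke Lemmas~\ref{lma:potcase:2.3.1}, \ref{lma:potcase:2.3.2}, \ref{lma:potcase:2.3.3}, which give $\potadv(\advs,\mainchild)-\potadv(\advs\l,\mainchild\l)\le 2\kah+2\kam-\heavyw$ in case (iii), $\le\heavyw$ in case (i), and $=0$ in case (ii); for the $\potspe$‑with‑$\spevalue$‑part I would split it into $\potspe^{\Delta1}$ and $\potspe^{\Delta2}$ and bound them with Lemmas~\ref{lma:potcase:3.1.1} ($\le 0$ for $\hgen$/$\allrec$ events) and~\ref{lma:potcase:3.2.1} ($\le\kam$ only in the degenerate sub‑case $\mainchild\l\neq\bot$, $\mainchild=\bot$, and $\le 0$ otherwise). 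Adding up per case: $0+\heavyw+\kam=\eventweight$ for (i); $0+0+0=\eventweight$ for (ii), using that an $\allrec$ event of the flag block leaves $\vcmp=\vcmp\l$ (Lemma~\ref{lma:potcase:2.3.3}) so $\mainchild=\mainchild\l$ and $\potspe^{\Delta2}\le 0$; and $0+(2\kah+2\kam-\heavyw)+(\le 0)=\eventweight$ for (iii).

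Case (iii) is where the content sits, and its engine is Lemma~\ref{lma:flagadv}: because $\event$ is the $\hgen$ event of the fresh flag block $\flagb$ with $\flagb\l=\bot$, every block of $\chain{\flagb}$ gains $\mathrm{Adv}(\advs,\cdot)>\heavyw-\kah-\kam\ge\kah+\kam$, so $\chain{\flagb}$ becomes a prefix of $\vcmp$ (Claim~\ref{clm:chainc}.3) while $\vcmp\l$ is a prefix of $\chain{\flagb}$ (Lemma~\ref{lma:eptinpivot}, applicable because $\flagb\l=\bot$); moreover $\pot(\advs,\block)\neq\bot$ gives $\mathrm{Old}(\gmin,\block)=\true$ whereas the just‑generated $\flagb$ is not in $\gmin$, so $\block\prec\flagb$, hence $\mainchild=\nxt(\vcmp,\block)\neq\bot$ lies on $\chain{\flagb}$ and $\potadv(\advs,\mainchild)\le 2\kah+2\kam-\heavyw$. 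The same fact controls $\mainchild\l$: either $\mainchild\l=\bot$ (so the $\pot'$/$\pot(\advs\l)$ gap is trivial), or $\mainchild\l=\mainchild$ and it too lies on $\chain{\flagb}$, which is the harder sub‑case because then the $\heavyw$‑sized drop also sits inside $\pot(\advs\l,\block)$ through $\potadv(\advs,\mainchild\l)$ (cf.\ Lemma~\ref{lma:potcase:2.3.1}).

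I expect the main obstacle to be precisely that sub‑case of (iii) with $\mainchild\l=\mainchild\neq\bot$: one must run the $\potspe$‑versus‑$\potspe'$ comparison together with the $\spevalue-\spevalue\l$ increase using Claims~\ref{clm:spesv}.2--\ref{clm:spesv}.4 tightly — observing that $\speset\setminus\speset\l\subseteq\tree{\gdta,\tip(\vcmp)}\cap\mathbf{M}\subseteq\tree{\gdta,\mainchild}\cap\mathbf{M}$ so that the $\potspe$ decrement exactly cancels the capped $\spevalue$ increment — rather than settling for the crude bound $\le 0$, so that the residual genuinely stays below the negative target $2\kah+2\kam-\heavyw$. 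Once these pieces are in hand, the whole argument is a finite case check entirely parallel to Table~\ref{tab:potsum:1} in the proof of Lemma~\ref{lma:potsum:1}.
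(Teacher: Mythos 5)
Your proposal is correct and follows essentially the same route as the paper: the same three-way case split driven by the flag-update rules, the same decomposition into $\potwith^\Delta$, $\potadv^\Delta$, $\potspe^{\sf\Delta 1}$, $\potspe^{\sf\Delta 2}$, and the same invocations of Lemmas~\ref{lma:potcase:1}, \ref{lma:potcase:2.3.1}--\ref{lma:potcase:2.3.3}, \ref{lma:potcase:3.1.1} and \ref{lma:potcase:3.2.1}, yielding identical per-case totals $2\kah+2\kam-\heavyw$, $\heavyw+\kam$ and $0$. The ``exact cancellation'' refinement you anticipate needing in the flag-creation sub-case with $\mainchild\l=\mainchild\neq\bot$ is not actually required — the crude bound $\potspe^{\sf\Delta 2}\le 0$ already suffices there, exactly as in the paper's table.
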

\begin{proof}
    First we define 
    \begin{align*}
        \potwith^\Delta&:=\potwith(\advs,\block)-\potwith(\advs\l,\block), \\
        \potadv^\Delta&:=\potadv(\advs,\mainchild)-\potadv(\advs\l,\mainchild\l), \\
        \potspe^{\sf \Delta1}&:=\potspe'(\advs,\advs\l,\block)-\potadv(\advs\l,\block), \\
        \potspe^{\sf \Delta2}&:=(\potspe(\advs,\block)+\spevalue)-(\potspe'(\advs,\advs\l,\block)+\spevalue\l). 
    \end{align*}

    According to our rule in updating flag block, if $\flagb\l\neq\flagb$, $\event$ has three possible cases. 
    \begin{itemize}[nosep]
        \item \textbf{Case 1:} $\event.\type=\hgen$, $\event.\mathsf{block}.\weight=\heavyw$, $\flagb\l=\bot$ and $\flagb=\event.\mathsf{block}$. According to lemma~\ref{lma:tipmove2}, in this case, it cannot be $\mainchild\l\neq\bot \wedge \mainchild=\bot$.  
        \item \textbf{Case 2:} $\event.\type=\hgen$, $\event.\mathsf{block}.\weight=\heavyw$, $\flagb\l\neq\bot$ and $\flagb=\bot$. Notice that $\flagb\l\in \gdta\l$ is an honest block with block weight $\heavyw$.
        \item \textbf{Case 3:} $\event.\type=\allrec$, $\event.\mathsf{block}.\weight=\heavyw$, $\flagb\l=\event.\mathsf{block}$ and $\flagb=\bot$.  According to lemma~\ref{lma:tipmove2}, in this case, we have $\vcmp\l=\vcmp$ and thus $\mainchild\l=\mainchild$. 
    \end{itemize}

    \begin{table}[htbp]
        \centering
        \begin{tabular}{|l|l|l|l|l||l|}
        \hline
                & $\potwith^\Delta$  & $\potadv^\Delta$   & $\potspe^{\sf \Delta1}$ & $\potspe^{\sf \Delta2}$   & $\eventweight(\advs\l,\event)$ \\
        \hline
        \textbf{Case: 1  }  & 0     & $2\kah+2\kam-\heavyw$    & 0     & 0     & $2\kah+2\kam-\heavyw$  \\
        \hline
        \textbf{Case: 2  }  & 0     & $\heavyw$     & 0     & $\kam$    & $\heavyw+\kam$ \\
        \hline
        \textbf{Case: 3  }  & 0     & 0   & 0     & 0   & 0 \\
        \hline
        \end{tabular}%
        \caption{The upper bounds for each component under different cases (Lemma~\ref{lma:potsum:3})}\label{tab:potsum:3}
  \end{table}%
  
  Table~\ref{tab:potsum:3} shows the upper bounds for $\potwith^\Delta$, $\potadv^\Delta$, $\potspe^{\sf \Delta1}$ and $\potspe^{\sf \Delta2}$ under difference cases. 
  The upper bounds for $\potwith^\Delta$ follow lemma~\ref{lma:potcase:1}. 
  The upper bounds for $\potadv^\Delta$ follow lemma~\ref{lma:potcase:2.3.1} (case 1), lemma~\ref{lma:potcase:2.3.2} (case 2) and lemma~\ref{lma:potcase:2.3.3} (case 3).
  The upper bounds for $\potspe^{\sf \Delta1}$ follow lemma~\ref{lma:potcase:3.1.1}. 
  The upper bounds for $\potspe^{\sf \Delta2}$ follow lemma~\ref{lma:potcase:3.2.1}. 
  The last column shows $\eventweight(\advs\l,\event)$ under different cases. For each case (each row in the table), we can check that $$\potwith^\Delta+\potadv^\Delta+\potspe^{\sf \Delta1}+\potspe^{\sf \Delta2}\le \eventweight(\advs\l,\event).$$ 

\end{proof}

Now we reach the result for the section~\ref{sec:potcase}. 
\begin{lemma}\label{lma:potcase:final}
    \statecondm. For any block $\block$, if $\pot(\advs\l,\block)\neq\bot$ and $\pot(\advs,\block)\neq\bot$, we have 
    $$(\pot(\advs,\block)+\spevalue)-(\pot(\advs\l,\block)+\spevalue\l)\le\eventweight(\advs\l,\event).$$ 
\end{lemma}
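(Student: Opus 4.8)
The statement to prove is Lemma~\ref{lma:potcase:final}, which bounds the increment $(\pot(\advs,\block)+\spevalue)-(\pot(\advs\l,\block)+\spevalue\l)$ by the event value $\eventweight(\advs\l,\event)$ for any block $\block$ with $\pot(\advs\l,\block)\neq\bot$ and $\pot(\advs,\block)\neq\bot$. The natural approach is to split on whether the flag block changes across the event, i.e.\ whether $\flagb\l=\flagb$ or $\flagb\l\neq\flagb$, and in each branch assemble the component-wise bounds already established in the subsection via the intermediate potentials $\pot'(\advs,\advs\l,\block)$ and $\potspe'(\advs,\advs\l,\block)$.

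First I would treat the case $\flagb\l=\flagb$. Here I write the target increment as a telescoping sum through the intermediate quantity $\pot'(\advs,\advs\l,\block)+\spevalue\l$: namely
$$(\pot(\advs,\block)+\spevalue)-(\pot(\advs\l,\block)+\spevalue\l)=\bigl((\pot(\advs,\block)+\spevalue)-(\pot'(\advs,\advs\l,\block)+\spevalue\l)\bigr)+\bigl(\pot'(\advs,\advs\l,\block)-\pot(\advs\l,\block)\bigr).$$
Lemma~\ref{lma:potsum:2} bounds the first bracket by $0$ and Lemma~\ref{lma:potsum:1} bounds the second bracket by $\eventweight(\advs\l,\event)$, so their sum is at most $\eventweight(\advs\l,\event)$, which is exactly what we want in this branch.

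Second I would treat the case $\flagb\l\neq\flagb$. Here the decomposition is slightly different: $\pot(\advs,\block)-\pot(\advs\l,\block)$ still passes through $\pot'(\advs,\advs\l,\block)$, but now the jump $\pot'(\advs,\advs\l,\block)-\pot(\advs\l,\block)$ is handled together with the $\spevalue$ shift, since with a changing flag block the component-wise accounting is organized around $\pot'$ and $\potspe'$. Concretely, Lemma~\ref{lma:potsum:3} already states $(\pot(\advs,\block)+\spevalue)-(\pot'(\advs,\advs\l,\block)+\spevalue\l)\le\eventweight(\advs\l,\event)$, and in each of the three flag-change cases (Case 1: new flag block; Case 2: flag block dropped by an $\hgen$ event; Case 3: flag block dropped by its $\allrec$ event) Lemma~\ref{lma:potcase:1} gives $\potwith^\Delta=0$, so $\pot'(\advs,\advs\l,\block)\ge\pot(\advs\l,\block)$ cannot happen in a harmful way — more precisely, since $\potspe^{\sf \Delta1}\le 0$ and $\potadv(\advs,\mainchild\l)\le\potadv(\advs\l,\mainchild\l)$ fails only in controlled situations, I should verify $\pot'(\advs,\advs\l,\block)\le\pot(\advs\l,\block)+\eventweight(\advs\l,\event)$ is \emph{not} needed here — rather, what I want is simply $(\pot(\advs,\block)+\spevalue)-(\pot(\advs\l,\block)+\spevalue\l)\le\eventweight(\advs\l,\event)$, and Lemma~\ref{lma:potsum:3} already delivers this provided I can show $\pot'(\advs,\advs\l,\block)\le\pot(\advs\l,\block)$ in each flag-change case. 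This last inequality follows from Lemma~\ref{lma:potcase:1} ($\potwith^\Delta=0$), Lemma~\ref{lma:potcase:3.1.1} ($\potspe^{\sf \Delta1}\le 0$ since none of the flag-change events is an $\mgen$ event hitting the relevant subtree), and the fact that $\potadv(\advs,\mainchild\l)\le\potadv(\advs\l,\mainchild\l)$ because a flag-change via $\hgen$/$\allrec$ of a weight-$\heavyw$ block does not increase $\sibtreew{\gmax,\mainchild\l}$ relative to the corresponding change in $\treew{\gmin\cup\{\flagb\},\mainchild\l}$ — the latter is the content already extracted inside the proofs of Lemmas~\ref{lma:potcase:2.3.1}–\ref{lma:potcase:2.3.3}.

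\textbf{Main obstacle.} The delicate point is the case $\flagb\l\neq\flagb$: unlike the $\flagb\l=\flagb$ case, there is no clean two-step factorization, because the act of adding or removing $\flagb$ from $\gmin\cup\{\flagb\}$ simultaneously perturbs $\mathrm{Adv}$, the witness subtree, and (when the flag becomes a new block) the chain $\vcmp$ itself via Lemma~\ref{lma:flagadv}. I expect most of the work to be bookkeeping: carefully checking that in each of the three flag-change sub-cases the sum of the four component increments $\potwith^\Delta+\potadv^\Delta+\potspe^{\sf \Delta1}+\potspe^{\sf \Delta2}$ matches the tabulated $\eventweight(\advs\l,\event)$ values $2\kah+2\kam-\heavyw$, $\heavyw+\kam$, and $0$ respectively — which is precisely what Table~\ref{tab:potsum:3} records. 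So the honest statement of the proof is: apply Lemmas~\ref{lma:potsum:1} and~\ref{lma:potsum:2} when $\flagb\l=\flagb$, and Lemma~\ref{lma:potsum:3} together with the observation that $\pot'(\advs,\advs\l,\block)\le\pot(\advs\l,\block)$ (from the $\potwith^\Delta=0$, $\potadv^{\sf\Delta1}\le 0$, $\potspe^{\sf\Delta1}\le 0$ entries) when $\flagb\l\neq\flagb$, and conclude in both cases.
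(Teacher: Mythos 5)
Your case split and your treatment of the $\flagb\l=\flagb$ branch coincide exactly with the paper's proof: telescope through $\pot'(\advs,\advs\l,\block)$ and add the bounds of Lemma~\ref{lma:potsum:1} and Lemma~\ref{lma:potsum:2}. That part is fine.

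The $\flagb\l\neq\flagb$ branch as you have written it is unsound, though for a repairable reason. What the proof of Lemma~\ref{lma:potsum:3} actually establishes is the bound on the \emph{full} increment: the four quantities it sums, $\potwith^\Delta+\potadv^\Delta+\potspe^{\sf \Delta1}+\potspe^{\sf \Delta2}$, telescope exactly to $(\pot(\advs,\block)+\spevalue)-(\pot(\advs\l,\block)+\spevalue\l)$, since $\potadv^\Delta$ there is the jump from $\potadv(\advs\l,\mainchild\l)$ all the way to $\potadv(\advs,\mainchild)$ and $\potspe^{\sf \Delta1}+\potspe^{\sf \Delta2}$ covers the full change of the special component. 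The $\pot'$ appearing in the displayed statement of Lemma~\ref{lma:potsum:3} is a typo for $\pot(\advs\l,\block)$, and the paper's proof of the present lemma simply invokes it in that corrected form; no auxiliary inequality is needed. By instead taking the header literally and stacking the extra claim $\pot'(\advs,\advs\l,\block)\le\pot(\advs\l,\block)$ on top of it, you chain through a premise that is actually false in the new-flag-block case: there $\eventweight(\advs\l,\event)=2\kah+2\kam-\heavyw\le 0$, while $(\pot(\advs,\block)+\spevalue)-(\pot'(\advs,\advs\l,\block)+\spevalue\l)=\potadv^{\sf \Delta2}+\potspe^{\sf \Delta2}$ can equal $0$ (e.g.\ $\mainchild=\mainchild\l\neq\bot$ and no change to $\speset$), so the literal inequality $\le\eventweight$ fails. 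Your auxiliary claim $\pot'\le\pot(\advs\l,\block)$ is true, but combined with a false premise it does not yield a proof. The fix is to discard the auxiliary step and cite Lemma~\ref{lma:potsum:3} for the full difference directly, which is what its own proof (Table~\ref{tab:potsum:3}) delivers.
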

\begin{proof}
    If $\flagb\l=\flagb$, we have $\pot'(\advs,\advs\l,\block)-\pot(\advs,\block\l)\le \eventweight(\advs\l,\event)$ (lemma~\ref{lma:potsum:1}) and $(\pot(\advs,\block)+\spevalue)-(\pot'(\advs,\advs\l,\block)+\spevalue\l)\le 0$ (lemma~\ref{lma:potsum:2}). 
    If $\flagb\l\neq\flagb$,  we have $(\pot(\advs,\block)+\spevalue)-(\pot(\advs\l,\block)+\spevalue\l)\le \eventweight(\advs\l,\event)$ (lemma~\ref{lma:potsum:3}). So we always have 
    $$ (\pot(\advs,\block)+\spevalue)-(\pot(\advs\l,\block)+\spevalue\l)\le\eventweight(\advs\l,\event).$$
\end{proof}
\subsection{Case discussions for potential value (Part 2)}\label{sec:cpotcase}

\paragraph{Common settings}
\statecondm. 
In this sub-section, we study the upper bound of block potential value for the case not covered in section~\ref{sec:potcase}. If $\pot(\advs\l,\block)=\bot$ and $\pot(\advs,\block)\neq\bot$, we can not estimate the upper bound of $\pot(\advs,\block)$ by $\pot(\advs,\block)-\pot(\advs\l,\block)$. We try to estimate $\pot(\advs,\block)$ by $\pot(\advs,\block)-\pot(\advs\l,\tip(\vcmp\l))$ instead.

In this sub-section, we assume  $\pot(\advs\l,\block)=\bot$, $\pot(\advs,\block)\neq\bot$ and $\mathrm{Old}(\gmin\l,\block)=\true$. All the lemmas in section~\ref{sec:cpotcase} are discussed under these assumptions. We will not repeat them in each lemma (except lemma~\ref{lma:cpotcase:final}). 

We define symbol $\mainchild:=\nxt(\vcmp,\block)$ for given $\vcmp$ and $\block$ in the context. 
According to lemma~\ref{lma:tipmove1}, one of $\vcmp\l$ and $\vcmp$ must be the prefix of another. 
Since $\pot(\advs\l,\block)=\bot$ and ${Old}(\gmin\l,\block)=\true$, we have $\block\notin\vcmp\l$. Since $\block\in\vcmp$, $\vcmp\l$ must be a prefix of $\vcmp$ and it is strictly shorter than $\vcmp$. It must be 
$$ \tip(\vcmp\l)\prec\block\preceq\tip(\vcmp).$$
According to lemma~\ref{lma:tipmove2}, we have the following claims. All the proofs will refer this claim implicitly.

\begin{claim}\label{clm:cpotcase}
    Under the common assumption of section~\ref{sec:cpotcase}, there could be one of following two cases for event $\event$. 
    \begin{itemize}
        \item $\event.\type=\allrec$ and $\flagb\l=\flagb$
        \item $\event=\hgen$, $\flagb\l=\bot$ and $\flagb=\event.\mathsf{block}$
    \end{itemize}
\end{claim}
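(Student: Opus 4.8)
The plan is to obtain Claim~\ref{clm:cpotcase} as a short corollary of Lemma~\ref{lma:tipmove2}, reusing the structural facts about $\vcmp\l$ and $\vcmp$ that were already extracted in the paragraph preceding the claim. First I would recall the running assumptions of this subsection: $\pot(\advs\l,\block)=\bot$, $\pot(\advs,\block)\neq\bot$, and $\mathrm{Old}(\gmin\l,\block)=\true$. By the definition of the block potential value, $\pot(\advs\l,\block)=\bot$ can only occur if $\block\notin\vcmp\l$ or $\mathrm{Old}(\gmin\l,\block)=\false$; the latter is ruled out by assumption, so I conclude $\block\notin\vcmp\l$. Dually, $\pot(\advs,\block)\neq\bot$ forces $\block\in\vcmp$ (and $\mathrm{Old}(\gmin,\block)=\true$, which we do not need here).

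Next I would invoke Lemma~\ref{lma:tipmove1}, which guarantees that one of $\vcmp\l,\vcmp$ is a prefix of the other. If $\vcmp$ were a prefix of $\vcmp\l$, then $\block\in\vcmp$ would give $\block\in\vcmp\l$, contradicting the previous step; hence $\vcmp\l$ is a prefix of $\vcmp$, and since $\block\in\vcmp\backslash\vcmp\l$ this prefix is strict. Tracing the chains this yields $\tip(\vcmp\l)\prec\block\preceq\tip(\vcmp)$, so in particular $\tip(\vcmp\l)\prec\tip(\vcmp)$ holds as a strict relation.

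Finally, with the strict relation $\tip(\vcmp\l)\prec\tip(\vcmp)$ in hand, I would apply the first half of Lemma~\ref{lma:tipmove2}, which states that $\tip(\vcmp\l)\prec\tip(\vcmp)$ can hold only if either $\event.\type=\allrec$ with $\flagb\l=\flagb$, or $\event.\type=\hgen$ with $\flagb\l=\bot$ and $\flagb=\event.\mathsf{block}$. These are exactly the two alternatives asserted by the claim, finishing the proof. I do not expect any genuine obstacle here; the only care-points are (i) reading off $\block\notin\vcmp\l$ correctly from the definition of $\pot$ so that the prefix containment comes out strict, and (ii) not confusing the two directions of Lemma~\ref{lma:tipmove2} — we need the $\tip(\vcmp\l)\prec\tip(\vcmp)$ branch, not the reverse $\tip(\vcmp)\prec\tip(\vcmp\l)$ one. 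All the substantive work was already done in establishing Lemmas~\ref{lma:tipmove1} and~\ref{lma:tipmove2}; this claim merely repackages them under the standing assumptions of Section~\ref{sec:cpotcase}.
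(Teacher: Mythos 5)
Your proposal is correct and matches the paper's own argument exactly: the paragraph preceding the claim derives $\block\notin\vcmp\l$, $\block\in\vcmp$, and hence (via Lemma~\ref{lma:tipmove1}) the strict prefix relation $\tip(\vcmp\l)\prec\tip(\vcmp)$, after which the claim is read off from the first branch of Lemma~\ref{lma:tipmove2}. No differences worth noting.
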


\begin{lemma}\label{lma:cpotcase:1}
    $\potwith(\advs,\block)-\potwith(\advs,\tip(\vcmp\l))\le 0$. 
\end{lemma}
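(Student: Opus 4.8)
The plan is to unfold the definition of $\potwith$ and reduce the claimed inequality to plain monotonicity of subtree weight along a parent chain. Recall $\potwith(\advs,\block)=\treew{\ggen\backslash\gmax,\block}$, and note that both sides of the inequality are evaluated with respect to the \emph{same} adversary state $\advs$, so the underlying block set $\ggen\backslash\gmax$ is identical on both sides; only the root of the subtree changes. Thus the statement is purely the assertion that the subtree weight of $\block$ in $\ggen\backslash\gmax$ is no larger than that of $\tip(\vcmp\l)$, and the event $\event$ and the hypotheses of claim~\ref{clm:cpotcase} will enter only through the position of $\block$ relative to $\vcmp\l$.

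First I would recall the structural fact recorded in the common setup of section~\ref{sec:cpotcase}: since $\pot(\advs\l,\block)=\bot$ together with $\mathrm{Old}(\gmin\l,\block)=\true$ forces $\block\notin\vcmp\l$, while $\pot(\advs,\block)\neq\bot$ forces $\block\in\vcmp$, and since by lemma~\ref{lma:tipmove1} one of $\vcmp\l,\vcmp$ is a prefix of the other (here necessarily $\vcmp\l$ a strict prefix of $\vcmp$), we obtain $\tip(\vcmp\l)\prec\block\preceq\tip(\vcmp)$. In particular $\tip(\vcmp\l)\preceq\block$, i.e.\ $\tip(\vcmp\l)$ lies on $\chain{\block}$; equivalently $\block\in\tree{\graph,\tip(\vcmp\l)}$ for any block set $\graph$ containing $\block$.

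Then I would invoke the monotonicity claim stated in the preparation subsection of the appendix: for any set of blocks $\graph$ and blocks $\block_1\preceq\block_2$ one has $\tree{\graph,\block_2}\subseteq\tree{\graph,\block_1}$, hence $\treew{\graph,\block_2}\le\treew{\graph,\block_1}$ since every block weight is non-negative. Applying this with $\graph:=\ggen\backslash\gmax$, $\block_1:=\tip(\vcmp\l)$, $\block_2:=\block$ gives $\treew{\ggen\backslash\gmax,\block}\le\treew{\ggen\backslash\gmax,\tip(\vcmp\l)}$, which is exactly $\potwith(\advs,\block)-\potwith(\advs,\tip(\vcmp\l))\le 0$. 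There is essentially no obstacle here; the only point worth making explicit is that the comparison is between $\potwith$ at two blocks under one fixed state, so it is a monotonicity statement and the hypotheses of section~\ref{sec:cpotcase} are used solely to guarantee $\tip(\vcmp\l)\prec\block$.
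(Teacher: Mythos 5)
Your proof is correct and follows essentially the same route as the paper's: both reduce the claim to $\tip(\vcmp\l)\prec\block$ (from the common setup of the section) plus monotonicity of $\treew{\ggen\backslash\gmax,\cdot}$ along the parent chain. The paper's proof additionally records that $\ggen\backslash\gmax=\ggen\l\backslash\gmax\l$ so that $\potwith(\advs,\tip(\vcmp\l))=\potwith(\advs\l,\tip(\vcmp\l))$ — not needed for the literal statement, but used when the lemma is applied in lemma~\ref{lma:cpotcase:final} to compare against $\potwith(\advs\l,\tip(\vcmp\l))$.
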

\begin{proof}
    Since $\event.\type\notin\{\mgen,\mrls\}$, it can be verified that $\ggen\l\backslash\gmax\l=\ggen\backslash\gmax.$ Thus $\potwith(\advs,\tip(\vcmp\l))=\potwith(\advs\l,\tip(\vcmp\l))$. Since $\tip(\vcmp\l)\prec\block$, $\tree{\ggen\l\backslash\gmax\l,\block}\subseteq \tree{\ggen\l\backslash\gmax\l,\tip(\vcmp\l)}$. So we have 
    $$\potwith(\advs,\block)-\potwith(\advs,\tip(\vcmp\l))\le 0.$$
\end{proof}

\begin{lemma}\label{lma:cpotcase:2}
    If $\event=\hgen$, $\flagb\l=\bot$ and $\flagb=\event.\mathsf{block}$, we have 
    $$ \potadv(\advs,\mainchild)-\potadv(\advs\l,\bot)\le 2\kah+2\kam-\heavyw.$$
    If $\event.\type=\allrec$ and $\flagb\l=\flagb$, we have 
    $$ \potadv(\advs,\mainchild)-\potadv(\advs\l,\bot)\le 0.$$
\end{lemma}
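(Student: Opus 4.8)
The plan is to prove the two displayed inequalities separately, corresponding to the two cases of Claim~\ref{clm:cpotcase}. Since $\potadv(\advs\l,\bot)=0$ directly from the definition of $\potadv$, in each case it suffices to bound $\potadv(\advs,\mainchild)$, and for that I will only need the trivial estimate $\potadv(\advs,\mainchild)\le\kah+\kam-\mathrm{Adv}(\advs,\mainchild)$ (dropping the nonnegative $\min\{\cdot,\kah\}$ term) once a good lower bound on $\mathrm{Adv}(\advs,\mainchild)$ is in hand. Throughout I will use the common setting of Section~\ref{sec:cpotcase}: $\vcmp\l$ is a strict prefix of $\vcmp$ with $\tip(\vcmp\l)\prec\block\preceq\tip(\vcmp)$.

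For the $\allrec$ case ($\event.\type=\allrec$, $\flagb\l=\flagb$) I will first dispose of $\mainchild=\bot$, where $\potadv(\advs,\mainchild)=0\le 0$. When $\mainchild\ne\bot$, it is the child of $\block$ on $\vcmp$, so $\tip(\vcmp\l)\prec\block\prec\mainchild$; Claim~\ref{clm:chainc}.1 then gives $\mathrm{Adv}(\advs,\mainchild)>\kam+\kah$, hence $\potadv(\advs,\mainchild)\le\kah+\kam-\mathrm{Adv}(\advs,\mainchild)<0$, as required.

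For the flag-generation case ($\event.\type=\hgen$, $\flagb\l=\bot$, $\flagb=\event.\mathsf{block}$) I will start from Lemma~\ref{lma:flagadv}, which gives $\mathrm{Adv}(\advs,\block')>\heavyw-\kah-\kam$ for every $\block'\in\chain{\flagb}$; since $\heavyw\ge 2\kah+2\kam$ this exceeds $\kam+\kah$, so Claim~\ref{clm:chainc}.3 forces $\flagb\in\vcmp$, i.e.\ $\chain{\flagb}$ is a prefix of $\vcmp$. The crux is then to establish $\block\prec\flagb$, so that $\mainchild=\nxt(\vcmp,\block)=\nxt(\chain{\flagb},\block)$ is nonempty and lies on $\chain{\flagb}$. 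I will argue (i) $\block\ne\flagb$, because $\flagb$ is a just-generated honest block, hence not in $\gmin=\gmin\l$ and therefore not old enough, so $\pot(\advs,\flagb)=\bot\ne\pot(\advs,\block)$; and (ii) $\block$ cannot sit strictly beyond $\flagb$ on $\vcmp$, since that would give $\flagb\in\chain{\block}\setminus\{\block\}=\mathrm{Pivot}(\block.\past)\subseteq\block.\past$, and then heredity of $\mathrm{Old}$ together with $\mathrm{Old}(\gmin,\block)=\true$ would force $\mathrm{Old}(\gmin,\flagb)=\true$, contradicting the observation in (i). With $\block\prec\flagb$ established, $\mainchild\in\chain{\flagb}$ and $\mainchild\ne\bot$, so $\mathrm{Adv}(\advs,\mainchild)>\heavyw-\kah-\kam$ and $\potadv(\advs,\mainchild)\le\kah+\kam-\mathrm{Adv}(\advs,\mainchild)<2\kah+2\kam-\heavyw$.

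I expect the flag-generation case to be the main obstacle, and within it the step showing $\mainchild\ne\bot$ and $\mainchild\in\chain{\flagb}$: if $\mainchild$ could equal $\bot$ the target bound would read $0\le 2\kah+2\kam-\heavyw$, which is false since $\heavyw\ge 2\kah+2\kam$, so ruling this out is essential rather than cosmetic. Making the argument airtight requires invoking, with some care, that a freshly generated honest block is never old enough with respect to the current $\gmin$, the heredity of $\mathrm{Old}$ along the $\past$ relation, and the protocol identity $\chain{\block}=\mathrm{Pivot}(\block.\past)\circ\block$ — all facts already available in the preceding development.
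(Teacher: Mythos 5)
Your proposal is correct and follows essentially the same route as the paper's proof: in the $\allrec$ case you use Claim~\ref{clm:chainc}.1 exactly as the paper does, and in the flag-generation case you use Lemma~\ref{lma:flagadv} together with Claim~\ref{clm:chainc}.3 to place $\mainchild$ on $\chain{\flagb}$. Your explicit argument that $\block\prec\flagb$ (via $\pot(\advs,\flagb)=\bot$ and the heredity of $\mathrm{Old}$ along $\past$) just fills in a step the paper states more tersely.
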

\begin{proof}
    We discuss two cases respectively.

    \textbf{Case 1:} $\event=\hgen$, $\flagb\l=\bot$ and $\flagb=\event.\mathsf{block}$.
    Let $\eblock:=\event.\mathsf{block}$. 
    If $\event.\type=\hgen$, $\flagb\l=\bot$ and $\flagb=\event.\mathsf{block}$, 
    according to lemma~\ref{lma:flagadv}, 
    for all the block $\block'$ in $\chain{\eblock}$, 
    we have $\mathrm{Adv}(\advs,\block')> \heavyw-\kah-\kam\ge \kah+\kam$. 
    According to claim~\ref{clm:chainc}.3, $\chain{\eblock}$ is a prefix of $\vcmp$.
    Since $\eblock\notin \gmin$, $\eblock$ cannot be an old enough block and thus $\pot(\advs,\eblock)=\bot$. 
    So $\block$ cannot be the last block in $\vcmp$. 
    We claim $$\mainchild\neq\bot\text{ and }\mainchild\in \vcmp.$$ 

    Since $\mainchild\in\vcmp$, according to lemma~\ref{lma:flagadv}, $\mathrm{Adv}(\advs,\mainchild)\ge \heavyw-\kah-\kam.$ Thus 
    $$ \potadv(\advs,\mainchild)-\potadv(\advs\l,\bot)\le \kah+\kam-\mathrm{Adv}(\advs,\mainchild)\le 2\kah+2\kam-\heavyw.$$

    \textbf{Case 2:} $\event=\allrec$ and $\flagb\l=\flagb$. 

    If $\mainchild=\bot$, then $\potadv(\advs,\mainchild)-\potadv(\advs\l,\bot)=0$ holds trivially. 

    If $\mainchild\neq\bot$, since $\mainchild\in \vcmp$ and $\tip(\vcmp\l)\prec\block\prec\mainchild$, according to claim~\ref{clm:chainc}.1, $\mathrm{Adv}(\advs,\mainchild)>\kah+\kam$. Thus 
    $$ \potadv(\advs,\mainchild)-\potadv(\advs\l,\bot)\le \kah+\kam-\mathrm{Adv}(\advs,\mainchild)<0.$$
\end{proof}

\begin{lemma}\label{lma:cpotcase:3}
    $(\potspe(\advs,\block)+\spevalue)-(\potspe(\advs\l,\tip(\vcmp\l))+\spevalue\l)\le 0$. 
\end{lemma}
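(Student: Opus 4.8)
The plan is to reduce the claimed inequality to the single fact that nothing related to the special status changes during this event. First I would observe that the ``previous'' term vanishes: $\potspe(\advs\l,\block')$ equals $\tw\big((\tree{\gdta\l,\nxt(\vcmp\l,\block')}\cap\mathbf{M}\l)\backslash\speset\l\big)$ when $\nxt(\vcmp\l,\block')\neq\bot$ and $0$ otherwise, and for $\block'=\tip(\vcmp\l)$ we have $\nxt(\vcmp\l,\tip(\vcmp\l))=\bot$, so $\potspe(\advs\l,\tip(\vcmp\l))=0$. Since $\potspe(\advs,\block)\ge0$ and, by claim~\ref{clm:spesv}, $\spevalue-\spevalue\l\ge0$, it therefore suffices to prove $\potspe(\advs,\block)=0$ and $\spevalue=\spevalue\l$.

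The structural input is claim~\ref{clm:cpotcase}: $\event$ is either an $\allrec$ event with $\flagb\l=\flagb$, or the $\hgen$ event of a fresh flag block $\flagb=\event.\mathsf{block}$; and the running assumption of section~\ref{sec:cpotcase} gives $\tip(\vcmp\l)\prec\block\preceq\tip(\vcmp)$, in particular $\tip(\vcmp\l)\prec\tip(\vcmp)$. The one step I would isolate is: \emph{for every block $\block'$ with $\tip(\vcmp\l)\preceq\block'$ one has $\tree{\gdta,\block'}\cap\mathbf{M}\subseteq\speset\l\subseteq\speset$}. Indeed, in both allowed cases $\mathbf{M}=\mathbf{M}\l$ and no malicious block enters $\gdta$ --- if $\event.\type=\allrec$ then $\gmin$ grows so $\gdta\subseteq\gdta\l$, and if $\event.\type=\hgen$ then $\gdta=\gdta\l\cup\{\flagb\}$ with $\flagb$ honest by claim~\ref{clm:flagb}, hence $\flagb\notin\mathbf{M}$ --- so $\tree{\gdta,\block'}\cap\mathbf{M}\subseteq\tree{\gdta\l,\block'}\cap\mathbf{M}\l$; and since $\tip(\vcmp\l)\preceq\block'$ the subtree of $\block'$ sits inside the subtree of $\tip(\vcmp\l)$, so this is contained in $\tree{\gdta\l,\tip(\vcmp\l)}\cap\mathbf{M}\l$, which by claim~\ref{clm:spesv} (applied to the transition into $\advs\l$) lies in $\speset\l\subseteq\speset$.

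With this step in hand both targets follow. For $\spevalue=\spevalue\l$: take $\block'=\tip(\vcmp)$ (legal since $\tip(\vcmp\l)\prec\tip(\vcmp)$) to get $\tmpset\cap\mathbf{M}=\tree{\gdta,\tip(\vcmp)}\cap\mathbf{M}\subseteq\speset\l$, so $\speset=\speset\l\cup(\tmpset\cap\mathbf{M})=\speset\l$ and the update adds $\min\{\kam,\tw(\speset\backslash\speset\l)\}=\min\{\kam,0\}=0$. For $\potspe(\advs,\block)=0$: write $\mainchild=\nxt(\vcmp,\block)$; if $\mainchild=\bot$ it is $0$ by definition, and otherwise $\mainchild$ is the child of $\block$ in $\vcmp$, so $\tip(\vcmp\l)\prec\block\prec\mainchild$, and the isolated step with $\block'=\mainchild$ gives $\tree{\gdta,\mainchild}\cap\mathbf{M}\subseteq\speset$, hence $\potspe(\advs,\block)=\tw\big((\tree{\gdta,\mainchild}\cap\mathbf{M})\backslash\speset\big)=0$. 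Assembling, the left-hand side is $0+\spevalue-0-\spevalue\l=0$.

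I do not expect a genuine obstacle here; the only points requiring care are getting the two subtree inclusions in the right direction and invoking claim~\ref{clm:flagb} to certify that the fresh flag block is honest (which is precisely why it cannot contribute to $\tree{\gdta,\block'}\cap\mathbf{M}$), together with using the section-wide hypotheses $\pot(\advs\l,\block)=\bot$, $\pot(\advs,\block)\neq\bot$, $\mathrm{Old}(\gmin\l,\block)=\true$ only through their consequence $\tip(\vcmp\l)\prec\block\preceq\tip(\vcmp)$.
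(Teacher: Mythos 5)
Your proposal is correct and follows essentially the same route as the paper's proof: reduce to showing $\speset=\speset\l$ (hence $\spevalue=\spevalue\l$) and $\potspe(\advs,\block)=0$, via the chain of inclusions $\tree{\gdta,\block'}\cap\mathbf{M}\subseteq\tree{\gdta\l,\tip(\vcmp\l)}\cap\mathbf{M}\l\subseteq\speset\l$ for $\block'\succeq\tip(\vcmp\l)$, using that neither allowed event type adds a malicious block to $\gdta$. Your only deviation is packaging this as a single isolated step applied to both $\block'=\tip(\vcmp)$ and $\block'=\mainchild$, which is a minor reorganization of the same argument.
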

\begin{proof}
    Since $\nxt(\vcmp\l,\tip(\vcmp\l))=\bot$, we have $\potspe(\advs,\tip(\vcmp\l))=\spevalue\l$. According to claim~\ref{clm:spesv}, 
    $$\tree{\gdta\l,\tip(\vcmp\l)}\cap\mathbf{M}\l\subseteq\speset\l.$$ 

    If $\event.\type=\hgen$, since $\eblock$ is not malicious block, $\tree{\gdta\l,\tip(\vcmp\l)}\cap\mathbf{M}\l$ must equal to \\$\tree{\gdta,\tip(\vcmp\l)}\cap\mathbf{M}$. If $\event.\type=\allrec$, then $\gdta\subseteq\gdta\l$, we have $\tree{\gdta,\tip(\vcmp\l)}\cap\mathbf{M}\l\subseteq\tree{\gdta\l,\tip(\vcmp\l)}\cap\mathbf{M}$. Recalling that $\tip(\vcmp\l)\prec\tip(\vcmp)$,  
    $$\tree{\gdta,\tip(\vcmp)}\cap\mathbf{M}\subseteq\tree{\gdta,\tip(\vcmp\l)}\cap\mathbf{M}\subseteq\speset\l.$$ 
    
    According to the rule in updating $\speset$, we have $\speset=\speset\l\cup \tree{\gdta,\tip(\vcmp)}\cap\mathbf{M}=\speset\l$. Thus $\spevalue-\spevalue\l=0$ according to claim~\ref{clm:spesv}.3. 
    
    If $\mainchild=\bot$, we have $$(\potspe(\advs,\block)+\spevalue)-(\potspe(\advs\l,\tip(\vcmp\l))+\spevalue\l)=0.$$
    
    If $\mainchild\neq\bot$, since $\tip(\vcmp\l)\prec\mainchild$, we have $$\tree{\gdta,\mainchild}\cap\mathbf{M}\subseteq\tree{\gdta,\tip(\vcmp\l)}\cap\mathbf{M}\subseteq\speset\l.$$ 
    Thus 
    $$(\potspe(\advs,\block)+\spevalue)-(\potspe(\advs\l,\tip(\vcmp\l))+\spevalue\l)= \tw(\tree{\gdta,\mainchild}\cap\mathbf{M}\backslash\speset\l)=0.$$
\end{proof}

\begin{lemma}\label{lma:cpotcase:final}
    \statecondm. For any block $\block$ with $\pot(\advs\l,\block)=\bot$, $\pot(\advs,\block)\neq\bot$ and $\mathrm{Old}(\gmin\l,\block)=\true$, we have 
    $$ (\pot(\advs,\block)+\spevalue)-(\pot(\advs\l,\tip(\vcmp\l))+\spevalue\l)\le \eventweight(\advs\l,\event).$$
\end{lemma}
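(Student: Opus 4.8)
The plan is to decompose the block potential $\pot(\advs,\block)$ into its three summands $\potwith$, $\potadv$, $\potspe$ and bound each contribution against the corresponding quantity evaluated at $\tip(\vcmp\l)$ under the old state, using the three component lemmas just proved. The structural observation that makes this work is that $\nxt(\vcmp\l,\tip(\vcmp\l))=\bot$, so by the definitions of $\potadv$ and $\potspe$ we have $\potadv(\advs\l,\nxt(\vcmp\l,\tip(\vcmp\l)))=\potadv(\advs\l,\bot)=0$ and $\potspe(\advs\l,\tip(\vcmp\l))=\spevalue\l$; hence $\pot(\advs\l,\tip(\vcmp\l))=\potwith(\advs\l,\tip(\vcmp\l))+\spevalue\l$, which is exactly the shape of the quantity on the left once $\spevalue\l$ is added. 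Writing $\mainchild:=\nxt(\vcmp,\block)$, the target difference therefore splits as
$$[\potwith(\advs,\block)-\potwith(\advs\l,\tip(\vcmp\l))]+[\potadv(\advs,\mainchild)-\potadv(\advs\l,\bot)]+[(\potspe(\advs,\block)+\spevalue)-(\potspe(\advs\l,\tip(\vcmp\l))+\spevalue\l)].$$

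First I would record that the left-hand side is well posed: $\tip(\vcmp\l)\in\vcmp\l$ always, and since the common-settings discussion of this subsection gives $\tip(\vcmp\l)\prec\block$, the block $\tip(\vcmp\l)$ lies in $\block.\past$; combined with $\mathrm{Old}(\gmin\l,\block)=\true$ and property~1 of the block-age function this yields $\mathrm{Old}(\gmin\l,\tip(\vcmp\l))=\true$, so $\pot(\advs\l,\tip(\vcmp\l))\neq\bot$. Next I would invoke Claim~\ref{clm:cpotcase}, which restricts $\event$ to two possibilities; in both, $\event.\type\notin\{\mgen,\mrls\}$, so $\ggen\backslash\gmax=\ggen\l\backslash\gmax\l$ and hence $\potwith(\advs,\tip(\vcmp\l))=\potwith(\advs\l,\tip(\vcmp\l))$. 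Combining this with Lemma~\ref{lma:cpotcase:1} (which gives $\potwith(\advs,\block)\le\potwith(\advs,\tip(\vcmp\l))$) makes the first bracket $\le 0$; Lemma~\ref{lma:cpotcase:3} makes the third bracket $\le 0$; and the middle bracket is controlled by Lemma~\ref{lma:cpotcase:2}.

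It then remains to match the resulting bound to $\eventweight(\advs\l,\event)$ in each of the two cases of Claim~\ref{clm:cpotcase}. If $\event.\type=\allrec$ and $\flagb\l=\flagb$, Lemma~\ref{lma:cpotcase:2} gives a middle bracket $\le 0$, so the whole difference is $\le 0$; and $\eventweight(\advs\l,\event)=0$ by definition for $\allrec$ events, so the inequality holds. If $\event$ is the $\hgen$ event creating a new flag block ($\flagb\l=\bot$, $\flagb=\event.\mathsf{block}$), Lemma~\ref{lma:cpotcase:2} gives a middle bracket $\le 2\kah+2\kam-\heavyw$, so the whole difference is $\le 2\kah+2\kam-\heavyw$; on the other hand, a flag block is an honest block of weight $\heavyw$, and rule~1 of the flag-block update forces $\mathrm{Spe}(\advs\l)=\false$ together with $\gdta\l$ containing no honest block of weight $\heavyw$, which is precisely the branch of the $\eventweight$ definition giving $\eventweight(\advs\l,\event)=2\kah+2\kam-\heavyw$. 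Hence the bound matches in this case as well, and the lemma follows.

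I do not expect a serious obstacle: this is the ``collection'' step that bundles the component estimates, exactly parallel to Lemma~\ref{lma:potcase:final} for the case $\pot(\advs\l,\block)\neq\bot$. The only mildly delicate points are (i) justifying $\pot(\advs\l,\tip(\vcmp\l))\neq\bot$ so that the claimed telescoping inequality is meaningful, which uses property~1 of $\mathrm{Old}$ together with $\tip(\vcmp\l)\prec\block$, and (ii) reading off that the event value for flag-block creation is precisely $2\kah+2\kam-\heavyw$ — the definition of $\eventweight$ records this through a mislabelled assignment ``$\event.\mathsf{block}.\mathsf{weight}:=\cdots$'' that should be read as $\eventweight(\advs\l,\event):=\cdots$, as the parenthetical ``if $\flagb=\event.\mathsf{block}$, it must be in this case'' confirms.
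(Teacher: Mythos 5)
Your proof is correct and follows essentially the same route as the paper: the identical three-way decomposition into $\potwith^\Delta$, $\potadv^\Delta$, $\potspe^\Delta$, bounded by Lemmas~\ref{lma:cpotcase:1}--\ref{lma:cpotcase:3}, with the two cases of Claim~\ref{clm:cpotcase} matched against $\eventweight(\advs\l,\event)=0$ and $2\kah+2\kam-\heavyw$ respectively (and you correctly add the well-posedness check that the paper leaves implicit). The only blemish is the motivational remark that $\potspe(\advs\l,\tip(\vcmp\l))=\spevalue\l$ — by the definition it is $0$ since $\nxt(\vcmp\l,\tip(\vcmp\l))=\bot$ — but this does not affect the decomposition, which is an exact identity, nor any subsequent bound.
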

\begin{proof}
    First we define
    \begin{align*}
        \potwith^\Delta&:=\potwith(\advs,\block)-\potwith(\advs\l,\tip(\vcmp\l)), \\
        \potadv^\Delta&:=\potadv(\advs,\mainchild)-\potadv(\advs\l,\bot), \\
        \potspe^\Delta&:=(\potspe(\advs,\block)+\spevalue)-(\potspe(\advs\l,\tip(\vcmp\l))+\spevalue\l). 
    \end{align*}

    Then we have 
    $$ (\pot(\advs,\block)+\spevalue)-(\pot(\advs\l,\tip(\vcmp\l))+\spevalue\l) = \potwith^\Delta+\potadv^\Delta+\potspe^\Delta.$$

    According two claim~\ref{clm:cpotcase}, we have two possible cases

    \textbf{Case 1:} $\event.\type=\allrec$ and $\flagb\l=\flagb$
    
    In this case, we have $\eventweight(\advs\l,\event) = 0$. According to lemma~\ref{lma:cpotcase:1}, \ref{lma:cpotcase:2} and \ref{lma:cpotcase:3}, we have $\potwith^\Delta\le 0$, $\potadv^\Delta\le 0$ and $\potspe^\Delta\le 0$. So we claim
    $$ \potwith^\Delta+\potadv^\Delta+\potspe^\Delta \le \eventweight(\advs\l,\event).$$

    \textbf{Case 2:} $\event=\hgen$, $\flagb\l=\bot$ and $\flagb=\event.\mathsf{block}$
    In this case, we have $\eventweight(\advs\l,\event) = 2\kah+2\kam-\heavyw$. According to lemma~\ref{lma:cpotcase:1}, \ref{lma:cpotcase:2} and \ref{lma:cpotcase:3}, we have $\potwith^\Delta\le 0$, $\potadv^\Delta\le 2\kah+2\kam-\heavyw$ and $\potspe^\Delta\le 0$. So we claim
    $$ \potwith^\Delta+\potadv^\Delta+\potspe^\Delta \le \eventweight(\advs\l,\event).$$
\end{proof}

\subsection{Proof of Theorem~\ref{thm:hisunchange}}

\begin{proof}
    For any $n\ge N(r_1)$, we have assumed the following two conditions. 
    \begin{itemize}
        \item $\cpot(\advs_n,\gmin_{N(r_0)})<-\heavyw$
        \item For any block $\block\in \ggen_n$ with $\gmin_{N(r_0)}\nsubseteq\block.\mathsf{past}$, it will be $\mathrm{Old}(\gmin_n,\block)=\true$. 
    \end{itemize}

    Let $\block_n'\in \vcmp_n$ be the last block satisfying $\gmin_{N(r_0)}\nsubseteq\block_n'.\mathsf{past}$ in $\vcmp_n$. According to the second assumption, $\mathrm{Old}(\gmin_n,\block_n')=\true$. According to the definition of global potential value, the block potential value of all the blocks in $\chain{\block_n'}$ are taken into considered. Thus $\cpot(\advs_n,\gmin_{N(r_0)})\ge \max_{\block\in \chain{\block_n'}} \pot(\advs_n,\block)$. So we have $\pot(\advs_n,\block_n')< -\heavyw$. According to the definition of block potential value, $\pot(\advs_n,\block_n')<0$ only if $\block_n'\neq\tip(\vcmp_n)$. So we can let $\block_n\eqdef\mathrm{Next}(\vcmp_n,\block_n')$. Since $\block_n'$ is the last block satisfying $\gmin_{N(r_0)}\nsubseteq\block_n'.\mathsf{past}$, there must be $\gmin_{N(r_0)}\subseteq\block_n.\mathsf{past}$.

    Notice that $\max_{\block\in \chain{\block_n'}} \pot(\advs_n,\block)\le \cpot(\advs_n,\gmin_{N(r_0)})<-\heavyw$, thus we have 
    \begin{align*}
        \forall \block \in \chain{\block_n'}, \pot(\advs_n,\block)<-\heavyw.
    \end{align*}
   
    Since $\pot(\advs_n,\block)\ge -\mathrm{Adv}(\advs_n,\mathrm{Next}(\vcmp_n,\block))$,
    we have
    \begin{align*}
        \forall \block \in \chain{\block_n}\backslash\{\genesisblock\},\; \mathrm{Adv}(\advs_n,\block)<-\heavyw.
    \end{align*}

    Since 
    \begin{align*}
         \mathrm{Adv}(\advs_n,\block)= & \treew{\gmin_n\cup\{\flagb_n\},\block}-\sibtreew{\gmax_n,\block} \\
        \le & \heavyw+\treew{\gmin_n,\block}-\sibtreew{\gmax_n,\block},
    \end{align*}
    we have 
    \begin{align*}
        \forall \block \in \chain{\block_n}\backslash\{\genesisblock\}, \treew{\gmin_n,\block}-\sibtreew{\gmax_n,\block}>0.
    \end{align*}
    According to lemma~\ref{lma:ghost}, $\block_n\in \mathrm{Pivot}(\state)$ for all the local state $\gmin_n\subseteq\state\subseteq\gmax_n$. It means that for any honest participant, $\block_n$ is in its pivot chain. Since $\gmin_{N(r_0)}\subseteq \block_n.\past$, for any block $\ti\block\in \gmin_{N(r_0)}$, its history is determined by block $\block_n$. Formally, $\mathrm{Prefix}(\order(\state),\ti\block)$ is a prefix of $\order(\block_n.\past)$. 

    Then we will show that for any $n\ge N(r_1)$, $\block_{n}$ and $\block_{n+1}$ are the same block. Since $\block_n$ and $\block_{n+1}$ is the first block $\block'$ satisfying $\gmin_{N(r_0)}\subseteq\block'.\past$ in $\vcmp_n$ and $\vcmp_{n+1}$. So there can not be $\block_n\prec \block_{n+1}$ or $\block_{n+1}\prec \block_{n}$. According to lemma~\ref{lma:tipmove1}, one of $\vcmp_n$ and $\vcmp_{n+1}$ must be the prefix of another, thus there must be $\block_n=\block_{n+1}$.
    
    So for all the $n\ge N(r_1)$, block $\block_n$ refers the same block and it is in the pivot chain of all the honest participants. According to the block ordering algorithm $\order_{\sf GHAST}$ (recalling that $\order_{\sf GHAST}$ is defined the same as $\order_{\sf TG}$ except the block weight), the history of blocks in $\block_n.\past$ must be a prefix of $\order(\block_n.\past)$. Notice that $\gmin_{N(r_0)}\subseteq \block_n.\past$, we have
    $$ \forall \ti\block\in \gmin_{N(r_0)},\left|\bigcup\nolimits_{\substack{r\in\{r_1,\cdots,\rmax\}\\\state\in \uniongraphs_r}} \mathrm{Prefix}(\order_{\sf GHAST}(\state),\ti\block)\right|=1.$$

\end{proof}

\subsection{Proof of Theorem~\ref{thm:potcase:final}}

\begin{proof}
    Combining the conclusions in lemma~\ref{lma:potcase:final} and lemma~\ref{lma:cpotcase:final}, we proved this lemma.
\end{proof}

\subsection{Proof of Theorem~\ref{thm:cpotdiff}}

\begin{proof}
    Recalling that $\cpot(\advs,\graph)$ picks the maximum block potential value of blocks in $\vcmp':=\{\block'\in\vcmp|\graph\nsubseteq \block'.\past\}$. Since genesis block $\genesisblock$ must be in $\vcmp'$ and $\mathrm{Old}(\gmin\l,\genesisblock)=\true$, there must exists block in $\vcmp'$ whose block potential value is not $\bot$. Let $\block$ be the block with maximum block potential value in $\vcmp'$. (A.k.a. $\block:=\argmax_{\block'\in\vcmp'}\pot(\advs,\block')$). Thus 
    $$ \cpot(\advs,\graph)=\pot(\advs,\block).$$
    
    Since $\pot(\advs,\block)\neq\bot$, we have $\mathrm{Old}(\gmin,\block)=\true$. In the assumptions of this theorem, $\mathrm{Old}(\gmin,\block')=\false \vee \mathrm{Old}(\gmin\l,\block')=\true \vee  \cpot(\advs,\graph)=\pot(\advs,\block')$ holds for all the block $\block'\in \vcmp'$. So we have 
    $$ \mathrm{Old}(\gmin\l,\block)=\true.$$
    
    Since $\mathrm{Old}(\gmin\l,\block)=\true$, it must be $\block\in\gmin\l$. 
    Thus $\block.\past\subseteq\gmin\l\subseteq\gmax\l$. 
    Since we assume $\gmax\l\cap\graph\l=\gmax\l\cap\graph$, thus $\block.\past\cap\graph=\block.\past\cap\graph\l$. If $\graph\l\subseteq \block.\past$, then it will be $\graph\subseteq \block.\past$ and thus $\block\notin\vcmp'$. This contradicts $\block\in\vcmp'$. Thus we have
    $$\graph\l\nsubseteq \block.\past.$$

    \textbf{Case 1:} $\pot(\advs\l,\block)\neq\bot$.

    Since $\pot(\advs\l,\block)\neq\bot$ and $\block\in\vcmp\l$, we have $\block\in \vcmp'\l$. Recalling that $\cpot(\advs\l,\graph\l)$ picks the maximum block potential value of blocks in $\vcmp'\l:=\{\block'\in\vcmp\l|\block'.\past\nsubseteq\graph\l\}$, there must be 
    $ \pot(\advs\l,\block)\le \cpot(\advs\l,\graph\l).$
    According to lemma~\ref{lma:potcase:final}, $(\pot(\advs,\block)+\spevalue)-(\pot(\advs\l,\block)+\spevalue\l)\le \eventweight(\advs\l,\event)$. Thus 
    $$ (\cpot(\advs,\graph)+\spevalue)-(\cpot(\advs\l,\graph\l)+\spevalue\l)\le (\pot(\advs,\block)+\spevalue)-(\pot(\advs\l,\block)+\spevalue\l)\le \eventweight(\advs\l,\event).$$

    \textbf{Case 2:} $\pot(\advs\l,\block)=\bot$.

    Since $\pot(\advs\l,\block)=\bot$ and $\mathrm{Old}(\gmin\l,\block)=\true$, we have $\block\notin\vcmp\l$. According to lemma~\ref{lma:tipmove1}, $\vcmp$ must be a prefix of $\vcmp\l$. Thus $\tip(\vcmp\l)\prec\block$. So we have $\tip(\vcmp\l).\past\subseteq\block.\past.$ Thus 
    $$\graph\l\nsubseteq \tip(\vcmp\l).\past.$$
    
    Recalling that $\cpot(\advs\l,\graph\l)$ picks the maximum block potential value of blocks in $\vcmp'\l:=\{\block'\in\vcmp\l|\block'.\past\nsubseteq\graph\l\}$, there must be 
    $ \pot(\advs\l,\tip(\vcmp\l))\le \cpot(\advs\l,\graph\l).$
    According to lemma~\ref{lma:cpotcase:final}, $(\pot(\advs,\block)+\spevalue)-(\pot(\advs\l,\tip(\vcmp\l))+\spevalue\l)\le \eventweight(\advs\l,\event)$. Thus 
    $$ (\cpot(\advs,\graph)+\spevalue)-(\cpot(\advs\l,\graph\l)+\spevalue\l)\le (\pot(\advs,\block)+\spevalue)-(\pot(\advs\l,\tip(\vcmp\l))+\spevalue\l)\le \eventweight(\advs\l,\event).$$

\end{proof}

\section{The Summation of Event Values}

\subsection{Decompose Event Values}

The definition of potential value is not friendly for analysis probability distribution, we elaborate the decomposition of the event value into several components:
$\eventweight_{\sf H}(\advs\l,\event)$, $\eventweight_{\sf M}(\advs\l,\event)$, $\eventweight_{\sf F}(\advs\l,\event)$ and $\eventweight_{\sf T}(\advs\l,\event)$. And shows that the event value is always no more than the sum of components in lemma~\ref{lma:event components}. 
\begin{itemize}
    \item When $\event$ is a $\mgen$ event, $\eventweight_{\sf M}(\advs\l,\event):=\event.{\sf block}.\weight$. For other cases, $\eventweight_{\sf M}(\advs\l,\event):=0$.
    \item When $\event$ is a $\hgen$ event and $\mathrm{Spe}(\advs\l)=\false$, $\eventweight_{\sf H}(\advs\l,\event):=-(\heavyw-2\kah-2\kam)/\heavyw\cdot\event.{\sf block}.\weight$. 
    For other cases. It equals to 0 for other cases. 
    \item $\eventweight_{\sf F}(\advs\l,\event):=2\heavyw-2\kah-\kam$ if $\event$ is an $\hgen$ event of block with block weight $\heavyw$ and $\gdta\l$ has an honest block with block weight $\heavyw$. It equals to 0 for other cases. 
    \item $\eventweight_{\sf T}(\advs\l,\event):=-\kam$ if $\event$ is an $\hgen$ event of block with block weight $\heavyw$ and $\gdta\l$ has at least two honest blocks with block weight $\heavyw$. It equals to 0 for other cases. 
\end{itemize}

\begin{lemma}\label{lma:event components}
    \statecondm. We have 
    $$ \eventweight(\advs\l,\event)\le \eventweight_{\sf M}(\advs\l,\event)+\eventweight_{\sf H}(\advs\l,\event)+\eventweight_{\sf F}(\advs\l,\event)+\eventweight_{\sf T}(\advs\l,\event).$$
\end{lemma}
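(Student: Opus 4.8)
The plan is to prove the inequality by an exhaustive case analysis on $\event.\mathsf{type}$ and, when $\event.\mathsf{type}=\hgen$, on $\event.\mathsf{block}.\weight\in\{0,1,\heavyw\}$ together with the four sub-cases that already appear in the definition of $\eventweight(\advs\l,\event)$. In each branch I would simply read off the four summands $\eventweight_{\sf M},\eventweight_{\sf H},\eventweight_{\sf F},\eventweight_{\sf T}$ from their definitions and check $\eventweight(\advs\l,\event)\le\eventweight_{\sf M}+\eventweight_{\sf H}+\eventweight_{\sf F}+\eventweight_{\sf T}$ by elementary arithmetic; the only quantitative facts used are the standing parameter constraint $2\kah+2\kam\le\heavyw$ and $\heavyw>0$.

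First I would dispatch the routine branches. For $\event.\mathsf{type}\in\{\mrls,\allrec\}$ all four summands vanish and $\eventweight=0$. For $\event.\mathsf{type}=\mgen$ only $\eventweight_{\sf M}=\event.\mathsf{block}.\weight$ is nonzero and it equals $\eventweight$. For $\hgen$ with weight $0$ everything is $0$. For $\hgen$ with weight $1$ we have $\eventweight_{\sf M}=\eventweight_{\sf F}=\eventweight_{\sf T}=0$, while $\eventweight_{\sf H}=0$ when $\mathrm{Spe}(\advs\l)=\true$ (matching $\eventweight=0$) and $\eventweight_{\sf H}=-(\heavyw-2\kah-2\kam)/\heavyw\ge -1$ when $\mathrm{Spe}(\advs\l)=\false$ (dominating $\eventweight=-1$, since $\heavyw-2\kah-2\kam\le\heavyw$).

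The core of the argument is the $\hgen$ event of a weight-$\heavyw$ block, which I would split exactly along the four bullets of the definition of $\eventweight$. If $\gdta\l$ has no honest weight-$\heavyw$ block then $\eventweight_{\sf F}=\eventweight_{\sf T}=0$, and $\eventweight_{\sf H}$ equals $-(\heavyw-2\kah-2\kam)$ when $\mathrm{Spe}(\advs\l)=\false$ and $0$ when $\mathrm{Spe}(\advs\l)=\true$, exactly matching $\eventweight$ in both cases. If $\gdta\l$ does contain an honest weight-$\heavyw$ block then $\eventweight_{\sf F}=2\heavyw-2\kah-\kam$, which is at least $\heavyw>0$ because $2\kah+\kam\le 2\kah+2\kam\le\heavyw$; combined with $\eventweight_{\sf H}\ge 2\kah+2\kam-\heavyw$ and $\eventweight_{\sf T}\ge -\kam$ this already gives $\eventweight_{\sf M}+\eventweight_{\sf H}+\eventweight_{\sf F}+\eventweight_{\sf T}\ge\heavyw$, which settles the sub-case $\flagb\l=\bot$, where $\eventweight=0$.

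The remaining sub-case, $\flagb\l\neq\bot$ with $\eventweight=\heavyw+\kam$, is where the main obstacle lies: here the target bound $\eventweight_{\sf H}+\eventweight_{\sf F}+\eventweight_{\sf T}\ge\heavyw+\kam$ is tight, so I must rule out $\eventweight_{\sf T}(\advs\l,\event)=-\kam$, i.e.\ show that $\gdta\l$ cannot hold two honest weight-$\heavyw$ blocks while a flag block is present. I would establish this as a separate structural claim proved by induction on the event sequence from the four flag-update rules together with Claim~\ref{clm:flagb}: whenever $\flagb\neq\bot$ the unique honest weight-$\heavyw$ block in $\gdta$ is $\flagb$ itself. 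The key point is that update rule~2 erases the flag the instant a second honest weight-$\heavyw$ block is generated, and no event of type $\mgen$, $\mrls$, or $\allrec$ can add an honest weight-$\heavyw$ block to $\gdta$ (while an $\allrec$ of $\flagb$ itself would erase the flag via rule~3). Granting this claim, $\eventweight_{\sf T}(\advs\l,\event)=0$, and then $\eventweight_{\sf H}+\eventweight_{\sf F}+\eventweight_{\sf T}\ge(2\kah+2\kam-\heavyw)+(2\heavyw-2\kah-\kam)+0=\heavyw+\kam=\eventweight$, closing the last case. Apart from this structural lemma the proof is pure bookkeeping; the only place where slack is lost is precisely the branch in which a flag block is set, which is exactly why $\eventweight_{\sf F}$ and $\eventweight_{\sf T}$ are calibrated to make the bound exact there.
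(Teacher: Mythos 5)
Your proposal is correct and follows essentially the same route as the paper: the same routine dispatch of the $\mrls$, $\allrec$, $\mgen$, and light-block $\hgen$ cases, and the same case split for the weight-$\heavyw$ $\hgen$ event, where the tight sub-case $\flagb\l\neq\bot$ is closed by observing that a present flag block forces exactly one honest weight-$\heavyw$ block in $\gdta\l$, hence $\eventweight_{\sf T}=0$. The only difference is cosmetic: you propose to prove that flag-block invariant by induction on the event sequence, whereas the paper asserts it in a footnote to its case table.
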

\begin{proof}
    For the case $\event.\type\in\{\mrls,\allrec\}$, the event value and all its components always be 0. So the inequality holds trivially.
    
    If $\event.\type=\mgen$, the event value components $\eventweight_{\sf H},\eventweight_{\sf F},\eventweight_{\sf T}$ are all 0. So we have $\eventweight(\advs\l,\event)=\event.{\sf block}.\weight=\eventweight_{\sf M}(\advs\l,\event).$ 

    If $\event.\type=\hgen$ and $\event.{\sf block}.\weight=0$,  the event value and all its components must be 0. 

    If $\event.\type=\hgen$ and $\event.{\sf block}.\weight=1$, the event value components $\eventweight_{\sf M},\eventweight_{\sf F},\eventweight_{\sf T}$ are all 0. When $\mathrm{Spe}(\advs\l,\event)=\false$, we have $\eventweight(\advs\l,\event)=-1$ and $\eventweight_{\sf H}(\advs\l,\event)=-1+(2\kah+2\kam)/\heavyw\ge -1$. When $\mathrm{Spe}(\advs\l,\event)=\true$, we have $\eventweight(\advs\l,\event)=0$ and $\eventweight_{\sf H}(\advs\l,\event)=0$. So the inequality holds for this case.

    If $\event.\type=\hgen$ and $\event.{\sf block}.\weight=\heavyw$, there must be $\eventweight_{\sf M}(\advs\l,\event)=0$. Table~\ref{tab:event components} lists the value of $\eventweight_{\sf H}(\advs\l,\event)$, $\eventweight_{\sf F}(\advs\l,\event)$, $\eventweight_{\sf T}(\advs\l,\event)$ and $\eventweight(\advs\l,\event)$ under all the possible cases. We can check that $\eventweight_{\sf H}(\advs\l,\event)+\eventweight_{\sf F}(\advs\l,\event)+\eventweight_{\sf T}(\advs\l,\event)\ge \eventweight(\advs\l,\event)$ holds for all the cases.

    \begin{table}[htbp]
        \begin{center}
            \begin{tabular}{|l|l|l|l|l||l|}
                \multicolumn{6}{l}{Let $w_1:=2\kam+2\kah-\heavyw$ and $w_2:=2\heavyw-2\kah-\kam$.}\\
                \hline
                    &       & $\eventweight_{\sf H}$     & $\eventweight_{\sf F}$     & $\eventweight_{\sf T}$     & $\eventweight$ \\
                \hline
                \multirow{2}[4]{*}{$|\mathbf{H}\l|=0^{\; 1}$} & $\mathrm{Spe}(\advs\l)=\false$ & $w_1$ & 0     & 0     & $w_1$ \\
            \cline{2-6}          & $\mathrm{Spe}(\advs\l)=\true$  & 0     & 0     & 0     & 0 \\
                \hline
                \multirow{2}[4]{*}{$|\mathbf{H}\l|=1^{\; 1}$} & $\flagb=\bot$   & $\ge w_1$ & $w_2$ & 0     & 0 \\
            \cline{2-6}          & $\flagb\neq\bot$  & $\ge w_1$ & $w_2$ & 0     & $\heavyw+\kam$ \\
                \hline
                \multirow{2}[4]{*}{$|\mathbf{H}\l|\ge 2^{\; 1}$} & $\flagb=\bot$   & $\ge w_1$ & $w_2$ & $-\kam$    & 0 \\
            \cline{2-6}          & $\flagb\neq\bot$  & \multicolumn{4}{c|}{Impossible$^{\;2}$} \\
                \hline 
                \multicolumn{6}{l}{\footnotesize
                    1. Let $\mathbf{H}\l$ includes all the honest blocks in $\gdta\l$ with block weight $\heavyw$. }\\
                \multicolumn{6}{l}{\footnotesize
                    2. If $\flagb\neq\bot$, there must be $|\mathbf{H}\l|=1$ according to our rule in updating flag block. }\\
            \end{tabular}%
            \caption{Event value and its components under different cases.}\label{tab:event components}
        \end{center}
    \end{table}%
\end{proof}

\subsection{Probability for each Component}\label{sec:prob:component}

Recalling that $\vec{\eta}:=(\difficulty,\heavyw,\advan,\timerw,\timerdiff)$ and $(m,\beta,\delay,\adv,\env)$ admissible w.r.t. $(\proto_{\sf GHAST}^{\vec{\eta}},\order_{\sf GHAST})$ in our analysis. 
\\
$\view^{(\proto_{\sf GHAST},\order_{\sf GHAST})}(\env,\adv,\secp)$ is the random variable denote the joint view of all the participant nodes and the adversary in all rounds. We denote it as $\view$ in section~\ref{sec:prob:component}. Now we define several random variables determined by $\view$. Let $\view_r$ denote the joint view before round $r$. 

Let $\event_n$ denote the $n^{th}$ event since the ghast protocol launched and $\advs_{n-1}$ denote the adversary state when event $\event_n$ happens. Similarly, symbols $\ggen_n$, $\gmax_n$, $\gmin_n$, $\gdta_n$, $\mathbf{M}_n$, $\flagb_n$, $\vcmp_n$, $\speset_n$, $v_n$ denote corresponding components of $\advs_n$ in the context for any subscript.
For any round $r$, let $N(r)$ be the index of last event before round $r$. We define random variables $ \varm_r$,  $\varh_r$, $\varf_r$ as follows:
\begin{equation}\label{eq:event rv}
    \begin{aligned}
        \varm_r  &:=\sum_{i=N(r)+1}^{N(r+1)} \eventweight_{\sf M}(\advs_{i-1},\event_i) &
        \varh_r  &:=\sum_{i=N(r)+1}^{N(r+1)} \eventweight_{\sf H}(\advs_{i-1},\event_i) &
        \varf_r  &:=\sum_{i=N(r)+1}^{N(r+1)} \eventweight_{\sf F}(\advs_{i-1},\event_i) 
    \end{aligned}
\end{equation}

For the case no event happens in round $r$, it will be $N(r)=N(r+1)$, so all the three random variables equal to 0 in this case.

We use valued random variable $\vars_r$ to denote if there exists an adversary state $\advs$ with $\mathrm{Spe}(\advs)=\true$ in phase 3 of round $r$. Futhermore, we use boolean-valued random variables $\varsm_r,\varsh_r,\varsf_r$ to distinguish the reason in triggering special status. $\varsm_r$ denotes there exists adversary state $\advs$ in phase 3 or round $r$ which satisfies the first rule in the definition of special status (definition~\ref{def:special}). $\varsh_r$ and $\varsf_r$ correspond to the second rule and the third rule. The random variables equal to 1 for the ``$\true$'' statement and equal to 0 otherwise. 

\begin{claim}\label{clm:probc:vars}
    For any given round $r$, $\vars_r\le \varsh_r+\varsf_r+\varsm_r$. 
\end{claim}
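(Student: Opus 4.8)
The statement is a pointwise (sure) inequality between $\{0,1\}$-valued random variables, so it suffices to argue for a fixed realization of $\view$, i.e.\ for a fixed execution. The plan is to unfold the relevant definitions and perform a three-way case split on which clause of Definition~\ref{def:special} is responsible for the special status. If $\vars_r=0$ the inequality $\vars_r\le\varsh_r+\varsf_r+\varsm_r$ is immediate since the right-hand side is non-negative, so the only case to treat is $\vars_r=1$.

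First I would assume $\vars_r=1$. By the definition of $\vars_r$ there is an adversary state $\advs$ with components $(\ggen,\gmax,\gmin,\gdta,\mathbf{M},\flagb,\vcmp,\speset,\spevalue)$ arising during phase~3 of round~$r$ (one of possibly several such states, since several events may be processed within a single round) with $\mathrm{Spe}(\advs)=\true$. By Definition~\ref{def:special}, $\mathrm{Spe}(\advs)=\true$ forces at least one of the following to hold \emph{for this very state}: (i) $\treew{\gdta\cap\mathbf{M},\tip(\vcmp)}\ge\kam$; (ii) $|\{\block\in\gdta\backslash\mathbf{M}:\block.\weight=1\}|\ge\kah$; (iii) $|\{\block\in\gdta\backslash\mathbf{M}:\block.\weight=\heavyw\}|\ge 3$.

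Then I would observe that this same witnessing state $\advs$ also certifies the corresponding indicator: since $\varsm_r$ is by definition the indicator of the existence of a phase-3 adversary state satisfying the first rule of Definition~\ref{def:special}, case (i) gives $\varsm_r=1$; likewise case (ii) gives $\varsh_r=1$ and case (iii) gives $\varsf_r=1$. In every case $\varsh_r+\varsf_r+\varsm_r\ge 1=\vars_r$, which, combined with the trivial $\vars_r=0$ case, establishes the claim. There is essentially no obstacle here; the only point worth a sentence of care is that the adversary state witnessing $\vars_r$ and the adversary state witnessing the relevant component can be taken to be literally the same state, so the case split loses nothing.
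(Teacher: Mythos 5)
Your proof is correct and matches the paper's intent exactly: the paper states this claim without proof as an immediate consequence of the definitions, and your unfolding — a single witnessing state with $\mathrm{Spe}=\true$ must satisfy one of the three rules of Definition~\ref{def:special}, and that same state certifies the corresponding indicator among $\varsm_r,\varsh_r,\varsf_r$ — is precisely the intended argument. Nothing is missing.
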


\begin{lemma}\label{lma:probc:varsm}
    For any given round $r_1<r_2$, we have 
    $$ (v_{N(r_2)}-v_{N(r_1)})/\kam\ge \sum_{i=r_1}^{r_2-1} \varsm_i/(\delay+1)-1.$$
\end{lemma}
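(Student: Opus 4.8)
The plan is to charge the increase of the value $\spevalue$ to the rounds in which the first rule of the special status (Definition~\ref{def:special}) is active, grouping those rounds into batches spaced more than $\delay$ rounds apart so that the witnessing malicious blocks of distinct batches are forced to be disjoint by the network-delay bound.

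Concretely, set $T := \{i : r_1 \le i < r_2,\ \varsm_i = 1\}$ and $S := |T|$; if $S = 0$ the bound is trivial since $\spevalue$ is non-decreasing (Claim~\ref{clm:spesv}). For each $i \in T$, fix (using the definition of $\varsm_i$) an event $\event_{n_i}$ occurring in phase 3 of round $i$ whose resulting adversary state $\advs_{n_i}$ satisfies the first rule, and set $W_i := \tree{\gdta_{n_i} \cap \mathbf{M}_{n_i},\tip(\vcmp_{n_i})}$; then $\tw(W_i) \ge \kam$ by the first rule, and $W_i \subseteq \speset_{n_i}$ by the update rule $\speset = \speset\l \cup (\tree{\gdta,\tip(\vcmp)} \cap \mathbf{M})$. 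Greedily extract $j_1 < j_2 < \cdots < j_p$ from $T$ with $j_{t+1} \ge j_t + \delay + 1$; since any $\delay+1$ consecutive rounds contain at most $\delay+1$ elements of $T$, this yields $p \ge \lceil S/(\delay+1)\rceil \ge S/(\delay+1)$.

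The key step is a disjointness claim resting on admissibility (Definition~\ref{def:admissible}): a malicious block in $W_{j_{t'}}$ lies in $\gdta_{n_{t'}} = \gmax_{n_{t'}} \setminus \gmin_{n_{t'}}$, so its $\allrec$ event has not yet fired, hence its $\mrls$ event occurred at a round $\ge j_{t'} - \delay$; for $t' > t$ this is $\ge j_t + 1$, so the block is absent from $\gmax$ — and therefore from $\speset$ — throughout round $j_t$. Consequently the $W_{j_t}$ are pairwise disjoint, and for $t \ge 2$ every block of $W_{j_t}$ is first added to $\speset$ at an event occurring in rounds $(j_{t-1}, j_t]$; letting $E_t$ collect these events, the sets $E_2,\dots,E_p$ are pairwise disjoint and all their events lie strictly after round $r_1$ and at or before round $r_2 - 1$. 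At an event $e \in E_t$ the weight added to $\speset$ is at least the weight of the $W_{j_t}$-blocks entering at $e$, so the increment of $\spevalue$ at $e$ is at least $\min\{\kam,\text{that weight}\}$; summing over $E_t$ and using $\sum_e \min\{\kam,x_e\} \ge \min\{\kam,\sum_e x_e\}$ with $\sum_{e \in E_t} x_e = \tw(W_{j_t}) \ge \kam$, the total increment of $\spevalue$ over $E_t$ is at least $\kam$. Since $E_2,\dots,E_p$ are disjoint events with indices in $\{N(r_1)+1,\dots,N(r_2)\}$ and all $\spevalue$-increments are non-negative, $\spevalue_{N(r_2)} - \spevalue_{N(r_1)} \ge (p-1)\kam \ge (S/(\delay+1) - 1)\kam$, which is the claim after dividing by $\kam$.

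I expect the main obstacle to be reconciling the per-event cap $\min\{\kam,\cdot\}$ in the definition of $\spevalue$ with the desired clean ``$\kam$ per batch'' lower bound: one must check that whether the $\ge\kam$ weight of a batch $W_{j_t}$ enters $\speset$ all at once or trickles in over many events, the capped increments still sum to at least $\kam$ — which is exactly what $\sum_e \min\{\kam,x_e\} \ge \min\{\kam,\sum_e x_e\}$ delivers. A secondary care point is the off-by-one bookkeeping confining $E_2,\dots,E_p$ to the index window counted by $\spevalue_{N(r_2)} - \spevalue_{N(r_1)}$; this forces us to discard $E_1$, whose insertions may predate round $r_1$, accounting for the additive $-1$ in the statement.
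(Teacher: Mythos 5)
Your proof is correct and rests on the same two facts as the paper's: a special round's witness set $W$ has $\tw(W)\ge\kam$, lies in $\speset$ by that round, yet entered $\gmax$ (hence $\speset$) only within the preceding $\delay$ rounds, with the per-event cap absorbed via $\sum_e\min\{\kam,x_e\}\ge\min\{\kam,\sum_e x_e\}$. The only divergence is the final tally: the paper sums the window increments $\spevalue_{N(i+1)}-\spevalue_{N(i-\delay)}\ge\kam$ over all special rounds and notes that each elementary increment is counted at most $\delay+1$ times, whereas you sparsify to $(\delay+1)$-separated special rounds and prove their witness batches disjoint --- an equivalent piece of double counting yielding the same bound.
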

\begin{proof}
    For any round $r$ with $\varsm_r=1$, we have adversary state $\advs_n$ with $N(r)<n\le N(r+1)$ such that $$\tw(\tree{\gdta_n,\tip(\vcmp_n)}\cap \mathbf{M}_n)\ge \kam.$$
    Let $\mathbf{T}_n:=\tree{\gdta_n,\tip(\vcmp_n)}$. According to claim~\ref{clm:spesv}, we have $\mathbf{T}_n\cap \mathbf{M}_n\subseteq\speset_n$. Since $\gdta_n$ only contains blocks generated no earlier than round $r-\delay+1$. Let $n'=N(r-\delay)$, we have $\gdta_n\cap \ggen_{n'}=\emptyset$. Thus $\left(\mathbf{T}_n\cap \mathbf{M}_n\right)\cap\speset_{n'}=\emptyset$. So we have 
    $$ \tw(\speset_{n}\backslash\speset_{n'})\ge \kam.$$

    According to claim~\ref{clm:spesv}, $\tw(\speset_{i+1}\backslash\speset_{i})\le \min\{\kam,v_{i+1}-v_i\}$ and $\speset_{i}\subseteq\speset_{i+1}$ and $\spevalue_i-\spevalue_{i-1}\ge 0$ hold for all the $i$. So we have
    \begin{align*}
        \tw(\speset_{n}\backslash\speset_{n'}) \le &\sum_{i=N(r-\delay)+1}^{N(r+1)} \tw(\speset_i\backslash\speset_{i-1}) \\ 
        \le &\sum_{i=N(r-\delay)+1}^{N(r+1)} \min\{\kam,\spevalue_i-\spevalue_{i-1}\} \\
        \le &\min\{\kam,\spevalue_{N(r+1)}-\spevalue_{N(r-\delay)}\} 
    \end{align*}

    Thus we claim $\spevalue_{N(r+1)}-\spevalue_{N(r-\delay)}\ge \kam$ holds if $\varsm_r=1$. Recalling that $\spevalue_i$ is non-decreasing in terms of $i$, we have 
    \begin{align*}
        \sum_{i=r_1+\delay}^{r_2-1} \varsm_i\le &\left(\sum_{i=r_1+\delay}^{r_2-1} \spevalue_{N(i+1)}-\spevalue_{N(i-\delay)}\right)/\kam \\ 
        \le &(\delay+1)/\kam\cdot\left(\spevalue_{N(r_2)}-\spevalue_{N(r_1)}\right) \\
    \end{align*}

    The rest part $\sum_{i=r_1}^{r_1+\delay-1} \varsm_i\le \delay$ holds trivially. Thus 
    $$ (v_{N(r_2)}-v_{N(r_1)})/\kam\ge \sum_{i=r_1}^{r_2-1} \varsm_i/(\delay+1)-1.$$
\end{proof}

\begin{lemma}\label{lma:probc:varf}
    For any given round $r_1<r_2$, we have 
    $$ \sum_{i=N(r_1)+1}^{N(r_2)}-\eventweight_{\sf T}(\advs_{i-1},\event_i)/\kam \ge \sum_{j=r_1}^{r_2-1} \varsf_j/(\delay+1)-1.$$
\end{lemma}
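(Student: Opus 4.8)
The plan is to charge each round $r$ counted by $\varsf_r$ to a nearby $\hgen$ event of a weight-$\heavyw$ block that contributes $-\kam$ to $\eventweight_{\sf T}$, and then to convert this local charging into the stated inequality by a double-counting argument; the factor $\delay+1$ will come from the fact that a single such event can be charged by at most $\delay$ (hence at most $\delay+1$) distinct rounds. For notation, for a round $j$ write $T_j:=\sum_{i\,:\,\event_i\text{ in round }j}\bigl(-\eventweight_{\sf T}(\advs_{i-1},\event_i)/\kam\bigr)$; each summand is $0$ or $1$ by the definition of $\eventweight_{\sf T}$, so $T_j$ is a nonnegative integer, and since the events occurring in rounds $r_1,\dots,r_2-1$ are exactly those with index in $\{N(r_1)+1,\dots,N(r_2)\}$ we have $\sum_{j=r_1}^{r_2-1}T_j=\sum_{i=N(r_1)+1}^{N(r_2)}-\eventweight_{\sf T}(\advs_{i-1},\event_i)/\kam$.

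The heart of the argument is the local claim: if $\varsf_r=1$, then $T_j\ge 1$ for some round $j\in\{r-\delay+1,\dots,r\}$. By definition of $\varsf_r$ there is an adversary state $\advs_n$ arising during phase 3 of round $r$ with three distinct honest weight-$\heavyw$ blocks $\block_1,\block_2,\block_3\in\gdta_n\setminus\mathbf{M}_n$; order them so their $\hgen$ events $\event_{i_1},\event_{i_2},\event_{i_3}$ satisfy $i_1<i_2<i_3$. Since $\gdta=\gmax\setminus\gmin$, a block leaves $\gdta$ by entering $\gmin$ at its $\allrec$ event, and an $\allrec$ event fires exactly $\delay$ rounds after a block is first incorporated; hence every block lying in $\gdta$ during phase 3 of round $r$ was generated in one of the rounds $r-\delay+1,\dots,r$, and in particular so were $\block_1,\block_2,\block_3$. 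Now consider $\advs_{i_3-1}$, the state just before $\event_{i_3}$: the blocks $\block_1,\block_2$ lie in $\gmax_{i_3-1}$ because their $\hgen$ events have index $<i_3$, and they are not yet in $\gmin_{i_3-1}$ because their $\allrec$ events occur at rounds $\ge(r-\delay+1)+\delay=r+1$, hence strictly after the round (which is $\le r$) of $\event_{i_3}$, hence strictly later in the event sequence. Therefore $\gdta_{i_3-1}$ contains at least the two honest weight-$\heavyw$ blocks $\block_1,\block_2$, and $\event_{i_3}$ is an $\hgen$ event of a weight-$\heavyw$ block, so $\eventweight_{\sf T}(\advs_{i_3-1},\event_{i_3})=-\kam$; taking $j$ to be the round of $\event_{i_3}$ gives $T_j\ge 1$ with $j\in\{r-\delay+1,\dots,r\}$.

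To finish, split $\sum_{r=r_1}^{r_2-1}\varsf_r$ into the contribution of the first (at most) $\delay$ rounds, i.e. the $r$ with $r_1\le r< r_1+\delay$, which is at most $\delay$ since $\varsf_r\le 1$, and the contribution of the $r$ with $r_1+\delay\le r\le r_2-1$. For each such $r$ with $\varsf_r=1$, pick a witnessing round $\rho(r)\in\{r-\delay+1,\dots,r\}$ as in the local claim; then $\rho(r)\in\{r_1+1,\dots,r_2-1\}\subseteq\{r_1,\dots,r_2-1\}$, at most $\delay$ values of $r$ share any given value of $\rho$, and every value in the image of $\rho$ is a round $j$ with $T_j\ge1$. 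Hence $\sum_{r_1+\delay\le r\le r_2-1}\varsf_r\le\delay\cdot|\{\text{values of }\rho\}|\le\delay\sum_{j=r_1}^{r_2-1}T_j$, so $\sum_{r=r_1}^{r_2-1}\varsf_r\le\delay+\delay\sum_{j=r_1}^{r_2-1}T_j$. Dividing by $\delay+1$ and using $\delay/(\delay+1)\le1$ gives $\sum_{i=N(r_1)+1}^{N(r_2)}-\eventweight_{\sf T}(\advs_{i-1},\event_i)/\kam=\sum_{j=r_1}^{r_2-1}T_j\ge\frac{1}{\delay+1}\sum_{j=r_1}^{r_2-1}\varsf_j-1$, as required; the degenerate case $r_2-1<r_1+\delay$ is immediate since then $\sum_{r}\varsf_r\le\delay$.

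I expect the only delicate point to be the event-timing bookkeeping inside the local claim — precisely, justifying from the round/phase structure and the admissibility condition that (i) any honest block lying in $\gdta$ during phase 3 of round $r$ was generated within the previous $\delay$ rounds, and (ii) when the last of the three heavy blocks is generated, the other two have not yet had their $\allrec$ events and thus still sit in $\gdta$. This mirrors the windowing step of Lemma~\ref{lma:probc:varsm}; everything else is routine counting.
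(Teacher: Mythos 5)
Your proof is correct and follows essentially the same route as the paper's: both identify the latest $\hgen$ event among the three honest weight-$\heavyw$ blocks, observe that just before it $\gdta$ already contains two such blocks so that event contributes $-\kam$ to $\eventweight_{\sf T}$, localize that event to within $\delay$ rounds of $r$, and then bound the overlap of the resulting windows by $\delay$ (the paper uses $\delay+1$) before dividing. Your witness-function phrasing of the double counting and your explicit justification that the two earlier blocks have not yet had their $\allrec$ events are just more detailed renderings of steps the paper states tersely, not a different argument.
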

\begin{proof}
    For any round $r$ with $\varsf_r=1$, we have adversary state $\advs_n$ with $N(r)<n\le N(r+1)$ such that $\gdta_n$ contain at least three honest blocks with block weight $\heavyw$. Suppose event $\event_{n'}$ be the latest $\hgen$ event of these three blocks. So $\gdta_{n'-1}$ contain at least two honest blocks with block weight $\heavyw$. Thus $\eventweight_{\sf T}(\gdta_{n'-1},\event_{n'})=-\kam$. Since all the blocks in $\gdta_n$ should be generated no earlier than round $r-\delay+1$, we have $n'>N(r-\delay)$. Recalling that $\eventweight_{\sf T}(\advs_{i-1},\event_i)\le 0$ for all $i$, we have 
    $$\varsf_r\le \sum_{i=N(r-\delay)+1}^{N(r+1)}-\eventweight_{\sf T}(\advs_{i-1},\event_i)/\kam.$$
    Thus we have 
    \begin{align*}
        \sum_{i=r_1}^{r_2-1} \varsf_i= & \sum_{i=r_1}^{r_1+\delay-1} \varsf_i + \sum_{i=r_1+\delay}^{r_2-1} \varsf_i \\
        \le &\delay + \sum_{i=r_1+\delay}^{r_2-1}\sum_{j=N(i-\delay)+1}^{N(i+1)} -\eventweight_{\sf T}(\advs_{i-1},\event_i)/\kam\\ 
        \le &\delay + (\delay+1)\cdot \sum_{i=N(r_1)+1}^{N(r_2)} -\eventweight_{\sf T}(\advs_{i-1},\event_i)/\kam.
    \end{align*}
\end{proof}

\begin{lemma}\label{lma:probc:1}
    For any round $r_1<r_2$, 
    let $u:=(\heavyw-2\kam-2\kah)/\heavyw$, 
    $p_1(t)=\exp\left(\frac{(e^{t\heavyw}-1)\cdot \beta m}{\heavyw\difficulty}\right)$,  
    $p_2(t)=\exp\left(\frac{(e^{-tu\heavyw}-1)\cdot (1-\beta) m}{\heavyw\difficulty}\right)$, 
    $p_3(t)=\exp(-t\kam/(\delay+1))$ 
    and $p(t):=p_1(t)\cdot\max\{p_2(t),p_3(t)\}$. 
    For any round $r$, let $X_r:=\varm_r+\varh_r-\kam/(\delay+1)\cdot\vars_r$.
    For any $t>0$, any $\view_r$ and any $k\in\mathbb{R}$, we have
    $$ \Pr\left[\sum_{i=r_1}^{r_2-1}X_i\ge k\middle|\view_{r_1}\right]\le p(t)^{r_2-r_1}/e^{tk}.$$
    and 
    $$\Pr\left[\sum_{i=r_1}^{r_2-1}M_i\ge k \middle|\view_{r_1}\right]\le p_1(t)^{r_2-r_1}/e^{tk}.$$
\end{lemma}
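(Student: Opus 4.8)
The plan is a Chernoff-style argument via the moment generating function. Fix $t>0$; I would show the single-round moment bounds $\mathbb{E}\!\left[e^{t\varm_r}\mid\view_r\right]\le p_1(t)$ and $\mathbb{E}\!\left[e^{tX_r}\mid\view_r\right]\le p(t)$ for every round $r$ and every realization of $\view_r$. The statement then follows by peeling rounds off one at a time: $X_{r_2-1}$ (resp.\ $\varm_{r_2-1}$) is a function only of the randomness sampled during round $r_2-1$, so by the tower rule $\mathbb{E}[e^{t\sum_{i=r_1}^{r_2-1}X_i}\mid\view_{r_1}]=\mathbb{E}[e^{t\sum_{i=r_1}^{r_2-2}X_i}\,\mathbb{E}[e^{tX_{r_2-1}}\mid\view_{r_2-1}]\mid\view_{r_1}]\le p(t)\,\mathbb{E}[e^{t\sum_{i=r_1}^{r_2-2}X_i}\mid\view_{r_1}]$, and induction gives $p(t)^{r_2-r_1}$ (and $p_1(t)^{r_2-r_1}$ for the $M$-sum). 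Markov's inequality then yields $\Pr[\sum_{i=r_1}^{r_2-1}X_i\ge k\mid\view_{r_1}]\le e^{-tk}p(t)^{r_2-r_1}$ and the analogous bound for $\sum M_i$, which are the two displayed inequalities.

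For the single-round bounds I would first isolate the adversarial mining. Only the $\mgen$ events of phase 3(b) contribute to $\varm_r$, and a key structural observation is that phase 3(b) changes only $\ggen$ and $\mathbf{M}$, never $\gmax$ or $\gmin$; hence $\gdta=\gmax\backslash\gmin$, and with it $\varh_r$ and the indicator $\vars_r$, are already fixed by the information $\mathcal{F}$ revealed at the end of phase 3(a). Conditioned on $\mathcal{F}$ the adversary performs at most $\beta m$ mining queries, each independently yielding a valid block with probability $1/\difficulty$, and for a produced malicious block the adversary — by its choice of past graph — can make the weight either deterministically $1$ (optimistic regime) or $\{0,\heavyw\}$-valued with mean $1$ (conservative regime). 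Convexity of $x\mapsto e^{tx}$ gives $e^{t}-1\le(e^{t\heavyw}-1)/\heavyw$, so in either regime the weight's exponential moment is at most $1+(e^{t\heavyw}-1)/\heavyw$; taking the product over the at most $\beta m$ queries (the per-query factor is $\ge 1$) and using $1+x\le e^{x}$ yields $\mathbb{E}[e^{t\varm_r}\mid\mathcal{F}]\le\bigl(1+\tfrac{e^{t\heavyw}-1}{\heavyw\difficulty}\bigr)^{\beta m}\le p_1(t)$, and taking expectations over $\mathcal{F}$ gives $\mathbb{E}[e^{t\varm_r}\mid\view_r]\le p_1(t)$ — exactly the $M$-only claim.

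To upgrade to $X_r=\varm_r+\varh_r-\tfrac{\kam}{\delay+1}\vars_r$, set $Y_r:=\varh_r-\tfrac{\kam}{\delay+1}\vars_r$. Conditioning on the full phase-3(a) transcript, $Y_r$ is determined while $\mathbb{E}[e^{t\varm_r}\mid\cdot]\le p_1(t)$ still holds, so $\mathbb{E}[e^{tX_r}\mid\view_r]\le p_1(t)\,\mathbb{E}[e^{tY_r}\mid\view_r]$ and it remains to prove $\mathbb{E}[e^{tY_r}\mid\view_r]\le\max\{p_2(t),p_3(t)\}$. By the definition of $\eventweight_{\sf H}$ we have $\varh_r=-u\,W$ with $u=(\heavyw-2\kah-2\kam)/\heavyw\ge0$ and $W$ the total weight of round-$r$ honest blocks generated while $\mathrm{Spe}$ was $\false$; note $W$ equals the full round-$r$ honest block weight $W_h$ on $\{\vars_r=0\}$, and $Y_r\le-\tfrac{\kam}{\delay+1}$ on $\{\vars_r=1\}$. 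I would sequence the $(1-\beta)m$ honest queries within phase 3(a): while $\mathrm{Spe}$ is still $\false$, each query contributes a factor at most $1+(e^{-tu\heavyw}-1)/(\heavyw\difficulty)\le e^{(e^{-tu\heavyw}-1)/(\heavyw\difficulty)}$ to $e^{tY_r}$ — again using that the conservative regime is the worst case for $\mathbb{E}[e^{-tu\cdot\mathrm{wt}}]$ — and the running product is ``absorbed'' into the constant $e^{-t\kam/(\delay+1)}$ the moment $\mathrm{Spe}$ turns $\true$; this yields $\mathbb{E}[e^{tY_r}\mid\view_r]\le p_2(t)+p_3(t)-1\le\max\{p_2(t),p_3(t)\}$, the last step using $p_2(t),p_3(t)\le1$. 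Combining gives $\mathbb{E}[e^{tX_r}\mid\view_r]\le p_1(t)\max\{p_2(t),p_3(t)\}=p(t)$. The main obstacle is exactly this combination: $\vars_r$ and $W_h$ are \emph{not} independent — special status is triggered precisely by honest weight (and withheld malicious weight) sitting in $\gdta$ — so one cannot simply split $\mathbb{E}[e^{tY_r}]$ into a $\{\vars_r=0\}$ part bounded by $p_2(t)$ plus a $\{\vars_r=1\}$ part bounded by $p_3(t)$. Making the sequential argument above rigorous — ordering the honest generations in phase 3(a), tracking the essentially monotone transitions of $\mathrm{Spe}$ within the round, and verifying the running bound never exceeds $\max\{p_2(t),p_3(t)\}$ — is the delicate part; the surrounding Chernoff and convexity steps are routine by comparison.
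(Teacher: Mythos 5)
Your proposal follows essentially the same route as the paper's proof: a per-round moment-generating-function bound chained by the tower rule and finished with Markov's inequality, with the adversarial part handled query-by-query (at most $\beta m$ queries, success probability $1/\difficulty$, and convexity of $x\mapsto e^{tx}$ to show the $\{0,\heavyw\}$-weight regime dominates the weight-$1$ regime — this is exactly the paper's ``AM--GM'' step). Your observation that phase 3(b) leaves $\gmax,\gmin$ and hence $\gdta$, $\varh_r$ and $\vars_r$ untouched, so the adversarial factor $p_1(t)$ multiplies out, is also implicit in the paper.

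The one point of divergence is the bound $\mathbb{E}\!\left[e^{tY_r}\right]\le\max\{p_2(t),p_3(t)\}$ for $Y_r=\varh_r-\tfrac{\kam}{\delay+1}\vars_r$. The paper conditions on $\vars_r$ and simply \emph{asserts} that the round-$r$ honest weight sum is independent of $\vars_r$, giving $\Pr[\vars_r=0]\,p_2+\Pr[\vars_r=1]\,p_3\le\max\{p_2,p_3\}$; you correctly note that this independence fails (honest blocks generated in round $r$ enter $\gdta$ and can themselves trigger special status) and propose a sequential ``absorption'' argument instead. That is the more honest treatment, but as written it has a gap: the claimed intermediate bound $p_2(t)+p_3(t)-1$ does not follow from the split you describe. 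Decomposing over whether $\mathrm{Spe}$ ever triggers in the round gives a bound of the form $\mathbb{E}\!\left[e^{-tuW};\text{no trigger}\right]+p_3(t)\Pr[\text{trigger}]$ (with $W$ the honest round weight), and bounding the first term by $\min\{p_2(t),\Pr[\text{no trigger}]\}$ yields at best something like $p_2+p_3-p_2p_3$, which can strictly exceed $\max\{p_2,p_3\}$; a genuine supermartingale over the within-round query sequence also does not close cleanly, because after the trigger the per-query factor reverts to $1>c$. You would additionally need to handle $\mathrm{Spe}$ toggling back to $\false$ within the round, which you only gesture at. So the step you yourself flag as delicate is the one that remains unproved — though, to be fair, it is precisely the step the paper covers with an unjustified independence claim, so your write-up is no weaker than the original and is more candid about where the difficulty lies.
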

\begin{proof}
    Let $\view'_r$ denote the joint view in $\view$ before the phase 3 of round $r$. For any $r\ge r_1$ and $t>0$, we have the following discussion. 

    In phase 3 of round $r$, suppose the adversary queries the oracle $\mine(\cdot)$ $x_1$ times with the block $\block_{\sf new}$ satisfying $\adp(\block_{\sf new}.\past)=\false$ and queries the oracle $x_2$ times the block $\block_{\sf new}$ satisfying $\adp(\block_{\sf new}.\past)=\true$. (Note that $\block_{\sf new}$ is not a valid block. But its past set $\block_{\sf new}.\past$ is fixed before querying random oracle.) Since we only allow adversary control $\beta m$ malicious nodes, $x_1+x_2\le \beta m$.
    
    For any given $x_1$, $x_2$ and $\view'$, let $A_i$ ($i\in[x_1]$) and $B_j$ ($j\in[x_2]$) denote the block weight if the adversary finds a solution to make the block valid and equals to 0 otherwise. So $$\varm=\sum_{i\in[x_1]}A_i+\sum_{j\in[x_2]}B_j.$$ 
    
    $B_j$ corresponds to the queries satisfying $\adp(\block_{\sf new}.\past)=\true$. So $B_j=\heavyw$ with probability $1/(\difficulty\heavyw)$ and $B_i=0$ otherwise. Thus $E\left[e^{tB_j}|\view'_r,x_1,x_2\right]=(\difficulty\heavyw-1+e^{-t\heavyw})/(\difficulty\heavyw).$ 

    $A_i$ corresponds to the queries satisfying $\adp(\block_{\sf new}.\past)=\false$. So $A_i=1$ with probability $1/\difficulty$ and $A_i=0$ otherwise. Thus $E\left[e^{tA_i}|\view'_r,x_1,x_2\right]=(\difficulty-1+e^{-t})/\difficulty.$ 
    According to AM-GM inequality, $e^{-t\heavyw}/\heavyw+(\heavyw-1)/\heavyw\ge e^{-t}$. Thus $(\difficulty-1+e^{-t})/\difficulty\le (\difficulty\heavyw-1+e^{-t\heavyw})/(\difficulty\heavyw)$. 
    
    Since all the $A_i$ and $B_j$ are independent conditioned on $\view'_r$, we have 

    $$ E\left[e^{t\varm_r}\middle|\view'_r\right]\le\left((\difficulty\heavyw-1+e^{t\heavyw})/(\difficulty\heavyw)\right)^{x_1+x_2}.$$

    Notice that $(\difficulty\heavyw-1+e^{t\heavyw})/(\difficulty\heavyw)\le \exp((e^{t\heavyw}-1)/(\difficulty\heavyw))$, recalling $x_1+x_2\le \beta m$, we have 
    $$ E\left[e^{t\varm_r}\middle|\view'_r\right]\le\exp((e^{t\heavyw}-1)\beta m/(\difficulty\heavyw))=p_1(t).$$

    $M_r$ is the sum of block weight for malicious blocks generated in round $r$. If $\vars_r=0$, then $H_r$ is $-u$ times the sum of block weight for honest blocks generated in round $r$. And the sum of block weight is independent with $\vars_r$, Similar with the previous claim, for all the $\view'_r$ and $t>0$,
    $$ E\left[e^{t\varh_r}\middle|\view'_r,\vars_r=0\right]\le\exp((e^{-tu\heavyw}-1)(1-\beta) m/(\difficulty\heavyw))=p_2(t).$$

    If $\vars_r=1$, we have $\varh_r\le 0$. Thus
    $$ E\left[e^{t(\varh_r-\kam/(\delay+1)\cdot \vars_r)}\middle|\view'_r\right]\le\max\{p_2(t),p_3(t)\}.$$

    Since $\varm_r,\varh_r,\vars_r$ are independent under given $\view'_r$, thus for any given $r\ge r_1$, $\view'_r$ and $t>0$
    $$ E\left[e^{tX_r}\middle|\view'_r\right]\le p_1(t)\cdot\max\{p_2(t),p_3(t)\}=p(t).$$

    When $r\ge r_1$, $\view_{r_1}$ and $X_i$ for $i\in[r_1,r)$ are determined only by $\view'_r$. So for any $\view_{r_1}$, $r\ge r_1$, $X_i$ for $i\in[r_1,r)$ and  $t>0$, we have
    $$ E\left[e^{tX_r}\middle|e^{t\sum_{i=r_1}^{r-1}X_i},\view_{r_1}\right]\le p(t).$$
    Thus
    $$ E\left[e^{t\sum_{i=r_1}^{r}X_i}\middle|\view_{r_1}\right]\le p(t)\cdot E\left[e^{t\sum_{i=r_1}^{r-1}X_i}\middle|\view_{r_1}\right].$$
    By induction, we have
    $$ E\left[e^{t\sum_{i=r_1}^{r_2}X_i}\middle|\view_{r_1}\right]\le p(t)^{r_2-r_1}.$$
    According to the Markov’s inequality, for any $t>0$, any $\view_r$ and any $k\in\mathbb{R}$, we have
    $$ \Pr\left[\sum_{i=r_1}^{r_2-1}X_i\ge k\middle|\view_{r_1}\right]\le p(t)^{r_2-r_1}/e^{tk}.$$
    Similarly, since we have $E\left[e^{t\varm_r}\middle|\view'_r\right]\le p_1(t)$, thus
    $$ \Pr\left[\sum_{i=r_1}^{r_2-1}\varm_i\ge k\middle|\view_{r_1}\right]\le p_1(t)^{r_2-r_1}/e^{tk}.$$
\end{proof}

\begin{lemma}\label{lma:probc:2}
    For any given round $r_1<r_2$, let $q:=(1-\beta)m\delay/(\heavyw\difficulty)$, $\mathrm{B}(n,p)$ denote the binomial distribution with experiment times $n$ and success probability $p$, $Y_1$ follows the probability distribution $\mathrm{B}((1-\beta)m (r_2- r_1),1/(\heavyw\difficulty))$. 
    For any $\view_{r_1}$, $k_2\in \mathbb{N}$ and $y\in \mathbb{N}$, we have 
    $$ \Pr\left[\sum_{i=r_1}^{r_2-1}\varf_i\ge k_2\cdot(2\heavyw-2\kah-\kam)\middle|\view_{r_1}\right]\le \Pr[Y_1\ge y+1] + \Pr_{Y_2\sim \mathrm{B}(y,q)}[Y_2\ge k_2].$$
\end{lemma}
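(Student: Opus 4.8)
The plan is to recognize $\sum_{i=r_1}^{r_2-1}\varf_i$ as a fixed multiple of a count of ``coinciding'' honest heavy-block generations, and then to dominate that count by the two binomials on the right-hand side. Directly from the definitions of $\eventweight_{\sf F}$ and of $\varf_r$, we have $\sum_{i=r_1}^{r_2-1}\varf_i=(2\heavyw-2\kah-\kam)\cdot N_{\mathrm{col}}$, where $N_{\mathrm{col}}$ is the number of honest blocks $\block$ with $\block.\weight=\heavyw$ whose $\hgen$ event occurs in rounds $r_1,\dots,r_2-1$ and such that, just before that event, $\gdta$ already contains some honest block of weight $\heavyw$ (necessarily distinct from $\block$, since $\block$ has not yet been generated). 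So it suffices to bound $\Pr[N_{\mathrm{col}}\ge k_2\mid\view_{r_1}]$.

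First I would pin down how long an honest block sits in $\gdta$. An honest block enters $\gmax$ at the round $s$ in which it is generated, and enters $\gmin$ only at its forced $\allrec$ event, which by admissibility occurs at round $s+\delay$; hence it lies in $\gdta=\gmax\setminus\gmin$ exactly during the block-generation phases of rounds $s,s+1,\dots,s+\delay-1$. List the honest weight-$\heavyw$ blocks generated in $[r_1,r_2)$ in order of generation as $\block_1,\dots,\block_N$. Each such block is produced by a distinct honest query that both solves its puzzle and receives the weight tag, an event of conditional probability at most $1/(\heavyw\difficulty)$ (the extra requirement $\adp(\block.\past)=\mathsf{con}$ can only lower it); since honest nodes make $(1-\beta)m$ queries per round over $r_2-r_1$ rounds, this gives, conditionally on $\view_{r_1}$, the stochastic domination $N\preceq Y_1$. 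Next, $\block_j$ contributes to $N_{\mathrm{col}}$ iff the previous honest weight-$\heavyw$ block is still in $\gdta$ when $\block_j$ is generated. For $j\ge 2$ that previous block is $\block_{j-1}$, which sits in $\gdta$ only during the $\delay$-round window starting at its generation round; for $j=1$ it is, if it exists, a block generated before $r_1$ whose identity and generation round are determined by $\view_{r_1}$ and whose remaining $\gdta$-lifetime inside $[r_1,r_2)$ is at most $\delay-1$ rounds. In either case, conditioning on the natural filtration up to the generation of that previous block, the event ``$\block_j$ is generated inside the relevant $\delay$-round window'' is contained in the event that at least one of the at most $(1-\beta)m\delay$ honest queries falling in that window both solves its puzzle and receives the weight tag; a union bound bounds its conditional probability by $(1-\beta)m\delay/(\heavyw\difficulty)=q$. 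Crucially, since $\allrec$ timings are forced and $\hgen$ timings are proof-of-work outcomes, this bound is insensitive to the adversary's (adaptive) release schedule.

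It then remains to assemble these facts. Split $\{N_{\mathrm{col}}\ge k_2\}\subseteq\{N\ge y+1\}\cup\{N\le y,\ N_{\mathrm{col}}\ge k_2\}$; the first piece has conditional probability at most $\Pr[Y_1\ge y+1]$. On $\{N\le y\}$, set $\tilde C_j:=\mathbb 1[\block_j\text{ exists and contributes to }N_{\mathrm{col}}]$ for $j=1,\dots,y$, so that there $N_{\mathrm{col}}=\sum_{j=1}^y\tilde C_j$; by the previous paragraph each $\tilde C_j$ is, with respect to a suitable filtration $\mathcal F_{j-1}$, conditionally $\mathrm{Bern}(p_j)$ with $p_j\le q$ (the constraint that $\block_j$ exist only tightens this). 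A standard stochastic-domination argument — e.g.\ couple $\tilde C_j=\mathbb 1[U_j<p_j]\le \mathbb 1[U_j<q]=:B_j$ with $U_j$ uniform, so that the $B_j$ are i.i.d.\ $\mathrm{Bern}(q)$ and $\sum_{j\le y}\tilde C_j\le\sum_{j\le y}B_j\sim\mathrm B(y,q)$ — yields $\Pr[\sum_{j\le y}\tilde C_j\ge k_2\mid\view_{r_1}]\le\Pr[Y_2\ge k_2]$. Adding the two pieces proves the lemma.

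I expect the main obstacle to be the bookkeeping in the second paragraph: making ``the previous honest heavy block'' and its $\gdta$-window precise enough — including the $j=1$ case, the possibility that several honest heavy blocks are generated within one round, and the exact phase at which $\allrec$ moves a block out of $\gdta$ — that each contribution indicator is genuinely conditionally $\mathrm{Bern}(p_j)$ with $p_j\le q$ relative to a well-defined filtration. Once that is in place, the domination step and the union over the two regimes ($N\ge y+1$ versus $N\le y$) are routine.
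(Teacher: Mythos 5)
Your proposal is correct and follows essentially the same route as the paper's proof: identify that each nonzero $\varf_i$ term corresponds to an honest heavy block generated while the previous honest heavy block is still within its $\delay$-round $\gdta$-window, bound each such collision by $q$ via the at most $(1-\beta)m\delay$ intervening honest queries, and split on whether more than $y$ honest heavy blocks appear (bounded by $\Pr[Y_1\ge y+1]$). The only difference is cosmetic — you phrase the per-collision bound as a filtration/coupling domination by i.i.d.\ $\mathrm{Bern}(q)$ variables, whereas the paper constructs an explicit event list by skipping queries and invokes independence directly.
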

\begin{proof}
    We refer the blocks satisfy $\prf^{\sf weight}(\block.\digest)\ge 2^\secp/\heavyw$ as \emph{tagged blocks} in this proof. For any given $y\in \mathbb{N}$, let $n_j$ ($j\in[y]$) denote the index of the $j^{th}$ $\hgen$ event of honest tagged block $\block$ no earlier than round $r_1$. Let $r'_j$ denote the round index of event $n_j$. 
    
    Recalling that only a tagged block can have block weight $\heavyw$. 
    If $r'_{y+1}\ge r_2$, we claim for any $r_1\le i < r_2$, $F_i=2\heavyw-2\kah-\kam$ only if there exists $n_j$ such that $r'_j-r'_{j-1}<\delay$. (We set $r'_0=r_1$.) 
    So we have $\sum_{i=r_1}^{r_2-1}\varf_i \ge k_2\cdot(2\heavyw-2\kah-\kam)$ only if there are at least $k_2$ different $j\in[y]$ satisfy $r'_j-r'_{j-1}<\delay$. 
    For any $j\in [y]$, $r'_j-r'_{j-1}<\delay$ only if the honest nodes queries oracle $\mine(\cdot)$ at least $(1-\beta)\cdot\delay m$ times between event event $\event_{n_j}$ (included) and event $\event_{n_{j-1}}$ (excluded). It will happens with probability no more than
    $ 1-(1-1/(\heavyw\difficulty))^{(1-\beta)m\delay}\le (1-\beta)m\delay/(\heavyw\difficulty)=q$ and 
    independent among different $j$. 
    So we have 
    $$\Pr\left[r'_{y+1}\ge r_2\wedge \sum_{i=r_1}^{r_2-1}\varf_i \ge k_2\cdot(2\heavyw-2\kah-\kam)\middle|\view_{r_1}\right]
    \le \Pr_{Y_2\sim \mathrm{B}(y,q)}[Y_2\ge k_2]
    .$$

    Notice that $r'_{y+1}\ge r_2$ represents the honest nodes generate at most $y$ tagged blocks between round $r_1$ (included) and round $r_2-1$ (included). The honest blocks query oracle $\mine(\cdot)$ $(1-\beta)\cdot m(r_2-r_1)$ times, and find a valid block satisfies this property with probability $1/(\heavyw\difficulty)$ in each query. Thus $r'_{y+1}< r_2$ holds with probability
    $$\Pr_{Y_1\sim \mathrm{B}((1-\beta)m (r_2- r_1),1/(\heavyw\difficulty))}[Y_1\ge y+1].$$
    Applying the union bound, we have 
    $$ \Pr\left[\sum_{i=r_1}^{r_2-1}\varf_i\ge k_2\cdot(2\heavyw-2\kah-\kam)\middle|\view_{r_1}\right]\le \Pr_{Y_1}[Y_1\ge y+1] + \Pr_{Y_2\sim \mathrm{B}(y,q)}[Y_2\ge k_2].$$
\end{proof}

\begin{lemma}\label{lma:probc:3}
    For any given round $r_1<r_2$, let $\mathrm{B}(n,p)$ denote the binomial distribution with experiment times $n$ and success probability $p$. For any $y\in\mathbb{Z}^+$, let $d_y:=\lceil\delay/y\rceil$, $Z_1\sim \mathrm{B}((1-\beta)m(y+1)\delay_y,1/\difficulty)$, $q:=\Pr[Z_1\ge \kah]$, $Z_2\sim \mathrm{B}(\lceil(r_2-r_1)/(\delay_y(y+1))\rceil,q)$. For any $k_3\in \mathbb{N}$ and $\view_{r_1}$, we have
    $$ \Pr\left[\sum_{i=r_1}^{r_2-1}\varsh_i\ge (y+1)\cdot(k_3+1)\cdot\delay_y\middle|\view_{r_1}\right]\le (y+1)\cdot \Pr[Z_2\ge k_3].$$
\end{lemma}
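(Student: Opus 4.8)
The plan is to control the number of \emph{bad} rounds $i\in[r_1,r_2)$ --- those with $\varsh_i=1$, i.e.\ rounds in whose phase~3 some adversary state $\advs$ has at least $\kah$ honest weight-$1$ blocks in $\gdta=\gmax\setminus\gmin$ --- by a window-covering argument in the spirit of Lemmas~\ref{lma:probc:varf} and~\ref{lma:probc:2}: I would reduce ``many bad rounds'' to ``several pairwise-disjoint time windows, each of length $(y+1)d_y$ and each containing at least $\kah$ honestly-generated weight-$1$ blocks'', and then dominate the count of such windows by a binomial using the chaining argument of Lemma~\ref{lma:probc:1}.

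The first ingredient is the elementary observation that an honest weight-$1$ block enters $\gmax$ at its $\hgen$ event (its generation round) and leaves $\gdta$ for $\gmin$ exactly $\delay$ rounds later, at its $\allrec$ event; hence a bad round $i$ forces at least $\kah$ honest weight-$1$ blocks to have been generated in the $\delay$ rounds up to and including round $i$. Next I would partition $[r_1,r_2)$ into consecutive intervals $I_1,I_2,\dots$ of length $d_y=\lceil\delay/y\rceil$ and let $G_j$ count the honest weight-$1$ blocks generated during $I_j$. Since $d_y\ge\delay/y$, the $\delay$-round window attached to a round in $I_j$ sits inside $I_{j-y}\cup\dots\cup I_j$, so a bad round in $I_j$ forces $\sum_{l=j-y}^{j}G_l\ge\kah$; I call such an $I_j$ \emph{heavy} and call $I_{j-y}\cup\dots\cup I_j$ (a block of $(y+1)d_y$ consecutive rounds) its \emph{blame window}. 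Every bad round lies in a heavy interval, so if $\sum_{i=r_1}^{r_2-1}\varsh_i\ge (y+1)(k_3+1)d_y$ there are at least $(y+1)(k_3+1)$ heavy intervals; at most $y$ of these can have index $\le y$ (blame window reaching before $r_1$), so discarding them leaves at least $(y+1)k_3+1$ heavy intervals with blame windows inside $[r_1,r_2)$, and the pigeonhole principle applied to residue classes modulo $y+1$ gives a class $c\in\{0,\dots,y\}$ with at least $k_3+1$ of them.

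The residue grouping is chosen so that heavy intervals $I_{j_1},I_{j_2}$ in the same class with $j_1<j_2$ satisfy $j_2\ge j_1+(y+1)$, whence their blame windows are \emph{disjoint} sets of $(y+1)d_y$ rounds; there are at most $\lceil(r_2-r_1)/((y+1)d_y)\rceil$ such disjoint windows in $[r_1,r_2)$. For one window $W$, conditioned on the joint view up to the start of $W$, the event ``$\ge\kah$ honest weight-$1$ blocks generated in $W$'' has probability at most $q=\Pr[Z_1\ge\kah]$: during the $(y+1)d_y$ rounds of $W$ the honest parties issue $(1-\beta)m(y+1)d_y$ queries to $\mine$, and each query yields a valid --- hence possibly weight-$1$ --- block with probability at most $1/\difficulty$, independently across rounds. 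Running the martingale/Markov chaining of Lemma~\ref{lma:probc:1} over the disjoint blame windows of class $c$ stochastically dominates the number of heavy windows of that class by $Z_2\sim\mathrm{B}(\lceil(r_2-r_1)/((y+1)d_y)\rceil,q)$, and a union bound over the $y+1$ classes gives
\[
\Pr\!\left[\sum_{i=r_1}^{r_2-1}\varsh_i\ge (y+1)(k_3+1)d_y\ \middle|\ \view_{r_1}\right]\ \le\ (y+1)\,\Pr[Z_2\ge k_3+1]\ \le\ (y+1)\,\Pr[Z_2\ge k_3].
\]

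I expect the only real difficulty here to be bookkeeping rather than ideas. One must pin down the window endpoints under the round-phase conventions so that the $\delay$-round window attached to a bad round provably fits into $y+1$ of the $d_y$-intervals (rather than $y+2$) and so that each blame window contributes exactly $(y+1)d_y$ honest query-rounds to the binomial $Z_1$; and one must absorb the $O(y)$ boundary intervals near $r_1$ and the single blame window clipped at $r_2$ so that the final window count is exactly $\lceil(r_2-r_1)/((y+1)d_y)\rceil$. The probabilistic heart --- per-window conditional domination by $\mathrm{B}(\,\cdot\,,1/\difficulty)$ and then Lemma~\ref{lma:probc:1}-style chaining --- is routine.
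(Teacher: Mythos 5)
Your proposal is correct and follows essentially the same route as the paper's proof: partition $[r_1,r_2)$ into length-$d_y$ intervals, observe that a bad round in interval $I_j$ forces $\ge\kah$ honest blocks in the $(y+1)d_y$-round blame window $I_{j-y}\cup\dots\cup I_j$, bound each window by the binomial $q=\Pr[Z_1\ge\kah]$, group windows by residue class modulo $y+1$ to get disjointness/independence, and union-bound over the $y+1$ classes. Your bookkeeping of the boundary intervals and the $k_3+1$ vs.\ $k_3$ slack is in fact slightly cleaner than the paper's own write-up of the same argument.
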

\begin{proof}
    We divide the rounds in $[r_1,r_2)$ into several intervals with length $\delay_y$ except the last one. The $j^{th}$ interval is $[r_1+(j-1)\cdot\delay_y,r_1+j\cdot\delay_y)$. Let $T_j=1$ denote if there exists a round $i$ with $\varsh_i=1$ in the $j^{th}$ interval and $T_j=0$ for other cases. Thus 
    $$\sum_{i=r_1}^{r_2-1}\varsh_i \le \sum_{j=1}^{\lceil(r_2-r_1)/\delay_y\rceil} \delay_y\cdot T_j.$$

    Since $\varsh_i=1$ only if honest nodes generates at least $\kah$ blocks in round $(i-\delay,i]$. So we claim $T_j=1$ ($j\ge y+1$) only if honest nodes generates at least $\kah$ blocks in round $[r_1+(j-y-1)\cdot\delay_y,r_1+j\cdot\delay_y)$. Since the honest nodes will query oracle $\mine(\cdot)$ in $(1-\beta)m(y+1)\cdot\delay_y$ times during round $[r_1+(j-y-1)\cdot\delay_y,r_1+j\cdot\delay_y)$, they will generate at least $\kah$ honest blocks with probability 
    $$q=\Pr_{Z_1\sim \mathrm{B}((1-\beta)m(y+1)\cdot\delay_y,1/\difficulty)}[Z_1\ge \kah].$$
    Notice that $T_{j_2}$ and $T_{j_1}$ are independent event if $j_2-j_1\ge y+1$. So for any $j'\in\{0,1,\cdots,y\}$, we construct a list of random variables for all the $T_j$ with $j\in[y+1,r_2-r_1)$ and $j$ modulo $y+1$ equals to $j'$. The list contains at most $\lceil(r_2-r_1)/(\delay_y(y+1))\rceil$ independent random variables. So the sum of these random variables is lager or equal $k$ with probability $\Pr[Z_2\ge k_3+1]$. \todo{Rewrite it!} Taking the union bound, we have 

    $$ \Pr\left[\sum_{j=y+1}^{\lceil(r_2-r_1)/\delay_y\rceil} T_j\ge (y+1)k_3 \middle|\view_{r_1}\right]\le (y+1)\Pr[Z_2\ge k_3]$$
    Since $\sum_{j=1}^{y} T_j< y+1$ holds trivially, we have 
    $$ \Pr\left[\sum_{i=r_1}^{r_2-1}\varsh_i\ge (y+1)\cdot(k_3+1)\cdot\delay_y\middle|\view_{r_1}\right]\le (y+1)\cdot \Pr[Z_2\ge k_3].$$
\end{proof}

\subsection{Summarize all the Components}

Now we summarize the previous lemmas. First we define a function $\probd(\lambda,\beta,\vec{\eta},\rd,\rho,\kam,\kah,\vec{t})$. $\vec{t}$ is a tuple of four parameters in $\mathbb{R}^+$, which are denoted by $t_1,t_2,t_3,t_4$. $\lambda$ equals to $m(\delay+1)/\difficulty$ and $\rd\in \mathbb{R}^+$. 

\begin{equation}\label{eq:probd}
    \begin{aligned}
        f(\mu,n) & :=\frac{e^{n-\mu}}{(n/\mu)^n}  \\
        x_1 & :=e^{t_1\heavyw}-1 \\
        x_2 & :=e^{t_1(2\kah+2\kam-\heavyw)}-1 \\
        k & :=\rho-(t_3+1)\cdot(2\heavyw-2\kah-\kam)-1.02\cdot\kam\cdot(t_4+2)-2\kam  \\
        q_1 & :=\exp( \rd \cdot \lambda\left(x_1\beta+x_2(1-\beta)\right)/\heavyw- t_1 k) \\
        q_2 & :=\exp(\rd  \cdot (\lambda x_1\beta/\heavyw - t_1 s_m)- t_1 k) \\
        q_3 & :=f(\rd\cdot (1-\beta)\lambda/\heavyw,t_2)+f(t_2\cdot (1-\beta)\lambda/\heavyw,t_3) \\
        x_3 & :=f(1.02\cdot \lambda(1-\beta),\kah) \\
        q_4 & :=101\cdot f((\rd+1)\cdot x_3,t_4) \\
        \probd(\lambda,\beta,\vec{\eta},\rd,\rho,\kam,\kah,\vec{t}) &:=\max\{q_1,q_2\}+q_3+q_4
    \end{aligned}
\end{equation}

\begin{lemma}\label{lma:eventdec}
    For any given round $r_1<r_2$, let $\rd:=(r_2-r_1)/(\delay+1)$, $\lambda = m(\delay+1)/\difficulty$. When $\delay>10^4$, for arbitrary $\kam\ge 0$, $\kah\ge 0$ and positive parameters in $\vec{t}$, we have 
    $$ \Pr\left[\sum_{i=N(r_1)+1}^{N(r_2)}\eventweight(\advs_{i-1},\event_i)-\left(\spevalue_{N(r_2)}-\spevalue_{N(r_1)}\right)\ge \rho\middle| \view_{r_1}\right]\le \probd(\lambda,\beta,\vec{\eta},\rd,\rho,\kam,\kah,\vec{t}).$$
\end{lemma}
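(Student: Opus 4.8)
The plan is to bound the left-hand side by decomposing the event-value sum into its four components via Lemma~\ref{lma:event components}, handle each component with the tail bounds proved in Section~\ref{sec:prob:component}, and finally absorb the $\spevalue$ correction and the $\eventweight_{\sf T}$ term into the special-status counters using Lemma~\ref{lma:probc:varsm} and Lemma~\ref{lma:probc:varf}. First I would write
$$ \sum_{i=N(r_1)+1}^{N(r_2)}\eventweight(\advs_{i-1},\event_i)-\left(\spevalue_{N(r_2)}-\spevalue_{N(r_1)}\right)\le \sum_{i}\left(\eventweight_{\sf M}+\eventweight_{\sf H}+\eventweight_{\sf F}+\eventweight_{\sf T}\right)-\left(\spevalue_{N(r_2)}-\spevalue_{N(r_1)}\right),$$
then regroup round-by-round using the random variables $\varm_r,\varh_r,\varf_r$ from eq.~(\ref{eq:event rv}) and the special-status indicators. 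The key move is to pay for the $-\kam/(\delay+1)\cdot\vars_r$ "rebate" hidden inside $X_r$ in Lemma~\ref{lma:probc:1}: by Claim~\ref{clm:probc:vars} we have $\vars_r\le\varsm_r+\varsh_r+\varsf_r$, so $-\tfrac{\kam}{\delay+1}\vars_r$ is at least $-\tfrac{\kam}{\delay+1}(\varsm_r+\varsh_r+\varsf_r)$; we then route the $\varsm$ part into $\spevalue_{N(r_2)}-\spevalue_{N(r_1)}$ via Lemma~\ref{lma:probc:varsm} (which gives $(\spevalue_{N(r_2)}-\spevalue_{N(r_1)})/\kam\ge\sum\varsm_i/(\delay+1)-1$, costing one additive $\kam$), and the $\varsf$ part into $\sum -\eventweight_{\sf T}/\kam$ via Lemma~\ref{lma:probc:varf} (again costing one additive $\kam$, which explains the $-2\kam$ term in the definition of $k$ in eq.~(\ref{eq:probd})).

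Next I would assemble the bound event by event. After the bookkeeping above, the surviving contribution splits into three independent-to-bound pieces: (i) the "drift" process $\sum_i X_i=\sum_i(\varm_i+\varh_i-\tfrac{\kam}{\delay+1}\vars_i)$, handled by the Chernoff bound of Lemma~\ref{lma:probc:1} with parameter $t_1$, yielding the term $\max\{q_1,q_2\}$ after substituting $p_1,p_2,p_3$, taking logs, and using $\lambda=m(\delay+1)/\difficulty$ and $\rd=(r_2-r_1)/(\delay+1)$ so that $(r_2-r_1)/\difficulty=\rd\lambda/\heavyw\cdot\heavyw$ — actually $(r_2-r_1)m/(\heavyw\difficulty)=\rd\lambda/\heavyw$; (ii) the flag-block surcharge $\sum_i\varf_i$, where $\eventweight_{\sf F}=2\heavyw-2\kah-\kam$ on the relevant events, controlled by Lemma~\ref{lma:probc:2} with the Poisson/binomial tail $f(\mu,n)$ producing $q_3$; and (iii) the $\varsh$ counter left over from piece (i)'s rebate, controlled by Lemma~\ref{lma:probc:3}, producing $q_4$ with the factor $101$ and the $f((\rd+1)x_3,t_4)$ shape. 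Here I would use the standard bound $\Pr[\mathrm{B}(n,p)\ge k]\le f(np,k)$ and $\Pr[\mathrm{Poisson}(\mu)\ge n]\le f(\mu,n)$ to convert the binomial tails in Lemmas~\ref{lma:probc:2} and~\ref{lma:probc:3} into the closed form $f$, and pick the free parameters $y$ in those lemmas appropriately (e.g. $y\approx t_3$, $y\approx t_4$) so the exponents line up with eq.~(\ref{eq:probd}); the $1.02$ and $101$ constants come from crudely bounding $(y+1)/y\le 1.02$ and $y+1\le 101$ under the hypothesis $\delay>10^4$, which forces $\rd$ and the relevant counts to be large enough.

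The main obstacle I expect is the careful constant-chasing in the last step: making the four tail bounds — which live on different time granularities ($\delay_y$-blocks in Lemma~\ref{lma:probc:3}, $(\delay+1)$-blocks in Lemma~\ref{lma:probc:varsm}, per-round in Lemma~\ref{lma:probc:1}) — compose cleanly, and verifying that the slack introduced by each conversion (the two $+\kam$'s, the $1.02$, the $101$, the $(t_3+1)$ and $(t_4+2)$ shifts) is exactly what is pre-baked into the definition of $k$ in eq.~(\ref{eq:probd}). Since $\varm_r,\varh_r$ and the counters $\vars_r$ are all conditionally independent given the pre-phase-3 view $\view'_r$ (as established inside Lemma~\ref{lma:probc:1}), and the $\varf$/$\varsh$ bounds are proved as standalone tail statements conditioned only on $\view_{r_1}$, a union bound over the three pieces plus the $\spevalue$/$\eventweight_{\sf T}$ algebraic identities finishes the proof; no genuinely new probabilistic idea is needed beyond what Section~\ref{sec:prob:component} already supplies, so the work is entirely in the arithmetic of matching the exponents.
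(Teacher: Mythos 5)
Your proposal is correct and follows essentially the same route as the paper's proof: decompose via Lemma~\ref{lma:event components}, convert $\eventweight_{\sf T}$ and the $\spevalue$ difference into the $\varsf$ and $\varsm$ counters via Lemmas~\ref{lma:probc:varf} and~\ref{lma:probc:varsm} (each costing one additive $\kam$), and union-bound the three tail events from Lemmas~\ref{lma:probc:1}--\ref{lma:probc:3} to obtain $\max\{q_1,q_2\}+q_3+q_4$. The only discrepancies are minor parameter bookkeeping (the paper sets $y=\lfloor t_2\rfloor$, $k_2=\lceil t_3\rceil$ in Lemma~\ref{lma:probc:2} and fixes $y=100$, $k_3=\lceil t_4\rceil$ in Lemma~\ref{lma:probc:3}), which do not affect the argument.
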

\begin{proof}
    This proof inherits symbols in equation~(\ref{eq:probd}). 

    Notice that $m/\difficulty\cdot(r_2-r_1)=\lambda\cdot \rd$, for any round $r$, let $X_r:=\varm_r+\varh_r-\kam/(\delay+1)\cdot\vars_r$. According to lemma~\ref{lma:probc:1}, we have
    \begin{equation}\label{eq:probc:final:1}
        \Pr\left[\sum_{i=r_1}^{r_2-1}X_i\ge k\middle|\view_{r_1}\right]\le \max\{q_1,q_2\}.
    \end{equation}
    If random variable $T$ is in binomial distribution $\mathrm{B}(n,p)$, for any $z>np$, according to chernoff bound, we have 
    $$ \Pr[T\ge z]\le \frac{e^{z/(np)-1}}{(z/(np))^{z/(np)}}=f(np,z).$$

    According to lemma~\ref{lma:probc:2}, (let variable $k_2$ in lemma~\ref{lma:probc:2} equals to $\lceil t_3\rceil$ and variable $y$ in it equals to $\lfloor t_2 \rfloor$), we have

    \begin{equation} \label{eq:probc:final:2}
        \Pr\left[\sum_{i=r_1}^{r_2-1}\varf_i\ge (t_3+1)\cdot(2\heavyw-2\kah-\kam)\middle|\view_{r_1}\right]\le q_3.
    \end{equation}

    Let variable $y$ in lemma~\ref{lma:probc:3} equals to 100 and variable $k_3$ in it equals to $\lceil t_4\rceil$. Since $\delay>10^4$, we have $\lceil\delay/100\rceil\cdot(100+1)\le 1.02(\delay+1)$. According to lemma~\ref{lma:probc:3},
    \begin{equation} \label{eq:probc:final:3}
        \Pr\left[\sum_{i=r_1}^{r_2-1}\varsh_i\ge 1.02\cdot (t_4+2)\cdot(\delay+1)\middle|\view_{r_1}\right]\le q_4
    \end{equation}

    Summarize the previous lemmas, we can link the random variables to the summation of event values.

    \begin{align*}
        & \sum_{i=N(r_1)+1}^{N(r_2)}\eventweight(\advs_{i-1},\event_i)\\
        \le & \sum_{i=N(r_1)+1}^{N(r_2)}\left(
            \eventweight_{\sf M}(\advs_{i-1},\event_i) + 
            \eventweight_{\sf H}(\advs_{i-1},\event_i) + 
            \eventweight_{\sf F}(\advs_{i-1},\event_i) + 
            \eventweight_{\sf T}(\advs_{i-1},\event_i)
            \right) 
        & \text{Lemma~\ref{lma:event components}} \\
        \le & \sum_{i=r_1}^{r_2-1} \left(\varm_i+\varh_i+\varf_i-\frac{\kam}{\delay+1}\cdot \varsf_i\right) +\kam 
        & \text{Eq.~(\ref{eq:event rv})\;\&\; Lemma~\ref{lma:probc:varf}}\\
        \le & \sum_{i=r_1}^{r_2-1} \left(X_i+\varf_i+\frac{\kam}{\delay+1}\cdot \left(\varsm_i+\varsh_i\right)\right) +\kam & \text{Claim~\ref{clm:probc:vars}}\\
        \le & \sum_{i=r_1}^{r_2-1} \left(X_i+\varf_i+\frac{\kam}{\delay+1}\cdot\varsh_i\right) +2\kam + \left(\spevalue_{N(r_2)}-\spevalue_{N(r_1)}\right)& \text{Lemma~\ref{lma:probc:varsm}}
    \end{align*}
    
    Since $\rho=k+(t_3+1)\cdot(2\heavyw-2\kah-\kam)+1.02\cdot(t_4+2)+2\kam$, according to equations~\ref{eq:probc:final:1}, \ref{eq:probc:final:2} and \ref{eq:probc:final:3}, for the parameters $\kam\ge 0$, $\kah\ge 0$ and $t_1,t_2,t_3,t_4$, we claim 
    $$ \Pr\left[\sum_{i=N(r_1)+1}^{N(r_2)}\eventweight(\advs_{i-1},\event_i)-\left(\spevalue_{N(r_2)}-\spevalue_{N(r_1)}\right)\ge \rho\middle| \view_{r_1}\right]\le \max\{q_1+q_2\}+q_3+q_4.$$
\end{proof}

We define function $\probf(\lambda,\beta,\vec{\eta},\rd,\rho)$ as the minimum value that $\probd(\lambda,\beta,\vec{\eta},\rd,\rho,\kam,\kah,\vec{t})$ can achieve by picking parameters $\kah,\kam$ and $\vec{t}$. Then we have the following lemma

\begin{lemma}\label{lma:probf}
    When $\lambda \ge {0.8\log(500/\delta)}$ and $\heavyw=30\lambda/\delta$, for any $\varepsilon>0$, if 
    $$\rd\ge \max\left\{(3+\rho/\heavyw)\cdot\frac{600}{\delta^2},\log\left(\frac{4}{\varepsilon}\right)\cdot\frac{3000}{\delta^3},\log\left(\frac{404}{\varepsilon}\right)\cdot\frac{200}{\delta}\right\},$$
    then 
    $$\probf(\lambda,\beta,\vec{\eta},\rd,\rho)\le \varepsilon.$$
\end{lemma}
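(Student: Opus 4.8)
The plan is to prove the statement by exhibiting one explicit admissible choice of the free parameters $\kam,\kah$ and $\vec t=(t_1,t_2,t_3,t_4)$ for which $\probd(\lambda,\beta,\vec\eta,\rd,\rho,\kam,\kah,\vec t)\le\varepsilon$ whenever $\rd$ meets the stated lower bound; since $\probf$ is the infimum of $\probd$ over exactly these parameters, that suffices. I would arrange that each of the three summands $\max\{q_1,q_2\}$, $q_3$, $q_4$ of $\probd$ (notation of eq.~(\ref{eq:probd})) is at most $\varepsilon/3$, then invoke lemma~\ref{lma:eventdec} together with claim~\ref{clm:probc:vars} and lemmas~\ref{lma:probc:varsm},~\ref{lma:probc:varf}. (I would also note the harmless typo that $s_m$ in the definition of $q_2$ should read $\kam$.)

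First I would fix $\kam$ and $\kah$. Writing $u:=(\heavyw-2\kah-2\kam)/\heavyw$ and using $\beta/(1-\beta)=1-\delta$, the quantity $c_0:=u(1-\beta)-\beta=(1-\beta)\big(\delta-(2\kah+2\kam)/\heavyw\big)$ is, up to lower-order terms, the per-block negative drift of $\varm+\varh$ powering the Chernoff estimates behind $q_1,q_2$. With $\heavyw=30\lambda/\delta$, keeping $\kah+\kam=O(\lambda)$ forces $(2\kah+2\kam)/\heavyw\le\delta/2$, hence $c_0=\Theta(\delta)$. Two constraints pin the choice from below: $\kam$ must exceed $\lambda\beta$ (else $\lambda\beta-\kam$, which sits inside the exponent of $q_2$, is positive and $q_2$ cannot be driven down), and $\kah$ must be a large enough constant multiple of $\lambda$ so that $\gamma:=\kah/(1.02\lambda(1-\beta))$ is a large constant, making $x_3=f(1.02\lambda(1-\beta),\kah)=e^{-1.02\lambda(1-\beta)(\gamma\ln\gamma-\gamma+1)}\le e^{-\Omega(\lambda)}$; with $\lambda\ge0.8\log(500/\delta)$ this is a small power of $\delta$. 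So I would take $\kam$ a small fixed multiple of $\lambda$ strictly above $\lambda\beta$ and $\kah$ a larger fixed multiple of $\lambda$, the constant $30$ in $\heavyw=30\lambda/\delta$ leaving room for $2\kah+2\kam\le\heavyw$.

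For the four tail bounds I would proceed as follows. For $q_1$, run the standard exponential-moment optimization: with $z:=t_1\heavyw$ small, $e^z-1\le z+z^2$ and $e^{-uz}-1\le-uz+u^2z^2$ give $x_1\beta+x_2(1-\beta)\le-c_0z+z^2$, so choosing $z=\Theta(c_0)=\Theta(\delta)$ makes the $q_1$-exponent at most $-\Theta(\rd\lambda c_0^2/\heavyw)-t_1k=-\Theta(\rd\delta^3)-t_1k$, and $\rd\ge\log(4/\varepsilon)\cdot 3000/\delta^3$ makes the $-\Theta(\rd\delta^3)$ term beat $\log(1/\varepsilon)$ (modulo absorbing $-t_1k$, below); $q_2$ is the same computation with a drift of order $\lambda(1-\beta)$ instead of $\lambda c_0$, hence smaller. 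For $q_3=f(\rd(1-\beta)\lambda/\heavyw,t_2)+f(t_2(1-\beta)\lambda/\heavyw,t_3)$, set $t_2=e\cdot(\text{mean of }Y_1)=\Theta(\rd\delta)$ and $t_3=e\cdot(\text{mean of }Y_2)=\Theta(\rd\delta^2)$; each factor is then $e^{-\Theta(\mathrm{mean})}$, and $\rd\ge\log(404/\varepsilon)\cdot 200/\delta$ and $\rd\ge\log(4/\varepsilon)\cdot 3000/\delta^3$ push them below $\varepsilon/6$ (switching to an additive Chernoff bound, with $t_i=\Theta(\log(1/\varepsilon)/\log\log(1/\varepsilon))$, when a mean is $O(1)$). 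For $q_4=101\,f((\rd+1)x_3,t_4)$: since $x_3$ is exponentially small in $\lambda$, $(\rd+1)x_3<1/2$, and $t_4=\Theta(\log(1/\varepsilon))$ gives $f((\rd+1)x_3,t_4)\le(2e)^{-t_4}\le\varepsilon/404$, so $q_4\le\varepsilon/4$.

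The hard part will be keeping $k=\rho-(t_3+1)(2\heavyw-2\kah-\kam)-1.02\kam(t_4+2)-2\kam$ under control, because it re-enters $q_1,q_2$ through $-t_1k$: with $t_3=\Theta(\rd\delta^2)$ and $\heavyw=\Theta(\lambda/\delta)$, the term $(t_3+1)(2\heavyw-2\kah-\kam)$ is $\Theta(\rd\lambda\delta)$ and can far exceed $\rho$, so $k$ may be as negative as $-\Theta(\rd\lambda\delta)$, and then $-t_1k=\Theta(z/\heavyw)\cdot\Theta(\rd\lambda\delta)=\Theta(\rd\delta^3)$ — the same order as the drift term. The argument therefore rests on choosing the numerical constants hidden in $\kah,\kam$ (hence $c_0$), in $z$, and in the multiples defining $t_2,t_3$ so that the coefficient of $\rd\delta^3$ coming from the drift strictly dominates the one coming from $t_1|k|$, which is exactly what fixes the numbers $30,600,3000,200$ in the hypotheses. (For astronomically small $\varepsilon$, where $\rd$ is so huge that a fixed power of $\delta$ no longer gives $(\rd+1)x_3<1/2$, one additionally lets $\kah$ grow like $\log\log(1/\varepsilon)$, still within the slack $2\kah+2\kam\le\heavyw$.)
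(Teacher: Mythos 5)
Your proposal follows the same route as the paper's proof: fix $\kam,\kah$ as small constant multiples of $\lambda$ (the paper uses $\kam=1.5\lambda$, $\kah=3\lambda$), take $t_1=\Theta(\delta)/\heavyw$ (the paper's $t_1=\delta^2/(150\lambda)$, i.e.\ $z=\delta/5$), take $t_2,t_3$ proportional to the means of $Y_1,Y_2$, and verify that the drift term $-\Theta(\rd\delta^3)$ in the exponent of $q_1$ strictly dominates the feedback $-t_1k=\Theta(\rd\delta^3)$ coming from $(t_3+1)(2\heavyw-2\kah-\kam)$ — which you correctly single out as the crux, and which is exactly where the paper's constants $600/1000/1500/3000$ come from.

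The one concrete weak spot is your treatment of $q_4$. You want $(\rd+1)x_3<1/2$ so that $t_4=\Theta(\log(1/\varepsilon))$ suffices, and you patch the failure case by letting $\kah$ grow like $\log\log(1/\varepsilon)$. But the lemma must hold for \emph{every} $\rd$ above the stated threshold, and $(\rd+1)x_3$ grows with $\rd$, not with $1/\varepsilon$; with $x_3$ only bounded by $\delta/500$, a patch indexed by $\varepsilon$ cannot restore $(\rd+1)x_3<1/2$ for $\rd$ far above the threshold (and letting $\kah$ grow without bound would eventually violate $2\kah+2\kam\le\heavyw$ and erode the drift through $u$). The paper avoids this entirely by scaling the threshold with the mean, taking $t_4=4(\rd+1)x_3$, so that $f((\rd+1)x_3,t_4)\le\exp(-\Theta(\rd\delta))$ uniformly in $\rd$; the resulting $t_4=O(\rd\delta)$ still contributes only $t_1\cdot1.02\kam t_4=O(\rd\delta^3)$ with a small enough constant to $-t_1k$. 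Adopting that choice of $t_4$ closes the gap and the rest of your argument goes through as in the paper.
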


\begin{proof}
    In this proof, we try to find a feasible solution for $\probd(\lambda,\beta,\vec{\eta},\rd,-\rho,\kam,\kah,\vec{t})<\varepsilon.$ This proof inherits symbols in equation~(\ref{eq:probd}). 
    Let $\kam=1.5\lambda$, $\kah=3\lambda$ and $t_1=\delta^2/(150\lambda)$. Then we have 
    $x_1=e^{\delta/5}-1$ and $x_2=e^{-\delta/5\cdot (1-7\delta/10)}-1\le (e^{-\delta/5}-1)\cdot(1-7\delta/10)$. Thus 
    \begin{align*}
        & \lambda\left(x_1\beta+x_2(1-\beta)\right)/\heavyw \\
        = & \frac{\delta}{30(2-\delta)}\cdot \left((e^{\delta/5}+e^{-\delta/5}-2)\cdot(1-7\delta/10)-0.7\delta\cdot(e^{\delta/5}-1)\right) \\
        \le & \frac{\delta}{30(2-\delta)}\cdot \left(\frac{\delta^2}{25}-\frac{7\delta^2}{50}\right) \\
        \le  & -\frac{\delta^3}{600}
    \end{align*}
    
    So $$q_1\le \exp(-\rd\cdot\delta^3/600-t_1\cdot k_1).$$ 

    Since it can be verified that $\lambda x_1\beta/\heavyw\le \frac{\delta^2}{300}$ and $-t_1*\kam=-\frac{\delta^2}{100}$, so we have 
    $$q_2\le  \exp(-\rd\cdot\delta^2/150-t_1\cdot k_1).$$

    Let $t_2=\sqrt{2}\cdot\rd\cdot (1-\beta)\lambda/\heavyw$ and $t_3=\sqrt{2}\cdot (1-\beta)\lambda/\heavyw\cdot t_2$, we skim the detailed computation and claim
    $$ q_3< 2\exp(-\rd\cdot(1-\beta)^2\delta^2/450)<2\exp(-\rd\cdot\delta^2/1800).$$

    Notice that $\kah\ge 3\cdot\lambda$. Thus 
    $$ x_3\le f(1.02\cdot\lambda\cdot(1-\beta),3\cdot\lambda)<\exp(-1.25\lambda)=\frac{\delta}{500}.$$

    Let $t_4=4\cdot \delta/500\cdot (\rd+1)$, notice that $\rd\ge \log(404)\cdot 200>1000$, thus $\rd+1<1.001\rd$. So we have 
    $$ q_4<101\cdot \exp(-\rd\cdot \delta/200).$$

    Notice that $t_3=2(1-\beta)^2\lambda^2/\heavyw^2\cdot \rd\le \delta^2/450\cdot \rd$ and $t_4< \delta/120$. We have 
    \begin{align*}
        k_1 & \ge -\rho-2\heavyw\cdot t_3-1.02\kam t_4-3.04\kam-2\heavyw\\ 
        & \ge -\rho-2.26\heavyw -2\heavyw\cdot t_3-1.02\kam t_4 \\ 
        & > -\rho-2.26\heavyw - \rd\cdot\lambda(2\delta/15+\delta/60) \\
        & > -\rho-2.26\heavyw - \rd\cdot\lambda\cdot 3\delta/20
    \end{align*}

    Applying the lower bound of $k_1$ to $q_1$ and $q_2$, we can get
    \begin{align*}
        q_1&< \exp(-\rd\cdot \delta^3/1500+t_1(\rho+3\heavyw))\\
        q_2&< \exp(-\rd\cdot 17\delta^3/3000+t_1(\rho+3\heavyw))
    \end{align*}

    When $\rd\ge (3+\rho/\heavyw)\cdot\frac{600}{\delta^2}$, we have $t_1(\rho+3\heavyw)\le \rd\cdot \delta^3/3000$, thus $q_1< \exp(-\rd\cdot \delta^3/3000)$ and $q_2< \exp(-\rd\cdot 16\delta^3/3000)$. Since $\rd\ge \log\left(\frac{4}{\varepsilon}\right)\cdot\frac{3000}{\delta^3}$, we have $q_1<\varepsilon/4$, $q_2<\varepsilon/4$ and $q_3<\varepsilon/2$. Since $\rd\ge \log\left(\frac{404}{\varepsilon}\right)\cdot\frac{200}{\delta}$, $q_4<\varepsilon/4$. Thus we have 
    $$\probd(\lambda,\beta,\vec{\eta},\rd,-\rho,\kam,\kah,\vec{t})<\varepsilon.$$
\end{proof}

\subsection{Proof of Theorem~\ref{thm:eventdec}}

\begin{proof}
    For any given round $r_1<r_2$, let $\rd:=(r_2-r_1)/(\delay+1)$. According to lemma~\ref{lma:eventdec}, we have 
    $$ \Pr\left[\sum_{i=N(r_1)+1}^{N(r_2)}\eventweight(\advs_{i-1},\event_i)-\left(\spevalue_{N(r_2)}-\spevalue_{N(r_1)}\right)\ge \rho\middle| \view_{r_1}\right]\le \probf(\lambda,\beta,\vec{\eta},\rd,\rho).$$

    According to lemma~\ref{lma:probf}, when $\lambda \ge {0.8\log(500/\delta)}$, $\heavyw=30\lambda/\delta$ and $$\rd\ge \max\left\{(3+\rho/\heavyw)\cdot\frac{600}{\delta^2},\log\left(\frac{4}{\varepsilon}\right)\cdot\frac{3000}{\delta^3},\log\left(\frac{404}{\varepsilon}\right)\cdot\frac{200}{\delta}\right\},$$
    we have 
    $$  \probf(\lambda,\beta,\vec{\eta},\rd,\rho)\le \varepsilon. $$

    So this theorem is proved.
\end{proof}


\section{Timer Chain and Old Enough Blocks}

\begin{lemma}\label{thm:chaindiffgrowth}
    Let $\graph_{n,r}\eqdef\left\{\block\in \ggen_{n}\middle|\gmin_{N(r)}\nsubseteq\block.\past \vee \block\in \gmin_{N(r)}\right\}$. 
    For any round $r$ and $\tmpi\ge 0$, if $\beta\ge 0.1$, $\timerw\ge 2\lambda/\delta$ and $r_\Delta$ satisfying 
    $$r_\Delta\ge\frac{\timerw\difficulty}{m}\cdot 
    \max\left\{
    \frac{128}{\delta^2}\cdot \log(\frac{8400}{\varepsilon\delta^2}),
    \frac{8(\tmpi+2)}{\delta}
    \right\}
    $$
    we have 
    $$\Pr\left[\exists n\ge N(r+r_\Delta), \mathrm{MaxTH}(\gmin_n)-\mathrm{MaxTH}(\graph_{n,r})\le \tmpi\right]\le \varepsilon.$$
\end{lemma}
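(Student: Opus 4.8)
The plan is to sandwich $\mathrm{MaxTH}(\gmin_n)$ from below and $\mathrm{MaxTH}(\graph_{n,r})$ from above by bounds that are (essentially) affine in the elapsed number of rounds, with the two slopes differing by a positive amount of order $\tfrac{\delta\, m}{\timerw\difficulty}$; then after $r_\Delta$ rounds the gap exceeds $\tmpi$. Since $\gmin_n$ and $\ggen_n$ are monotone in $n$, it suffices to control, for every round $r'\ge r+r_\Delta$, the quantities $\mathrm{MaxTH}(\gmin_{N(r')})$ and $\mathrm{MaxTH}(\graph_{N(r'),r})$ (the worst $n$ with round $r'$), and then take a union bound over $r'\in\{r+r_\Delta,\dots,\rmax\}$ to get the ``$\exists n$'' event. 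Throughout I write $\rho_T:=\tfrac{m}{\timerw\difficulty}$ and $\rho_M:=\beta\rho_T$ for the per-round overall and adversarial timer-block rates; with $\timerw=2\lambda/\delta$ one has $\rho_T=\tfrac{\delta}{2(\delay+1)}$, so $\delay\rho_T\le\tfrac{\delta}{2}$ is bounded away from $1$, which is what makes the timer chain well-behaved.

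\textbf{Lower bound on $\mathrm{MaxTH}(\gmin_n)$.} By \eqref{eq:timerheight}, $\mathrm{MaxTH}(\graph)$ is the length of the longest chain of timer blocks in $\graph$ under the ``in-the-past'' relation. When an honest node generates a timer block $\block$ at round $r''$, its past equals that node's local state, which by Claim~\ref{clm:graphrel} contains $\gmin_{N(r'')}$; hence $\mathrm{TimerHeight}(\block)\ge\mathrm{MaxTH}(\gmin_{N(r'')})+1$, and by admissibility $\block\in\gmin$ by round $r''+\delay$, so $\mathrm{MaxTH}(\gmin_{N(r''+\delay)})\ge\mathrm{MaxTH}(\gmin_{N(r)})+1$ whenever $r\le r''$. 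Partitioning $[r,r']$ into windows of $\delay$ rounds, discarding the last window so the ``$+1$'' has time to surface in $\gmin$, and calling a window \emph{good} if some honest node produces a timer block in it, a standard chain-growth argument gives $\mathrm{MaxTH}(\gmin_{N(r')})-\mathrm{MaxTH}(\gmin_{N(r)})\ge(\#\text{good windows in }[r,r'-\delay])-1$. Since the expected number of honest timer blocks per round is $(1-\beta)\rho_T$ and $\delay(1-\beta)\rho_T\le(1-\beta)\delta/2<1$, a (multiplicative) Chernoff bound shows that, outside an event of probability $\le\varepsilon$ absorbed by the $\tfrac{128}{\delta^2}\log(\tfrac{8400}{\varepsilon\delta^2})$ term, uniformly over all $r'$ the number of good windows in a window of $\ell$ rounds is at least $g_{\min}\ell$ with $g_{\min}\ge\rho_T\bigl((1-\beta)(1-(1-\beta)\delta/2)\bigr)$; using $1-2\beta=\delta(1-\beta)$ this gives $g_{\min}-\rho_M\ge\rho_T\cdot\tfrac{\delta(1-\beta)(1+\beta)}{2}\ge\tfrac{\delta}{4}\rho_T$ (here $\beta\ge 0.1$ guarantees $1-\beta^2\ge\tfrac34$).

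\textbf{Upper bound on $\mathrm{MaxTH}(\graph_{n,r})$.} A block in $\graph_{n,r}$ is either in $\gmin_{N(r)}$ (contributing at most $\mathrm{MaxTH}(\gmin_{N(r)})\le\mathrm{MaxTH}(\ggen_{N(r)})$), or satisfies $\gmin_{N(r)}\nsubseteq\block.\past$. For the latter I walk down the timer-chain ancestry $\block=\block^{(0)},\block^{(1)},\dots$, where $\block^{(i+1)}$ is the maximum-timer-height timer block in $\block^{(i)}.\past$, so that $\mathrm{TimerHeight}(\block^{(i)})=\mathrm{TimerHeight}(\block^{(i+1)})+1$. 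By transitivity of $\past$, every $\block^{(i)}$ again has $\gmin_{N(r)}\nsubseteq\block^{(i)}.\past$; and any honest block generated at a round $\ge r$ has $\gmin_{N(r)}$ in its past (Claim~\ref{clm:graphrel} plus monotonicity of $\gmin$), so each $\block^{(i)}$ is either malicious or honest-and-generated-before-round-$r$. Since a block generated before round $r$ can only reference blocks generated before round $r$, the maximal prefix $\block^{(0)},\dots,\block^{(j-1)}$ of blocks generated at rounds $\ge r$ is entirely malicious and forms a timer chain, while $\block^{(j)}$ (if any) has $\mathrm{TimerHeight}(\block^{(j)})\le\mathrm{MaxTH}(\ggen_{N(r)})$. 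Hence $\mathrm{MaxTH}(\graph_{n,r})\le\mathrm{MaxTH}(\ggen_{N(r)})+K_{[r,r']}$, where $K_{[r,r']}$ is the length of the longest chain of malicious timer blocks generated in $[r,r']$, hence at most the total number of such blocks, which by Chernoff (again uniformly in $r'$) is $\le\rho_M(r'-r)$ plus a $\Theta(\tfrac1{\delta}\log\tfrac1\varepsilon)$ margin absorbed into $r_\Delta$. It remains to bound $\mathrm{MaxTH}(\ggen_{N(r)})-\mathrm{MaxTH}(\gmin_{N(r)})$: because the embedded timer chain is a Nakamoto chain whose rate satisfies $1-\delay\rho_T\ge1-\delta/2>\beta/(1-\beta)$, the common-prefix / chain-quality guarantees of \cite{pass2017analysis} imply this fork length is $O(\log(1/\varepsilon))$ w.h.p., again absorbed into the $\tfrac{128}{\delta^2}\log(\tfrac{8400}{\varepsilon\delta^2})$ part of $r_\Delta$.

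\textbf{Combining.} On the intersection of the above high-probability events, for every $r'\ge r+r_\Delta$ (so $\rho_T(r'-r)\ge\rho_T r_\Delta\ge\tfrac{8(\tmpi+2)}{\delta}$) we get
$$ \mathrm{MaxTH}(\gmin_n)-\mathrm{MaxTH}(\graph_{n,r})\ \ge\ (g_{\min}-\rho_M)(r'-r)-\bigl(\mathrm{MaxTH}(\ggen_{N(r)})-\mathrm{MaxTH}(\gmin_{N(r)})\bigr)-(\text{margins})\ \ge\ \tmpi+2\ >\ \tmpi, $$
where $(g_{\min}-\rho_M)(r'-r)\ge\tfrac{\delta}{4}\cdot\tfrac{8(\tmpi+2)}{\delta}=2(\tmpi+2)$ dominates, and the fork length together with the two Chernoff margins sums to at most $\tmpi+2$ thanks to the $\log(1/\varepsilon)/\delta^2$ budget in $r_\Delta$; the separate sub-case where the maximizing block of $\graph_{n,r}$ lies in $\gmin_{N(r)}$ is even easier, since $g_{\min}(r'-r)$ alone already exceeds $\tmpi$. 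The union bound over $r'$ and over the $O(1)$ probabilistic events then yields the stated failure probability $\le\varepsilon$. The main obstacle I anticipate is the last ingredient of the upper bound: obtaining a clean, \emph{uniform-in-$r'$} control of both $K_{[r,r']}$ and the fork length $\mathrm{MaxTH}(\ggen_{N(r)})-\mathrm{MaxTH}(\gmin_{N(r)})$ — this forces one to verify that the timer chain, whose ``parent'' is the maximum-timer-height block in the past rather than a designated pointer, genuinely falls under the Nakamoto-consensus model of \cite{pass2017analysis}, and to convert the per-round Chernoff/common-prefix estimates into statements that hold simultaneously for all $r'\le\rmax$ (e.g.\ via an exponential supermartingale) so that the slack stays $O\!\bigl(\tfrac1{\delta^2}\log\tfrac1{\varepsilon\delta}\bigr)$ rather than growing with $\rmax$.
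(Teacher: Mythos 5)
Your overall architecture is the same as the paper's: a chain-growth lower bound on the advance of $\mathrm{MaxTH}(\gmin_n)$ (honest timer-block generation events separated by $\delay$-round gaps, controlled by Chernoff) played against an upper bound on $\mathrm{MaxTH}(\graph_{n,r})$ obtained by counting malicious timer blocks, followed by a union bound over the terminal round to get the ``$\exists n$'' statement. Your timer-ancestry decomposition $\mathrm{MaxTH}(\graph_{n,r})\le\mathrm{MaxTH}(\ggen_{N(r)})+K_{[r,r']}$ is sound: ancestors of a block with $\gmin_{N(r)}\nsubseteq\block.\past$ inherit that property, honest blocks generated at rounds $\ge r$ cannot have it, and the pre-$r$ ancestors sit at height at most $\mathrm{MaxTH}(\ggen_{N(r)})$.

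The genuine gap is the one you flag yourself: bounding $\mathrm{MaxTH}(\ggen_{N(r)})-\mathrm{MaxTH}(\gmin_{N(r)})$ by citing the common-prefix / chain-quality properties of Pass et al.\ does not go through. That difference is the adversary's private lead computed over $\ggen_{N(r)}$, which contains blocks the adversary has never released; common prefix and chain quality are statements about chains adopted by honest parties and say nothing about withheld blocks, and the timer ``chain'' here is an implicit maximum over a DAG rather than a pointer-linked Nakamoto chain, so the external result cannot be imported as a black box. The paper closes exactly this hole internally: it anchors not at $\mathrm{MaxTH}(\ggen_{N(r)})$ but at $t$, the largest timer height carrying no malicious block of $\ggen_{N(r)}$, locates the earliest honest timer block $\block_t$ at that height, and takes a union bound over the window index $k_1$ measuring how far before round $r$ that block was generated. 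For each pair $(k_1,k_2)$ the two failure modes --- the honest timer height growing too slowly over $[r-zk_1,\,r+zk_2)$, or the adversary producing $k_1+k_2+2$ timer blocks over that stretched window --- are each a single Chernoff estimate, and the geometric decay in $k_1+k_2$ makes the double sum converge to $\varepsilon$ within the stated $\tfrac{128}{\delta^2}\log(\tfrac{8400}{\varepsilon\delta^2})$ budget. If you keep your decomposition you must prove the private-lead bound yourself, and the natural way to do so is precisely this union-over-lookback argument, at which point you have reproduced the paper's proof; the citation as it stands is not a substitute.
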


\begin{proof}
Let $\delta\eqdef 1-\beta/(1-\beta)$, $\tau\eqdef \sqrt[4]{(1-\beta)/\beta}$, $z\eqdef \lfloor\timerw\difficulty/(m\beta\tau)\rfloor$, $c\eqdef (\delay+1)\cdot m/(\timerw\difficulty)$ and $f(x)\eqdef \frac{e^x}{(1+x)^{1+x}}$. Notice that $\tau \le 2$ when $\beta\ge 0.1$. 

Recalling that $N(r)$ denotes the index of latest event before round $r$, $\event_n$ denotes the $n^{th}$ event and $\advs_{n-1}$ denotes the adversary state before event $\event_n$. $\ggen_n,\gmax_n,\gmin_n$ denote the corresponding component in $\advs_n$. 

For any given round $r$ and positive integer $k\in\mathbb{Z}^+$ with $k\ge 10$, we define real number $r_+$ and random variables $X(r,k)$ and $Y(r,k)$ as follows
\begin{align*}
    r_+ &\eqdef r+ z\cdot k \\
    X(r,k) &\eqdef  \mathrm{MaxTH}(\gmin_{N(r_+)})-\mathrm{MaxTH}(\gmax_{N(r)})\\ 
    Y(r,k) &\eqdef \mbox{the number of malicious timer blocks generated in rounds $[r,r_+)$}.
\end{align*}

We try to study probability distribution for $X(r,k)$ and $Y(r,k)$.
In an admissible environment, if an honest node generates or receives a block in round $r$, all the honest nodes will receive such a block before phase 2 of round $r+\delay$. So if an honest node constructs a timer block $\block_1$ in round $r$, for any honest timer block $\block_2$ generated no earlier than round $r+\delay$, it will be $\mathrm{TimerHeight}(\block_2)\ge \mathrm{TimerHeight}(\block_1)+1$. 
We construct an event list as the following steps:
\begin{enumerate}[nosep]
    \item Find the first $\hgen$ event of timer block started with round $r+\delay$. 
    \item Skip the subsequent $(1-\beta)m\delay$ queries for oracle $\mine(\cdot)$ from honest nodes after this event. (Recall that honest nodes query this oracle $(1-\beta)m$ times in each round.) Then find the next $\hgen$ event of timer block. 
    \item Repeat step 2 until reaching the end of round $r_+-\delay-1$. 
\end{enumerate}

We claim the timer height of the first element in this event list will be no less than $\mathrm{MaxTH}(\gmax_{N(r)})+1$. And for any two consecutive events, the timer height of the latter one must be strict larger than the timer height of the former one. So $\mathrm{MaxTH}(\gmin_{N(r_+)})-\mathrm{MaxTH}(\gmax_{N(r)})$ is no less than the length of this event list.

If $X(r,k)\le \tau k$, the length of such event list is no larger than $\tau k$ and at most $(1-\beta)m\delay\cdot \tau k$ queries are skipped when contracting such event list. So it only happens when honest nodes find at most $\tau k$ timer blocks in $(r_+-r-(\tau k+2)\delay+1)(1-\beta)m$ queries. An honest node will find a valid timer block with probability $1/(\difficulty\timerw)$ in each query and the outcomes are independent. 
Notice that 
\begin{align*}
     & (r_+-r-(\tau k+2)\delay+1)(1-\beta)m \\
    \ge & \left(\frac{\timerw\difficulty}{m\beta\tau}-(\tau k+0.2k)(\delay+1)\right)\cdot (1-\beta)m \\
    = & (1-\beta)k\cdot \timerw\difficulty\cdot\left(\frac{1}{\beta\tau}-(\tau+0.2)\cdot c\right) 
\end{align*}

It can be verified that $\frac{1/(\beta\tau)-\tau^2/(1-\beta)}{\tau+0.2}<\delta/2$. Since $c>\delta/2$, 
$$ (1-\beta)k\cdot \timerw\difficulty\cdot\left(\frac{1}{\beta\tau}-(\tau+0.2)\cdot c\right)>\tau^2 k\difficulty\timerw.$$

Let $W$ be a random variable with probability distribution $\mathrm{B}(\tau^2 k\difficulty\timerw,1/(\difficulty\timerw))$. According to the chernoff bound (lemma~\ref{lma:chernoff}), we have 
$$ \forall k'<\tau^2\cdot k, \quad \Pr[X(r,k)\le k']\le \Pr[W\le \tau \cdot k]\le f\left(\frac{k'}{\tau^2\cdot k}-1\right)^{\tau^2k}.$$ 

$Y(r,k)\ge k$ only if the adversary generates at least $k$ timer blocks in round $[r,r_+)$, which implies the adversary finds $k$ timer blocks in $(r_+-r)\beta m\le k/\tau$ queries. Similarly, we can get
$$ \forall k'>k/\tau, \quad \Pr[Y(r,k)\ge k']\le f\left(\frac{\tau\cdot k'}{k}-1\right)^{k/\tau}.$$ 

%

%

Now we will study the probability that 
$$ \exists n\ge N(r+r_\Delta), \mathrm{MaxTH}(\graph_{n,r})-\mathrm{MaxTH}(\gmin_n)\le \tmpi.$$

Let $t$ denote the largest timer height such that no malicious block in $\ggen_{N(r)}$ has timer height $t$. So there must be an honest block with timer height $t$ in $\gmax_{N(r)}$. 
We denote the earliest one by $\block_{t}$. 
Notice that all the honest blocks $\block$ generated no earlier than round $r$ satisfy $\gmin_{N(r)}\subseteq \block.\past$ in an admissible environment and thus they can not appear in  $\graph_{n,r}$ for any $n\ge N(r)$.

Let $n_1$ be the index of $\hgen$ event for $\block_t$ (notice that $n_1<N(r)$). If there exists $n_2\ge N(r+r_\Delta)$ such that 
$\mathrm{MaxTH}(\gmin_{n_2})-\mathrm{MaxTH}(\graph_{n_2,r})\le \tmpi$, we find integers $k_1$ and $k_2$ which satisfy
\begin{align*}
    N(r-z\cdot (k_1-1))\le &n_1< N(r-z\cdot k_1)\\ 
    N(r+z\cdot k_2)\le &n_2 < N(r+z\cdot (k_2+1)) 
\end{align*}

Let $k_{\Delta}=k_{1}+k_{2}$. Since $\tau-1> \delta/4$ and $r_\Delta\ge 4\cdot z(\tmpi+2)/\delta$, we have $k_2 \ge r_\Delta/z = 4\cdot (\tmpi+2)/\delta>(\tmpi+2)/(\tau-1).$
Thus $$\mathrm{MaxTH}(\gmin_{n_2})-\mathrm{MaxTH}(\graph_{n_2,r})\le \tmpi<(\tau-1)\cdot k_2-2 \le(\tau-1)\cdot k_\Delta-2.$$
Then one of the following two inequalities must holds
\begin{align*}
    \mathrm{MaxTH}(\gmin_{n_2})&\le t+\tau\cdot k_\Delta \\
    \mathrm{MaxTH}(\graph_{n_2,r})&\ge t+2+k_\Delta
\end{align*}

For the case $\mathrm{MaxTH}(\gmin_{n_2})\le t+\tau\cdot k_\Delta$, let $r_s=r-z\cdot k_1$, $r_e=r+z\cdot k_2$. Then $\mathrm{MaxTH}(\gmin_{N(r_e)})\le t+\tau\cdot k_\Delta$ because $N(r_e)<n_2$. Since $\block_t$ is the first timer block with height $t$ and it is generated earlier than round $r_s$, we have $\mathrm{MaxTH}(\gmax_{N(r_s)})\ge t$. So it will be 
$$X(r-z\cdot k_1,k_\Delta)=\mathrm{MaxTH}(\gmin_{N(r_e)})-\mathrm{MaxTH}(\gmax_{N(r_s)})\le \tau \cdot k_\Delta.$$

For the case $\mathrm{MaxTH}(\graph_{n_2})\ge t+2+k_\Delta$, let $r_s=r-z\cdot (k_1-1)$, $r_e=r+z\cdot (k_2+1)$.  Since $\block_t$ is the first timer block with height $t$ and it is generated no earlier than round $r_s$ and $\graph_{n_2,r}\subseteq\graph_{N(r_e),r}$, the adversary generates at least $2+k_\Delta$ blocks in rounds $[r_s,r_e)$. It means 
$$Y(r+z\cdot (k_1-1),k_\Delta+2)\ge k_\Delta+2.$$

Notice that $k_1,k_2$ may be dependent with random variable $X(r,k)$ and $Y(r,k)$. So we take a union bound over all the possible $k_1,k_2$. Since $r_\Delta> z\cdot \frac{64}{\delta^2}\cdot \log(\frac{8400}{\varepsilon\delta^2})$, we have $k_2\ge \frac{64}{\delta^2}\cdot \log(\frac{8400}{\varepsilon\delta^2})$. Notice that $1/(1-e^{-x})<1.01/x$ for $0<x<1/64$ and $\tau\le 2$. It can be verified that $f(\tau-1)\le \exp(-\delta^2/32)$ and $f(1/\tau-1)\le \exp(-\delta^2/32)$. Let $y\eqdef\frac{64}{\delta^2}\cdot \log(\frac{8400}{\varepsilon\delta^2})$, we have

\begin{align*}
    & \Pr\left[\exists n\ge N(r+r_\Delta), \mathrm{MaxTH}(\gmin_n)-\mathrm{MaxTH}(\graph_{n,r})\le \tmpi\right]\\
    \le & \sum_{k_1=0}^{\infty}\sum_{k_2=\lceil y\rceil}^{\infty} \left(\Pr[X(r-z\cdot k_1,k_\Delta)\le \tau k_\Delta]+ \Pr[Y(r-z\cdot (k_1-1),k_\Delta+2)\ge k_\Delta+2]\right) \\
    = & \sum_{k_1=0}^{\infty}\sum_{k_2=\lceil y\rceil}^{\infty} \left(f(1/\tau-1)^{\tau^2k_{\Delta}}+f(1-\tau)^{(k_{\Delta}+2)/\tau}\right) \\
    \le & \sum_{k_1=0}^{\infty}\sum_{k_2=\lceil y\rceil}^{\infty} \left(\exp\left(-\delta^2/32\cdot \tau^2k_{\Delta}\right)+\exp(-\delta^2/32\cdot(k_{\Delta}+2)/\tau)\right) \\
    \le & \sum_{k_1=0}^{\infty}\sum_{k_2=\lceil y\rceil}^{\infty} 2\exp(-\delta^2/64\cdot (k_1+k_2)) \\ 
    < & \frac{8400}{\delta^4}\cdot \exp\left(-\delta^2/64\cdot \frac{64}{\delta^2}\cdot \log\left(\frac{8400}{\varepsilon\delta^2}\right)\right)\\
    = & \varepsilon
\end{align*}

Next we will study the probability that 
$$ \mathrm{MaxTH}(\graph_{n,n})-\mathrm{MaxTH}(\gmin_n)\ge \tmpi.$$

Similarly, let $t$ denote the largest timer height such that no malicious block in $\ggen_n$ has timer height $t$. So there must be an honest block with timer height $t$ in $\gmax_{N(r)}$. 
We denote the earliest one by $\block_{t}$. 
Notice that all the honest blocks $\block$ generated no earlier than round $r$ satisfy $\gmin_{N(r)}\subseteq \block.\past$ in an admissible environment. So any honest block generated no earlier than round $r$ can not be in $\graph_{n,r}$ for any $n$.

\end{proof}

\subsection{Proof of Theorem~\ref{thm:cpotwhenold}}

\begin{proof}
    Let $\graph_{n,r}\eqdef\left\{\block\in \ggen_{n}\middle|\gmin_{N(r)}\nsubseteq\block.\past \vee \block \in \gmin_{N(r)}\right\}$,
    $r_\Delta\eqdef\frac{\timerw\difficulty}{m}\cdot \max\left\{
        \frac{129}{\delta^2}\cdot \log(\frac{9000}{\varepsilon\delta^2}),
        \frac{8(\timerdiff+3)}{\delta}
        \right\}$ and $r_1\eqdef r_2-r_\Delta+\delay+1$.

    All the blocks generated earlier than round $r_1-\delay$ belong to $\gmin_{N(r_1)}$. So they also belong to $\graph_{N(r_2),r_1}$. 
    Notice that $\delay+1=\lambda\difficulty/m\le \timerw\difficulty\delta/(2m),$ thus
        $$ r_2-r_1=r_\Delta-\delay-1\ge \frac{\timerw\difficulty}{m}\cdot \max\left\{
            \frac{128}{\delta^2}\cdot \log\left(\frac{9000}{\varepsilon\delta^2}\right),
            \frac{8(\timerdiff+2)}{\delta}
            \right\}.$$
        
    According to lemma~\ref{thm:chaindiffgrowth} , we have 
    $$\Pr\left[\mathrm{MaxTH}(\gmin_{N(r_2)})-\mathrm{MaxTH}(\graph_{N(r_2),r_1})\le \timerdiff \right]\le \frac{14}{15}\cdot \varepsilon.$$

    Notice that $\mathrm{Old}(\gmin_{N(r_2)},\block)=\false$ holds only if $\mathrm{MaxTH}(\gmin_{N(r_2)})-\mathrm{TimerHeight}(\block)< \timerdiff$. As long as $\mathrm{MaxTH}(\gmin_{N(r_2)})-\mathrm{MaxTH}(\graph_{N(r_2),r_1})\le \timerdiff$, all the blocks generated earlier than round $r_1-\delay$ will be old enough given $\gmin_{N(r_2)}$. This holds with exception probability $14/15\cdot \varepsilon$. 
    $\\$

    By the definition of potential value, for any $\advs$ and $\block$, we have $\potwith(\advs,\block) \le \treew{\ggen \backslash \gmax,\block}$, $\potadv(\advs,\block)\le \kah+\kam\le \heavyw$  and $\potspe(\advs,\block)\le \treew{\gdta \cap \mathbf{M},\block}$. 
    Notice that $\ggen \backslash \gmax$ only contains malicious blocks (a.k.a. $\ggen \backslash \gmax \subseteq \mathbf{M}$) and $\gdta = \gmax\backslash\gmin$. We claim $\ggen \backslash \gmax$ and $\gdta \cap \mathbf{M}$ are disjoint sets and there union is subset of or equal to $\ggen \cap \mathbf{M}$. In all, 
    $$ \pot(\advs,\block)\le \heavyw+\treew{\ggen \cap \mathbf{M},\block}.$$

   It implies that $\pot(\advs_{N(r_2)},\block)-\heavyw$ is no more than the total weight of malicious blocks after $\block$'s generation. Let $M_i$ denote the total block weight of malicious blocks generated in round $i$. Then, for any block $\block$ generated no earlier than round $r_1-\delay$, for any $n$ with $N(r_2)<n\le N(r_2+1)$, we have
    $$\pot(\advs_n,\block)-\heavyw\le \sum_{i=r_1-\delay}^{r_2} M_i.$$
    Let $p_1(t):=\exp\left(\frac{(e^{t\heavyw}-1)\cdot \beta m}{\heavyw\difficulty}\right)$, for any $t>0$ and any $k\in\mathbb{R}$, according to lemma~\ref{lma:probc:1}, we have 
    $$ \Pr\left[\sum_{i=r_1-\delay}^{r_2} M_i\ge k\right]\le p_1(t)^{r_\Delta}/e^{tk}.$$
    Let $t\eqdef\delta^2/(150\lambda)$, $k\eqdef 2\lambda/(\delay+1)\cdot r_\Delta$, we have 
    $$  p_1(t)^{r_\Delta}/e^{tk} \le \exp(-\delta^2/150\cdot r_\Delta/(\delay+1)).$$
    Note that $\timerw\difficulty/m=2(\delay+1)$. Thus $r_\Delta/(\delay+1)\ge \frac{300}{\delta^2}\cdot \log(\frac{9000}{\varepsilon\delta^2})> \frac{150}{\delta^2}\cdot \log(\frac{9000}{\varepsilon})$. So we claim 
    $$\exp(-\delta^2/150\cdot r_\Delta/(\delay+1))\le \frac{\varepsilon}{9000}.$$
    It implies with exception probability $\frac{\varepsilon}{9000}$, for any block $\block$ generated no earlier than round $r_1-\delay$, for any $n$ with $N(r_2)<n\le N(r_2+1)$, $\pot(\advs_n,\block)-\heavyw\le 2\lambda/(\delay+1)\cdot r_\Delta$ holds for any block $\block$ generated no earlier than round $r_1-\delay$ and $N(r_2)<n\le N(r_2+1)$. 
    $\\$

    Now we have showed that for any block generated earlier than round $r_1-\delay$, they all become old enough at the beginning of round $r_2$ with exception probability $\frac{14\varepsilon}{15}$. For the other blocks, their block potential value will never exceed $2\lambda/(\delay+1)\cdot r_\Delta+\heavyw$ in round $r_2$ with exception probability $\frac{\varepsilon}{9000}$. Notice that 
    \begin{align*}
          & 2\lambda/(\delay+1)\cdot r_\Delta+\heavyw \\
        = & \frac{30\lambda}{\delta}+4\lambda\cdot \max\left\{\frac{129}{\delta^2}\cdot\log\left(\frac{9000}{\varepsilon\delta^2}\right),\frac{8(\timerdiff+3)}{\delta}\right\}\\
        < & 4\lambda\cdot \max\left\{\frac{140}{\delta^2}\cdot\log\left(\frac{9000}{\varepsilon\delta^2}\right),\frac{8(\timerdiff+4)}{\delta}\right\}
    \end{align*}

    In all, for any $r_2\ge 0$ and $\varepsilon>0$, we have
    $$ \Pr\left[ \exists N(r_2)<n\le N(r_2+1), \exists \block \in {\ti \graph}_{r_2}, \pot(\advs_n,\block)\ge w(\varepsilon) \right]\le \frac{14\varepsilon}{15}+\frac{\varepsilon}{9000}<\varepsilon.$$
    
\end{proof}

\subsection{Proof of Theorem~\ref{thm:mustoldenough}}

\begin{proof}
    Let $\graph_{n,r}\eqdef \left\{\block\in \ggen_{n}\middle|\gmin_{N(r)}\nsubseteq\block.\past \vee \block\in \gmin_{N(r)}\right\}$. 
    If there exists $n$ and $\block$ satisfying $\block \in \ggen_n$, $\gmin_{N(r)}\nsubseteq\block.\past$ and $\mathrm{Old}(\gmin_n,\block)=\false$, we claim $\block \in \graph_{n,r}$ and $\mathrm{MaxTH}(\gmin_n)-\mathrm{TimerHeight}(\block)\le \timerdiff$. Thus, $\mathrm{MaxTH}(\gmin_n)-\mathrm{TimerHeight}(\graph_{n,r})\le \timerdiff$. According to lemma~\ref{thm:chaindiffgrowth}, it will happen with probability $\varepsilon$. 
\end{proof}

\section{Chernoff bound}

\begin{lemma}[Multiplicative Chernoff bound]\label{lma:chernoff}
    Let $X$ be a random variable with binomial distribution $\mathrm{B}(n,p)$, then for any $\delta>0$, we have 
    $$ \Pr\left[X\ge (1+\delta)np\right]\le \left(\frac{e^\delta}{(1+\delta)^{1+\delta}}\right)^{np}.$$
    $$ \Pr\left[X\le (1-\delta)np\right]\le \left(\frac{e^{-\delta}}{(1-\delta)^{1-\delta}}\right)^{np}.$$
\end{lemma}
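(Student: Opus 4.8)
The plan is to prove both inequalities by the standard exponential-moment (Chernoff) method: apply Markov's inequality to a suitable exponential transform of $X$, bound the resulting moment generating function via the elementary inequality $1+y\le e^{y}$, and then optimize the free parameter. Concretely, I would first write $X=\sum_{i=1}^{n}X_i$ with $X_1,\dots,X_n$ independent Bernoulli variables of mean $p$, so that for every real $t$,
\[
\mathbb{E}\!\left[e^{tX}\right]=\prod_{i=1}^{n}\mathbb{E}\!\left[e^{tX_i}\right]=\left(1-p+pe^{t}\right)^{n}\le \exp\!\big(np(e^{t}-1)\big),
\]
where the last step uses $1+y\le e^{y}$ with $y=p(e^{t}-1)$. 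The same computation with $-t$ in place of $t$ gives $\mathbb{E}[e^{-tX}]\le \exp\!\big(np(e^{-t}-1)\big)$.

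For the upper tail, fix $t>0$ and apply Markov's inequality to the nonnegative random variable $e^{tX}$:
\[
\Pr\!\left[X\ge (1+\delta)np\right]=\Pr\!\left[e^{tX}\ge e^{t(1+\delta)np}\right]\le e^{-t(1+\delta)np}\,\mathbb{E}\!\left[e^{tX}\right]\le \exp\!\big(np(e^{t}-1)-t(1+\delta)np\big).
\]
I would then minimize the exponent over $t>0$; calculus gives the optimizer $t=\ln(1+\delta)$, which is strictly positive exactly because $\delta>0$, so the choice is admissible. Substituting $e^{t}=1+\delta$ collapses the bound to $\exp\!\big(np\delta-(1+\delta)np\ln(1+\delta)\big)=\big(e^{\delta}/(1+\delta)^{1+\delta}\big)^{np}$, which is the claimed estimate.

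For the lower tail I would run the symmetric argument with the transform $e^{-tX}$, $t>0$:
\[
\Pr\!\left[X\le (1-\delta)np\right]=\Pr\!\left[e^{-tX}\ge e^{-t(1-\delta)np}\right]\le e^{t(1-\delta)np}\,\mathbb{E}\!\left[e^{-tX}\right]\le \exp\!\big(np(e^{-t}-1)+t(1-\delta)np\big).
\]
Here the exponent is minimized at $t=-\ln(1-\delta)$; for $0<\delta<1$ this is positive, so the step is legitimate, and substituting $e^{-t}=1-\delta$ yields $\big(e^{-\delta}/(1-\delta)^{1-\delta}\big)^{np}$. For $\delta\ge 1$ the target value $(1-\delta)np$ is non-positive, so the inequality is either vacuous or read under the usual conventions ($0^{0}=1$ at $\delta=1$); I would dispose of this range with a one-line remark rather than a computation.

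There is no real obstacle here — the argument is entirely routine. The only two points that merit a word of care are (i) verifying that the optimizing value of $t$ is strictly positive, since this is precisely what makes Markov's inequality applicable to $e^{tX}$ (respectively $e^{-tX}$) in the direction needed, and (ii) the degenerate regime $\delta\ge 1$ in the lower-tail statement, which I would handle by a short conventions remark. Everything else is the standard two-line MGF bound followed by a one-variable optimization.
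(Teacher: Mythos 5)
Your proof is correct: it is the canonical Chernoff argument (bound the moment generating function of the Bernoulli sum by $\exp(np(e^{t}-1))$ via $1+y\le e^{y}$, apply Markov's inequality to $e^{tX}$ resp.\ $e^{-tX}$, and optimize at $t=\ln(1+\delta)$ resp.\ $t=-\ln(1-\delta)$), and your side remarks about the positivity of the optimizer and the vacuous regime $\delta\ge 1$ in the lower tail are exactly the right points of care. The paper states this lemma as a standard known fact without supplying any proof, so there is nothing to compare against; your derivation fills that role in the standard way.
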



\end{document}